\documentclass[11pt,a4paper]{article}

\usepackage{fontspec}
\usepackage[titletoc,title,header]{appendix}

\usepackage{amsmath}
\usepackage{amssymb}
\usepackage{amsthm}
\usepackage{mathtools}
\usepackage{xparse}    

\usepackage[margin=1in]{geometry}

\usepackage{graphicx}
\usepackage{xcolor}

\usepackage{hyperref}
\hypersetup{
  colorlinks=true,
  linkcolor=blue,
  citecolor=blue,
  urlcolor=blue,
  pdfauthor={Yoshitsugu Sekine},
  pdftitle={Interacting Lattice Bosons on the Concrete Buchholz Algebra}
}

\usepackage[numbers]{natbib}
\theoremstyle{plain}
\newtheorem{thm}{Theorem}[section]
\AtEndEnvironment{thm}{\qed}
\newtheorem{lem}[thm]{Lemma}
\AtEndEnvironment{lem}{\qed}
\newtheorem{prop}[thm]{Proposition}
\AtEndEnvironment{prop}{\qed}
\newtheorem{cor}[thm]{Corollary}
\AtEndEnvironment{cor}{\qed}

\AtEndEnvironment{conj}{\qed}
\newtheorem{fact}[thm]{Fact}
\AtEndEnvironment{fact}{\qed}

\theoremstyle{definition}
\newtheorem{defn}[thm]{Definition}
\AtEndEnvironment{defn}{\qed}

\AtEndEnvironment{ex}{\qed}

\theoremstyle{remark}

\AtEndEnvironment{rem}{\qed}

\makeatletter
\providecommand*{\dashv}{\mathrel{\mathpalette\@Dashv\vDash}}
\newcommand*{\@dashv}[2]{\reflectbox{$\m@th#1#2$}}
\makeatother
\newcommand{\abscard}[1]{\abs{#1}} 
\newcommand{\abs}[1]{\left| #1 \right|} 
\NewDocumentCommand{\weaknorm}{O{\dbk} m}{#1{#2}} 
\newcommand{\norm}[1]{\left\Vert #1 \right\Vert} 

\newcommand{\bkt}[2]{\left\langle #1,\,#2 \right\rangle} 
\newcommand{\rbkt}[2]{\left( #1,\,#2 \right)} 
\newcommand{\isomto}{\mathrel{\rightarrowtail\kern-1.9ex\twoheadrightarrow}} 
\newcommand{\cbk}[1]{\left\{ #1 \right\}} 
\newcommand{\dbk}[1]{\left\langle #1 \right\rangle} 
\newcommand{\pairbk}[1]{\rbk{#1}} 
\newcommand{\rbk}[1]{\left( #1 \right)} 
\newcommand{\sqbk}[1]{\left[ #1 \right]} 
\newcommand{\funcond}[3]{\fun{#1}{#2 \middle| #3}} 
\newcommand{\fun}[2]{#1 \rbk{#2}} 
\newcommand{\sqfuncond}[3]{\sqfun{#1}{#2 \middle| #3}} 
\newcommand{\sqfun}[2]{#1 \sqbk{#2}} 
\newcommand{\closedinterval}[2]{\sqbk{#1,\,#2}} 
\newcommand{\openinterval}[2]{\rbk{#1,\,#2}} 
\newcommand{\rightopeninterval}[2]{\left[#1, \, #2 \right)} 
\newcommand{\commutator}[2]{\sqbk{#1,\,#2}} 
\newcommand{\bra}[1]{\left\langle #1 \right|} 
\newcommand{\ketbra}[2]{\ket{#1} \!\! \bra{#2}} 
\newcommand{\ket}[1]{\left| #1 \right\rangle} 

\NewDocumentCommand{\imunit}{O{\mathsf{i}}}{#1} 
\NewDocumentCommand{\placeholder}{O{\bullet}}{#1} 
\NewDocumentCommand{\trace}{O{\operatorname{Tr}}}{#1} 
\newcommand{\Ran}{\operatorname{Ran}} 
\newcommand{\eqcsq}[1]{\sqbk{#1}} 
\newcommand{\kroneckerdelta}{\delta} 
\newcommand{\napiernum}{\mathsf{e}} 
\newcommand{\net}[2]{\rbk{#1}_{#2}} 
\newcommand{\od}[2]{\frac{d #1}{d #2}} 
\newcommand{\opchern}[1]{\operatorname{ch}} 
\newcommand{\opimag}{\operatorname{Im}} 
\newcommand{\opod}[1]{\frac{d}{d #1}} 
\newcommand{\setSymbolDownLeft}[2]{{\vphantom{#2}}_{#1}{#2}} 
\newcommand{\setSymbolUpLeft}[2]{{\vphantom{#2}}^{#1}{#2}} 
\NewDocumentCommand{\agvariety}{O{\mathcal}}{#1} 
\NewDocumentCommand{\cmdrel}{O{\omega}}{#1} 
\NewDocumentCommand{\dfsp}{O{A}}{#1} 
\NewDocumentCommand{\eqcpointed}{O{\eqcsq} m}{#1{#2}_{\ast}} 
\NewDocumentCommand{\fnheaviside}{O{H}}{#1} 
\NewDocumentCommand{\fthol}{O{\mathcal{O}}}{#1} 
\NewDocumentCommand{\ftmero}{O{\mathcal{M}}}{#1} 
\NewDocumentCommand{\grcentralizer}{O{Z}}{#1} 
\NewDocumentCommand{\grmetform}{O{2} m m}{\grmet[#1] \! \rbkt{#2}{#3}} 
\NewDocumentCommand{\grmet}{O{2}}{\setSymbolDownLeft{#1}{g}} 
\NewDocumentCommand{\grnormalizer}{O{N}}{#1} 
\NewDocumentCommand{\gropasym}{O{A}}{#1} 
\NewDocumentCommand{\gropsym}{O{S}}{#1} 
\NewDocumentCommand{\grpermorderedpair}{O{\mathcal{P}}}{#1} 
\NewDocumentCommand{\grsym}{O{\mathfrak{S}} m}{#1_{#2}} 
\NewDocumentCommand{\gtbase}{O{\mathcal}}{#1} 
\NewDocumentCommand{\gtfilter}{O{\mathcal}}{#1} 
\NewDocumentCommand{\gtfmlclosed}{O{\mathcal}}{#1} 
\NewDocumentCommand{\gtfmlopen}{O{\mathcal}}{#1} 
\NewDocumentCommand{\gtopenball}{O{U}}{#1} 
\NewDocumentCommand{\gtopencover}{O{\mathcal}}{#1} 
\NewDocumentCommand{\gtopennbh}{O{\mathcal}}{#1} 
\NewDocumentCommand{\gtpreopencover}{O{\mathcal}}{#1} 
\NewDocumentCommand{\gtsubbase}{O{\mathcal}}{#1} 
\NewDocumentCommand{\gtvicinity}{O{\mathcal}}{#1} 
\NewDocumentCommand{\lasp}{O{\mathcal}}{#1} 
\NewDocumentCommand{\latprightrbk}{O{\top} m}{\rbk{#2}^{#1}} 
\NewDocumentCommand{\latpright}{O{\top} m}{#2^{#1}} 
\NewDocumentCommand{\latp}{O{t} m}{\setSymbolUpLeft{#1}{#2}} 
\NewDocumentCommand{\lpdistribution}{O{\mu} m}{#2_{\ast,#1}} 
\NewDocumentCommand{\lpmollifier}{O{\rho}}{#1} 
\NewDocumentCommand{\lpofpositive}{O{\chi}}{#1} 
\NewDocumentCommand{\manliederiv}{O{L}}{#1} 
\NewDocumentCommand{\mansmoothnbh}{O{\mathcal}}{#1} 
\NewDocumentCommand{\mblfmldsysgenerated}{O{d} m}{\fun{#1}{#2}} 
\NewDocumentCommand{\mblfmlgenerated}{O{\sigma} m}{\fun{#1}{#2}} 
\NewDocumentCommand{\oacorrfn}{O{\Gamma}}{#1} 
\NewDocumentCommand{\oagnsvector}{O{\Omega}}{#1} 
\NewDocumentCommand{\oaideal}{O{\mathcal}}{#1} 
\NewDocumentCommand{\oanumberoperator}{O{A}}{#1} 
\NewDocumentCommand{\oaposcone}{O{\mathcal{P}}}{#1} 
\NewDocumentCommand{\oapressure}{O{P}}{#1} 
\NewDocumentCommand{\oarepn}{O{\pi}}{#1} 
\NewDocumentCommand{\oaspnormalstate}{O{N}}{#1} 
\NewDocumentCommand{\oasppurestate}{O{P}}{#1} 
\NewDocumentCommand{\oaspstate}{O{E}}{#1} 
\NewDocumentCommand{\oastatevector}{O{\Omega}}{#1} 
\NewDocumentCommand{\oastate}{O{\omega}}{#1} 
\NewDocumentCommand{\opdilation}{O{\delta}}{#1} 
\NewDocumentCommand{\opdmat}{O{\rho}}{#1} 
\NewDocumentCommand{\opfockan}{O{a}}{#1} 
\NewDocumentCommand{\opfockcran}{O{a}}{#1^{\#}} 
\NewDocumentCommand{\opfockcrdagger}{O{a}}{#1^{\dagger}} 
\NewDocumentCommand{\opfockcr}{O{a}}{#1^{\ast}} 
\NewDocumentCommand{\opfocknumber}{O{N}}{#1} 
\NewDocumentCommand{\opfocksegalconj}{O{\pi}}{#1} 
\NewDocumentCommand{\opfocksegal}{O{\phi}}{#1} 
\NewDocumentCommand{\opspecmeas}{O{E}}{#1} 
\NewDocumentCommand{\opspec}{O{} m}{\fun{\sigma_{#1}}{#2}} 
\NewDocumentCommand{\optransl}{O{\tau}}{#1} 
\NewDocumentCommand{\opvarspec}{O{} m}{\operatorname{spec}_{#1} #2} 
\NewDocumentCommand{\physaction}{O{\mathcal{A}}}{#1} 
\NewDocumentCommand{\physcharge}{O{e}}{\mathrm{#1}} 
\NewDocumentCommand{\physcplconst}{O{\mathsf{g}}}{#1} 
\NewDocumentCommand{\physelectrostaticcapasity}{O{\mathrm{Cap}}}{#1} 
\NewDocumentCommand{\physenergy}{O{E}}{#1} 
\NewDocumentCommand{\physgse}{O{E}}{#1_{0}} 
\NewDocumentCommand{\physham}{O{H}}{#1} 
\NewDocumentCommand{\physlagdensity}{O{\mathcal{L}}}{#1} 
\NewDocumentCommand{\physlag}{O{L}}{#1} 
\NewDocumentCommand{\physliouvilean}{O{L}}{#1} 
\NewDocumentCommand{\physmass}{O{m}}{#1} 
\NewDocumentCommand{\prbbrownmv}{O{B}}{#1} 
\NewDocumentCommand{\prbcharfun}{O{\chi}}{#1} 
\NewDocumentCommand{\prbdist}{O{\mathcal{P}}}{#1} 
\NewDocumentCommand{\prbgaussianmeasure}{O{\msrcal{N}}}{#1} 
\NewDocumentCommand{\prbnormaldist}{O{N}}{#1} 
\NewDocumentCommand{\prbpoissonprocess}{O{N}}{#1} 
\NewDocumentCommand{\prbprocess}{O{X}}{#1} 
\NewDocumentCommand{\prbqspace}{O{\mathcal{Q}}}{#1} 
\NewDocumentCommand{\prbspsample}{O{\Omega}}{#1} 
\NewDocumentCommand{\psh}{O{\mathfrak}}{#1} 
\NewDocumentCommand{\qtquantumchannel}{O{\mathcal{L}}}{#1} 
\NewDocumentCommand{\repn}{O{\pi}}{#1} 
\NewDocumentCommand{\schattencls}{O{\mathbb{K}}}{#1} 
\NewDocumentCommand{\setfmlcylinder}{O{\mathcal{C}}}{#1} 
\NewDocumentCommand{\setfml}{O{\mathcal}}{#1} 
\NewDocumentCommand{\setindex}{O{\mathcal} m}{#1{#2}} 
\NewDocumentCommand{\setlattice}{O{\Gamma}}{#1} 
\NewDocumentCommand{\setspecial}{O{\mathcal} m}{#1{#2}} 
\NewDocumentCommand{\shdiffform}{O{\sheaf{A}}}{#1} 
\NewDocumentCommand{\sheaf}{O{\mathfrak}}{#1} 
\NewDocumentCommand{\smchemicalpotential}{O{\mu}}{#1} 
\NewDocumentCommand{\smenergydensity}{O{\varrho}}{#1} 
\NewDocumentCommand{\smfluctuationwithdmat}{O{\beta} m}{\smuncertaintywithdmat[#1]{#2}^2} 
\NewDocumentCommand{\sminvtemperature}{O{\beta}}{#1} 
\NewDocumentCommand{\smlocaldensityoperator}{O{\rho}}{#1} 
\NewDocumentCommand{\smmicrocanonicalstate}{O{\beta} m}{\physmean{#2}_{#1}} 
\NewDocumentCommand{\smnumberdensity}{O{\rho}}{#1} 
\NewDocumentCommand{\smooth}{O{\mathcal{E}}}{#1} 
\NewDocumentCommand{\smparticlenumber}{O{N}}{#1} 
\NewDocumentCommand{\smpressure}{O{p}}{#1} 
\NewDocumentCommand{\smspecificfreeenergy}{O{\bar{f}}}{#1} 
\NewDocumentCommand{\smthermalvac}{O{\beta}}{\Omega_{#1}} 
\NewDocumentCommand{\smuncertaintywithdmat}{O{\beta} m}{\rbk{\triangle #2}_{#1}} 
\NewDocumentCommand{\sphilbfrak}{O{\mathfrak}}{#1} 
\NewDocumentCommand{\sphilb}{O{\mathcal}}{#1} 
\NewDocumentCommand{\splowerhalf}{O{\mathbb{H}}}{#1_{\txtneg}} 
\NewDocumentCommand{\spupperhalf}{O{\mathbb{H}}}{#1_{\txtnonneg}} 
\NewDocumentCommand{\topmetric}{O{d}}{#1} 
\NewDocumentCommand{\vaoutnormal}{O{\widehat}}{#1} 
\newcommand{\category}[1]{\mathop{\mathsf{#1}}} 

\newcommand{\catpresheaf}[1]{\category{PSh}} 
\newcommand{\faadj}[1]{#1^{\ast}} 
\newcommand{\fldreal}{\fld{R}} 
\newcommand{\fld}[1]{\mathbb{#1}} 
\newcommand{\fndef}[1]{\boldsymbol{1}_{#1}} 
\newcommand{\fnexp}[1]{\fun{\exp}{#1}} 
\newcommand{\fnrestr}[2]{\left. #1 \right|_{#2}} 
\newcommand{\lpseq}{\ell} 
\newcommand{\mblfml}[1]{\mathcal{#1}} 
\newcommand{\monnat}{\mathbb{N}} 
\newcommand{\msrcal}[1]{\mathcal{#1}} 
\newcommand{\msrprb}{\mathrm{Pr}} 

\newcommand{\oacstar}{C^{\ast}} 
\newcommand{\oaresolventalgebra}{\oa{R}} 
\newcommand{\oa}[1]{\mathcal{#1}} 
\newcommand{\opdmsr}[1]{\mathop{d #1}} 
\newcommand{\opspbddlin}[1]{\fun{\mathbb{B}}{#1}} 
\newcommand{\opspecint}[1]{\mathcal{E}} 
\newcommand{\physmean}[1]{\dbk{#1}} 
\newcommand{\prbexp}{\mathbb{E}} 
\newcommand{\ringratint}{\mathbb{Z}} 
\newcommand{\semigrposint}{\monnat_1} 

\newcommand{\seq}[2]{\if\relax\detokenize{#1}\relax \rbk{#1} \else \rbk{#1}_{#2} \fi} 
\newcommand{\setisomorphism}[1]{\operatorname{Iso}} 
\newcommand{\setone}[1]{\cbk{#1}}
\newcommand{\set}[2]{\left\{#1 \, \middle| \, #2\right\}}
\newcommand{\topdist}{\operatorname{dist}} 
\newcommand{\txtfin}{\mathrm{fin}} 
\newcommand{\txtfr}{\mathrm{fr}} 
\newcommand{\txtneg}{\mathrm{-}} 
\newcommand{\txtnonneg}{\mathrm{+}} 
\newcommand{\txtsym}{\mathrm{s}} 

\title{Interacting Lattice Bosons on the Concrete Buchholz Algebra\\\vspace{0.5em}{\large All-Temperature Equilibrium States from Functional-Integral Local-Number Bounds}}

\author{%
Yoshitsugu Sekine\\{\small\texttt{4429sekine@gmail.com}}%
}

\date{\today}

\begin{document}

\maketitle

\begin{abstract}
Interacting Bose--Hubbard dynamics is formulated on the concrete Buchholz algebra in Fock space. Under repulsive finite-range interactions and a low-activity gap, a discrete functional integral gives volume-uniform exponential local-number bounds at all temperatures. These bounds remove the local-moment hypothesis and yield finite-volume approximation and KMS accumulation states in \cite{DeuchertLampartLemm001}.

\noindent\textbf{Keywords:} Bose--Hubbard model, bosonic Fock space, Buchholz algebra, thermodynamic limit, KMS state, local particle number, Feynman--Kac representation
\end{abstract}

\setcounter{tocdepth}{3}
\tableofcontents

\section{Introduction}\label{introduction}

The scarcity of physically relevant automorphism groups on the Weyl algebra has long limited its use for interacting bosonic systems. The natural automorphisms of the Weyl algebra are induced by symplectic transformations and describe quadratic Hamiltonians, whereas even simple nonquadratic Hamiltonians may fail to preserve the algebra. The resolvent algebra was introduced to overcome this restriction and to incorporate resolvents of Hamiltonians and other unbounded observables into a representation-independent \(\oacstar\)-algebraic framework \cite[Sections 1 and 5]{BuchholzGrundling002}. Its rich ideal structure is not an incidental complication but a source of physical information \cite{DetlevBuchholz001}. This structure has been used in general criteria for Bose--Einstein condensation and in the analysis of trapped and interacting Bose systems \cite{BahnsBuchholz001,DetlevBuchholz002,DetlevBuchholz004,DetlevBuchholz005,DetlevBuchholz012}. These applications also show why the resolvent algebra must be evaluated through concrete models, of which only a limited range has been treated to date. One principal aim of this paper is to extend that range to the equilibrium theory of the Bose--Hubbard model.

Like the Weyl algebra, the resolvent algebra is defined before a representation is selected. This representation independence is essential to its role as an algebra of observables. The contrast with a von Neumann algebra makes the point precise. After an automorphism group and a ground, equilibrium, or vacuum state have been chosen, the corresponding von Neumann algebra is obtained as the weak closure of the \(\oacstar\)-algebra in the GNS representation of that state. It retains the folium and the thermodynamic sector of the reference state. A nonfactorial representation may have a nontrivial center, and its central decomposition records classical order parameters or the central-spectrum variables that distinguish sectors. This is the micro--macro interpretation of the center in sector theory \cite{IzumiOjima002}. In symmetry-breaking situations, the center of the mixed representation records the classical distribution of phases, while its factor components describe the individual quantum phases. The abstract resolvent algebra does not build such a classical component into the kinematics. Its center is trivial, even though its proper ideals remain physically informative \cite[Theorem 4.10]{BuchholzGrundling002}. The restriction to one reference folium is also visible in thermal representations. For example, factorial KMS representations of type III at distinct inverse temperatures are disjoint \cite[Theorem 5.3.35]{BratteliRobinson004}. The normal states of a von Neumann algebra selected by one state remain in that folium. Such an algebra cannot serve as a representation-independent container for all thermodynamic sectors or for unrelated dynamics. An abstract \(\oacstar\)-algebra is instead intended to retain the kinematics before any such choice. Developing as much of the dynamics and equilibrium theory as possible at this abstract level is a basic requirement rather than a matter of formal generality.

The first operator-algebraic problem for a concrete Hamiltonian is the construction of its automorphism group. This is the algebraic counterpart of self-adjointness in the operator formulation, but it involves two additional obstructions. The first is continuity. Even free bosonic dynamics need not be point-norm continuous in the natural \(\oacstar\)-norm \cite[Section 7.1]{BuchholzGrundling002}. A continuous automorphism group may emerge only after a suitable representation has been selected and the weak closure has been taken, provided that the dynamics is continuously implementable there. Such a passage to a von Neumann algebra resolves continuity within the chosen folium but reintroduces the representation dependence that the abstract resolvent algebra was designed to avoid.

The second obstruction is invariance of the algebra. Free bosonic dynamics acts by automorphisms of the resolvent algebra, although the action need not be point-norm continuous, whereas interacting dynamics can send elements of the gauge-invariant resolvent algebra into a larger algebra. An interacting evolution that leaves the resolvent algebra requires a larger ambient \(\oacstar\)-algebra on which it acts by automorphisms. For continuous Bose systems, particle-sector consistency in the Fock representation produces such an extension of the gauge-invariant resolvent algebra and accommodates broad classes of interacting dynamics \cite{DetlevBuchholz002,DetlevBuchholz003}. The same structure supports the analysis of proper condensates, for which infinite particle accumulation is detected by particle-number resolvents \cite{DetlevBuchholz004,DetlevBuchholz005}. The corresponding lattice extension is called the Buchholz algebra in \cite[Definition 2.2]{DeuchertLampartLemm001}. Within the Fock representation, it contains the gauge-invariant resolvent algebra and is preserved by the Bose--Hubbard dynamics \cite[Proposition 3.1]{DeuchertLampartLemm001}. In this sense, the Buchholz algebra is an ambient algebra for the gauge-invariant resolvent algebra. The continuous-space construction also contains nonzero invariant subalgebras on which the automorphism group is point-norm continuous, obtained by taking the continuous elements of the action \cite[Theorem 4.6]{DetlevBuchholz002}. Both the continuity and invariance problems have partial solutions once the Fock representation is fixed.

The Bose--Hubbard model is a stringent test of this framework because every lattice site has an infinite-dimensional local Hilbert space. An interaction-picture cluster expansion gives volume-uniform local particle-number estimates for its finite-volume Gibbs states at sufficiently high temperature and for arbitrary fixed chemical potential \cite[Theorem 1]{GongKuwaharaTong001}. Combined with the Deuchert--Lampart--Lemm automorphism and KMS construction, these estimates yield equilibrium states in a high-temperature region without a separate low-activity restriction on the chemical potential. The infinite-volume Bose--Hubbard dynamics on the concrete Buchholz algebra, its finite-volume approximation, and its KMS accumulation states are established in \cite[Proposition 3.1 and Theorems 4.1 and 5.2]{DeuchertLampartLemm001}. The finite-volume approximation and equilibrium-state conclusions, however, assume uniform polynomial moments of every required order for the local particle number. These are the hypotheses in \cite[equations (37) and (90)]{DeuchertLampartLemm001}. This state condition is technically severe. It can fail in condensed regimes with divergent local occupation, including proper-condensate limits. It also leaves the existence theorem conditional on an estimate that is not derived from the Hamiltonian. Superstability and a functional-integral representation are proposed as a route to the missing estimate in \cite[Remark 5.3]{DeuchertLampartLemm001}.

The present paper implements that proposal. For repulsive finite-range density interactions, we assume that the chemical potential makes the grand-canonical one-particle Hamiltonian bounded below by a strictly positive constant. A discrete functional integral yields a volume-uniform exponential bound on the local particle number at every positive inverse temperature. Theorem \ref{thm:fock-all-temperature-local-number} states the precise result. The exponential estimate is stronger than the polynomial condition used in the earlier approximation argument. It permits the occupation cutoff to be removed directly and gives the all-temperature finite-volume approximation in Theorem \ref{thm:fock-all-temperature-local-approximation} without imposing a local-number hypothesis on the state. The same argument produces weak-\(\ast\) accumulation states that are invariant on the gauge-invariant resolvent subalgebra and satisfy the KMS boundary relation there for the infinite-volume automorphism group, as stated in Corollary \ref{cor:fock-all-temperature-kms-states}. The high-temperature result and the present theorem cover complementary regimes. The high-temperature result allows arbitrary fixed chemical potential at sufficiently high temperature, whereas the present theorem allows every positive temperature under a low-activity gap condition. The gap assumption does not assert the absence of Bose--Einstein condensation in the Bose--Hubbard model. It excludes the threshold regime in which the chemical potential closes the one-particle gap, and that regime lies outside the present theorem.

The present construction nevertheless remains tied to the bosonic Fock representation. From the viewpoint of algebraic quantum statistical mechanics, the Buchholz algebra should ultimately be defined without a distinguished reference representation and should serve as an abstract ambient algebra for the gauge-invariant resolvent algebra. The Bose--Hubbard dynamics should then be defined as an automorphism group of this abstract algebra, while the ordinary \(\oacstar\)-KMS condition should be imposed on an appropriate invariant subalgebra on which the action is point-norm continuous. This would preserve the ability of the abstract algebra to accommodate distinct folia, temperatures, and phases before a GNS representation is selected.

The functional integral used here should also admit an intrinsic formulation at that level. The correspondence between stochastically positive KMS systems, periodic stochastic processes, and positive semigroups was developed in a formally \(\oacstar\)-algebraic setting by Klein--Landau \cite[Sections 1 and 6--8]{KleinLandau001}. Its reconstruction nevertheless passes through a GNS Hilbert space and the associated von Neumann algebra. Standard-form perturbation theory likewise constructs perturbed Liouvilleans and KMS states after a representation has been selected \cite{DerezinskiJaksicPillet001}. For bosonic fields, the Araki--Woods representation can be identified with a Gaussian periodic path space, and this equivalence yields interacting dynamics and equilibrium states through functional integration \cite[Sections 17.1.5 and 21.4--21.5]{DerezinskiGerard001}. These results are sharp at the von Neumann-algebraic level, but a deeper relation between probability, stochastic processes, and abstract \(\oacstar\)-algebras is needed for the resolvent and Buchholz algebras. Establishing the present all-temperature result in the fixed Fock representation is a first step toward that representation-independent theory.

The paper first fixes the Fock-space Bose--Hubbard model and the concrete Buchholz algebra. It then isolates the missing local-number input in the existing dynamical and equilibrium arguments, constructs the discrete functional integral, proves the exponential bound, and derives the infinite-volume dynamics and KMS states in the low-activity regime.

\section{Fock-Space Bose--Hubbard Model}\label{fock-space-bosehubbard-model}

The concrete model fixes the spatial and particle-number structures before either dynamics or states are taken to an infinite-volume limit.

\subsection{Graph, Fock spaces, and local observables}\label{graph-fock-spaces-and-local-observables}

Let \(\Gamma\) be a connected countable graph of uniformly bounded degree. The graph distance is denoted by \(\fun{\topdist}{x,y}\). For a finite \(X \Subset \Gamma\) and \(R \geq 0\), the closed graph neighborhood is \[\begin{aligned}
X[R]
& =
\set{y \in \Gamma}{\fun{\topdist}{y,X} \leq R}.
\end{aligned}\] The interior boundary is \[\begin{aligned}
\partial X
& =
\set{x \in X}{\fun{\topdist}{x,\Gamma \setminus X} = 1}.
\end{aligned}\] Assume that \(\Gamma\) is \(d\)-dimensional with surface parameter \(\sigma > 0\) in the sense that \[\begin{aligned}
\sup_{x \in \Gamma}
\sup_{R \geq 1}
\frac{\abscard{\partial\rbk{x[R]}}}{R^{d - 1}}
& \leq
\sigma.
\end{aligned}\] This is Definition 2.1 of \cite{DeuchertLampartLemm001} and implies the polynomial ball-volume bound \begin{equation}
\begin{aligned}
\sup_{x\in\Gamma}\abscard{x[R]}
& \leq
\sigma R^d,
\quad
R\geq1.
\end{aligned}
\label{eq:fock-polynomial-ball-volume-growth}
\end{equation} The phrase polynomial ball-volume growth refers to the estimate \eqref{eq:fock-polynomial-ball-volume-growth}.

For \(X \Subset \Gamma\) or \(X = \Gamma\), the one-particle and bosonic Fock spaces are \[\begin{aligned}
\sphilbfrak{h}_{X}
& =
\fun{\lpseq^{2}}{X},
&
\sphilb{F}_{X}
& =
\bigoplus_{n \in \monnat}\sphilbfrak{h}_{X}^{\otimes_{\txtsym}^{n}}.
\end{aligned}\] For \(n \in \monnat\), the orthogonal projection \(P_{X,n}\) onto the \(n\)-particle summand is defined by \begin{equation}
\begin{aligned}
\Ran P_{X,n}
& =
\sphilbfrak{h}_{X}^{\otimes_{\txtsym}^{n}}.
\end{aligned}
\label{eq:fock-particle-sector-projection}
\end{equation} The total number operator is \(\opfocknumber_X = \sum_{n \in \monnat}n P_{X,n}\). For \(m\in\monnat\), define \begin{equation}
\begin{aligned}
P_{X,\leq m}
& =
\sum_{n=0}^{m}P_{X,n},
&
\sphilb{D}_{X,\txtfin}
& =
\bigcup_{m\in\monnat}P_{X,\leq m}\sphilb{F}_X.
\end{aligned}
\label{eq:fock-finite-particle-subspace}
\end{equation} Thus \(\sphilb{D}_{X,\txtfin}\) consists precisely of the vectors having nonzero components in only finitely many particle-number summands. For \(x \in X\), the creation, annihilation, and local occupation operators on \(\sphilb{D}_{X,\txtfin}\) are \(\faadj{a_x}\), \(a_x\), and \(N_x = \faadj{a_x}a_x\). For \(x,y\in X\), their canonical commutation relations and the resulting local-number commutators on \(\sphilb{D}_{X,\txtfin}\) are \begin{equation}
\begin{aligned}
\commutator{a_x}{\faadj{a_y}}
& =
\kroneckerdelta_{x,y},
&
\commutator{a_x}{a_y}
& =
0,
\\
\commutator{\faadj{a_x}}{a_y}
& =
-\kroneckerdelta_{x,y},
&
\commutator{\faadj{a_x}}{\faadj{a_y}}
& =
0,
\\
\commutator{N_x}{\faadj{a_y}}
& =
\faadj{a_x}\commutator{a_x}{\faadj{a_y}}
+
\commutator{\faadj{a_x}}{\faadj{a_y}}a_x
=
\kroneckerdelta_{x,y}\faadj{a_x}
=
\kroneckerdelta_{x,y}\faadj{a_y},
\\
\commutator{N_x}{a_y}
& =
\faadj{a_x}\commutator{a_x}{a_y}
+
\commutator{\faadj{a_x}}{a_y}a_x
=
-\kroneckerdelta_{x,y}a_x
=
-\kroneckerdelta_{x,y}a_y.
\end{aligned}
\label{eq:fock-creation-annihilation-number-commutators}
\end{equation} For finite \(X \Subset \Gamma\), let \(\Omega_X\) be the Fock vacuum. For each multi-index \(\boldsymbol{\nu} = \seq{\nu_x}{x \in X} \in \monnat^X\), define \begin{equation}
\begin{aligned}
\Psi_{X,\boldsymbol{\nu}}
& =
\prod_{x \in X}
\frac{\rbk{\faadj{a_x}}^{\nu_x}}{\sqrt{\nu_x!}}
\Omega_X,
\\
N_x\Psi_{X,\boldsymbol{\nu}}
& =
\nu_x\Psi_{X,\boldsymbol{\nu}},
&
x
& \in
X,
\\
\opfocknumber_X\Psi_{X,\boldsymbol{\nu}}
& =
\rbk{\sum_{x \in X}\nu_x}\Psi_{X,\boldsymbol{\nu}}.
\end{aligned}
\label{eq:fock-occupation-number-basis}
\end{equation} The creation operators at distinct sites commute, and hence the product does not depend on the ordering of \(X\). The vectors \(\Psi_{X,\boldsymbol{\nu}}\), \(\boldsymbol{\nu} \in \monnat^X\), form the orthonormal occupation-number basis of \(\sphilb{F}_X\).

Let \(\physham[h]_{X}\) be the bounded hopping operator on \(\sphilbfrak{h}_{X}\) with matrix elements \[\begin{aligned}
\rbk{\physham[h]_{X}}_{x,y}
& =
\begin{cases}
-1,
& x \sim y,
\\
0,
& x \not\sim y.
\end{cases}
\end{aligned}\] Assume that the uniform degree of the graph is strictly positive, and set \begin{equation}
\begin{aligned}
0<C_h
& =
\sup_{x \in \Gamma}\abscard{\set{y \in \Gamma}{y \sim x}}.
\end{aligned}
\label{eq:fock-uniform-degree-bound}
\end{equation} Then \(\norm{\physham[h]_{\Lambda}} \leq C_h\) for every finite \(\Lambda\), and \(-\physham[h]_{\Lambda} - C_h\) has nonnegative off-diagonal entries and nonpositive row sums. Fix \(U > 0\) and \(r\in\semigrposint\), and let \(w \colon \Gamma \times \Gamma \to \fldreal\) be a bounded symmetric function satisfying \begin{equation}
\begin{aligned}
\fun{w}{x,y}
& \geq
0,
&
\fun{\topdist}{x,y}>r
& \implies
\fun{w}{x,y}=0.
\end{aligned}
\label{eq:fock-interaction-range}
\end{equation} The density interaction is defined by \[\begin{aligned}
\fun{v}{x,y}
& =
U\fun{\fndef{\setone{x}}}{y} + \fun{w}{x,y}.
\end{aligned}\] Thus \(v\) is pointwise nonnegative, and its interaction range is at most the integer \(r\) fixed in \eqref{eq:fock-interaction-range}. For finite \(\Lambda \Subset \Gamma\), the number-conserving Hamiltonian and its grand-canonical shift are defined on \(\sphilb{D}_{\Lambda,\txtfin}\) from \eqref{eq:fock-finite-particle-subspace} by \begin{equation}
\begin{aligned}
\physham_{\txtfr,\Lambda}
& =
\sum_{x,y \in \Lambda}
\rbk{\physham[h]_{\Lambda}}_{x,y}\faadj{a_x} a_y,
\\
V_{\Lambda}
& =
\sum_{x,y \in \Lambda}
\fun{v}{x,y}\faadj{a_x}\faadj{a_y} a_x a_y,
\\
H_{\Lambda}
& =
\physham_{\txtfr,\Lambda} + V_{\Lambda},
\\
K_{\Lambda,\smchemicalpotential_{\Lambda}}
& =
H_{\Lambda} - \smchemicalpotential_{\Lambda}\opfocknumber_{\Lambda}.
\end{aligned}
\label{eq:fock-bose-hubbard-hamiltonian}
\end{equation} The ordered-pair convention in \eqref{eq:fock-bose-hubbard-hamiltonian} agrees with \cite[equation (10)]{DeuchertLampartLemm001} and introduces no additional factor \(1/2\).

The Hamiltonian commutes with \(\opfocknumber_{\Lambda}\). Its restriction \(H_{\Lambda,n}\) to \(P_{\Lambda,n}\sphilb{F}_{\Lambda}\) is bounded for every \(n \in \monnat\). For \(X = \Gamma\), the same formula on the \(n\)-particle space defines a bounded operator \(H_{\Gamma,n}\) because the hopping has bounded degree and the interaction is bounded and finite-range. The symbol \(H_{\Gamma}\) denotes the self-adjoint direct sum \(\bigoplus_{n\in\monnat}H_{\Gamma,n}\) on \(\sphilb{F}_{\Gamma}\).

\subsection{Concrete Buchholz algebra}\label{concrete-buchholz-algebra}

The gauge-invariant resolvent algebra on \(\sphilbfrak{h}_{X}\) is denoted by \(\fun{\oaresolventalgebra}{X}^{\gamma}\). The concrete Buchholz algebra of \cite[Definition 2.2]{DeuchertLampartLemm001} is \[\begin{aligned}
\oa{B}_{X}
& =
\set{A \in \opspbddlin{\sphilb{F}_{X}}}
{\text{for every $n \in \monnat$ there is $R \in
\fun{\oaresolventalgebra}{X}^{\gamma}$ with $A P_{X,\leq n} = R P_{X,\leq n}$}}.
\end{aligned}\] For \(X = \Gamma\), write \(\oa{B} = \oa{B}_{\Gamma}\). Proposition 2.4 of the cited paper identifies \(\oa{B}\) with the bounded inverse limit of its particle-sector coordinate algebras. This characterization explains why the algebraic dynamics can exist without a uniform continuity estimate in the particle-number index \(n\).

\section{Existing Dynamics on the Concrete Buchholz Algebra and the Missing Estimate}\label{existing-dynamics-on-the-concrete-buchholz-algebra-and-the-missing-estimate}

The full automorphism group on the concrete Buchholz algebra exists before any state or moment condition is selected. The moment condition enters only when this group is approximated by spatially finite Hamiltonians and implemented in an infinite-particle representation.

\subsection{Automorphisms on the full Buchholz algebra}\label{automorphisms-on-the-full-buchholz-algebra}

For \(X \Subset \Gamma\) or \(X = \Gamma\), let \(\alpha_X\) denote the one-parameter family whose time-\(t\) map is defined by \[\begin{aligned}
\fun{\alpha_{X,t}}{A}
& =
\napiernum^{\imunit tH_X}A\napiernum^{-\imunit tH_X},
\quad
A \in \oa{B}_X,
\quad
t \in \fldreal.
\end{aligned}\]

\begin{fact}[Deuchert--Lampart--Lemm, Proposition 3.1]\label{fact:dll-buchholz-automorphism}
If $\fun{v}{x,y}$ tends to $0$ as $\fun{\topdist}{x,y} \to \infty$, then
for every finite $X\Subset\Gamma$ and for $X=\Gamma$, $\alpha_X$ is a group of $\ast$-automorphisms of
$\oa{B}_X$, whose time-$t$ member is $\alpha_{X,t}$ for every $t \in \fldreal$.
No continuity in $t$ is asserted.
\end{fact}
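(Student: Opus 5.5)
The plan is to work sector by sector and reduce invariance of \(\oa{B}_X\) to invariance of the gauge-invariant resolvent algebra \(\fun{\oaresolventalgebra}{X}^{\gamma}\) after compression to finitely many particles. Since \(H_X\) commutes with \(\opfocknumber_X\), the unitary \(\napiernum^{\imunit tH_X}\) commutes with every \(P_{X,\leq n}\). For \(A\in\oa{B}_X\) and \(n\in\monnat\), pick \(R\in\fun{\oaresolventalgebra}{X}^{\gamma}\) with \(AP_{X,\leq n}=RP_{X,\leq n}\). Then
\[
\fun{\alpha_{X,t}}{A}P_{X,\leq n}=\napiernum^{\imunit tH_X}R\,\napiernum^{-\imunit tH_X}P_{X,\leq n}.
\]
Hence it suffices to show the following. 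For every \(R\in\fun{\oaresolventalgebra}{X}^{\gamma}\), every \(n\), and every \(t\), there is \(R'\in\fun{\oaresolventalgebra}{X}^{\gamma}\) with
\[
\napiernum^{\imunit tH_X}R\,\napiernum^{-\imunit tH_X}P_{X,\leq n}=R'P_{X,\leq n}.
\]
Once this holds for all \(t\), the map \(\alpha_{X,t}\) sends \(\oa{B}_X\) into itself. The inverse \(\alpha_{X,-t}\) does the same, so each \(\alpha_{X,t}\) is a bijection. The \(\ast\)-homomorphism and group properties are immediate from unitary conjugation, and no continuity in \(t\) is claimed.

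For the reduction I will use the known structure of the gauge-invariant resolvent algebra restricted to \(n\)-particle sectors, as in the Buchholz analysis. On \(P_{X,n}\sphilb{F}_X\), the compressions of \(\fun{\oaresolventalgebra}{X}^{\gamma}\) form a norm-closed algebra \(\mathfrak{K}_n\). This algebra is generated by symmetrized tensor products of compact operators on \(\sphilbfrak{h}_X^{\otimes k}\) (\(k\leq n\)) and identities on the remaining factors. Its elements in fact lie in \(\fun{\oaresolventalgebra}{X}^{\gamma}P_{X,\leq n}\), and these sector algebras are compatible under the inverse-limit description cited as Proposition 2.4.

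The invariance question then becomes sector-wise: conjugation by \(\napiernum^{\imunit tH_{X,n}}\) must preserve \(\mathfrak{K}_n\). For finite \(X\) this is trivial, since every operator on the finite-dimensional space \(P_{X,n}\sphilb{F}_X\) is compact and lies in \(\mathfrak{K}_n\). For \(X=\Gamma\), I split \(H_{\Gamma,n}=H^{0}_{n}+V_n\), where \(H^0_n\) is the second quantized hopping. The free part acts as a tensor product of one-particle unitaries, \(\napiernum^{\imunit tH^0}=\Gamma(\napiernum^{\imunit th})\), and preserves each layer of the algebra. For the interaction I use a Dyson series. The relevant fact is that \(V_n\) is a sum of two-body multiplication operators by \(\fun{v}{x_i,x_j}\), and \(\fun{v}{x,y}\to0\) as \(\fun{\topdist}{x,y}\to\infty\).

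The main obstacle is to show that the commutators \([V_n,K]\), and more precisely \(\napiernum^{\imunit tH_{\Gamma,n}}\napiernum^{-\imunit tH^0_n}\), respect the layered compact structure. Within the \(n\)-body sector, the two-body potential decays at infinity. It therefore differs from a "cluster-decomposed" operator by terms that are compact in the relative coordinate, which is an HVZ-type structure. My first attempt would be to verify the Dyson expansion term by term. Each term should lie in \(\mathfrak{K}_n\) modulo the identity layer, after which norm convergence of the series finishes the sector-wise argument because \(H_{\Gamma,n}\) is bounded. This is exactly the mechanism of Buchholz's continuum result, adapted to the lattice.
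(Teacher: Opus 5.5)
The step that fails is your claim that, once \(\mathfrak{K}_n\) has been identified, ``the invariance question becomes sector-wise''. For \(X=\Gamma\), the definition of \(\mathcal{B}_\Gamma\) asks for one \(R'\in\mathcal{R}(\Gamma)^{\gamma}\) that reproduces \(\alpha_{\Gamma,t}(A)\) on all sectors \(m\le n\) at once. The sector components \(RP_{\Gamma,m}\) of an element of \(\mathcal{R}(\Gamma)^{\gamma}\) are not independent. They are linked by the consistency maps \(\kappa_m\colon\mathfrak{K}_{m+1}\to\mathfrak{K}_m\) of the inverse-limit description (Proposition 2.4 of DLL). These maps take the \((m+1)\)-particle restriction of a second-quantized \(k\)-body compact operator to its \(m\)-particle restriction, which is zero if \(k=m+1\); physically, one particle is sent to infinity. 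In particular, an arbitrary \(K\in\mathfrak{K}_n\) placed on the top sector and extended by zero below is not of the form \(RP_{\Gamma,\le n}\): for \(K=1\), consistency would force \(RP_{\Gamma,n-1}=P_{\Gamma,n-1}\). So showing that each \(\operatorname{Ad}e^{\mathsf{i}tH_{\Gamma,m}}\) preserves \(\mathfrak{K}_m\) only gives one \(R'_m\) per sector, not a single \(R'\). You additionally need the intertwining relation
\[
\kappa_m\bigl(e^{\mathsf{i}tH_{\Gamma,m+1}}\,C\,e^{-\mathsf{i}tH_{\Gamma,m+1}}\bigr)
=e^{\mathsf{i}tH_{\Gamma,m}}\,\kappa_m(C)\,e^{-\mathsf{i}tH_{\Gamma,m}},
\qquad C\in\mathfrak{K}_{m+1}.
\]
This is exactly the ``consistency maps of the inverse limit'' half of the DLL argument that the paper cites. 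Your proposal mentions compatibility but never verifies it. Your finite-volume case is fine, because there the projections \(P_{X,m}\) are gauge-invariant finite-rank operators lying in \(\mathcal{R}(X)^{\gamma}\), so the sectors can be glued.

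The interaction step is also left as a heuristic, and the HVZ picture aims at the wrong object. The operator \(v(x_i,x_j)\) is never compact on the pair space, since it contains the constant on-site term \(U\); what the decay of \(v\) gives is a multiplier property. Suppose \(C\) is compact on the coordinates of a particle set \(I\), with \(i\in I\) and \(j\notin I\). Then \(v(x_i,x_j)(C\otimes 1)\) and \((C\otimes 1)v(x_i,x_j)\) are compact on the coordinates \(I\cup\{j\}\). To see this, approximate \(C\) in norm by \(P_FC\) (respectively \(CP_F\)), where \(P_F\) projects onto functions supported in \(F^{I}\) with \(F\) finite. Then note that \(v(x_i,x_j)\prod_{l\in I}\mathbf{1}_F(x_l)\) vanishes at infinity on \(\Gamma^{I\cup\{j\}}\). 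Combined with the trivial cases (\(i,j\in I\); \(i,j\notin I\); the hopping terms; \([H_{\Gamma,n},1]=0\)), this shows that \(\mathrm{ad}_{H_{\Gamma,n}}\) maps \(\mathfrak{K}_n\) into itself. Since \(H_{\Gamma,n}\) is bounded, \(\sum_k(\mathsf{i}t)^k\mathrm{ad}_{H_{\Gamma,n}}^k/k!\) converges in norm, so no interaction picture is needed. The same computation also closes the first gap. It writes \([H,\widehat{C}]\), for the second quantization \(\widehat{C}\) of a \(k\)-body compact \(C\), as a sum of second-quantized \(k\)-body and \((k+1)\)-body compact operators with the same kernels in every sector. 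This gives \(\kappa_m\circ\mathrm{ad}_{H_{\Gamma,m+1}}=\mathrm{ad}_{H_{\Gamma,m}}\circ\kappa_m\) on generators. It then holds on all of \(\mathfrak{K}_{m+1}\), because \(\kappa_m\) is a homomorphism and \(\mathrm{ad}\) is a derivation, and exponentiating yields the missing intertwining relation.
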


The particle-sector Dyson expansion and the consistency maps of the inverse limit prove Fact \ref{fact:dll-buchholz-automorphism}; see \cite[Proposition 3.1 and equations (21)--(29)]{DeuchertLampartLemm001}. Example 3.2 of \cite{DeuchertLampartLemm001} proves that the group need not be weakly continuous. The problem addressed here is not existence of \(\alpha_{\Gamma}\) on \(\oa{B}\) but approximation of its time-\(t\) map \(\alpha_{\Gamma,t}\) by \(\alpha_{\Lambda,t}\) on local gauge-invariant observables.

\subsection{The obstruction in the DLL approximation argument}\label{the-obstruction-in-the-dll-approximation-argument}

Theorem 4.1 of \cite{DeuchertLampartLemm001} derives finite-volume approximation only after assuming a volume-uniform local particle-number estimate. Its proof propagates the assumed estimate, uses it to remove a local occupation cutoff, and then applies a state-independent finite-range propagation bound to the cutoff dynamics. The equilibrium argument in Theorem 5.2 of the same paper inherits this dependence.

The present proof does not use either conditional theorem. The functional-integral estimate proved in this paper controls the cutoff tails directly for the Gibbs family. We then combine that control with the finite-range shell expansion and pass the KMS boundary functions to the accumulation state.

\section{Finite-Volume and Infinite-Volume Equilibrium States}\label{finite-volume-and-infinite-volume-equilibrium-states}

The equilibrium analysis begins with the trace-class Gibbs operator on a finite Fock space and passes its KMS boundary relation to a state of the infinite Buchholz algebra.

\subsection{Superstability and finite-volume Gibbs states}\label{superstability-and-finite-volume-gibbs-states}

Fix \(\beta > 0\) and a bounded family \(\seq{\smchemicalpotential_{\Lambda}}{\Lambda \Subset \Gamma}\) of chemical potentials. The finite-volume partition function and Gibbs state are \begin{equation}
\begin{aligned}
Z_{\Lambda,\beta}
& =
\sqfun{\trace_{\sphilb{F}_{\Lambda}}}{
\fnexp{-\beta K_{\Lambda,\smchemicalpotential_{\Lambda}}}},
\\
\fun{\oastate[\psi_{\Lambda}]}{A}
& =
\frac{
\sqfun{\trace_{\sphilb{F}_{\Lambda}}}{
A\fnexp{-\beta K_{\Lambda,\smchemicalpotential_{\Lambda}}}}}
{Z_{\Lambda,\beta}}.
\end{aligned}
\label{eq:fock-finite-volume-gibbs-state}
\end{equation}

\begin{prop}[Finite-volume Gibbs existence]\label{prop:fock-finite-volume-gibbs-existence}
For every finite $\Lambda \Subset \Gamma$ and $\beta > 0$, the operator
$\fnexp{-\beta K_{\Lambda,\smchemicalpotential_{\Lambda}}}$ is trace class and
\eqref{eq:fock-finite-volume-gibbs-state} defines a normal state on $\opspbddlin{\sphilb{F}_{\Lambda}}$.
Its restriction to $\fun{\oaresolventalgebra}{\Lambda}^{\gamma}$ satisfies the finite-volume $\beta$-KMS
boundary relation for the number-conserving dynamics.
\end{prop}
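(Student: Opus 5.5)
The plan is to work sector by sector, using that $K_{\Lambda,\smchemicalpotential_{\Lambda}}$ commutes with $\opfocknumber_{\Lambda}$. On $P_{\Lambda,n}\sphilb{F}_{\Lambda}$, which has finite dimension $\binom{n+\abscard{\Lambda}-1}{n}$ because $\Lambda$ is finite, $K_{\Lambda,\smchemicalpotential_{\Lambda}}$ restricts to the Hermitian matrix $H_{\Lambda,n}-\smchemicalpotential_{\Lambda}n$. Hence $K_{\Lambda,\smchemicalpotential_{\Lambda}}$ is essentially self-adjoint on $\sphilb{D}_{\Lambda,\txtfin}$, with closure the direct sum $\bigoplus_n (H_{\Lambda,n}-\smchemicalpotential_{\Lambda}n)$. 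We also have
\[
\fnexp{-\beta K_{\Lambda,\smchemicalpotential_{\Lambda}}}=\bigoplus_{n\in\monnat}\fnexp{-\beta(H_{\Lambda,n}-\smchemicalpotential_{\Lambda}n)}.
\]

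The key lower bound is superstability. Since $v\geq0$ pointwise and $\fun{v}{x,x}\geq U$, we argue in the occupation basis \eqref{eq:fock-occupation-number-basis}. The operator $V_{\Lambda}$ is diagonal there, with eigenvalue $\sum_{x,y}\fun{v}{x,y}\nu_x(\nu_y-\kroneckerdelta_{x,y})\geq U\sum_x\nu_x(\nu_x-1)$. Indeed, $\faadj{a_x}\faadj{a_y}a_xa_y=N_xN_y-\kroneckerdelta_{x,y}N_x$, by \eqref{eq:fock-creation-annihilation-number-commutators}. Cauchy--Schwarz gives $\sum_x\nu_x^2\geq n^2/\abscard{\Lambda}$, so $V_{\Lambda}\geq U(n^2/\abscard{\Lambda}-n)$ on the $n$-sector. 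The hopping satisfies $\physham_{\txtfr,\Lambda}\geq -C_h n$ on the $n$-sector, since it is the second quantization of $\physham[h]_{\Lambda}$ with $\norm{\physham[h]_{\Lambda}}\leq C_h$. Therefore
\[
H_{\Lambda,n}-\smchemicalpotential_{\Lambda}n\geq \frac{U}{\abscard{\Lambda}}n^2-(U+C_h+\abscard{\smchemicalpotential_{\Lambda}})n .
\]
Bounding the trace on each sector by dimension times the exponential of minus the lowest eigenvalue gives
\[
Z_{\Lambda,\beta}\leq\sum_n\binom{n+\abscard{\Lambda}-1}{n}\fnexp{-\beta U n^2/\abscard{\Lambda}+\beta c n}<\infty .
\]
The Gaussian factor dominates the polynomial growth of the dimensions, so $\fnexp{-\beta K_{\Lambda,\smchemicalpotential_{\Lambda}}}$ is a positive trace-class operator. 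It is nonzero (for example, the vacuum sector contributes $1$), so $Z_{\Lambda,\beta}>0$. The functional $\oastate[\psi_{\Lambda}]$ is then given by a density matrix and is thus a normal state on $\opspbddlin{\sphilb{F}_{\Lambda}}$.

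For the KMS relation we use the dynamics generated by $K_{\Lambda,\smchemicalpotential_{\Lambda}}$. This agrees with $\alpha_{\Lambda}$ on $\fun{\oaresolventalgebra}{\Lambda}^{\gamma}$, because gauge-invariant elements commute with $\opfocknumber_{\Lambda}$. They commute since each commutes with the gauge unitaries $\napiernum^{\imunit s\opfocknumber_{\Lambda}}$, whose spectral projections are the $P_{\Lambda,n}$. Also, $\alpha_{\Lambda}$ leaves $\oa{B}_{\Lambda}\supset\fun{\oaresolventalgebra}{\Lambda}^{\gamma}$ invariant by Fact \ref{fact:dll-buchholz-automorphism}, though the relation is in fact verified for all bounded $A,B$. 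For such $A,B$, define
\[
F(z)=Z_{\Lambda,\beta}^{-1}\sqfun{\trace}{A\,\napiernum^{\imunit zK}B\napiernum^{-\imunit zK}\napiernum^{-\beta K}},\qquad K=K_{\Lambda,\smchemicalpotential_{\Lambda}},
\]
on the strip $0\leq\opimag z\leq\beta$. We first check it on the truncations $P_{\Lambda,\leq m}$, where everything is finite dimensional and $F$ is entire. There the cyclicity of the trace gives $F(t)=\oastate[\psi_{\Lambda}](A\alpha_t(B))$ and $F(t+\imunit\beta)=\oastate[\psi_{\Lambda}](\alpha_t(B)A)$.

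The only real obstacle is to pass from the truncations to $m\to\infty$ with boundedness and continuity on the closed strip. We write $z=t+\imunit s$ and use the factorization $\napiernum^{-sK}\,\cdot\,\napiernum^{-(\beta-s)K}$ together with the Hölder inequality for Schatten norms, $\norm{\napiernum^{-sK}}_{\beta/s}\norm{\napiernum^{-(\beta-s)K}}_{\beta/(\beta-s)}=Z_{\Lambda,\beta}$. This bounds $\abs{F}$ by $\norm{A}\norm{B}$ uniformly on the strip. The truncated functions converge locally uniformly on the open strip, and in trace-norm sense at the edges, since $\napiernum^{-\beta K}$ is trace class and $P_{\Lambda,\leq m}\to1$ strongly. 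This yields analyticity inside and continuity up to the boundary.
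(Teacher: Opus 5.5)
Your argument is correct, and it reaches both conclusions by a somewhat different route from the paper.

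\textbf{Trace-class part.} The paper proves the operator inequality $K_{\Lambda,\mu_{\Lambda}}\ge U\sum_{x}N_x^2-c_U\sum_x N_x$. It then completes the square site by site and uses the min--max principle in the occupation basis. This gives a majorant that factorizes over sites, $Z_{\Lambda,\beta}\le \napiernum^{\beta c_U^2\abs{\Lambda}/(2U)}\prod_{x\in\Lambda}\sum_{n}\napiernum^{-\beta U n^2/2}$.

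You instead bound only the bottom of the spectrum in each particle sector, via $\sum_x\nu_x^2\ge n^2/\abs{\Lambda}$, and pay the sector dimension $\binom{n+\abs{\Lambda}-1}{n}$. You need only the lowest eigenvalue per sector rather than a min--max comparison of the whole spectrum. The price is a cruder, non-factorized majorant. It still gives Gaussian decay in the total number $N_{\Lambda}\ge N_x$, so it serves equally well for the later appeals to ``the trace majorant'' (finiteness of \eqref{eq:fock-reduced-density-kernel}, interchanges of sector sums).

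\textbf{KMS part.} The paper only cites trace cyclicity on analytic elements, i.e.\ the analytic-element form of the KMS condition. You construct the strip function explicitly and obtain $\abs{F}\le\norm{A}\norm{B}$ from Schatten--H\"older. This is exactly the strip bound the paper later imports from \cite[Lemma 5.4]{DeuchertLampartLemm001} in the proof of Corollary \ref{cor:fock-all-temperature-kms-states}. Your upper-strip convention $F(t)=\psi_{\Lambda}(A\alpha_t(B))$ converts to the lower-strip form \eqref{eq:fock-infinite-volume-kms-boundary} via $z\mapsto F(-z)$ and invariance of $\psi_{\Lambda}$.

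\textbf{One inference to tighten.} Locally uniform convergence in the open strip plus pointwise convergence on the two edges does not by itself give continuity of the limit up to the boundary. What you actually have is uniform convergence on the closed strip.
\begin{itemize}
\item Every sector is finite-dimensional and $K$-invariant, so $K$ has an orthonormal eigenbasis $(\phi_j)$ with eigenvalues $E_j$.
\item In this basis, $F(t+\imunit s)=Z_{\Lambda,\beta}^{-1}\sum_{j,k}\napiernum^{-(\beta-s)E_j-sE_k}\napiernum^{\imunit t(E_k-E_j)}A_{jk}B_{kj}$, where $A_{jk}=\langle\phi_j,A\phi_k\rangle$.
\item For $0\le s\le\beta$, weighted AM--GM bounds each term by $(\napiernum^{-\beta E_j}+\napiernum^{-\beta E_k})\abs{A_{jk}}\abs{B_{kj}}$.
\item Cauchy--Schwarz in $k$ (respectively in $j$) bounds the double sum of these majorants by $2Z_{\Lambda,\beta}\norm{A}\norm{B}$.
\end{itemize}
The Weierstrass test then gives continuity on the closed strip and holomorphy inside at once. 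Compressing $A$ and $B$ by $P_{\Lambda,\leq m}$ produces exactly partial sums of this series, so your truncated functions converge uniformly on the closed strip.
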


\begin{proof}
Let
$c_U = U + C_h + \sup_{\Lambda \Subset \Gamma}\abs{\smchemicalpotential_{\Lambda}}$, where
$C_h$ is the uniform degree bound in \eqref{eq:fock-uniform-degree-bound}.
Since $w \geq 0$ and the second quantization of $\physham[h]_{\Lambda}$ is bounded below by
$-C_h \opfocknumber_{\Lambda}$, one has the quadratic-form bound
$$\begin{aligned}
K_{\Lambda,\smchemicalpotential_{\Lambda}}
& \geq
U \sum_{x \in \Lambda}N_x^2 - c_U \sum_{x \in \Lambda}N_x
\end{aligned}$$
on $\sphilb{D}_{\Lambda,\txtfin}$ defined in
\eqref{eq:fock-finite-particle-subspace}.
For every occupation configuration
$\seq{\nu_x}{x \in \Lambda} \in \monnat^{\Lambda}$, the scalar estimate is
$$\begin{aligned}
U \sum_{x \in \Lambda}\nu_x^2 - c_U \sum_{x \in \Lambda}\nu_x
& \geq
\frac{U}{2}\sum_{x \in \Lambda}\nu_x^2
-
\frac{c_U^2}{2U}\abscard{\Lambda}.
\end{aligned}$$
The min--max principle and the occupation-number basis
\eqref{eq:fock-occupation-number-basis} give the trace majorant
$$\begin{aligned}
Z_{\Lambda,\beta}
& \leq
\fnexp{\frac{\beta c_U^2}{2U}\abscard{\Lambda}}
\prod_{x \in \Lambda}
\sum_{n \in \monnat}\fnexp{-\frac{\beta U}{2}n^2}
<
\infty.
\end{aligned}$$
The Gibbs density matrix is consequently well-defined.
Trace cyclicity for analytic elements proves the finite-volume KMS boundary relation.
On gauge-invariant observables, adding $-\smchemicalpotential_{\Lambda}\opfocknumber_{\Lambda}$ to the generator
does not change the real-time action because $\opfocknumber_{\Lambda}$ commutes with those observables.
\end{proof}

\subsection{Infinite-volume equilibrium problem}\label{infinite-volume-equilibrium-problem}

The Fock factorization \(\sphilb{F}_{\Gamma} \cong \sphilb{F}_{\Lambda} \otimes \sphilb{F}_{\Lambda^{\mathrm{c}}}\) and the vacuum vector \(\Omega_{\Lambda^{\mathrm{c}}}\) define the state from equation (89) of \cite{DeuchertLampartLemm001} by \begin{equation}
\begin{aligned}
\overline{\rho}_{\Lambda}
& =
\frac{\fnexp{-\beta K_{\Lambda,\smchemicalpotential_{\Lambda}}}}
{Z_{\Lambda,\beta}}
\otimes
\ketbra{\Omega_{\Lambda^{\mathrm{c}}}}{\Omega_{\Lambda^{\mathrm{c}}}},
\\
\fun{\oastate[\overline{\psi}_{\Lambda}]}{A}
& =
\sqfun{\trace_{\sphilb{F}_{\Gamma}}}{A\overline{\rho}_{\Lambda}},
\quad
A \in \oa{B}.
\end{aligned}
\label{eq:fock-vacuum-extended-gibbs-state}
\end{equation} The state space of \(\oa{B}_{\Gamma}\) is weak-\(\ast\) compact, and the net \(\seq{\oastate[\overline{\psi}_{\Lambda}]}{\Lambda \Subset \Gamma}\) has weak-\(\ast\) accumulation points. Weak-\(\ast\) compactness alone does not pass the KMS boundary condition because the dynamics is controlled through unbounded local occupations before the cutoffs are removed.

No continuity of \(\alpha_{\Gamma}\) on \(\oa{B}\) is assumed. The equilibrium conclusion to be proved is a state \(\oastate[\psi]\) on \(\oa{B}\) that is invariant under \(\alpha_{\Gamma}\) on \(\fun{\oaresolventalgebra}{\Gamma}^{\gamma}\) and satisfies the corresponding KMS boundary relation there. For \(A,B \in \fun{\oaresolventalgebra}{\Gamma}^{\gamma}\), this means the existence of a bounded scalar function \(F_{A,B}\) that is holomorphic on \(-\beta < \opimag z < 0\), continuous on the closed strip, and satisfies \begin{equation}
\begin{aligned}
\fun{F_{A,B}}{t}
& =
\fun{\oastate[\psi]}{\fun{\alpha_{\Gamma,t}}{A}B},
&
\fun{F_{A,B}}{t-\imunit\beta}
& =
\fun{\oastate[\psi]}{B\fun{\alpha_{\Gamma,t}}{A}}.
\end{aligned}
\label{eq:fock-infinite-volume-kms-boundary}
\end{equation} Continuity in this statement concerns only the scalar function \(F_{A,B}\) and is part of the conclusion. It does not assert point-norm, weak, or strong continuity of the automorphism group on \(\oa{B}\).

\section{All-Temperature Particle-Number Bounds at Low Activity}\label{all-temperature-particle-number-bounds-at-low-activity}

The discrete functional integral compares marked open paths with an interacting gas of unmarked loops. The argument uses nonnegativity of \(v\) and a uniform gap below the one-particle band.

\subsection{Sub-Markov kernels and the chemical-potential gap}\label{sub-markov-kernels-and-the-chemical-potential-gap}

Let \(C_h\) be the uniform degree bound defined by \eqref{eq:fock-uniform-degree-bound}. Assume that there is \(\varepsilon > 0\) such that \begin{equation}
\begin{aligned}
\smchemicalpotential_{\Lambda}
& \leq
-C_h - \varepsilon
\end{aligned}
\label{eq:fock-path-chemical-potential-gap}
\end{equation} for every finite \(\Lambda \Subset \Gamma\). The elementary probabilistic definitions are included because the bridge measure and its normalization enter the particle-number estimate directly.

\begin{defn}[Finite state space and sub-Markovian kernels]\label{def:fock-finite-state-space-submarkov}
A finite state space is a finite set $E$ whose elements are called states.
A matrix kernel $R$ on the finite state space $E$ is substochastic when
$$\begin{aligned}
\fun{R}{x,y}
& \geq
0,
&
\sum_{y\in E}\fun{R}{x,y}
& \leq
1
\end{aligned}$$
for every $x,y\in E$.
It is stochastic when every row sum is equal to $1$.
A stochastic kernel is a transition kernel that conserves mass.

A family of matrix kernels $P_t$, $t\geq0$, on $E$ is sub-Markovian when
$$\begin{aligned}
\fun{P_0}{x,y}
& =
\fun{\fndef{\setone{x}}}{y},
\\
\sum_{z\in E}
\fun{P_s}{x,z}\fun{P_t}{z,y}
& =
\fun{P_{s+t}}{x,y},
\\
\fun{P_t}{x,y}
& \geq
0,
\\
\sum_{y\in E}\fun{P_t}{x,y}
&\leq
1
\end{aligned}$$
for every $s,t\geq0$ and $x,y\in E$.
\end{defn}

\begin{defn}[Cemetery state and absorbing state]\label{def:fock-cemetery-absorbing-state}
Let $E$ be a finite state space in the sense of
Definition \ref{def:fock-finite-state-space-submarkov}, and let $\partial\notin E$.
For a substochastic kernel $R$ on $E$, define a stochastic kernel $\widehat{R}$ on
$E\cup\setone{\partial}$ by
$$\begin{aligned}
\fun{\widehat{R}}{x,y}
& =
\fun{R}{x,y},
&
x,y
& \in E,
\\
\fun{\widehat{R}}{x,\partial}
& =
1-\sum_{y\in E}\fun{R}{x,y},
&
x
& \in E,
\\
\fun{\widehat{R}}{\partial,\partial}
& =
1,
\\ 
\fun{\widehat{R}}{\partial,y}
& =
0,
\quad y\in E.
\end{aligned}$$
The extra point $\partial$ is called the cemetery state: the missing row mass of $R$ is the probability of a
transition from $x\in E$ to $\partial$.
The two identities in the last two lines mean that $\partial$ is an absorbing state for $\widehat{R}$; after a
transition to $\partial$, the kernel assigns probability $1$ to remaining at $\partial$ and probability $0$ to
returning to $E$.

For a process $Z=\seq{Z_t}{t\geq0}$ with values in $E\cup\setone{\partial}$, the state $\partial$ is absorbing
when, for every sample point and all $0\leq s\leq t$,
$$\begin{aligned}
Z_s=\partial
& \implies
Z_t=\partial.
\end{aligned}$$
\end{defn}

\begin{defn}[Discrete-time Markov chain]
Let $E$ be a finite state space in the sense of Definition \ref{def:fock-finite-state-space-submarkov}, let $R$
be a stochastic transition kernel on $E$, and let $\msrprb$ be a probability measure on a measurable space.
An $E$-valued sequence $\seq{Y_n}{n\in\monnat}$ is a discrete-time Markov chain with transition kernel $R$ when
$$\begin{aligned}
\funcond{\msrprb}{Y_{n+1}=y}{Y_0,\ldots,Y_n}
& =
\fun{R}{Y_n,y}
\end{aligned}$$
almost surely for every $n\in\monnat$ and $y\in E$.
The initial law is specified separately by the distribution of $Y_0$.
\end{defn}

\begin{defn}[Time-homogeneous Markov process]\label{def:fock-time-homogeneous-markov-process}
Let $E$ be a finite state space, let
$Z=\seq{Z_t}{t\geq0}$ be an $E$-valued process on a probability space
$\rbk{\Sigma,\mblfml{F},\msrprb}$, and let $P_t$, $t\geq0$, be stochastic kernels on $E$.
For $s\geq0$, let $\mblfml{F}_{Z,s}$ be the sigma-algebra generated by $Z_u$, $0\leq u\leq s$.
The process $Z$ is a time-homogeneous Markov process with transition kernels $P_t$ when, for every
$0\leq s\leq t$ and $y\in E$,
$$\begin{aligned}
\sqfuncond{\prbexp_{\msrprb}}{
\fndef{\setone{Z_t=y}}
}{\mblfml{F}_{Z,s}}
& =
\fun{P_{t-s}}{Z_s,y}
\end{aligned}$$
almost surely.
\end{defn}

\begin{defn}[Poisson process]
Let $\msrprb$ be a probability measure on a measurable space.
A process $N_t$, $t\geq0$, is a Poisson process of rate $c>0$ when $N_0=0$ almost surely, its increments over
pairwise disjoint time intervals are independent, and
$$\begin{aligned}
\fun{\msrprb}{N_t-N_s=k}
& =
\napiernum^{-c\rbk{t-s}}
\frac{\rbk{c\rbk{t-s}}^{k}}{k!}
\end{aligned}$$
for every $0\leq s<t$ and $k\in\monnat$.
Its paths are right-continuous, nondecreasing, integer-valued, and have jumps of size $1$.
\end{defn}

\begin{defn}[Exponential law]\label{def:fock-exponential-law}
Let $\msrprb$ be a probability measure on a measurable space.
A nonnegative random variable $T$ has the exponential law of rate $r>0$ when
$$\begin{aligned}
\fun{\msrprb}{T>t}
& =
\napiernum^{-rt}
\end{aligned}$$
for every $t\geq0$.
\end{defn}

\begin{defn}[Càdlàg path]\label{def:fock-cadlag-path}
Let $S$ be a metric space and let $I=\closedinterval{0}{a}$ for some $a>0$, or
$I=\rightopeninterval{0}{\infty}$.
A path $z\colon I\to S$ is càdlàg when it is right-continuous and has a left limit at every positive time in
$I$.
A process $Z=\seq{Z_t}{t\in I}$ is càdlàg when the path
$t\mapsto\fun{Z_t}{\omega}$ is càdlàg for every sample point $\omega$.
\end{defn}

\begin{defn}[Process killed at a nonnegative random variable]\label{def:fock-killed-process}
Let $E$ be a finite state space in the sense of
Definition \ref{def:fock-finite-state-space-submarkov}, and let $\partial$ be an absorbing state in the sense of
Definition \ref{def:fock-cemetery-absorbing-state}.
Let $X=\seq{X_t}{t\geq0}$ be a process with values in $E\cup\setone{\partial}$, and let $T$ be a nonnegative
random variable on the same probability space.
A process $\xi=\seq{\xi_t}{t\geq0}$ is called $X$ killed at $T$, or a killed process, when, for every sample point
$\omega$ and every $t\geq0$,
$$\begin{aligned}
\fun{\xi_t}{\omega}
& =
\begin{cases}
\fun{X_t}{\omega},
&
0\leq t<\fun{T}{\omega},
\\
\partial,
&
t\geq\fun{T}{\omega}.
\end{cases}
\end{aligned}$$
\end{defn}

\begin{lem}[Uniformization of the hopping kernel]\label{lem:fock-submarkov-uniformization}
For a finite $\Lambda\Subset\Gamma$, define
\begin{equation}
\begin{aligned}
Q_{\Lambda}
& =
-\physham[h]_{\Lambda}-C_h.
\end{aligned}
\label{eq:fock-killed-generator}
\end{equation}
The family $\napiernum^{tQ_{\Lambda}}$, $t\geq0$, is sub-Markovian.
For $R_{\Lambda}=-C_h^{-1}\physham[h]_{\Lambda}$, one has
\begin{equation}
\begin{aligned}
\napiernum^{tQ_{\Lambda}}
& =
\napiernum^{-tC_h}
\sum_{n\in\monnat}
\frac{\rbk{tC_h}^{n}}{n!}R_{\Lambda}^{n}.
\end{aligned}
\label{eq:fock-submarkov-uniformization}
\end{equation}
\end{lem}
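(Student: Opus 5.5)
The plan is to obtain the uniformization identity \eqref{eq:fock-submarkov-uniformization} first and read off the sub-Markovian properties from it. Since $C_h>0$ by \eqref{eq:fock-uniform-degree-bound}, $R_{\Lambda}=-C_h^{-1}\physham[h]_{\Lambda}$ is well defined, and $Q_{\Lambda}=C_hR_{\Lambda}-C_h\,\mathbf{1}$. The two summands commute, so
$$\begin{aligned}
\napiernum^{tQ_{\Lambda}}
& =
\napiernum^{-tC_h}\napiernum^{tC_hR_{\Lambda}},
\end{aligned}$$
and expanding the second exponential as a norm-convergent power series on the finite-dimensional space $\fun{\lpseq^{2}}{\Lambda}$ gives \eqref{eq:fock-submarkov-uniformization}.

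Next, $R_{\Lambda}$ is substochastic in the sense of Definition \ref{def:fock-finite-state-space-submarkov}. Its entries are $C_h^{-1}$ for $x\sim y$ in $\Lambda$ and $0$ otherwise, hence nonnegative. Its row sum at $x$ equals $C_h^{-1}$ times the number of neighbours of $x$ inside $\Lambda$, which is at most $1$ by the definition of $C_h$ as the supremal degree. Products of substochastic matrices are substochastic: nonnegativity is clear, and
$$\begin{aligned}
\sum_{y}\fun{(RS)}{x,y}
& =
\sum_{z}\fun{R}{x,z}\sum_{y}\fun{S}{z,y}
\leq
\sum_{z}\fun{R}{x,z}
\leq 1.
\end{aligned}$$
Hence each $R_{\Lambda}^{n}$ is substochastic.

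The right-hand side of \eqref{eq:fock-submarkov-uniformization} is then a Poisson-weighted average of substochastic matrices, with weights $\napiernum^{-tC_h}(tC_h)^n/n!$ summing to $1$. Its entries are nonnegative, and its row sums are at most $\sum_n \napiernum^{-tC_h}(tC_h)^n/n!=1$, where exchanging the finite row sum with the absolutely convergent series is harmless. The initial condition $\napiernum^{0\cdot Q_{\Lambda}}=\mathbf{1}$, i.e.\ $\fun{P_0}{x,y}=\fun{\fndef{\setone{x}}}{y}$, and the semigroup identity $\napiernum^{sQ_{\Lambda}}\napiernum^{tQ_{\Lambda}}=\napiernum^{(s+t)Q_{\Lambda}}$ both hold for matrix exponentials. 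These give all four conditions of Definition \ref{def:fock-finite-state-space-submarkov}.

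There is no real obstacle here. The only point requiring care is that the row-sum bound uses the degree of $x$ in the induced subgraph on $\Lambda$, which is at most $C_h$. This is exactly why $Q_{\Lambda}$ is only sub-Markovian, with killing at boundary sites, rather than Markovian.
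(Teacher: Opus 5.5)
Your proposal is correct and follows the paper's proof essentially step for step. Both use the factorization $\napiernum^{tQ_{\Lambda}}=\napiernum^{-tC_h}\napiernum^{tC_hR_{\Lambda}}$, substochasticity of $R_{\Lambda}$ and all its powers, the Poisson-weighted convex combination for nonnegativity and the row-sum bound, and the matrix-exponential semigroup law for the remaining conditions. (Your closing remark about killing only at boundary sites is slightly too narrow, since on a non-regular graph mass is also lost at interior sites of degree below $C_h$, but this does not affect the proof.)
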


\begin{proof}
The off-diagonal entries of $Q_{\Lambda}$ are nonnegative, and its row sums satisfy
$$\begin{aligned}
\sum_{y\in\Lambda}\rbk{Q_{\Lambda}}_{x,y}
& =
\abscard{\set{y\in\Lambda}{y\sim x}}-C_h
\leq
0.
\end{aligned}$$
The matrix $R_{\Lambda}$ has nonnegative entries and row sums at most $1$.
Every power $R_{\Lambda}^{n}$ has the same two properties.
Expanding $\napiernum^{tC_hR_{\Lambda}}$ proves
\eqref{eq:fock-submarkov-uniformization}.
The Poisson coefficients in that formula are nonnegative and sum to $1$.
Their convex combination has nonnegative entries and row sums at most $1$ for every $t\geq0$.
The matrix-exponential semigroup law completes the sub-Markovianity assertion.
\end{proof}

\begin{lem}[Probabilistic representation of $p_{\Lambda,t}$]\label{lem:fock-killed-walk-kernel}
Assume \eqref{eq:fock-path-chemical-potential-gap}, and let $\Lambda\Subset\Gamma$ be finite.
Let $Q_{\Lambda}$ be the operator defined by \eqref{eq:fock-killed-generator}.
Define $r_{\Lambda}$ and $p_{\Lambda,t}$, $t\geq0$, by
\begin{equation}
\begin{aligned}
r_{\Lambda}
& =
-\rbk{\smchemicalpotential_{\Lambda}+C_h},
\\
p_{\Lambda,t}
& =
\napiernum^{-t\rbk{\physham[h]_{\Lambda}-\smchemicalpotential_{\Lambda}}}
=
\napiernum^{-tr_{\Lambda}}\napiernum^{tQ_{\Lambda}}.
\end{aligned}
\label{eq:fock-killed-kernel-definition}
\end{equation}
Then, for every $x\in\Lambda$, there are a probability space
$\rbk{\Sigma_{x},\mblfml{F}_{x},\msrprb_x}$, a nonnegative random variable
$\tau_{\Lambda}$ with exponential law of rate $r_{\Lambda}$, and a càdlàg Markov process
$\xi=\seq{\xi_t}{t\geq0}$ on $\Lambda\cup\setone{\partial}$ in the sense of
Definitions \ref{def:fock-cadlag-path} and \ref{def:fock-time-homogeneous-markov-process} such that
$$\begin{aligned}
\fun{\msrprb_x}{\xi_0=x}
& =
1,
&
\fun{\msrprb_x}{\tau_{\Lambda}>t}
& =
\napiernum^{-r_{\Lambda}t},
\quad
t\geq0,
\end{aligned}$$
and $\partial$ is an absorbing state for $\xi$ in the sense of
Definition \ref{def:fock-cemetery-absorbing-state}.
The subscript $x$ in $\msrprb_x$ records the initial state $x$.
Its transition probabilities satisfy
\begin{equation}
\begin{aligned}
\fun{\msrprb_x}{\xi_t=y}
& =
\fun{p_{\Lambda,t}}{x,y},
&
x,y
& \in
\Lambda,
\\
\fun{\msrprb_x}{\xi_t=\partial}
& =
1-
\sum_{y\in\Lambda}\fun{p_{\Lambda,t}}{x,y}.
\end{aligned}
\label{eq:fock-killed-walk-transition}
\end{equation}
For every $t\geq0$, the transition probabilities satisfy the uniform row estimate
\begin{equation}
\begin{aligned}
\sum_{y \in \Lambda}\fun{p_{\Lambda,t}}{x,y}
& \leq
\napiernum^{-t\varepsilon}.
\end{aligned}
\label{eq:fock-killed-walk-row-bound}
\end{equation}
\end{lem}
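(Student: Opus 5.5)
The proof is a construction: build the killed walk from an independent exponential clock and a uniformized sub-Markov chain, then read off the transition law. First note that \eqref{eq:fock-path-chemical-potential-gap} gives $r_{\Lambda} = -\rbk{\smchemicalpotential_{\Lambda}+C_h} \geq \varepsilon > 0$, so the exponential law of rate $r_{\Lambda}$ in Definition \ref{def:fock-exponential-law} makes sense. The identity $p_{\Lambda,t} = \napiernum^{-tr_{\Lambda}}\napiernum^{tQ_{\Lambda}}$ in \eqref{eq:fock-killed-kernel-definition} holds because $\physham[h]_{\Lambda}-\smchemicalpotential_{\Lambda} = -Q_{\Lambda} + r_{\Lambda}$ and the scalar term commutes with $Q_{\Lambda}$.

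On a product probability space $\Sigma_x$ I take three independent ingredients: a Poisson process $\seq{M_t}{t\geq0}$ of rate $C_h$, a discrete-time Markov chain $\seq{Y_n}{n\in\monnat}$ on $\Lambda\cup\setone{\partial}$ with stochastic kernel $\widehat{R_{\Lambda}}$ from Definition \ref{def:fock-cemetery-absorbing-state} and $Y_0=x$, and an exponential variable $\tau_{\Lambda}$ of rate $r_{\Lambda}$. Set $X_t = Y_{M_t}$ and let $\xi$ be $X$ killed at $\tau_{\Lambda}$ in the sense of Definition \ref{def:fock-killed-process}.

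The path properties are then direct. The paths of $X$ are càdlàg because $M$ is right-continuous and piecewise constant with finitely many jumps on bounded intervals. Killing at $\tau_{\Lambda}$ preserves càdlàg paths. The state $\partial$ is absorbing for $X$ since $\widehat{R_{\Lambda}}$ never leaves $\partial$ and $M$ is nondecreasing, and it is absorbing for $\xi$ by construction.

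For the transition law, I condition on $M_t$ and use independence. For $y\in\Lambda$,
$$\fun{\msrprb_x}{\xi_t=y} = \fun{\msrprb_x}{\tau_{\Lambda}>t}\sum_{n}\napiernum^{-tC_h}\frac{\rbk{tC_h}^n}{n!}\fun{R_{\Lambda}^n}{x,y} = \napiernum^{-tr_{\Lambda}}\fun{\napiernum^{tQ_{\Lambda}}}{x,y},$$
by Lemma \ref{lem:fock-submarkov-uniformization}. Here I use that $\widehat{R}^n$ restricted to $\Lambda\times\Lambda$ equals $R^n$, because $\partial$ cannot return to $\Lambda$. The $\partial$ entry in \eqref{eq:fock-killed-walk-transition} is the complement, since $\xi_t$ takes values in $\Lambda\cup\setone{\partial}$.

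The Markov property is the main technical step. For $s\le t$, I verify the conditional identity in Definition \ref{def:fock-time-homogeneous-markov-process}, with the stochastic kernels $\widehat{p_{\Lambda,t}}$ on $\Lambda\cup\setone{\partial}$ obtained from $p_{\Lambda,t}$ as in Definition \ref{def:fock-cemetery-absorbing-state}. The inputs are:
\begin{itemize}
\item the independent stationary increments of $M$;
\item the Markov property of $Y$;
\item the memorylessness of $\tau_{\Lambda}$, namely that on $\setone{\tau_{\Lambda}>s}$ the residual time $\tau_{\Lambda}-s$ is again exponential of rate $r_{\Lambda}$ and independent of $\mblfml{F}_{Z,s}$.
\end{itemize}
It suffices to check the identity on cylinder events $\setone{\xi_{s_1}=z_1,\dots,\xi_{s_k}=z_k}$ with $s_k\le s$. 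There the computation factorizes into products of Poisson mixtures of powers of $R_{\Lambda}$ times exponential survival factors. This factorization is the standard uniformization argument, and I expect it to be routine but notationally the heaviest part. The semigroup law of $\napiernum^{tQ_{\Lambda}}$ from Lemma \ref{lem:fock-submarkov-uniformization} ensures the resulting kernels are consistent.

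Finally, the row estimate \eqref{eq:fock-killed-walk-row-bound} follows from the sub-Markovianity in Lemma \ref{lem:fock-submarkov-uniformization}:
$$\sum_{y}\fun{p_{\Lambda,t}}{x,y} = \napiernum^{-tr_{\Lambda}}\sum_y\fun{\napiernum^{tQ_{\Lambda}}}{x,y} \le \napiernum^{-tr_{\Lambda}} \le \napiernum^{-t\varepsilon},$$
using $r_{\Lambda}\ge\varepsilon$ and $t\ge0$.
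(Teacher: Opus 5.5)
Your proposal is correct and follows the paper's proof essentially step for step. Both build the uniformized chain $Y_{N_t}$ from $\widehat{R}_{\Lambda}$ and an independent rate-$C_h$ Poisson clock, kill it at an independent exponential time of rate $r_{\Lambda}$, read off the transition law from independence and Lemma~\ref{lem:fock-submarkov-uniformization}, and obtain the row bound from sub-Markovianity and $r_{\Lambda}\geq\varepsilon$. Your treatment of the Markov property, which names the stochastic kernels $\widehat{p}_{\Lambda,t}$ and checks the identity on cylinder events, is more explicit than the paper's one-line appeal to time-change invariance, memorylessness, and independence.
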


\begin{proof}
Let $R_{\Lambda}$ be the substochastic kernel in
Lemma \ref{lem:fock-submarkov-uniformization}.
Its stochastic extension to $\Lambda\cup\setone{\partial}$ is
$$\begin{aligned}
\fun{\widehat{R}_{\Lambda}}{x,y}
& =
\fun{R_{\Lambda}}{x,y},
&
x,y
& \in
\Lambda,
\\
\fun{\widehat{R}_{\Lambda}}{x,\partial}
& =
1-
\sum_{y\in\Lambda}\fun{R_{\Lambda}}{x,y},
&
x
& \in
\Lambda,
\\
\fun{\widehat{R}_{\Lambda}}{\partial,\partial}
& =
1,
\\
\fun{\widehat{R}_{\Lambda}}{\partial,y}
& =
0,
&
y
& \in
\Lambda.
\end{aligned}$$
Thus $\partial$ is the cemetery state of $R_{\Lambda}$ and is absorbing for
$\widehat{R}_{\Lambda}$ in the sense of Definition \ref{def:fock-cemetery-absorbing-state}.
Let $\msrprb_x^{Y}$ be the law of the discrete-time Markov chain with transition kernel
$\widehat{R}_{\Lambda}$ and initial state $Y_0=x$.
Let $\msrprb^{N}$ be the law of a Poisson process of rate $C_h$.
On the product space with law $\msrprb_x^{X}=\msrprb_x^{Y}\otimes\msrprb^{N}$, define
$X_t=Y_{N_t}$.
For $x,y\in\Lambda$, the restriction of every $n$-step kernel of $\widehat{R}_{\Lambda}$ to
$\Lambda\times\Lambda$ is $R_{\Lambda}^{n}$.
The assertion holds for $n=0$ because both kernels are the identity on $\Lambda$.
If it holds for $n$, the absorbing-state identity
$\fun{\widehat{R}_{\Lambda}}{\partial,y}=0$ from
Definition \ref{def:fock-cemetery-absorbing-state} gives the induction step
$$\begin{aligned}
&
\fun{\widehat{R}_{\Lambda}^{n+1}}{x,y}
=
\sum_{z\in\Lambda}
\fun{\widehat{R}_{\Lambda}^{n}}{x,z}
\fun{\widehat{R}_{\Lambda}}{z,y}
+
\fun{\widehat{R}_{\Lambda}^{n}}{x,\partial}
\fun{\widehat{R}_{\Lambda}}{\partial,y}
\\ 
&=
\sum_{z\in\Lambda}
\fun{R_{\Lambda}^{n}}{x,z}
\fun{R_{\Lambda}}{z,y}
=
\fun{R_{\Lambda}^{n+1}}{x,y}.
\end{aligned}$$
The Markov property of $Y$ yields
$$\begin{aligned}
\fun{\msrprb_x^{Y}}{Y_n=y}
=
\fun{\widehat{R}_{\Lambda}^{n}}{x,y}
=
\fun{R_{\Lambda}^{n}}{x,y}
\end{aligned}$$
for every $n\in\monnat$ and $x,y\in\Lambda$.
The events $\setone{N_t=n}$, $n\in\monnat$, form a disjoint partition, and the product measure makes the
entire chain $Y$ independent of the Poisson process $N$.
The law of total probability and the Poisson distribution give
$$\begin{aligned}
&\fun{\msrprb_x^{X}}{X_t=y}
=
\sum_{n\in\monnat}
\fun{\msrprb_x^{X}}{X_t=y,\ N_t=n}
\\
& =
\sum_{n\in\monnat}
\fun{\msrprb_x^{X}}{Y_n=y,\ N_t=n}
=
\sum_{n\in\monnat}
\fun{\msrprb_x^{Y}}{Y_n=y}
\fun{\msrprb^{N}}{N_t=n}
\\
& =
\sum_{n\in\monnat}
\fun{R_{\Lambda}^{n}}{x,y}
\napiernum^{-tC_h}
\frac{\rbk{tC_h}^{n}}{n!}
=
\fun{
\napiernum^{-tC_h}
\sum_{n\in\monnat}
\frac{\rbk{tC_h}^{n}}{n!}R_{\Lambda}^{n}
}{x,y}
=
\fun{\napiernum^{tQ_{\Lambda}}}{x,y}.
\end{aligned}$$
The last equality is the uniformization formula
\eqref{eq:fock-submarkov-uniformization}.

\begin{samepage}
Let $\rbk{\Sigma_x^{X},\mblfml{F}_x^{X},\msrprb_x^{X}}$ denote the probability space carrying $X$ constructed
here, and let
$\rbk{\Sigma_{\tau_{\Lambda}},\mblfml{F}_{\tau_{\Lambda}},
\msrprb_{r_{\Lambda}}^{\tau_{\Lambda}}}$ carry a nonnegative random variable
$\tau_{\Lambda}$ with exponential law of rate $r_{\Lambda}$ in the sense of
Definition \ref{def:fock-exponential-law}.
The inequality $r_{\Lambda}\geq\varepsilon>0$ guarantees that this probability law is defined.
The probability space in the assertion is the product triple
$$\begin{aligned}
\pairbk{\Sigma_x,
\mblfml{F}_x,
\msrprb_x}
=
\pairbk{\Sigma_x^{X}\times\Sigma_{\tau_{\Lambda}},
\mblfml{F}_x^{X}\otimes\mblfml{F}_{\tau_{\Lambda}},
\msrprb_x^{X}\otimes\msrprb_{r_{\Lambda}}^{\tau_{\Lambda}}},
\end{aligned}$$
\end{samepage}
where $X$ and $\tau_{\Lambda}$ are lifted to $\Sigma_x$ by the coordinate projections.
For $\omega_X\in\Sigma_x^{X}$ and
$\omega_{\tau_{\Lambda}}\in\Sigma_{\tau_{\Lambda}}$, define $\xi = \net{\xi_t}{t \geq 0}$ as $X$ killed at $\tau_{\Lambda}$ in the sense
of Definition \ref{def:fock-killed-process} by
\begin{equation}
\begin{aligned}
\fun{\xi_t}{\omega_X,\omega_{\tau_{\Lambda}}}
& =
\begin{cases}
\fun{X_t}{\omega_X},
&
0\leq t<\fun{\tau_{\Lambda}}{\omega_{\tau_{\Lambda}}},
\\
\partial,
&
t\geq\fun{\tau_{\Lambda}}{\omega_{\tau_{\Lambda}}}.
\end{cases}
\end{aligned}
\label{eq:fock-killed-process-definition}
\end{equation}
Because $\tau_{\Lambda}$ is a random variable on $\Sigma_x$, it is not a deterministic parameter indexing the
process $\xi$.
For $y\in\Lambda$, the endpoint event is
$$\begin{aligned}
\setone{\xi_t=y}
& =
\setone{\tau_{\Lambda}>t}\cap\setone{X_t=y}.
\end{aligned}$$
The product law makes $\tau_{\Lambda}$ independent of the entire process $X$.
The definition of the exponential law and the transition identity for $X$ give
$$\begin{aligned}
\fun{\msrprb_x}{\xi_t=y}
& =
\fun{\msrprb_x}{\tau_{\Lambda}>t}
\fun{\msrprb_x}{X_t=y}
=
\napiernum^{-r_{\Lambda}t}
\fun{\napiernum^{tQ_{\Lambda}}}{x,y}
=
\fun{p_{\Lambda,t}}{x,y}.
\end{aligned}$$
The sub-Markov row sum and $r_{\Lambda}\geq\varepsilon$ give the claimed uniform estimate through
$$\begin{aligned}
\sum_{y\in\Lambda}\fun{p_{\Lambda,t}}{x,y}
=
\napiernum^{-r_{\Lambda}t}
\sum_{y\in\Lambda}
\fun{\napiernum^{tQ_{\Lambda}}}{x,y}
\leq
\napiernum^{-r_{\Lambda}t}
\leq
\napiernum^{-\varepsilon t}.
\end{aligned}$$
Subtracting this row sum from $1$ gives the cemetery-state formula in
\eqref{eq:fock-killed-walk-transition}.
The Poisson time change preserves the Markov property of the chain $Y$.
For $s,t\geq0$, Definition \ref{def:fock-exponential-law} gives
$$\begin{aligned}
\funcond{\msrprb_x}{\tau_{\Lambda}>s+t}{\tau_{\Lambda}>s}
& =
\frac{\napiernum^{-r_{\Lambda}\rbk{s+t}}}{\napiernum^{-r_{\Lambda}s}}
=
\napiernum^{-r_{\Lambda}t}.
\end{aligned}$$
Together with the Markov property of $X$ and the independence of $X$ and $\tau_{\Lambda}$, this identity proves
that $\xi$ is Markov.
The identities $\fun{\widehat{R}_{\Lambda}}{\partial,\partial}=1$ and
\eqref{eq:fock-killed-process-definition} show directly that $\xi_s=\partial$ implies $\xi_t=\partial$ for
every $t\geq s$; hence $\partial$ is absorbing for $\xi$ according to
Definition \ref{def:fock-cemetery-absorbing-state}.
\end{proof}

The continuous-time bridge path space is defined by \begin{equation}
\begin{aligned}
\Omega_{\Lambda;x,y}^{t}
& =
\set{\omega\colon\closedinterval{0}{t}\to\Lambda}
{\text{$\omega$ is càdlàg, has finitely many jumps,
$\omega_0=x $ and $\omega_t=y$}}.
\end{aligned}
\label{eq:fock-continuous-bridge-path-space}
\end{equation} For \(\omega\in\Omega_{\Lambda;x,y}^{t}\) and \(0\leq s\leq t\), the notation \(\omega_s\) denotes the value of \(\omega\) at time \(s\). This path space carries the sigma-algebra generated by its coordinate evaluation maps.

\begin{lem}[Unnormalized bridge measure]\label{lem:fock-unnormalized-bridge-measure}
Assume \eqref{eq:fock-path-chemical-potential-gap}, and let $\Lambda\Subset\Gamma$ be finite.
Fix $x,y\in\Lambda$ and $t>0$.
Let $p_{\Lambda,t}$ be defined by \eqref{eq:fock-killed-kernel-definition}, and let
$\rbk{\Sigma_x,\mblfml{F}_x,\msrprb_x}$ and $\xi$ be the probability space and process whose existence is
asserted in Lemma \ref{lem:fock-killed-walk-kernel} for the initial state $x$.
Let $\Omega_{\Lambda;x,y}^{t}$ be the measurable path space defined by
\eqref{eq:fock-continuous-bridge-path-space}.
There is a finite positive measure $\msrcal{W}_{\Lambda;x,y}^{t}$ on this space such that, for every
$n\in\semigrposint$, $0=s_0<s_1<\cdots<s_n=t$, and
$z_0=x,z_1,\ldots,z_{n-1},z_n=y$ in $\Lambda$,
\begin{equation}
\begin{aligned}
\fun{\msrcal{W}_{\Lambda;x,y}^{t}}{
\set{\omega\in\Omega_{\Lambda;x,y}^{t}}
{\omega_{s_j}=z_j,\ 1\leq j\leq n-1}}
& =
\prod_{j=0}^{n-1}
\fun{p_{\Lambda,s_{j+1}-s_j}}{z_j,z_{j+1}}.
\end{aligned}
\label{eq:fock-bridge-cylinder-distribution}
\end{equation}
Its total mass is
\begin{equation}
\begin{aligned}
\fun{\msrcal{W}_{\Lambda;x,y}^{t}}{\Omega_{\Lambda;x,y}^{t}}
& =
\fun{p_{\Lambda,t}}{x,y}.
\end{aligned}
\label{eq:fock-bridge-total-mass}
\end{equation}
For $s,t>0$, $x,y,z\in\Lambda$,
$\omega\in\Omega_{\Lambda;x,z}^{s}$, and
$\eta\in\Omega_{\Lambda;z,y}^{t}$, define their concatenation by
$$\begin{aligned}
\rbk{\omega\diamond_s\eta}_u
& =
\begin{cases}
\omega_u, & 0\leq u\leq s,\\
\eta_{u-s}, & s<u\leq s+t.
\end{cases}
\end{aligned}$$
For every nonnegative measurable function $F$ on
$\bigsqcup_{x,y\in\Lambda}\Omega_{\Lambda;x,y}^{s+t}$, the bridge measures satisfy
\begin{equation}
\begin{aligned}
&
\sum_{z\in\Lambda}
\int_{\Omega_{\Lambda;x,z}^{s}}
\int_{\Omega_{\Lambda;z,y}^{t}}
\fun{F}{\omega\diamond_s\eta}
\opdmsr{\msrcal{W}_{\Lambda;z,y}^{t}}(\eta)
\opdmsr{\msrcal{W}_{\Lambda;x,z}^{s}}(\omega)
=
\int_{\Omega_{\Lambda;x,y}^{s+t}}
\fun{F}{\zeta}
\opdmsr{\msrcal{W}_{\Lambda;x,y}^{s+t}}(\zeta).
\end{aligned}
\label{eq:fock-bridge-convolution-law}
\end{equation}
\end{lem}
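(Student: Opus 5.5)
The measure $\msrcal{W}_{\Lambda;x,y}^{t}$ will be the law of the killed process $\xi$ of Lemma \ref{lem:fock-killed-walk-kernel}, restricted to the event $\setone{\xi_t=y}$ and pushed forward to the bridge path space. I would proceed in three steps.

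\textbf{Construction.} On $\Sigma_x$, consider the event $A_{x,y}^t=\setone{\tau_{\Lambda}>t}\cap\setone{X_t=y}=\setone{\xi_t=y}$. On this event the path $u\mapsto\xi_u=X_u=Y_{N_u}$, $0\leq u\leq t$, takes values in $\Lambda$. Its jumps occur only at Poisson epochs, and there are almost surely finitely many of these in $\closedinterval{0}{t}$, because $N_t<\infty$ almost surely. The path is also càdlàg, since $N$ is right-continuous and integer-valued. After discarding the null set where $N_t=\infty$ or $\xi_0\neq x$, the map $\Phi\colon\omega\mapsto\fnrestr{\xi_{\placeholder}(\omega)}{\closedinterval{0}{t}}$ therefore sends $A_{x,y}^t$ into $\Omega_{\Lambda;x,y}^{t}$. $\Phi$ is measurable for the evaluation sigma-algebra, because each coordinate $\omega\mapsto\xi_s(\omega)$ is a random variable. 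I would then define
\[
\msrcal{W}_{\Lambda;x,y}^{t}=\rbk{\fnrestr{\msrprb_x}{A_{x,y}^t}}\circ\Phi^{-1}.
\]
This is a finite positive measure. By \eqref{eq:fock-killed-walk-transition}, its total mass is $\fun{\msrprb_x}{\xi_t=y}=\fun{p_{\Lambda,t}}{x,y}$, which is \eqref{eq:fock-bridge-total-mass}.

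\textbf{Cylinder formula.} For the cylinder set in \eqref{eq:fock-bridge-cylinder-distribution}, its $\msrcal{W}$-measure equals $\fun{\msrprb_x}{\xi_{s_1}=z_1,\ldots,\xi_{s_n}=z_n}$. I would compute this by iterated conditioning with the Markov property of $\xi$ (Definition \ref{def:fock-time-homogeneous-markov-process}), peeling off the last time first. For $z_{j+1}\in\Lambda$, each factor is $\fun{\msrprb}{\xi_{s_{j+1}}=z_{j+1}\mid\xi_{s_j}=z_j}=\fun{p_{\Lambda,s_{j+1}-s_j}}{z_j,z_{j+1}}$, by time homogeneity and \eqref{eq:fock-killed-walk-transition}. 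The kernel entries here should be those of the killed semigroup restricted to $\Lambda$, so I would check that the transition kernels $P_t$ of $\xi$ restrict on $\Lambda\times\Lambda$ to $p_{\Lambda,t}$. This holds by \eqref{eq:fock-killed-walk-transition} applied from each starting point.

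\textbf{Convolution law.} The main obstacle is \eqref{eq:fock-bridge-convolution-law}, since it requires identifying measures on path space rather than only their marginals. I would first prove it for $F=\fndef{C}$ with $C$ a cylinder set in $\Omega_{\Lambda;x,y}^{s+t}$. I would refine the time partition so that it contains $s$. The concatenated path satisfies $(\omega\diamond_s\eta)_s=\omega_s=z$. The left side then becomes a sum over $z$ of products of $p$-factors, and this equals the right side by the cylinder formula together with the Chapman--Kolmogorov relation, summing over the value at the inserted time $s$ when $s$ is not already a partition point. Measurability of $(\omega,\eta)\mapsto\omega\diamond_s\eta$ follows because each coordinate of the concatenation is a coordinate of $\omega$ or of $\eta$. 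The two sides of \eqref{eq:fock-bridge-convolution-law} are thus two finite measures on $\Omega_{\Lambda;x,y}^{s+t}$ that agree on the $\pi$-system of cylinder sets, and that system generates the evaluation sigma-algebra. The $\pi$--$\lambda$ theorem gives equality of the measures. The monotone class theorem, or simple-function approximation with monotone convergence, then extends the identity to all nonnegative measurable $F$.

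A remaining subtlety is that the concatenation is càdlàg with finitely many jumps and lies in $\Omega_{\Lambda;x,y}^{s+t}$. This holds because $\eta_0=z=\omega_s$ keeps right-continuity at $s$, and a function on the disjoint union restricts to each fiber.
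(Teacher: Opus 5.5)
Your proposal is correct and follows essentially the same route as the paper. The paper likewise defines $\mathcal{W}_{\Lambda;x,y}^{t}$ as the law of $\xi|_{[0,t]}$ restricted to bridge paths ending at $y$, obtains the cylinder formula by iterated Markov conditioning, and proves the convolution law on cylinder sets via Chapman--Kolmogorov before extending it. Your explicit $\pi$--$\lambda$ step, with matching total masses $\sum_z p_{\Lambda,s}(x,z)\,p_{\Lambda,t}(z,y)=p_{\Lambda,s+t}(x,y)$, is a cleaner justification than the paper's ``finite linear combinations and monotone convergence,'' which on its own does not pass from cylinder indicators to arbitrary measurable sets.
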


\begin{proof}
For a measurable set $A\subset\Omega_{\Lambda;x,y}^{t}$, define
$$\begin{aligned}
\fun{\msrcal{W}_{\Lambda;x,y}^{t}}{A}
& =
\fun{\msrprb_x}{
\setone{\fnrestr{\xi}{\closedinterval{0}{t}}\in A}}.
\end{aligned}$$
This restriction of the path law is a finite positive measure.
Fix $0=s_0<s_1<\cdots<s_n=t$ and
$z_0=x,z_1,\ldots,z_{n-1},z_n=y$ in $\Lambda$.
For $0\leq u\leq t$, let $\mblfml{F}_{\xi,u}$ be the sigma-algebra generated by
$\xi_v$, $0\leq v\leq u$.
Let $G_0$ be the whole sample space on which $\xi$ is defined.
For $1\leq j\leq n$, define the event
$$\begin{aligned}
G_j
& =
\bigcap_{\ell=1}^{j}
\setone{\xi_{s_{\ell}}=z_{\ell}}.
\end{aligned}$$
On $G_{j-1}$, the process satisfies
$\xi_{s_{j-1}}=z_{j-1}\in\Lambda$.
The time-homogeneous Markov property in
Definition \ref{def:fock-time-homogeneous-markov-process} and the transition formula
\eqref{eq:fock-killed-walk-transition} identify the conditional probability on this event as
$$\begin{aligned}
\fndef{G_{j-1}}
\sqfuncond{\prbexp_{\msrprb_x}}{
\fndef{\setone{\xi_{s_j}=z_j}}
}{\mblfml{F}_{\xi,s_{j-1}}}
& =
\fndef{G_{j-1}}
\fun{p_{\Lambda,s_j-s_{j-1}}}{z_{j-1},z_j}
\end{aligned}$$
for $1\leq j\leq n$.
Since $G_{j-1}\in\mblfml{F}_{\xi,s_{j-1}}$, the defining property of conditional expectation gives
$$\begin{aligned}
\fun{\msrprb_x}{G_j}
& =
\sqfun{\prbexp_{\msrprb_x}}{
\fndef{G_{j-1}}
\fndef{\setone{\xi_{s_j}=z_j}}
}
=
\sqfun{\prbexp_{\msrprb_x}}{
\fndef{G_{j-1}}
\sqfuncond{\prbexp_{\msrprb_x}}{
\fndef{\setone{\xi_{s_j}=z_j}}
}{\mblfml{F}_{\xi,s_{j-1}}}
}
\\
& =
\sqfun{\prbexp_{\msrprb_x}}{
\fndef{G_{j-1}}
\fun{p_{\Lambda,s_j-s_{j-1}}}{z_{j-1},z_j}
}
=
\fun{p_{\Lambda,s_j-s_{j-1}}}{z_{j-1},z_j}
\fun{\msrprb_x}{G_{j-1}}.
\end{aligned}$$
Since $G_0$ is the whole sample space, $\fun{\msrprb_x}{G_0}=1$.
Iterating the recursion from this initial value yields
$$\begin{aligned}
\fun{\msrprb_x}{G_n}
=
\prod_{j=1}^{n}
\fun{p_{\Lambda,s_j-s_{j-1}}}{z_{j-1},z_j}
=
\prod_{j=0}^{n-1}
\fun{p_{\Lambda,s_{j+1}-s_j}}{z_j,z_{j+1}}.
\end{aligned}$$
Because $\partial$ is absorbing according to Definition \ref{def:fock-cemetery-absorbing-state}, the identity
$z_n=y\in\Lambda$ shows that the event $G_n$ implies
$\xi_s\in\Lambda$ for every $0\leq s\leq t$.
Since $\xi$ is càdlàg and takes values in the finite state space $\Lambda\cup\setone{\partial}$, its paths have
finitely many jumps on every bounded time interval.
The inverse image of the cylinder set in
\eqref{eq:fock-bridge-cylinder-distribution} under
$\fnrestr{\xi}{\closedinterval{0}{t}}$ is $G_n$ up to a $\msrprb_x$-null set.
The definition of $\msrcal{W}_{\Lambda;x,y}^{t}$ and the formula for
$\fun{\msrprb_x}{G_n}$ prove
\eqref{eq:fock-bridge-cylinder-distribution}.
Taking no intermediate times in that formula gives \eqref{eq:fock-bridge-total-mass}.
For an indicator function of a cylinder set, the left-hand side of
\eqref{eq:fock-bridge-convolution-law} is evaluated by
\eqref{eq:fock-bridge-cylinder-distribution}; summation over the common value $z$ and the semigroup identity
in Definition \ref{def:fock-finite-state-space-submarkov} give the right-hand side.
Finite linear combinations and monotone convergence extend the identity to every nonnegative measurable
function $F$.
\end{proof}

\subsection{Open paths, loops, and the interacting normalization}\label{open-paths-loops-and-the-interacting-normalization}

The kernels from Lemma \ref{lem:fock-killed-walk-kernel} and the bridge measures from Lemma \ref{lem:fock-unnormalized-bridge-measure} provide the positive path-space reference measure for the interaction.

\begin{defn}[Finite open-path tuple]\label{def:fock-finite-open-path-tuple}
Let $\Lambda\Subset\Gamma$ be finite, let $\beta>0$, and let $m\in\monnat$.
For $1\leq i\leq m$, let $q_i\in\semigrposint$ and $x_i,y_i\in\Lambda$.
A finite open-path tuple is a tuple
$\boldsymbol{\omega}=\rbk{\omega_1,\ldots,\omega_m}$ satisfying
$$\begin{aligned}
\omega_i
\in
\Omega_{\Lambda;x_i,y_i}^{q_i\beta},
\quad
1\leq i\leq m,
\end{aligned}$$
where the path spaces are defined by \eqref{eq:fock-continuous-bridge-path-space}.
For $m=0$, the tuple is empty.
\end{defn}

\begin{defn}[Occupation field]\label{def:fock-path-occupation-field}
For $q\in\semigrposint$, $x,y\in\Lambda$, and
$\omega\in\Omega_{\Lambda;x,y}^{q\beta}$, define the occupation field of $\omega$ by
$$\begin{aligned}
\fun{n_{\omega}}{s,z}
& =
\sum_{j = 0}^{q - 1}
\fun{\fndef{\setone{z}}}{\omega_{s + j\beta}},
\quad
0 \leq s \leq \beta,
\quad
z \in \Lambda.
\end{aligned}$$
Let $\boldsymbol{\xi}=\rbk{\xi_1,\ldots,\xi_m}$ be a finite open-path tuple in the sense of
Definition \ref{def:fock-finite-open-path-tuple}, with
$\xi_i\in\Omega_{\Lambda;x_i,y_i}^{q_i\beta}$ for $1\leq i\leq m$.
For $0\leq s\leq q_i\beta$, the symbol $\xi_{i,s}$ denotes the value of the $i$-th path at time $s$.
The occupation field of $\boldsymbol{\xi}$ on $\closedinterval{0}{\beta} \times \Lambda$ is defined by
\begin{equation}
\begin{aligned}
\fun{n_{\boldsymbol{\xi}}}{s,z}
=
\sum_{i = 1}^{m}\fun{n_{\xi_i}}{s,z}
=
\sum_{i = 1}^{m}
\sum_{j = 0}^{q_i - 1}
\fun{\fndef{\setone{z}}}{\xi_{i,s + j\beta}},
\quad
\rbk{s,z}
\in
\closedinterval{0}{\beta} \times \Lambda.
\end{aligned}
\label{eq:fock-path-tuple-occupation-field}
\end{equation}
For $m = 0$, both sums in \eqref{eq:fock-path-tuple-occupation-field} are zero.
\end{defn}

For fixed \(s\in\closedinterval{0}{\beta}\) and \(z\in\Lambda\), the single-path definition is equivalently \[\begin{aligned}
\fun{n_{\omega}}{s,z}
& =
\abscard{
\set{j\in\monnat}{0\leq j\leq q-1,\ \omega_{s+j\beta}=z}}.
\end{aligned}\] It counts how many of the \(q\) values \(\omega_s,\omega_{s+\beta},\ldots,\omega_{s+\rbk{q-1}\beta}\) are equal to \(z\), and hence \[\begin{aligned}
\sum_{z\in\Lambda}\fun{n_{\omega}}{s,z}
& =
q.
\end{aligned}\] For fixed \(\rbk{s,z}\in\closedinterval{0}{\beta}\times\Lambda\), this definition is the counting identity \[\begin{aligned}
\fun{n_{\boldsymbol{\xi}}}{s,z}
& =
\abscard{
\set{\rbk{i,j}}{
1\leq i\leq m,\ 0\leq j\leq q_i-1,\ \xi_{i,s+j\beta}=z}}.
\end{aligned}\] It therefore records the total number of path values at the lattice site \(z\) for the functional-integral time parameter \(s\), and \[\begin{aligned}
\sum_{z\in\Lambda}\fun{n_{\boldsymbol{\xi}}}{s,z}
& =
\sum_{i=1}^{m}q_i.
\end{aligned}\]

The interaction energy determined by Definition \ref{def:fock-path-occupation-field} and \eqref{eq:fock-path-tuple-occupation-field} is \begin{equation}
\begin{aligned}
\fun{\Phi_{\Lambda}}{\boldsymbol{\xi}}
& =
\int_{\closedinterval{0}{\beta}}
\sum_{x,y \in \Lambda}
\fun{v}{x,y}
\rbk{
\fun{n_{\boldsymbol{\xi}}}{s,x}\fun{n_{\boldsymbol{\xi}}}{s,y}
-
\fun{\fndef{\setone{x}}}{y}\fun{n_{\boldsymbol{\xi}}}{s,x}}
\opdmsr{s}.
\end{aligned}
\label{eq:fock-path-interaction-energy}
\end{equation} The subtraction in \eqref{eq:fock-path-interaction-energy} is the normal-ordering term. For finite open-path tuples in the sense of Definition \ref{def:fock-finite-open-path-tuple}, \(\boldsymbol{\omega}=\rbk{\omega_1,\ldots,\omega_k}\) and \(\boldsymbol{\zeta}=\rbk{\zeta_1,\ldots,\zeta_{\ell}}\), define the interaction energy of their concatenated tuple by \begin{equation}
\begin{aligned}
\fun{\Phi_{\Lambda}}{\boldsymbol{\omega},\boldsymbol{\zeta}}
& =
\fun{\Phi_{\Lambda}}{\omega_1,\ldots,\omega_k,\zeta_1,\ldots,\zeta_{\ell}},
&
k,\ell
& \in
\monnat.
\end{aligned}
\label{eq:fock-concatenated-path-interaction-energy}
\end{equation}

Using the bridge path spaces defined in \eqref{eq:fock-continuous-bridge-path-space}, define the loop path space by \begin{equation}
\begin{aligned}
\Omega_{\Lambda}^{\mathrm{loop}}
& =
\bigsqcup_{\substack{q \in \semigrposint \\ x \in \Lambda}}
\Omega_{\Lambda;x,x}^{q\beta}.
\end{aligned}
\label{eq:fock-loop-path-space}
\end{equation} The positive ideal loop measure \(\msrcal{L}_{\Lambda}\) on the space in \eqref{eq:fock-loop-path-space} is defined by \begin{equation}
\begin{aligned}
\int_{\Omega_{\Lambda}^{\mathrm{loop}}}
\fun{F}{\zeta}\opdmsr{\msrcal{L}_{\Lambda}}(\zeta)
& =
\sum_{q \in \semigrposint}\frac{1}{q}
\sum_{x \in \Lambda}
\int_{\Omega_{\Lambda;x,x}^{q\beta}}
\fun{F}{\omega}\opdmsr{\msrcal{W}_{\Lambda;x,x}^{q\beta}}(\omega)
\end{aligned}
\label{eq:fock-ideal-loop-measure}
\end{equation} for every nonnegative measurable \(F\). For a finite open-path tuple \(\boldsymbol{\omega}\), define \begin{equation}
\begin{aligned}
\fun{Z_{\Lambda}}{\boldsymbol{\omega}}
=
\sum_{\ell \in \monnat}\frac{1}{\ell!}
\int_{\rbk{\Omega_{\Lambda}^{\mathrm{loop}}}^{\ell}}
\napiernum^{-\fun{\Phi_{\Lambda}}{\boldsymbol{\omega},\boldsymbol{\zeta}}}
\opdmsr{\msrcal{L}_{\Lambda}^{\otimes\ell}}(\boldsymbol{\zeta}),
\quad
Z_{\Lambda}
=
\sum_{\ell \in \monnat}\frac{1}{\ell!}
\int_{\rbk{\Omega_{\Lambda}^{\mathrm{loop}}}^{\ell}}
\napiernum^{-\fun{\Phi_{\Lambda}}{\boldsymbol{\zeta}}}
\opdmsr{\msrcal{L}_{\Lambda}^{\otimes\ell}}(\boldsymbol{\zeta}).
\end{aligned}
\label{eq:fock-open-path-loop-partition-functions}
\end{equation} Set \(F=1\) in the loop-measure definition \eqref{eq:fock-ideal-loop-measure}. The bridge-mass identity \eqref{eq:fock-bridge-total-mass}, nonnegativity of \(p_{\Lambda,q\beta}\), and the row bound \eqref{eq:fock-killed-walk-row-bound} give \[\begin{aligned}
& \int_{\Omega_{\Lambda}^{\mathrm{loop}}}\opdmsr{\msrcal{L}_{\Lambda}}(\zeta)
=
\sum_{q \in \semigrposint}\frac{1}{q}
\sum_{x \in \Lambda}
\fun{\msrcal{W}_{\Lambda;x,x}^{q\beta}}{\Omega_{\Lambda;x,x}^{q\beta}}
=
\sum_{q \in \semigrposint}\frac{1}{q}
\sum_{x \in \Lambda}
\fun{p_{\Lambda,q\beta}}{x,x}
\\
& \leq
\sum_{q \in \semigrposint}\frac{1}{q}
\sum_{x \in \Lambda}
\sum_{y \in \Lambda}
\fun{p_{\Lambda,q\beta}}{x,y}
\leq
\sum_{q \in \semigrposint}\frac{1}{q}
\sum_{x \in \Lambda}
\napiernum^{-q\beta\varepsilon}
\\ 
& =
\abscard{\Lambda}
\sum_{q \in \semigrposint}\frac{\napiernum^{-q\beta\varepsilon}}{q}
=
-\abscard{\Lambda}
\log\rbk{1-\napiernum^{-\beta\varepsilon}}
<
\infty.
\end{aligned}\] For every \(q\in\semigrposint\) and \(x,y\in\Lambda\), nonnegativity and \eqref{eq:fock-killed-walk-row-bound} give the pointwise estimate \[\begin{aligned}
\fun{p_{\Lambda,q\beta}}{x,y}
\leq
\sum_{z \in \Lambda}
\fun{p_{\Lambda,q\beta}}{x,z}
\leq
\napiernum^{-q\beta\varepsilon}.
\end{aligned}\] Summing this estimate over \(q\) and evaluating the resulting geometric series give the open-path bound \[\begin{aligned}
\sum_{q \in \semigrposint}\fun{p_{\Lambda,q\beta}}{x,y}
\leq
\sum_{q \in \semigrposint}\napiernum^{-q\beta\varepsilon}
=
\napiernum^{-\beta\varepsilon}
\sum_{m \in \monnat}
\rbk{\napiernum^{-\beta\varepsilon}}^{m}
=
\frac{\napiernum^{-\beta\varepsilon}}{1-\napiernum^{-\beta\varepsilon}}
=
\frac{1}{\napiernum^{\beta\varepsilon} - 1}.
\end{aligned}\] The loop series in \eqref{eq:fock-open-path-loop-partition-functions} is consequently absolutely convergent and \(1 \leq Z_{\Lambda} < \infty\).

For \(k \in \semigrposint\) and \(\boldsymbol{x},\boldsymbol{y} \in \Lambda^k\), the reduced density kernel is \begin{equation}
\begin{aligned}
\fun{\widetilde{\rho}_{\Lambda,k}}{\boldsymbol{x};\boldsymbol{y}}
& =
\fun{\oastate[\psi_{\Lambda}]}{
\faadj{a_{y_1}}\cdots\faadj{a_{y_k}}a_{x_k}\cdots a_{x_1}}.
\end{aligned}
\label{eq:fock-reduced-density-kernel}
\end{equation} The superstable trace majorant in Proposition \ref{prop:fock-finite-volume-gibbs-existence} makes the trace defining \eqref{eq:fock-reduced-density-kernel} finite.

The proof of the open-path expansion has nine parts.

\begin{enumerate}
\def\labelenumi{\arabic{enumi}.}

\item
  Lemma \ref{lem:fock-interaction-occupation-eigenvalue} computes the interaction in the occupation-number basis.
\item
  Lemma \ref{lem:fock-finite-slice-weight-properties} records the positivity and row bound of the finite-slice weights.
\item
  Lemma \ref{lem:fock-finite-slice-trace-formula} expands the finite-slice trace in particle coordinates.
\item
  Lemma \ref{lem:fock-creation-annihilation-sector-formula} inserts the creation--annihilation strings.
\item
  Lemma \ref{lem:fock-marked-index-permutation-decomposition} reorganizes the permutation sum by the index sequences generated by repeated application of a coordinate-index permutation.
\item
  Lemma \ref{lem:fock-finite-slice-open-path-expansion} converts the coordinate products into bridge and loop weights.
\item
  Lemma \ref{lem:fock-finite-slice-domination} supplies uniform bounds, and Lemma \ref{lem:fock-continuous-time-limit} uses them to take the continuous-time limit.
\item
  Lemma \ref{lem:fock-continuous-bosonic-cycle-decomposition} identifies the operator and loop partition functions.
\item
  Lemma \ref{lem:fock-loop-marked-chain-expansion} combines these results.
\end{enumerate}

For a finite \(\Lambda\Subset\Gamma\) and \(\boldsymbol{\nu}=\seq{\nu_x}{x\in\Lambda}\in\monnat^{\Lambda}\), define \begin{equation}
\begin{aligned}
\fun{E_{\Lambda}}{\boldsymbol{\nu}}
& =
\sum_{x,y\in\Lambda}
\fun{v}{x,y}
\rbk{\nu_x\nu_y-\fun{\fndef{\setone{x}}}{y}\nu_x}.
\end{aligned}
\label{eq:fock-finite-slice-occupation-energy}
\end{equation}

\begin{lem}[Interaction in the occupation-number basis]\label{lem:fock-interaction-occupation-eigenvalue}
Let $\Lambda\Subset\Gamma$ be finite and let
$\boldsymbol{\nu}=\seq{\nu_x}{x\in\Lambda}\in\monnat^{\Lambda}$.
The occupation-number vector $\Psi_{\Lambda,\boldsymbol{\nu}}$ in
\eqref{eq:fock-occupation-number-basis} satisfies
$$\begin{aligned}
V_{\Lambda}\Psi_{\Lambda,\boldsymbol{\nu}}
& =
\fun{E_{\Lambda}}{\boldsymbol{\nu}}\Psi_{\Lambda,\boldsymbol{\nu}}.
\end{aligned}$$
Moreover, $\fun{E_{\Lambda}}{\boldsymbol{\nu}}\geq0$.
\end{lem}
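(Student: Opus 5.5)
The plan is to compute $V_{\Lambda}\Psi_{\Lambda,\boldsymbol{\nu}}$ directly by normal-ordering each summand $\faadj{a_x}\faadj{a_y}a_xa_y$ into a polynomial in the local occupation operators, and then read off the eigenvalue from \eqref{eq:fock-occupation-number-basis}. I would split the double sum in \eqref{eq:fock-bose-hubbard-hamiltonian} into the off-diagonal part $x\neq y$ and the diagonal part $x=y$, working on $\sphilb{D}_{\Lambda,\txtfin}$, which contains every $\Psi_{\Lambda,\boldsymbol{\nu}}$ and is invariant under all creation and annihilation operators.

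For $x\neq y$, the commutation relations \eqref{eq:fock-creation-annihilation-number-commutators} give $[\faadj{a_y},a_x]=0$. This lets me move $a_x$ to the left past $\faadj{a_y}$:
$$\faadj{a_x}\faadj{a_y}a_xa_y=\faadj{a_x}a_x\faadj{a_y}a_y=N_xN_y.$$
For $x=y$, the relation $a_x\faadj{a_x}=N_x+1$ gives
$$\faadj{a_x}\faadj{a_x}a_xa_x=\faadj{a_x}(a_x\faadj{a_x}-1)a_x=N_x^2-N_x.$$
Both cases are covered by the single formula $\faadj{a_x}\faadj{a_y}a_xa_y=N_xN_y-\fun{\fndef{\setone{x}}}{y}N_x$ on $\sphilb{D}_{\Lambda,\txtfin}$. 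Since $N_x\Psi_{\Lambda,\boldsymbol{\nu}}=\nu_x\Psi_{\Lambda,\boldsymbol{\nu}}$, each summand acts on $\Psi_{\Lambda,\boldsymbol{\nu}}$ as the scalar $\nu_x\nu_y-\fun{\fndef{\setone{x}}}{y}\nu_x$. Multiplying by $\fun{v}{x,y}$ and summing over the finitely many pairs gives exactly $\fun{E_{\Lambda}}{\boldsymbol{\nu}}$ from \eqref{eq:fock-finite-slice-occupation-energy}.

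For nonnegativity I would rewrite the energy as
$$\fun{E_{\Lambda}}{\boldsymbol{\nu}}=\sum_{x\neq y}\fun{v}{x,y}\nu_x\nu_y+\sum_{x}\fun{v}{x,x}\,\nu_x(\nu_x-1).$$
Every term in this expression is nonnegative. The weights satisfy $v\geq0$ pointwise, because $U>0$ and $w\geq0$ by \eqref{eq:fock-interaction-range}. The occupations satisfy $\nu_x\in\monnat$, so $\nu_x(\nu_x-1)\geq0$ for every integer $\nu_x\geq0$.

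No step is a genuine obstacle. The only point needing care is the diagonal normal-ordering, which is where the subtracted term $\fun{\fndef{\setone{x}}}{y}\nu_x$ comes from. One should also note that the identities are used only on the finite-particle domain, where all operator products are defined.
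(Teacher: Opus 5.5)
Your proposal is correct and follows essentially the paper's argument. The paper evaluates $\faadj{a_x}\faadj{a_y}a_xa_y$ directly on $\Psi_{\Lambda,\boldsymbol{\nu}}$ through the factors $\sqrt{\nu_x\nu_y}$ and $\sqrt{\nu_x(\nu_x-1)}$, whereas you first normal-order to $N_xN_y-\fun{\fndef{\setone{x}}}{y}N_x$ using the commutation relations; both routes give the same per-pair eigenvalues, and your nonnegativity argument from $v\geq0$ and $\nu_x(\nu_x-1)\geq0$ matches the paper's.
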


\begin{proof}
If $x\neq y$, the operators $a_xa_y$ contribute
$\sqrt{\nu_x\nu_y}$ and the operators $\faadj{a_x}\faadj{a_y}$ contribute the same factor.  Hence
$$\begin{aligned}
\faadj{a_x}\faadj{a_y}a_xa_y\Psi_{\Lambda,\boldsymbol{\nu}}
& =
\nu_x\nu_y\Psi_{\Lambda,\boldsymbol{\nu}}.
\end{aligned}$$
If $x=y$, the two annihilation operators contribute $\sqrt{\nu_x}\sqrt{\nu_x-1}$ and the two creation
operators contribute the same factor, so
$$\begin{aligned}
\rbk{\faadj{a_x}}^2a_x^2\Psi_{\Lambda,\boldsymbol{\nu}}
& =
\nu_x\rbk{\nu_x-1}\Psi_{\Lambda,\boldsymbol{\nu}}.
\end{aligned}$$
These two identities give \eqref{eq:fock-finite-slice-occupation-energy} after summation over the ordered pairs
$\rbk{x,y}\in\Lambda^2$.  Every summand is nonnegative because $v$ is pointwise nonnegative and
$\nu_x\in\monnat$.
\end{proof}

To make the limiting step explicit, fix the particle sector \(N\) and an integer \(m \geq 1\), split \(\closedinterval{0}{\beta}\) into \(m\) intervals, and insert the occupation-number basis \eqref{eq:fock-occupation-number-basis} between the factors in \begin{equation}
\begin{aligned}
T_{\Lambda,m}
=
\rbk{\fnexp{-\frac{\beta}{m}
\rbk{\physham_{\txtfr,\Lambda} -
\smchemicalpotential_{\Lambda}\opfocknumber_{\Lambda}}}
\cdot
\fnexp{-\frac{\beta}{m} V_{\Lambda}}}^{m}.
\end{aligned}
\label{eq:fock-finite-slice-trotter-product}
\end{equation} The operators \(\physham_{\txtfr,\Lambda}\) and \(V_{\Lambda}\) are defined in \eqref{eq:fock-bose-hubbard-hamiltonian}. Define the one-step kernel \(b_m\) by \begin{equation}
\begin{aligned}
\fun{b_m}{x,y}
& =
\fun{\fnexp{-\frac{\sminvtemperature}{m}
\rbk{\physham[h]_{\Lambda} - \smchemicalpotential_{\Lambda}}}}{x,y}.
\end{aligned}
\label{eq:fock-finite-slice-one-step-kernel}
\end{equation} For \(\boldsymbol{x}^{(r)} \in \Lambda^N\), write \(\nu_z^{(r)} = \sum_{i = 1}^{N}\fun{\fndef{\setone{z}}}{x_i^{(r)}}\). For \(\varpi \in \grsym{N}\), let \(\Omega_{\varpi,m}^{(N)}\) be the set of tuples \(\rbk{\boldsymbol{x}^{(0)},\ldots,\boldsymbol{x}^{(m)}}\) in \(\rbk{\Lambda^N}^{m+1}\) satisfying \(x_i^{(m)} = x_{\fun{\varpi}{i}}^{(0)}\) for every \(i\), and define \begin{equation}
\begin{aligned}
\fun{B_{N,m}}{\boldsymbol{x}^{(0)},\ldots,\boldsymbol{x}^{(m)}}
& =
\prod_{r = 0}^{m - 1}
\rbk{
\prod_{i = 1}^{N}
\fun{b_m}{x_i^{(r)},x_i^{(r + 1)}}}
\napiernum^{-\frac{\sminvtemperature}{m}\fun{E_{\Lambda}}{\nu^{(r)}}}.
\end{aligned}
\label{eq:fock-finite-slice-sector-weight}
\end{equation} For each time index \(r\), the product of the kernels \(b_m\) is the one-particle contribution from \(\boldsymbol{x}^{(r)}\) to \(\boldsymbol{x}^{(r+1)}\), and the exponential is the interaction contribution of the occupation numbers \(\nu^{(r)}\). Thus \(B_{N,m}\) assigns a number to each tuple \(\rbk{\boldsymbol{x}^{(0)},\ldots,\boldsymbol{x}^{(m)}}\) in \(\Omega_{\varpi,m}^{(N)}\).

\begin{lem}[Properties of the finite-slice weights]\label{lem:fock-finite-slice-weight-properties}
Assume \eqref{eq:fock-path-chemical-potential-gap}, let $\Lambda\Subset\Gamma$ be finite, and let $\beta>0$.
For every $m\in\semigrposint$ and $x\in\Lambda$, the kernel in
\eqref{eq:fock-finite-slice-one-step-kernel} satisfies
\begin{equation}
\begin{aligned}
\fun{b_m}{x,y}
& \geq0,
&& x,y\in\Lambda,
&
\sum_{y\in\Lambda}\fun{b_m}{x,y}
& \leq
\napiernum^{-\frac{\beta\varepsilon}{m}}.
\end{aligned}
\label{eq:fock-finite-slice-one-step-row-bound}
\end{equation}
For every $N\in\monnat$, $\varpi\in\grsym{N}$, and
$\boldsymbol{x}\in\Omega_{\varpi,m}^{(N)}$, the weight in
\eqref{eq:fock-finite-slice-sector-weight} satisfies
\begin{equation}
\begin{aligned}
0
\leq
\fun{B_{N,m}}{\boldsymbol{x}}
\leq
\prod_{r=0}^{m-1}\prod_{i=1}^{N}
\fun{b_m}{x_i^{(r)},x_i^{(r+1)}}.
\end{aligned}
\label{eq:fock-finite-slice-sector-weight-bound}
\end{equation}
\end{lem}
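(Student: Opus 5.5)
The plan is to identify \(b_m\) with the killed-walk kernel of Lemma \ref{lem:fock-killed-walk-kernel} at time \(\beta/m\), and then to read off both claims from that identification. Comparing \eqref{eq:fock-finite-slice-one-step-kernel} with \eqref{eq:fock-killed-kernel-definition} gives
\[\begin{aligned}
b_m
=
\napiernum^{-\frac{\beta}{m}\rbk{\physham[h]_{\Lambda}-\smchemicalpotential_{\Lambda}}}
=
p_{\Lambda,\beta/m}
=
\napiernum^{-\frac{\beta}{m}r_{\Lambda}}\napiernum^{\frac{\beta}{m}Q_{\Lambda}}.
\end{aligned}\]
Both assertions in \eqref{eq:fock-finite-slice-one-step-row-bound} then follow directly.

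\emph{Nonnegativity.} By Lemma \ref{lem:fock-submarkov-uniformization}, \(\napiernum^{tQ_{\Lambda}}\) is sub-Markovian, so its entries are nonnegative. The prefactor \(\napiernum^{-tr_{\Lambda}}\) is a positive scalar, hence \(b_m(x,y)\geq0\). Alternatively, one can cite the transition identity \eqref{eq:fock-killed-walk-transition}, which expresses \(b_m(x,y)\) as a probability.

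\emph{Row bound.} This is the row estimate \eqref{eq:fock-killed-walk-row-bound} with \(t=\beta/m\). It uses \(r_{\Lambda}\geq\varepsilon\), which comes from the gap assumption \eqref{eq:fock-path-chemical-potential-gap}.

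For the weight bound \eqref{eq:fock-finite-slice-sector-weight-bound}, each factor of \(B_{N,m}\) in \eqref{eq:fock-finite-slice-sector-weight} is a product of nonnegative kernel entries times \(\napiernum^{-\frac{\beta}{m}E_{\Lambda}(\nu^{(r)})}\). The occupation numbers \(\nu^{(r)}_z\) are nonnegative integers, so Lemma \ref{lem:fock-interaction-occupation-eigenvalue} gives \(E_{\Lambda}(\nu^{(r)})\geq0\). Each exponential factor therefore lies in \((0,1]\). Multiplying over \(r\) yields \(0\leq B_{N,m}\leq\prod_{r}\prod_i b_m(x_i^{(r)},x_i^{(r+1)})\). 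The constraint \(\boldsymbol{x}\in\Omega^{(N)}_{\varpi,m}\) plays no role. The case \(N=0\) is trivial, since the empty product is \(1\) and \(E_\Lambda(0)=0\).

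No step is a real obstacle. The only point needing care is the identification of \(b_m\) with \(p_{\Lambda,\beta/m}\), which depends on the sign convention \(Q_\Lambda=-\physham[h]_\Lambda-C_h\) and on \(r_\Lambda=-(\smchemicalpotential_\Lambda+C_h)\geq\varepsilon\). Once that identification is written out, the rest is immediate.
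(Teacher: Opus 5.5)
Your proposal is correct and follows the paper's proof: identify $b_m=p_{\Lambda,\beta/m}$, read nonnegativity and the row bound off the killed-walk lemma at $t=\beta/m$, and use $E_{\Lambda}(\nu^{(r)})\geq0$ from the occupation-basis lemma to bound each exponential factor by $1$. Your explicit check that $-tr_{\Lambda}+tQ_{\Lambda}=-t(h_{\Lambda}-\mu_{\Lambda})$ is a harmless addition.
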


\begin{proof}
Definitions \eqref{eq:fock-killed-kernel-definition} and
\eqref{eq:fock-finite-slice-one-step-kernel} give
$b_m=p_{\Lambda,\beta/m}$.  Nonnegativity of $p_{\Lambda,t}$ and the row estimate
\eqref{eq:fock-killed-walk-row-bound}, evaluated at $t=\beta/m$, give
\eqref{eq:fock-finite-slice-one-step-row-bound}.
Lemma \ref{lem:fock-interaction-occupation-eigenvalue} gives
$\fun{E_{\Lambda}}{\nu^{(r)}}\geq0$ for every $r$.
Consequently every factor in \eqref{eq:fock-finite-slice-sector-weight} is nonnegative, and each exponential
factor is at most $1$.  This proves \eqref{eq:fock-finite-slice-sector-weight-bound}.
\end{proof}

\begin{lem}[Finite-slice trace formula]\label{lem:fock-finite-slice-trace-formula}
Assume \eqref{eq:fock-path-chemical-potential-gap}, let $\Lambda\Subset\Gamma$ be finite, and let $\beta>0$.
For every $N\in\monnat$ and $m\in\semigrposint$, the Trotter product
\eqref{eq:fock-finite-slice-trotter-product} satisfies
\begin{equation}
\begin{aligned}
\sqfun{\trace_{P_{\Lambda,N}\sphilb{F}_{\Lambda}}}{T_{\Lambda,m}}
& =
\frac{1}{N!}
\sum_{\varpi \in \grsym{N}}
\sum_{\boldsymbol{x} \in \Omega_{\varpi,m}^{(N)}}
\fun{B_{N,m}}{\boldsymbol{x}}.
\end{aligned}
\label{eq:fock-finite-slice-trace-formula}
\end{equation}
\end{lem}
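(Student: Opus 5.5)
The identity will follow from writing the \(N\)-particle sector in first quantization and inserting position resolutions of the identity. Identify \(P_{\Lambda,N}\sphilb{F}_{\Lambda}\) with the symmetric subspace of \(\sphilbfrak{h}_{\Lambda}^{\otimes N} = \fun{\lpseq^2}{\Lambda^N}\), with symmetrizer \(S_N = \frac{1}{N!}\sum_{\varpi\in\grsym{N}}U_\varpi\), where \(\rbk{U_\varpi f}(\boldsymbol{x}) = f(x_{\varpi(1)},\ldots,x_{\varpi(N)})\). On this subspace the operators act as follows:
\begin{itemize}
\item \(\physham_{\txtfr,\Lambda}-\smchemicalpotential_{\Lambda}\opfocknumber_{\Lambda}\) acts as \(\sum_{i=1}^N \rbk{\physham[h]_{\Lambda}-\smchemicalpotential_\Lambda}_i\), the one-particle operator applied in the \(i\)-th coordinate.
\item \(V_\Lambda\) acts as multiplication by \(\fun{E_\Lambda}{\nu(\boldsymbol{x})}\), where \(\nu_z(\boldsymbol{x}) = \sum_i \fun{\fndef{\setone{z}}}{x_i}\).
\end{itemize}
The multiplication claim follows from Lemma \ref{lem:fock-interaction-occupation-eigenvalue}. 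The symmetrized position vectors are, up to normalization, the occupation vectors \(\Psi_{\Lambda,\boldsymbol{\nu}}\), and a multiplication operator by a symmetric function commutes with \(U_\varpi\). The first-quantized kinetic claim is the standard second-quantization identity \(d\Gamma(h)\), which follows from the commutators \eqref{eq:fock-creation-annihilation-number-commutators}.

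With this identification, \(\fnexp{-\frac{\beta}{m}(\physham_{\txtfr,\Lambda}-\smchemicalpotential_\Lambda\opfocknumber_\Lambda)}\) acts on \(\fun{\lpseq^2}{\Lambda^N}\) as \(b_m^{\otimes N}\), with kernel \(\prod_i \fun{b_m}{x_i,y_i}\). The factor \(\fnexp{-\frac{\beta}{m}V_\Lambda}\) is diagonal with entries \(\napiernum^{-\frac{\beta}{m}\fun{E_\Lambda}{\nu(\boldsymbol{x})}}\). Both operators commute with every \(U_\varpi\), hence so does \(T_{\Lambda,m}\). This gives
\[
\sqfun{\trace_{P_{\Lambda,N}\sphilb{F}_\Lambda}}{T_{\Lambda,m}} = \sqfun{\trace_{\fun{\lpseq^2}{\Lambda^N}}}{S_N \widetilde T_{m}} = \frac{1}{N!}\sum_{\varpi}\sqfun{\trace}{U_\varpi \widetilde T_m},
\]
where \(\widetilde T_m\) is the extension of \(T_{\Lambda,m}\) to the full tensor product.

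The remaining step is to compute \(\sqfun{\trace}{U_\varpi\widetilde T_m}\) by inserting the position basis \(\delta_{\boldsymbol{x}^{(r)}}\) between the \(m\) factors. Define \(\boldsymbol{x}^{(0)}\) and \(\boldsymbol{x}^{(m)}\) as the outer trace variables. Then \(\bkt{\delta_{\boldsymbol{x}^{(0)}}}{U_\varpi \delta_{\boldsymbol{x}^{(m)}}}\) equals \(1\) exactly when \(x^{(m)}_i = x^{(0)}_{\varpi(i)}\) for all \(i\), possibly with \(\varpi\) replaced by \(\varpi^{-1}\). That relabeling is harmless because the sum runs over all of \(\grsym{N}\). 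The diagonal interaction factor in slice \(r\) contributes \(\napiernum^{-\frac{\beta}{m}\fun{E_\Lambda}{\nu^{(r)}}}\). Its placement must be matched to the factor ordering in \eqref{eq:fock-finite-slice-trotter-product} (kinetic factor first, then interaction); I will use cyclicity of the trace to shift the slice so that the interaction is evaluated at \(\boldsymbol{x}^{(r)}\), as \eqref{eq:fock-finite-slice-sector-weight} requires. Collecting the factors gives exactly \(\sum_{\boldsymbol{x}\in\Omega^{(N)}_{\varpi,m}}\fun{B_{N,m}}{\boldsymbol{x}}\). All sums are finite, since \(\Lambda\) is finite and \(N\), \(m\) are fixed, so no convergence issues arise.

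I expect the main obstacle to be the identification of the Fock-space operators with their first-quantized forms, specifically checking that the ordered-pair normalization of \(V_\Lambda\) matches \(E_\Lambda\), including the normal-ordering term. Lemma \ref{lem:fock-interaction-occupation-eigenvalue} already handles this, so the remaining care lies in the index bookkeeping for \(\varpi\) versus \(\varpi^{-1}\) and the cyclic slice shift.
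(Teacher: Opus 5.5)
Your proposal is correct and follows essentially the same route as the paper: identify the $N$-particle sector with the symmetric subspace of $\ell^2(\Lambda^N)$, insert the position basis $\delta_{\boldsymbol{x}}$ between the Trotter factors, and read off the endpoint condition from the symmetrizer $S_N$. Your extra bookkeeping is sound and is glossed over in the paper: the $\varpi$ versus $\varpi^{-1}$ relabeling, and the slice at which the interaction is evaluated. For the latter, the cyclic shift works, and one can also simply note that $\nu^{(m)}=\nu^{(0)}$ under the endpoint condition.
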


\begin{proof}
Lemma \ref{lem:fock-interaction-occupation-eigenvalue} gives the diagonal matrix element of the interaction
factor, and \eqref{eq:fock-finite-slice-one-step-kernel} gives the one-particle matrix element of the free
factor.  Identify $P_{\Lambda,N}\sphilb{F}_{\Lambda}$ with the symmetric subspace of
$\lpseq^2\rbk{\Lambda^N}$.  For $\boldsymbol{x}\in\Lambda^N$, let $\delta_{\boldsymbol{x}}$ be the function on
$\Lambda^N$ that equals $1$ at $\boldsymbol{x}$ and $0$ at every other tuple.  Insertion of the basis
$\set{\delta_{\boldsymbol{x}}}{\boldsymbol{x}\in\Lambda^N}$ between consecutive factors of
\eqref{eq:fock-finite-slice-trotter-product} assigns the product
\eqref{eq:fock-finite-slice-sector-weight} to each sequence of intermediate coordinates.
The bosonic symmetrizer $S_N$ on $\lpseq^2\rbk{\Lambda^N}$ is defined by
$$\begin{aligned}
S_N\delta_{\boldsymbol{x}}
& =
\frac{1}{N!}
\sum_{\varpi\in\grsym{N}}
\delta_{\rbk{x_{\fun{\varpi}{1}},\ldots,x_{\fun{\varpi}{N}}}}.
\end{aligned}$$
Therefore its endpoint condition is
$x_i^{(m)}=x_{\fun{\varpi}{i}}^{(0)}$.  Summing the intermediate coordinates therefore gives
\eqref{eq:fock-finite-slice-trace-formula}.  The summands are nonnegative by
Lemma \ref{lem:fock-finite-slice-weight-properties}.
\end{proof}

For \(q \in \semigrposint\), let \(\Omega_{\Lambda;x,y}^{q,m}\) consist of sequences \(\omega = \seq{\omega_r}{r = 0}^{q m}\) with \(\omega_0 = x\) and \(\omega_{q m} = y\), and define \begin{equation}
\begin{aligned}
\fun{\msrcal{W}_{\Lambda;x,y}^{q\beta,(m)}}{\setone{\omega}}
& =
\prod_{r = 0}^{q m - 1}\fun{b_m}{\omega_r,\omega_{r + 1}},
\\
\fun{n_{\omega}^{(m)}}{r,z}
& =
\sum_{j = 0}^{q - 1}
\fun{\fndef{\setone{z}}}{\omega_{r + j m}},
\quad
0 \leq r \leq m - 1,
\\
\fun{\Phi_{\Lambda}^{(m)}}{\boldsymbol{\xi}}
& =
\frac{\sminvtemperature}{m}
\sum_{r = 0}^{m - 1}
\sum_{x,y \in \Lambda}
\fun{v}{x,y}
\rbk{
\fun{n_{\boldsymbol{\xi}}^{(m)}}{r,x}
\fun{n_{\boldsymbol{\xi}}^{(m)}}{r,y}
-
\fun{\fndef{\setone{x}}}{y}
\fun{n_{\boldsymbol{\xi}}^{(m)}}{r,x}}.
\end{aligned}
\label{eq:fock-discrete-bridge-path-space}
\end{equation} The discrete loop path space and its ideal loop measure are defined by \begin{equation}
\begin{aligned}
\Omega_{\Lambda}^{\mathrm{loop},m}
& =
\bigsqcup_{\substack{q \in \semigrposint \\ z \in \Lambda}}
\Omega_{\Lambda;z,z}^{q,m},
\\
\sum_{\zeta \in \Omega_{\Lambda}^{\mathrm{loop},m}}
\fun{F}{\zeta}
\fun{\msrcal{L}_{\Lambda}^{(m)}}{\setone{\zeta}}
& =
\sum_{q \in \semigrposint}\frac{1}{q}
\sum_{z \in \Lambda}
\sum_{\zeta \in \Omega_{\Lambda;z,z}^{q,m}}
\fun{F}{\zeta}
\fun{\msrcal{W}_{\Lambda;z,z}^{q\beta,(m)}}{\setone{\zeta}}
\end{aligned}
\label{eq:fock-discrete-ideal-loop-measure}
\end{equation} for every nonnegative function \(F\) on \(\Omega_{\Lambda}^{\mathrm{loop},m}\). For finite tuples \(\boldsymbol{\omega}\) and \(\boldsymbol{\zeta}\) whose entries belong to the discrete bridge path spaces in \eqref{eq:fock-discrete-bridge-path-space}, the notation \(\fun{\Phi_{\Lambda}^{(m)}}{\boldsymbol{\omega},\boldsymbol{\zeta}}\) means that the occupation numbers of all paths in the concatenated tuple are inserted into the definition of \(\Phi_{\Lambda}^{(m)}\). The discrete open-path and loop sums are \begin{equation}
\begin{aligned}
\fun{Z_{\Lambda}^{(m)}}{\boldsymbol{\omega}}
& =
\sum_{\ell \in \monnat}\frac{1}{\ell!}
\sum_{\boldsymbol{\zeta} \in \rbk{\Omega_{\Lambda}^{\mathrm{loop},m}}^{\ell}}
\napiernum^{-\fun{\Phi_{\Lambda}^{(m)}}{\boldsymbol{\omega},\boldsymbol{\zeta}}}
\prod_{j = 1}^{\ell}
\fun{\msrcal{L}_{\Lambda}^{(m)}}{\setone{\zeta_j}},
\\
Z_{\Lambda}^{(m)}
& =
\sum_{\ell \in \monnat}\frac{1}{\ell!}
\sum_{\boldsymbol{\zeta} \in \rbk{\Omega_{\Lambda}^{\mathrm{loop},m}}^{\ell}}
\napiernum^{-\fun{\Phi_{\Lambda}^{(m)}}{\boldsymbol{\zeta}}}
\prod_{j = 1}^{\ell}
\fun{\msrcal{L}_{\Lambda}^{(m)}}{\setone{\zeta_j}}.
\end{aligned}
\label{eq:fock-discrete-open-path-loop-partition-functions}
\end{equation} Let \(\widetilde{\rho}_{\Lambda,k}^{(m)}\) denote the reduced kernel obtained from \eqref{eq:fock-reduced-density-kernel} by replacing the Gibbs density matrix with the normalized \(m\)-slice Trotter product and its trace with \(Z_{\Lambda}^{(m)}\). For \(M\in\monnat\), \(\boldsymbol{z}\in\Lambda^M\), and \(\widehat{\varpi} \in \grsym{k+M}\), its action on the terminal tuple \(\rbk{\boldsymbol{y},\boldsymbol{z}}\) is defined componentwise by \begin{equation}
\begin{aligned}
\rbk{\fun{\widehat{\varpi}}{\boldsymbol{y},\boldsymbol{z}}}_a
& =
\begin{cases}
y_{\fun{\widehat{\varpi}}{a}}, & 1 \leq \fun{\widehat{\varpi}}{a} \leq k,\\
z_{\fun{\widehat{\varpi}}{a}-k}, & k+1 \leq \fun{\widehat{\varpi}}{a} \leq k+M,
\end{cases}
\quad
1 \leq a \leq k+M.
\end{aligned}
\label{eq:fock-permuted-terminal-tuple}
\end{equation}

\begin{lem}[Creation--annihilation sector formula]\label{lem:fock-creation-annihilation-sector-formula}
Assume \eqref{eq:fock-path-chemical-potential-gap}, let $\Lambda\Subset\Gamma$ be finite, and let $\beta>0$.
For $k,m\in\semigrposint$ and $\boldsymbol{x},\boldsymbol{y}\in\Lambda^k$, the finite-slice reduced kernel is
\begin{equation}
\begin{aligned}
\fun{\widetilde{\rho}_{\Lambda,k}^{(m)}}{\boldsymbol{x};\boldsymbol{y}}
& =
\frac{1}{Z_{\Lambda}^{(m)}}
\sum_{M \in \monnat}\frac{1}{M!}
\sum_{\boldsymbol{z} \in \Lambda^M}
\sum_{\widehat{\varpi} \in \grsym{k+M}}
\sum_{\substack{
\boldsymbol{u}^{(1)},\ldots,\boldsymbol{u}^{(m-1)} \in \Lambda^{k+M}\\
\boldsymbol{u}^{(0)} = \rbk{\boldsymbol{x},\boldsymbol{z}}\\
\boldsymbol{u}^{(m)} = \fun{\widehat{\varpi}}{\boldsymbol{y},\boldsymbol{z}}}}
\fun{B_{k+M,m}}{\boldsymbol{u}^{(0)},\ldots,\boldsymbol{u}^{(m)}}.
\end{aligned}
\label{eq:fock-creation-annihilation-sector-formula}
\end{equation}
\end{lem}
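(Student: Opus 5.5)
The plan is to compute the numerator $\trace\bigl[\faadj{a_{y_1}}\cdots\faadj{a_{y_k}}a_{x_k}\cdots a_{x_1}T_{\Lambda,m}\bigr]$ sector by sector, and then divide by the finite-slice normalization. We identify $Z_{\Lambda}^{(m)}$ with $\sum_{N}\trace_{P_{\Lambda,N}\sphilb{F}_{\Lambda}}[T_{\Lambda,m}]$; that identification is the content of the later cycle decomposition, and here it only enters as the normalizing constant by definition of $\widetilde{\rho}_{\Lambda,k}^{(m)}$. The operator string lowers the particle number by $k$ and then raises it by $k$. Its trace against $T_{\Lambda,m}$ therefore decomposes over the sectors $N=k+M$ with $M\in\monnat$, using cyclicity to write it as $\trace_{P_{\Lambda,M}}\bigl[a_{x_k}\cdots a_{x_1}\,T_{\Lambda,m}\,\faadj{a_{y_1}}\cdots\faadj{a_{y_k}}\bigr]$. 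Cyclicity is justified since $T_{\Lambda,m}$ restricted to each sector is a bounded operator on a finite-dimensional space. Summability of the sector series follows from the nonnegativity in Lemma \ref{lem:fock-finite-slice-weight-properties} together with the row bound $\napiernum^{-\beta\varepsilon/m}$ applied $m$ times per particle, which yields geometric decay $\napiernum^{-\beta\varepsilon N}$ times the combinatorial counts. Alternatively, summability follows from the superstable majorant.

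The main step is to express creation and annihilation strings in the coordinate picture. We identify $P_{\Lambda,N}\sphilb{F}_{\Lambda}$ with the symmetric subspace of $\lpseq^2(\Lambda^N)$ exactly as in Lemma \ref{lem:fock-finite-slice-trace-formula}. In this picture $a_x$ acts on symmetric functions by $(a_x f)(\boldsymbol{z})=\sqrt{N}\,f(x,\boldsymbol{z})$, and $\faadj{a_y}$ acts as $\sqrt{N+1}\,S_{N+1}$ applied to $\delta_y\otimes f$. The string $a_{x_k}\cdots a_{x_1}$ maps a symmetric $f$ on $\Lambda^{k+M}$ to $\sqrt{(k+M)!/M!}\,f(x_1,\dots,x_k,\cdot)$. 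The string $\faadj{a_{y_1}}\cdots\faadj{a_{y_k}}$ maps $g$ to $\sqrt{(k+M)!/M!}\,S_{k+M}(\delta_{\boldsymbol{y}}\otimes g)$. The ordering conventions must be checked so that the prefactors combine correctly.

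Next we write $T_{\Lambda,m}$ restricted to the $(k+M)$-sector as $S_{k+M}\widetilde T S_{k+M}$, where $\widetilde T$ is the unsymmetrized $m$-slice kernel on $\Lambda^{k+M}$ with matrix elements given by products of $b_m$ and the diagonal interaction factors. We note that $E_{\Lambda}$ depends only on occupation numbers and so commutes with permutations. We then take the trace over the basis $\delta_{\boldsymbol{z}}$, $\boldsymbol{z}\in\Lambda^M$, of the $M$-particle space, with the factor $1/M!$ coming from the symmetrizer trace. This produces the factor $\frac{(k+M)!}{M!}$ times a matrix element $\bigl\langle \delta_{(\boldsymbol{x},\boldsymbol{z})}, S\widetilde T S\,\delta_{(\boldsymbol{y},\boldsymbol{z})}\bigr\rangle$. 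Expanding the rightmost symmetrizer yields $\frac{1}{(k+M)!}\sum_{\widehat\varpi\in\grsym{k+M}}$ with the terminal tuple $\widehat\varpi(\boldsymbol{y},\boldsymbol{z})$ as in \eqref{eq:fock-permuted-terminal-tuple}. The left symmetrizer is absorbed because $\widetilde T$ commutes with permutations. The $(k+M)!$ factors then cancel, leaving $\frac{1}{M!}\sum_{\boldsymbol{z}}\sum_{\widehat\varpi}$.

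Finally, we insert the coordinate basis between the $m$ slices exactly as in Lemma \ref{lem:fock-finite-slice-trace-formula}, which produces $B_{k+M,m}(\boldsymbol{u}^{(0)},\dots,\boldsymbol{u}^{(m)})$ with $\boldsymbol{u}^{(0)}=(\boldsymbol{x},\boldsymbol{z})$ and $\boldsymbol{u}^{(m)}=\widehat\varpi(\boldsymbol{y},\boldsymbol{z})$. The main obstacle I expect is the bookkeeping of the combinatorial prefactors. This includes $\sqrt{(k+M)!/M!}$ squared against $1/(k+M)!$ and $1/M!$, and the direction in which the permutation acts (initial versus terminal tuple, $\widehat\varpi$ versus $\widehat\varpi^{-1}$). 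Since the sum runs over the whole group and $\widetilde T$ is permutation-equivariant, the direction can be chosen freely. Nonnegativity of all terms ensures that rearranging the sums over $M$, $\boldsymbol{z}$ and $\widehat\varpi$ is legitimate by Tonelli.
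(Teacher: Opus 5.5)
Your route is the paper's own: the strings act in coordinates as $\sqrt{(k+M)!/M!}\,f(\boldsymbol{x},\cdot)$ and $\sqrt{(k+M)!/M!}\,S_{k+M}(\delta_{\boldsymbol{y}}\otimes g)$, the product $(k+M)!/M!$ meets the $1/(k+M)!$ of the symmetrizer, and the remaining permutation sum sits on the terminal tuple. The bookkeeping you were worried about is fine, with one clarification. Put $X=a_{x_k}\cdots a_{x_1}T_{\Lambda,m}\faadj{a_{y_1}}\cdots\faadj{a_{y_k}}$. The trace over the symmetric $M$-sector, written as $\sum_{\boldsymbol{z}}\langle\delta_{\boldsymbol{z}},XS_M\delta_{\boldsymbol{z}}\rangle$, contributes no separate $1/M!$, because $S_{k+M}(\delta_{\boldsymbol{y}}\otimes S_M\delta_{\boldsymbol{z}})=S_{k+M}\delta_{(\boldsymbol{y},\boldsymbol{z})}$. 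So ``the factor $1/M!$ coming from the symmetrizer trace'' would double count if taken literally; your final tally is the correct one. For summability, use $\|T_{\Lambda,m}P_{\Lambda,N}\|\leq\napiernum^{-\beta\varepsilon N}$ together with the polynomial growth of $\dim P_{\Lambda,N}\mathcal{F}_{\Lambda}$. A count of coordinate tuples, $|\Lambda|^M(k+M)!/M!$, is not beaten by $\napiernum^{-\beta\varepsilon(k+M)}$ in general.

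The step that does not go through is the last one, ``insert the coordinate basis exactly as in Lemma \ref{lem:fock-finite-slice-trace-formula}''. In that lemma the two endpoint tuples are permutations of each other, so it does not matter at which end of a slice the diagonal factor $\napiernum^{-\frac{\beta}{m}E_{\Lambda}}$ is evaluated. Here it does matter.

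Write $F=\napiernum^{-\frac{\beta}{m}(\physham_{\txtfr,\Lambda}-\smchemicalpotential_{\Lambda}\opfocknumber_{\Lambda})}$ and $G=\napiernum^{-\frac{\beta}{m}V_{\Lambda}}$, so $T_{\Lambda,m}=(FG)^m$. The kernel you need, from $\boldsymbol{u}^{(0)}=(\boldsymbol{x},\boldsymbol{z})$ to $\boldsymbol{u}^{(m)}$, is $\sum\prod_{r=0}^{m-1}\bigl(\prod_{a}b_m(u_a^{(r)},u_a^{(r+1)})\bigr)\napiernum^{-\frac{\beta}{m}E_{\Lambda}(\nu^{(r+1)})}$. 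Thus the interaction sits on slices $1,\ldots,m$, whereas $B_{k+M,m}$ in \eqref{eq:fock-finite-slice-sector-weight} uses slices $0,\ldots,m-1$. Since $\nu^{(0)}$ and $\nu^{(m)}$ are the occupations of $(\boldsymbol{x},\boldsymbol{z})$ and $(\boldsymbol{y},\boldsymbol{z})$, the two weights differ by $\napiernum^{-\frac{\beta}{m}(E_{\Lambda}(\nu^{(m)})-E_{\Lambda}(\nu^{(0)}))}$. This factor is $1$ when $\boldsymbol{y}$ is a rearrangement of $\boldsymbol{x}$, but not in general.

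In fact the right side of \eqref{eq:fock-creation-annihilation-sector-formula} is the kernel for $(GF)^m=T_{\Lambda,m}^{\ast}$. It therefore equals $\widetilde{\rho}^{(m)}_{\Lambda,k}(\boldsymbol{y};\boldsymbol{x})$, which is generally not $\widetilde{\rho}^{(m)}_{\Lambda,k}(\boldsymbol{x};\boldsymbol{y})$. For $m=1$ the numerators differ by $\operatorname{Tr}\bigl[\faadj{a_{y_1}}\cdots\faadj{a_{y_k}}a_{x_k}\cdots a_{x_1}[F,G]\bigr]$. Take $k=1$, $w=0$, $\Lambda$ three consecutive sites of $\mathbb{Z}$, $x$ an end and $y$ the middle, and write $b_1=p_{\Lambda,\beta}$. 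Then the two-particle sector contributes $2b_1(x,y)\bigl(b_1(x,x)-b_1(y,y)\bigr)(1-\napiernum^{-2\beta U})\neq0$, and this term dominates as $\smchemicalpotential_{\Lambda}\to-\infty$.

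The paper's proof ends with the same unqualified ``insertion of the intermediate coordinates'', so you have inherited its gap rather than missed an idea it supplies. The repair is cheap. The identity is exact when $\boldsymbol{y}$ is a rearrangement of $\boldsymbol{x}$, which is the only case used in Theorem \ref{thm:fock-all-temperature-local-number}. In general, either reverse the order in \eqref{eq:fock-finite-slice-trotter-product} or place the interaction on slices $1,\ldots,m$. All weights stay in $[0,1]$, and the boundary slice disappears in the limit of Lemma \ref{lem:fock-continuous-time-limit}.
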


\begin{proof}
If $N<k$, then $a_{x_k}\cdots a_{x_1}$ vanishes on
$P_{\Lambda,N}\sphilb{F}_{\Lambda}$.
Fix $N\geq k$ and $\boldsymbol{x}=\rbk{x_1,\ldots,x_k}$, and set
$$\begin{aligned}
\fun{\kappa_u}{\boldsymbol{x}}
& =
\abscard{\set{j\in\setone{1,\ldots,k}}{x_j=u}},
\quad u\in\Lambda.
\end{aligned}$$
Choose an occupation-number basis vector $\Psi_N=\Psi_{\Lambda,\boldsymbol{\nu}}$ from
\eqref{eq:fock-occupation-number-basis} with $\sum_{u\in\Lambda}\nu_u=N$.
If $\nu_u<\fun{\kappa_u}{\boldsymbol{x}}$ for some $u\in\Lambda$, the product
$a_{x_k}\cdots a_{x_1}$ contains more than $\nu_u$ annihilation operators at $u$.
The occupation-number formula \eqref{eq:fock-occupation-number-basis} gives
$a_u^{\nu_u+1}\Psi_N=0$.
Annihilation operators at distinct sites commute, and the factor $a_u^{\fun{\kappa_u}{\boldsymbol{x}}}$
therefore gives
$$\begin{aligned}
a_{x_k}\cdots a_{x_1}\Psi_N
& =
0.
\end{aligned}$$
It remains to consider the occupation-number basis vectors satisfying
$$\begin{aligned}
\nu_u
& \geq
\fun{\kappa_u}{\boldsymbol{x}},
\quad u\in\Lambda.
\end{aligned}$$
Identify $P_{\Lambda,N}\sphilb{F}_{\Lambda}$ with the symmetric functions in
$\lpseq^2\rbk{\Lambda^N}$.
Let $\psi_N$ be the symmetric coordinate function of $\Psi_N$, and let
$\psi_{N-k}^{\boldsymbol{x}}$ be that of $a_{x_k}\cdots a_{x_1}\Psi_N$.
For $\boldsymbol{z}=\rbk{z_1,\ldots,z_{N-k}}$, the $x_j$ are the annihilated coordinates and the $z_j$ are
the remaining coordinates; the $x_j$ need not occur among the $z_j$.
Definition \eqref{eq:fock-occupation-number-basis} shows that the two values
$\fun{\psi_{N-k}^{\boldsymbol{x}}}{\boldsymbol{z}}$ and
$\fun{\psi_N}{x_1,\ldots,x_k,z_1,\ldots,z_{N-k}}$ vanish unless
$$\begin{aligned}
\fun{\kappa_u}{\boldsymbol{x}}
+
\sum_{j=1}^{N-k}\fun{\fndef{\setone{u}}}{z_j}
& =
\nu_u,
\quad u\in\Lambda.
\end{aligned}$$
For every $\boldsymbol{z}\in\Lambda^{N-k}$, the coordinate representation of the annihilation product is
$$\begin{aligned}
\fun{\psi_{N-k}^{\boldsymbol{x}}}{\boldsymbol{z}}
& =
\sqrt{\frac{N!}{\rbk{N-k}!}}
\fun{\psi_N}{x_1,\ldots,x_k,z_1,\ldots,z_{N-k}}.
\end{aligned}$$
The adjoint string $\faadj{a_{y_1}}\cdots\faadj{a_{y_k}}$ supplies the same square-root coefficient in the
matrix element defining \eqref{eq:fock-reduced-density-kernel}.
Consequently the creation-annihilation strings contribute the factor
$N!/\rbk{N-k}!$.
Multiplication by the coefficient $1/N!$ of the bosonic symmetrizer gives
$$\begin{aligned}
\frac{1}{N!}\frac{N!}{\rbk{N-k}!}
& =
\frac{1}{\rbk{N-k}!}.
\end{aligned}$$
Set $M = N-k$, and let $\boldsymbol{z} = \rbk{z_1,\ldots,z_M} \in \Lambda^M$ denote the remaining particle
coordinates.
For $\widehat{\varpi} \in \grsym{k+M}$, the initial endpoint tuple is
$\rbk{x_1,\ldots,x_k,z_1,\ldots,z_M}$, while the terminal endpoint tuple is the
$\widehat{\varpi}$-permutation of
$\rbk{y_1,\ldots,y_k,z_1,\ldots,z_M}$.
The permutation $\widehat{\varpi}$ acts on the coordinate-index set $\setone{1,\ldots,k+M}$ and permutes the
positions in these tuples.
It does not act on the lattice $\Lambda$.
Its terminal action is \eqref{eq:fock-permuted-terminal-tuple}.
Insertion of the intermediate coordinates in the weight
\eqref{eq:fock-finite-slice-sector-weight} proves
\eqref{eq:fock-creation-annihilation-sector-formula}.
\end{proof}

Separate the coordinate-index set into \[\begin{aligned}
D_k
& =
\setone{1,\ldots,k},
&
U_M
& =
\setone{k+1,\ldots,k+M}.
\end{aligned}\] The set \(D_k\) contains the distinguished coordinate indices, and \(U_M\) contains the remaining coordinate indices.

\begin{lem}[Marked-index permutation decomposition]\label{lem:fock-marked-index-permutation-decomposition}
Let $k \in \semigrposint$, $M \in \monnat$, and $\widehat{\varpi} \in \grsym{k+M}$.
For every $i \in D_k$, define
$$\begin{aligned}
q_i
& =
\min\set{q \in \semigrposint}{\fun{\widehat{\varpi}^{q}}{i} \in D_k},
\\
a_{i,r}
& =
\fun{\widehat{\varpi}^{r}}{i},
\quad 0 \leq r \leq q_i.
\end{aligned}$$
Define the map $\varpi \colon D_k \to D_k$ by
$$\begin{aligned}
\fun{\varpi}{i}
& =
a_{i,q_i},
\quad i \in D_k.
\end{aligned}$$
The map $\varpi$ is a permutation of $D_k$, and
$$\begin{aligned}
a_{i,0}
& =
i,
&
a_{i,r}
& \in
U_M
\quad
\rbk{1 \leq r \leq q_i-1},
&
a_{i,q_i}
& =
\fun{\varpi}{i}
\in
D_k.
\end{aligned}$$
The sequence $\rbk{a_{i,0},\ldots,a_{i,q_i}}$ is the marked chain from $i$ to $\fun{\varpi}{i}$.
The restriction of $\widehat{\varpi}$ to
$U_M \setminus \bigcup_{i=1}^{k}\set{a_{i,r}}{1\leq r\leq q_i-1}$ is a permutation.
There are $\ell \in \monnat$ and points
$c_{j,0} \in U_M \setminus \bigcup_{i=1}^{k}\set{a_{i,r}}{1\leq r\leq q_i-1}$,
$1 \leq j \leq \ell$, such that the sets
$$\begin{aligned}
C_j
& =
\set{\fun{\widehat{\varpi}^{r}}{c_{j,0}}}{r \in \monnat},
\quad 1 \leq j \leq \ell,
\end{aligned}$$
are pairwise disjoint and satisfy
$$\begin{aligned}
U_M \setminus \bigcup_{i=1}^{k}\set{a_{i,r}}{1\leq r\leq q_i-1}
& =
\bigsqcup_{j=1}^{\ell}C_j.
\end{aligned}$$
The sets $C_1,\ldots,C_{\ell}$ are called the unmarked cycles.
For $1 \leq j \leq \ell$ and $0 \leq r \leq \abscard{C_j}$, define
$c_{j,r}=\fun{\widehat{\varpi}^{r}}{c_{j,0}}$.
These definitions give
$$\begin{aligned}
\setone{c_{j,0},\ldots,c_{j,\abscard{C_j}-1}}
& =
C_j,
&
c_{j,\abscard{C_j}}
& =
c_{j,0},
\\
\fun{\widehat{\varpi}}{c_{j,r}}
& =
c_{j,r+1},
&
0 \leq r
& \leq
\abscard{C_j}-1.
\end{aligned}$$
The resulting partition of $U_M$ and its cardinality are
$$\begin{aligned}
U_M
=
\rbk{\bigsqcup_{i=1}^{k}\set{a_{i,r}}{1\leq r\leq q_i-1}}
\sqcup
\rbk{\bigsqcup_{j=1}^{\ell}C_j},
\quad
M
=
\sum_{i = 1}^{k}\rbk{q_i-1}
+
\sum_{j = 1}^{\ell}\abscard{C_j}.
\end{aligned}$$
For fixed $\varpi$, $q_1,\ldots,q_k$, and $C_1,\ldots,C_{\ell}$, reindexing the sum over
$\widehat{\varpi}$ in \eqref{eq:fock-creation-annihilation-sector-formula}, including its coefficient
$1/M!$, produces
\begin{equation}
\begin{aligned}
\frac{1}{\ell!}
\prod_{j = 1}^{\ell}\frac{1}{\abscard{C_j}}.
\end{aligned}
\label{eq:fock-marked-index-permutation-coefficient}
\end{equation}
\end{lem}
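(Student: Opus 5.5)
The proof splits into two parts: the combinatorial structure of $\widehat{\varpi}$ restricted to $D_k\sqcup U_M$, and a counting argument for the coefficient \eqref{eq:fock-marked-index-permutation-coefficient}.

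For the structural part, fix $i\in D_k$ and consider its forward orbit under $\widehat{\varpi}$. Since $\widehat{\varpi}$ is a permutation of a finite set, the orbit returns to $i\in D_k$. Hence the set defining $q_i$ is nonempty, so $q_i$ is well defined with $q_i\leq\abscard{\text{orbit}}$. By minimality, $a_{i,r}\in U_M$ for $1\leq r\leq q_i-1$.

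$\varpi$ is injective (hence a permutation of the finite set $D_k$). Suppose $\fun{\varpi}{i}=\fun{\varpi}{i'}$. Then applying $\widehat{\varpi}^{-1}$ repeatedly walks back through $U_M$ until the first index in $D_k$ is reached, and this backward chain is determined by the endpoint, so $i=i'$. The same backward-walk argument shows that the marked chains are pairwise disjoint in their $U_M$-portions.

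Let $U'$ be the complement in $U_M$ of all interior chain points. $\widehat{\varpi}$ maps $U'$ into $U'$:
\begin{itemize}
\item If $u\in U'$ had $\fun{\widehat{\varpi}}{u}\in D_k$, then walking back from $u$ would either hit $D_k$, placing $u$ on a marked chain, or cycle within $U_M$, contradicting that its image lies in $D_k$.
\item If $\fun{\widehat{\varpi}}{u}=a_{i,r}$ with $r\geq1$, then $u=a_{i,r-1}$ by injectivity, which is either in $D_k$ or is an interior chain point.
\end{itemize}
So $\widehat{\varpi}$ restricts to an injection, hence a permutation, of the finite set $U'$. Its orbit decomposition gives the cycles $C_j$ with the stated indexing. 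The cardinality identity for $M$ follows by counting the disjoint union.

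For the coefficient, count the $\widehat{\varpi}$ that produce a given labelled combinatorial structure. The summand in \eqref{eq:fock-creation-annihilation-sector-formula} depends on the labels in $U_M$ only through which coordinate tuples $\boldsymbol{z}$ sit where. Because $\boldsymbol{z}$ is summed over $\Lambda^M$, relabelling $U_M$ by any $\sigma\in\grsym{M}$, acting simultaneously on $\widehat{\varpi}$ (by conjugation) and on $\boldsymbol{z}$, leaves the summand invariant.

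The group $\grsym{M}$ acts on pairs $(\widehat{\varpi},\boldsymbol{z})$, and the stabilizer of the structure type (the data $\varpi$, $(q_i)$, and the multiset of cycle lengths) is what survives. Concretely:
\begin{itemize}
\item The number of $\widehat{\varpi}$ with given $\varpi$ and $q_i$ and a given set of cycle lengths equals
$M!\big/\bigl(\ell!\prod_j\abscard{C_j}\bigr)$, once the $\ell$ unmarked cycles are treated as distinguishable ordered objects carrying their own $\boldsymbol{z}$-blocks.
\item The factor $\ell!$ accounts for the ordering of the cycles.
\item Each factor $\abscard{C_j}$ accounts for the choice of starting point $c_{j,0}$.
\end{itemize}
The marked chains carry no extra symmetry, since their starting points are fixed by the distinguished labels in $D_k$.

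Multiplying by $1/M!$ then yields \eqref{eq:fock-marked-index-permutation-coefficient}. I would present this as a bijection: a pair consisting of $\widehat{\varpi}$ and an element of $\grsym{M}$ corresponds to an ordered list of chains and rooted cycles filled with consecutive labels, counted with multiplicity $\ell!\prod_j\abscard{C_j}$.

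The main obstacle is making this reindexing precise. Specifically, one must state exactly which sum is reindexed: sequences of labelled chain and cycle contents, with the $\boldsymbol{z}$-coordinates transported along them. One must also check that the root and ordering overcount is exactly $\ell!\prod_j\abscard{C_j}$ with no extra automorphisms. Extra automorphisms are excluded because each rooted cycle has been given an explicit first element and an order, so it has no residual symmetry. I would handle this by writing the explicit map from $(\widehat{\varpi},\sigma)$ to ordered rooted data and verifying that each fibre has the claimed size.
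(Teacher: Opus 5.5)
Your structural part is correct and differs only mildly from the paper's orbit-splitting argument. The paper looks at each $\widehat{\varpi}$-orbit: either it meets $D_k$, in which case its $D_k$-points cut it into marked chains and $\varpi$ acts on them as a cyclic shift, or it lies inside $U_M$ and is an unmarked cycle. You instead get injectivity of $\varpi$ and disjointness of the chain interiors from the deterministic backward walk, and then check that $U'$ is $\widehat{\varpi}$-invariant. The same backward walk also shows that $a_{i,1},\ldots,a_{i,q_i-1}$ are distinct within one chain, which you need for the cardinality identity.

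The gap is in the coefficient. Your first bullet claims that the number of $\widehat{\varpi}$ with given $\varpi$, $(q_i)$ and cycle lengths is $M!/(\ell!\prod_j\abscard{C_j})$, and you then multiply by $1/M!$. This is false unless all unmarked cycles have the same length. If $n_q$ is the number of unmarked cycles of length $q$, the correct count is $M!/\prod_q q^{n_q}n_q!$. For example, take $q_i=1$ for every $i$ and $M=3$ split into one fixed point and one transposition: there are $3$ such $\widehat{\varpi}$, while your formula gives $3/2$.

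The factor $1/\ell!$ in \eqref{eq:fock-marked-index-permutation-coefficient} is not a count of permutations. It is the weight that $Z_{\Lambda}^{(m)}(\boldsymbol{\omega})$ attaches to \emph{ordered} $\ell$-tuples of site-valued loops. It only appears once the sum over $\boldsymbol{z}$ and the intermediate coordinates is transported along the chains and cycles. A stabilizer cannot produce it either, since conjugation by any permutation of $U_M$ preserves $\varpi$, $(q_i)$ and the cycle lengths.

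The repair is the paper's double count, which your last sentence gestures at but applies to the wrong objects:
\begin{itemize}
\item Consider tuples consisting of $\widehat{\varpi}$, the coordinates $\boldsymbol{z},\boldsymbol{u}^{(1)},\ldots,\boldsymbol{u}^{(m-1)}$, an ordering of the unmarked cycles, and a root $c_{j,0}$ in each. Since the labels are distinct, each choice of $\widehat{\varpi}$ and coordinates has exactly $\ell!\prod_j\abscard{C_j}$ such extensions.
\item Reading off the sites along the marked chains and the rooted cycles maps these tuples onto ordered path data $(\omega_1,\ldots,\omega_k,\zeta_1,\ldots,\zeta_{\ell})$.
\item Each such datum has exactly $M!$ preimages, namely the assignments of the labels in $U_M$ to the chain-interior and cycle slots.
\end{itemize}
Hence the $1/M!$-weighted sum over $\widehat{\varpi}$ and coordinates equals the sum over ordered path data with weight $\frac{1}{\ell!}\prod_j\abscard{C_j}^{-1}$. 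The fibre count must be done on labelled data, because on site-valued loops alone, repeated or rotation-periodic loops carry nontrivial symmetries.

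Alternatively, keep the correct permutation count $M!/\prod_q q^{n_q}n_q!$ together with your relabelling invariance. Then convert $\prod_q (n_q!)^{-1}$ into $(\ell!)^{-1}$ using the $\ell!/\prod_q n_q!$ ordered arrangements of the cycle lengths and the symmetry of the summand under permuting loops. This is what the paper does in Lemma~\ref{lem:fock-continuous-bosonic-cycle-decomposition}.
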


\begin{proof}
For $i \in D_k$, finiteness of $\setone{1,\ldots,k+M}$ and bijectivity of $\widehat{\varpi}$ give a
$q \in \semigrposint$ such that $\fun{\widehat{\varpi}^{q}}{i}=i$.
Hence the set defining $q_i$ is nonempty.
For each $a \in \setone{1,\ldots,k+M}$, consider the set
$\set{\fun{\widehat{\varpi}^{r}}{a}}{r \in \monnat}$.
This set either intersects $D_k$ or is contained in $U_M$.
The points in its intersection with $D_k$, listed in the order in which positive powers of
$\widehat{\varpi}$ reach them, divide it into the sequences
$\rbk{a_{i,0},\ldots,a_{i,q_i}}$.
On this list, $\varpi$ sends each point to the next point and sends the last point to the first point.
Thus $\varpi$ is bijective on each intersection and is a permutation of $D_k$.
Each set generated by an index in
$U_M \setminus \bigcup_{i=1}^{k}\set{a_{i,r}}{1\leq r\leq q_i-1}$ is one of the sets $C_j$.
These alternatives prove the partition of $U_M$ in the statement.
For fixed $\varpi$, $q_1,\ldots,q_k$, and $C_1,\ldots,C_{\ell}$, the $M$ indices in $U_M$ can be assigned to
the positions $a_{i,r}$, $1\leq r\leq q_i-1$, and $c_{j,r}$,
$0\leq r\leq\abscard{C_j}-1$, in $M!$ ways.
This factor cancels the coefficient $1/M!$ in
\eqref{eq:fock-creation-annihilation-sector-formula}.
The marked chains have fixed initial indices.  The unmarked cycles can be ordered in $\ell!$ ways, and the
$j$-th cycle has $\abscard{C_j}$ choices of $c_{j,0}$.  Removing these repetitions leaves
$$\begin{aligned}
\frac{1}{\ell!}
\prod_{j = 1}^{\ell}\frac{1}{\abscard{C_j}}.
\end{aligned}$$
The factor $1/\abscard{C_j}$ agrees with the length-$\abscard{C_j}$ factor in the discrete loop measure
$\msrcal{L}_{\Lambda}^{(m)}$ defined by \eqref{eq:fock-discrete-ideal-loop-measure}.
\end{proof}

\begin{lem}[Finite-slice open-path expansion]\label{lem:fock-finite-slice-open-path-expansion}
Assume \eqref{eq:fock-path-chemical-potential-gap}, let $\Lambda\Subset\Gamma$ be finite, and let $\beta>0$.
For every $m\in\semigrposint$, the finite-slice loop series is the trace normalization:
\begin{equation}
\begin{aligned}
Z_{\Lambda}^{(m)}
& =
\sum_{N\in\monnat}
\sqfun{\trace_{P_{\Lambda,N}\sphilb{F}_{\Lambda}}}{T_{\Lambda,m}}.
\end{aligned}
\label{eq:fock-finite-slice-partition-trace-identity}
\end{equation}
For every $k,m\in\semigrposint$ and $\boldsymbol{x},\boldsymbol{y}\in\Lambda^k$, the finite-slice reduced kernel
satisfies
\begin{equation}
\begin{aligned}
\fun{\widetilde{\rho}_{\Lambda,k}^{(m)}}{\boldsymbol{x};\boldsymbol{y}}
& =
\sum_{\varpi \in \grsym{k}}
\sum_{q_1,\ldots,q_k \in \semigrposint}
\sum_{\substack{
\omega_i \in \Omega_{\Lambda;x_i,y_{\fun{\varpi}{i}}}^{q_i,m}\\
1 \leq i \leq k}}
\frac{\fun{Z_{\Lambda}^{(m)}}{\boldsymbol{\omega}}}{Z_{\Lambda}^{(m)}}
\prod_{i = 1}^{k}
\fun{\msrcal{W}_{\Lambda;x_i,y_{\fun{\varpi}{i}}}^{q_i\beta,(m)}}{\setone{\omega_i}}.
\end{aligned}
\label{eq:fock-finite-slice-open-path-expansion}
\end{equation}
\end{lem}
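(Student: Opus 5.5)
The plan is to prove the two identities in order, since the first supplies the normalization in the second. For \eqref{eq:fock-finite-slice-partition-trace-identity}, I would start from Lemma \ref{lem:fock-finite-slice-trace-formula}. For a fixed sector $N$ and $\varpi\in\grsym{N}$, decompose $\varpi$ into cycles. For a cycle $(c_0,c_1,\ldots,c_{q-1})$ with $\fun{\varpi}{c_r}=c_{r+1}$, the endpoint condition $x^{(m)}_{c_r}=x^{(0)}_{c_{r+1}}$ lets me concatenate the coordinate sequences $\rbk{x^{(0)}_{c_r},\ldots,x^{(m)}_{c_r}}$, $r=0,\ldots,q-1$, into one sequence $\zeta\in\Omega^{q,m}_{\Lambda;z,z}$ with $z=x^{(0)}_{c_0}$. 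This is a bijection between admissible coordinate tuples and ordered families of rooted discrete loops.

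Under this bijection, the product of the kernels $b_m$ in \eqref{eq:fock-finite-slice-sector-weight} becomes $\prod_j\fun{\msrcal{W}^{q_j\beta,(m)}_{\Lambda;z_j,z_j}}{\setone{\zeta_j}}$. The occupation number $\nu^{(r)}_z$ of slice $r$ equals $\fun{n^{(m)}_{\boldsymbol{\zeta}}}{r,z}$, because the value at slice $r$ of the $j$-th coordinate in a cycle is $\zeta_{r+jm}$. Hence $\sum_r\frac{\beta}{m}\fun{E_\Lambda}{\nu^{(r)}}=\fun{\Phi^{(m)}_\Lambda}{\boldsymbol{\zeta}}$ by \eqref{eq:fock-finite-slice-occupation-energy}.

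The combinatorial factor is the case $k=0$ of the counting in Lemma \ref{lem:fock-marked-index-permutation-decomposition}. The $N!$ labelings of indices cancel $1/N!$. Ordering the $\ell$ cycles gives $1/\ell!$, and the choice of root gives $1/\abscard{C_j}$, which is exactly the factor $1/q$ in \eqref{eq:fock-discrete-ideal-loop-measure}. Summing over $N$ then uses $N=\sum_j q_j$. All terms are nonnegative by Lemma \ref{lem:fock-finite-slice-weight-properties}, so Tonelli justifies the rearrangement. Finiteness follows as in the continuum bound, using $\sum_y b_m(x,y)\le \napiernum^{-\beta\varepsilon/m}$.

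For \eqref{eq:fock-finite-slice-open-path-expansion}, I would start from Lemma \ref{lem:fock-creation-annihilation-sector-formula} and apply Lemma \ref{lem:fock-marked-index-permutation-decomposition} to each $\widehat{\varpi}\in\grsym{k+M}$. Along a marked chain $a_{i,0}=i,a_{i,1},\ldots,a_{i,q_i}$, the terminal condition \eqref{eq:fock-permuted-terminal-tuple} says that slice $m$ of index $a_{i,r}$ equals $z_{a_{i,r+1}-k}$, which is slice $0$ of index $a_{i,r+1}$, for $r<q_i-1$. At the last step it equals $y_{\fun{\varpi}{i}}$. Concatenating therefore gives $\omega_i\in\Omega^{q_i,m}_{\Lambda;x_i,y_{\fun{\varpi}{i}}}$, whose intermediate values at times $rm$ are the free variables $z$. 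Unmarked cycles give discrete loops exactly as in the first part.

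The weight $B_{k+M,m}$ again factorizes as $\prod_i\fun{\msrcal{W}}{\setone{\omega_i}}\prod_j\fun{\msrcal{W}}{\setone{\zeta_j}}\,\napiernum^{-\fun{\Phi^{(m)}_\Lambda}{\boldsymbol{\omega},\boldsymbol{\zeta}}}$, since slice occupations count all coordinates. The coefficient \eqref{eq:fock-marked-index-permutation-coefficient} turns the sum over $M$, $\boldsymbol{z}$ and $\widehat{\varpi}$ with fixed $\varpi$ and $q_i$ into $\sum_\ell\frac{1}{\ell!}\sum_{\boldsymbol{\zeta}}\prod_j\fun{\msrcal{L}^{(m)}_\Lambda}{\setone{\zeta_j}}$. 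By \eqref{eq:fock-discrete-open-path-loop-partition-functions}, this sum is $\fun{Z^{(m)}_\Lambda}{\boldsymbol{\omega}}$. Dividing by the trace, identified with $Z^{(m)}_\Lambda$ by the first part, gives the claim.

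The main obstacle is verifying that the reindexing is a genuine bijection with the stated multiplicity. Two points need care. First, the direction of $\widehat{\varpi}$ in \eqref{eq:fock-permuted-terminal-tuple} must match the chain order $a_{i,r}\mapsto a_{i,r+1}$, so that each $z$-variable is consumed exactly once as an intermediate time-$rm$ value. Second, rooted loops and ordered cycles must not be double counted. I would check this first on small cases ($k=1$, $M\le 2$) before writing the general argument.
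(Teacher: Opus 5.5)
Your proposal is correct and follows the paper's proof essentially step for step. The paper also obtains the trace identity as the $k=0$ case of the marked-index decomposition applied to the finite-slice trace formula. It then concatenates coordinate sequences along marked chains and unmarked cycles using the same endpoint equalities, factorizes $B_{k+M,m}$ into bridge weights times $e^{-\Phi^{(m)}_{\Lambda}(\boldsymbol{\omega},\boldsymbol{\zeta})}$ through the slice-occupation identity, and absorbs the coefficient $\frac{1}{\ell!}\prod_{j}|C_j|^{-1}$ into the discrete loop measure to recognize $Z^{(m)}_{\Lambda}(\boldsymbol{\omega})$. Your explicit appeal to nonnegativity and Tonelli to justify the rearrangements is a small addition; the paper handles convergence only in the subsequent domination lemma.
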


\begin{proof}
Lemma \ref{lem:fock-creation-annihilation-sector-formula} supplies the sector formula.
Apply Lemma \ref{lem:fock-marked-index-permutation-decomposition} to each permutation in that formula.
The reindexed coefficient is \eqref{eq:fock-marked-index-permutation-coefficient}.
For each coordinate index $a$, the initial and terminal sites are
$$\begin{aligned}
u_a^{(0)}
& =
\begin{cases}
x_a, & a \in D_k,\\
z_{a-k}, & a \in U_M,
\end{cases}
&
u_a^{(m)}
& =
\begin{cases}
y_{\fun{\widehat{\varpi}}{a}}, & \fun{\widehat{\varpi}}{a} \in D_k,\\
z_{\fun{\widehat{\varpi}}{a}-k}, & \fun{\widehat{\varpi}}{a} \in U_M.
\end{cases}
\end{aligned}$$
Concatenating the coordinate sequences along the marked chains and unmarked cycles defines
$$\begin{aligned}
\omega_{i,rm+s}
& =
u_{a_{i,r}}^{(s)},
\quad
0 \leq r \leq q_i-1,
\quad
0 \leq s \leq m,
\\
\zeta_{j,rm+s}
& =
u_{c_{j,r}}^{(s)},
\quad
0 \leq r \leq \abscard{C_j}-1,
\quad
0 \leq s \leq m.
\end{aligned}$$
The endpoint equalities used to concatenate the coordinate sequences are
$$\begin{aligned}
u_{a_{i,r}}^{(m)}
& =
z_{a_{i,r+1}-k}
=
u_{a_{i,r+1}}^{(0)},
\quad
0 \leq r \leq q_i-2,
\\
u_{c_{j,r}}^{(m)}
& =
u_{c_{j,r+1}}^{(0)},
\quad
0 \leq r \leq \abscard{C_j}-2,
\\
u_{c_{j,\abscard{C_j}-1}}^{(m)}
& =
u_{c_{j,0}}^{(0)}.
\end{aligned}$$
The endpoints of the path $\omega_i$ are
$$\begin{aligned}
\omega_{i,0}
& =
x_i,
&
\omega_{i,q_i m}
& =
y_{\fun{\varpi}{i}}.
\end{aligned}$$
In particular, the $i$-th marked path has $q_i$ blocks of $m$ time slices, while the $j$-th unmarked path is a
loop with $\abscard{C_j}m$ time slices.

The one-step part of $B_{k+M,m}$ factorizes as
$$\begin{aligned}
&
\prod_{s = 0}^{m-1}
\prod_{a = 1}^{k+M}
\fun{b_m}{u_a^{(s)},u_a^{(s+1)}}
=
\prod_{i = 1}^{k}
\prod_{t = 0}^{q_i m-1}
\fun{b_m}{\omega_{i,t},\omega_{i,t+1}}
\prod_{j = 1}^{\ell}
\prod_{t = 0}^{\abscard{C_j}m-1}
\fun{b_m}{\zeta_{j,t},\zeta_{j,t+1}}
\\ 
& =
\prod_{i = 1}^{k}
\fun{\msrcal{W}_{\Lambda;x_i,y_{\fun{\varpi}{i}}}^{q_i\beta,(m)}}{\setone{\omega_i}}
\prod_{j = 1}^{\ell}
\fun{\msrcal{W}_{\Lambda;z_{c_{j,0}-k},z_{c_{j,0}-k}}^{\abscard{C_j}\beta,(m)}}{\setone{\zeta_j}}.
\end{aligned}$$
\begin{samepage}
At the $s$-th Trotter slice, the same partition of the coordinate indices gives the occupation identity
$$\begin{aligned}
\nu_x^{(s)}
& =
\sum_{a = 1}^{k+M}
\fun{\fndef{\setone{x}}}{u_a^{(s)}}
=
\sum_{i = 1}^{k}
\sum_{r = 0}^{q_i-1}
\fun{\fndef{\setone{x}}}{\omega_{i,rm+s}}
+
\sum_{j = 1}^{\ell}
\sum_{r = 0}^{\abscard{C_j}-1}
\fun{\fndef{\setone{x}}}{\zeta_{j,rm+s}}
\\
& =
\fun{n_{\boldsymbol{\omega}}^{(m)}}{s,x}
+
\fun{n_{\boldsymbol{\zeta}}^{(m)}}{s,x}.
\end{aligned}$$
\end{samepage}
Substitution of this identity into the occupation-energy factor of $B_{k+M,m}$ gives
$$\begin{aligned}
\prod_{s = 0}^{m-1}
\napiernum^{-\frac{\sminvtemperature}{m}\fun{E_{\Lambda}}{\nu^{(s)}}}
& =
\napiernum^{-\fun{\Phi_{\Lambda}^{(m)}}{\boldsymbol{\omega},\boldsymbol{\zeta}}}.
\end{aligned}$$
Combining the one-step and occupation-energy identities gives the complete factorization
$$\begin{aligned}
\fun{B_{k+M,m}}{\boldsymbol{u}^{(0)},\ldots,\boldsymbol{u}^{(m)}}
& =
\napiernum^{-\fun{\Phi_{\Lambda}^{(m)}}{\boldsymbol{\omega},\boldsymbol{\zeta}}}
\prod_{i = 1}^{k}
\fun{\msrcal{W}_{\Lambda;x_i,y_{\fun{\varpi}{i}}}^{q_i\beta,(m)}}{\setone{\omega_i}}
\prod_{j = 1}^{\ell}
\fun{\msrcal{W}_{\Lambda;z_{c_{j,0}-k},z_{c_{j,0}-k}}^{\abscard{C_j}\beta,(m)}}{\setone{\zeta_j}}.
\end{aligned}$$
For $k=0$, the sets generated by repeated application of the coordinate-index permutation are precisely the
unmarked cycles $C_1,\ldots,C_{\ell}$.
Ordering these sets and choosing one of the $\abscard{C_j}$ indices as $c_{j,0}$ produces the factor
$\ell!^{-1}\prod_{j=1}^{\ell}\abscard{C_j}^{-1}$.
This coefficient and the factorization of $B_{M,m}$ convert the sum of the sector traces in
\eqref{eq:fock-finite-slice-trace-formula} into the loop series
\eqref{eq:fock-discrete-open-path-loop-partition-functions}.
This proves \eqref{eq:fock-finite-slice-partition-trace-identity}.
The coefficient
$\frac{1}{\ell!}\prod_{j = 1}^{\ell}\frac{1}{\abscard{C_j}}$ and the path factorization rewrite the sector sum as
$$\begin{aligned}
\fun{\widetilde{\rho}_{\Lambda,k}^{(m)}}{\boldsymbol{x};\boldsymbol{y}}
& =
\frac{1}{Z_{\Lambda}^{(m)}}
\sum_{\varpi \in \grsym{k}}
\sum_{q_1,\ldots,q_k \in \semigrposint}
\sum_{\substack{
\omega_i \in \Omega_{\Lambda;x_i,y_{\fun{\varpi}{i}}}^{q_i,m}\\
1 \leq i \leq k}}
\prod_{i = 1}^{k}
\fun{\msrcal{W}_{\Lambda;x_i,y_{\fun{\varpi}{i}}}^{q_i\beta,(m)}}{\setone{\omega_i}}
\\
& \mathrel{\phantom{=}}
\times
\sum_{\ell \in \monnat}\frac{1}{\ell!}
\sum_{\boldsymbol{\zeta}\in\rbk{\Omega_{\Lambda}^{\mathrm{loop},m}}^{\ell}}
\napiernum^{-\fun{\Phi_{\Lambda}^{(m)}}{\boldsymbol{\omega},\boldsymbol{\zeta}}}
\prod_{j = 1}^{\ell}
\fun{\msrcal{L}_{\Lambda}^{(m)}}{\setone{\zeta_j}}.
\end{aligned}$$
The final two sums and the interaction factor are precisely
$\fun{Z_{\Lambda}^{(m)}}{\boldsymbol{\omega}}$ by
\eqref{eq:fock-discrete-open-path-loop-partition-functions}.
Substitution of this definition proves \eqref{eq:fock-finite-slice-open-path-expansion}.
The derivation identifies the marked endpoint condition
$\omega_{i,0} = x_i$ and $\omega_{i,q_i m} = y_{\fun{\varpi}{i}}$.
The unmarked cycles remain in the normalized loop factor.
\end{proof}

\begin{lem}[Finite-slice domination]\label{lem:fock-finite-slice-domination}
Under the assumptions of Lemma \ref{lem:fock-finite-slice-open-path-expansion}, suppose that $v$ is pointwise
nonnegative.
Every tuple $\boldsymbol{\omega}$ occurring in \eqref{eq:fock-finite-slice-open-path-expansion} satisfies
\begin{equation}
\begin{aligned}
0
\leq
\frac{\fun{Z_{\Lambda}^{(m)}}{\boldsymbol{\omega}}}{Z_{\Lambda}^{(m)}}
\leq
\napiernum^{-\fun{\Phi_{\Lambda}^{(m)}}{\boldsymbol{\omega}}}
\leq
1.
\end{aligned}
\label{eq:fock-finite-slice-normalized-loop-bound}
\end{equation}
The total loop mass, the finite-slice partition function, and the reduced kernel obey
\begin{equation}
\begin{aligned}
\sum_{\zeta\in\Omega_{\Lambda}^{\mathrm{loop},m}}
\fun{\msrcal{L}_{\Lambda}^{(m)}}{\setone{\zeta}}
& \leq
-\abscard{\Lambda}
\log\rbk{1-\napiernum^{-\beta\varepsilon}},
\\
Z_{\Lambda}^{(m)}
& \leq
\frac{1}{\rbk{1-\napiernum^{-\beta\varepsilon}}^{\abscard{\Lambda}}},
\\
0
\leq
\fun{\widetilde{\rho}_{\Lambda,k}^{(m)}}{\boldsymbol{x};\boldsymbol{y}}
& \leq
\frac{k!}{\rbk{\napiernum^{\beta\varepsilon}-1}^{k}}.
\end{aligned}
\label{eq:fock-finite-slice-uniform-bounds}
\end{equation}
\end{lem}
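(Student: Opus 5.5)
The plan is to compare the open-path loop sum with the pure loop sum term by term. Each bound follows from the nonnegativity of $v$ and the sub-Markov row bound \eqref{eq:fock-finite-slice-one-step-row-bound}.

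\emph{Normalized loop bound.} Nonnegativity of $\fun{Z_{\Lambda}^{(m)}}{\boldsymbol{\omega}}$ is immediate, since every summand in \eqref{eq:fock-discrete-open-path-loop-partition-functions} is a product of nonnegative weights $b_m$ and exponentials. For the upper bound, fix $\boldsymbol{\omega}$ and $\boldsymbol{\zeta}$ and expand the quadratic form in $\Phi_{\Lambda}^{(m)}$ using $n_{\boldsymbol{\omega},\boldsymbol{\zeta}}=n_{\boldsymbol{\omega}}+n_{\boldsymbol{\zeta}}$. This gives
$$\fun{\Phi_{\Lambda}^{(m)}}{\boldsymbol{\omega},\boldsymbol{\zeta}}=\fun{\Phi_{\Lambda}^{(m)}}{\boldsymbol{\omega}}+\fun{\Phi_{\Lambda}^{(m)}}{\boldsymbol{\zeta}}+\frac{\beta}{m}\sum_{r}\sum_{x,y}\fun{v}{x,y}\rbk{\fun{n_{\boldsymbol{\omega}}^{(m)}}{r,x}\fun{n_{\boldsymbol{\zeta}}^{(m)}}{r,y}+\fun{n_{\boldsymbol{\zeta}}^{(m)}}{r,x}\fun{n_{\boldsymbol{\omega}}^{(m)}}{r,y}}.$$
The normal-ordering term is linear, so it splits additively between the two tuples. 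The cross term is nonnegative because $v\geq0$ and the occupation fields are nonnegative. Hence
$$\napiernum^{-\Phi^{(m)}_{\Lambda}(\boldsymbol{\omega},\boldsymbol{\zeta})}\leq\napiernum^{-\Phi^{(m)}_{\Lambda}(\boldsymbol{\omega})}\napiernum^{-\Phi^{(m)}_{\Lambda}(\boldsymbol{\zeta})}.$$
Summing over $\ell$ and $\boldsymbol{\zeta}$ yields $\fun{Z_{\Lambda}^{(m)}}{\boldsymbol{\omega}}\leq\napiernum^{-\Phi^{(m)}_{\Lambda}(\boldsymbol{\omega})}Z_{\Lambda}^{(m)}$. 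The $\ell=0$ term of $Z_{\Lambda}^{(m)}$ equals $1$, so $Z_{\Lambda}^{(m)}\geq1$ and the division is harmless.

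For the last inequality $\napiernum^{-\Phi^{(m)}_{\Lambda}(\boldsymbol{\omega})}\leq1$, I need $\Phi^{(m)}_{\Lambda}\geq0$. At each slice the summand is exactly $E_{\Lambda}$ evaluated at the nonnegative integer occupations $n^{(m)}_{\boldsymbol{\omega}}(r,\cdot)$, and Lemma \ref{lem:fock-interaction-occupation-eigenvalue} gives $E_{\Lambda}\geq0$. This is the one place where the diagonal term $U\nu_x(\nu_x-1)\geq0$ matters, since the normal ordering makes nonnegativity depend on integrality.

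\emph{Loop mass and partition function.} Summing the discrete bridge weights over intermediate sites gives the matrix power: $\sum_{\zeta}\msrcal{W}^{q\beta,(m)}_{\Lambda;z,z}=\fun{b_m^{qm}}{z,z}$. By \eqref{eq:fock-finite-slice-one-step-row-bound}, the $qm$-fold product is nonnegative with row sums at most $\napiernum^{-q\beta\varepsilon}$. The loop-mass computation then repeats verbatim the continuous one preceding \eqref{eq:fock-reduced-density-kernel}, giving $-\abscard{\Lambda}\log(1-\napiernum^{-\beta\varepsilon})$.

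For $Z_{\Lambda}^{(m)}$, drop the interaction, which is at most $1$ by nonnegativity of $\Phi^{(m)}_{\Lambda}$. The series becomes $\sum_{\ell}\frac{1}{\ell!}L^{\ell}=\napiernum^{L}$, where $L$ is the total loop mass, and this equals $(1-\napiernum^{-\beta\varepsilon})^{-\abscard{\Lambda}}$.

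\emph{Reduced kernel.} Insert the bound from the first part into \eqref{eq:fock-finite-slice-open-path-expansion}, using the ratio bound $\leq1$. This gives
$$0\leq\widetilde{\rho}^{(m)}_{\Lambda,k}\leq\sum_{\varpi\in\grsym{k}}\prod_{i=1}^{k}\sum_{q_i\geq1}\fun{b_m^{q_im}}{x_i,y_{\varpi(i)}}.$$
Each entry satisfies $\fun{b_m^{qm}}{x,y}\leq\napiernum^{-q\beta\varepsilon}$ by the row bound. The resulting geometric series equals $(\napiernum^{\beta\varepsilon}-1)^{-1}$, and the $k!$ permutations supply the factor $k!$.

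\emph{Main obstacle.} The only delicate step is the superadditivity of $\Phi^{(m)}_{\Lambda}$. There one must check that the normal-ordering subtraction is linear and so creates no negative cross term. One must also check that the individual terms are nonnegative only through integrality, which is why I invoke Lemma \ref{lem:fock-interaction-occupation-eigenvalue} rather than positivity of $v$ alone.
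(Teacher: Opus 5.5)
Your proposal is correct and follows essentially the same route as the paper. The shared steps are superadditivity of $\Phi_{\Lambda}^{(m)}$ from the nonnegative cross term, nonnegativity of the per-slice self-energy through integrality, and then dropping the interaction and using the row bound to sum the loop-mass, partition-function, and marked-chain geometric series. The differences are cosmetic: you iterate the one-step row bound for $b_m$ instead of invoking $b_m^{qm}=p_{\Lambda,q\beta}$ with \eqref{eq:fock-killed-walk-row-bound}, and you state $Z_{\Lambda}^{(m)}\geq 1$ from the $\ell=0$ term explicitly, which the paper leaves implicit.
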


\begin{proof}
For the concatenated tuple $\rbk{\boldsymbol{\omega},\boldsymbol{\zeta}}$, the discrete occupation numbers
satisfy
$$\begin{aligned}
\fun{n_{\rbk{\boldsymbol{\omega},\boldsymbol{\zeta}}}^{(m)}}{r,x}
& =
\fun{n_{\boldsymbol{\omega}}^{(m)}}{r,x}
+
\fun{n_{\boldsymbol{\zeta}}^{(m)}}{r,x}.
\end{aligned}$$
Substitution into the quadratic and linear terms of $\Phi_{\Lambda}^{(m)}$ gives
$$\begin{aligned}
&
\fun{n_{\rbk{\boldsymbol{\omega},\boldsymbol{\zeta}}}^{(m)}}{r,x}
\fun{n_{\rbk{\boldsymbol{\omega},\boldsymbol{\zeta}}}^{(m)}}{r,y}
\\ 
& =
\fun{n_{\boldsymbol{\omega}}^{(m)}}{r,x}
\fun{n_{\boldsymbol{\omega}}^{(m)}}{r,y}
+
\fun{n_{\boldsymbol{\zeta}}^{(m)}}{r,x}
\fun{n_{\boldsymbol{\zeta}}^{(m)}}{r,y}
+
\fun{n_{\boldsymbol{\omega}}^{(m)}}{r,x}
\fun{n_{\boldsymbol{\zeta}}^{(m)}}{r,y}
+
\fun{n_{\boldsymbol{\zeta}}^{(m)}}{r,x}
\fun{n_{\boldsymbol{\omega}}^{(m)}}{r,y}.
\end{aligned}$$
The symmetry and pointwise nonnegativity of $v$ identify the two mixed quadratic terms and give
$$\begin{aligned}
\fun{I_{\Lambda}^{(m)}}{\boldsymbol{\omega},\boldsymbol{\zeta}}
& =
\frac{2\sminvtemperature}{m}
\sum_{r = 0}^{m - 1}
\sum_{x,y \in \Lambda}
\fun{v}{x,y}
\fun{n_{\boldsymbol{\omega}}^{(m)}}{r,x}
\fun{n_{\boldsymbol{\zeta}}^{(m)}}{r,y}
\geq
0,
\\
\fun{\Phi_{\Lambda}^{(m)}}{\boldsymbol{\omega},\boldsymbol{\zeta}}
& =
\fun{\Phi_{\Lambda}^{(m)}}{\boldsymbol{\omega}}
+
\fun{\Phi_{\Lambda}^{(m)}}{\boldsymbol{\zeta}}
+
\fun{I_{\Lambda}^{(m)}}{\boldsymbol{\omega},\boldsymbol{\zeta}}
\geq
\fun{\Phi_{\Lambda}^{(m)}}{\boldsymbol{\omega}}
+
\fun{\Phi_{\Lambda}^{(m)}}{\boldsymbol{\zeta}}.
\end{aligned}$$
For every tuple $\boldsymbol{\xi}$, the discrete self-energy is nonnegative because
$$\begin{aligned}
&
\sum_{x,y \in \Lambda}
\fun{v}{x,y}
\rbk{
\fun{n_{\boldsymbol{\xi}}^{(m)}}{r,x}
\fun{n_{\boldsymbol{\xi}}^{(m)}}{r,y}
-
\fun{\fndef{\setone{x}}}{y}
\fun{n_{\boldsymbol{\xi}}^{(m)}}{r,x}}
\\
& =
\sum_{x \in \Lambda}
\fun{v}{x,x}
\fun{n_{\boldsymbol{\xi}}^{(m)}}{r,x}
\rbk{\fun{n_{\boldsymbol{\xi}}^{(m)}}{r,x}-1}
+
\sum_{\substack{x,y \in \Lambda\\x\neq y}}
\fun{v}{x,y}
\fun{n_{\boldsymbol{\xi}}^{(m)}}{r,x}
\fun{n_{\boldsymbol{\xi}}^{(m)}}{r,y}
\geq
0.
\end{aligned}$$
The discrete loop series in
\eqref{eq:fock-discrete-open-path-loop-partition-functions} satisfies
$$\begin{aligned}
&
\fun{Z_{\Lambda}^{(m)}}{\boldsymbol{\omega}}
=
\sum_{\ell \in \monnat}\frac{1}{\ell!}
\sum_{\boldsymbol{\zeta}\in\rbk{\Omega_{\Lambda}^{\mathrm{loop},m}}^{\ell}}
\napiernum^{-\fun{\Phi_{\Lambda}^{(m)}}{\boldsymbol{\omega},\boldsymbol{\zeta}}}
\prod_{j = 1}^{\ell}
\fun{\msrcal{L}_{\Lambda}^{(m)}}{\setone{\zeta_j}}
\\
& \leq
\napiernum^{-\fun{\Phi_{\Lambda}^{(m)}}{\boldsymbol{\omega}}}
\sum_{\ell \in \monnat}\frac{1}{\ell!}
\sum_{\boldsymbol{\zeta}\in\rbk{\Omega_{\Lambda}^{\mathrm{loop},m}}^{\ell}}
\napiernum^{-\fun{\Phi_{\Lambda}^{(m)}}{\boldsymbol{\zeta}}}
\prod_{j = 1}^{\ell}
\fun{\msrcal{L}_{\Lambda}^{(m)}}{\setone{\zeta_j}}
=
\napiernum^{-\fun{\Phi_{\Lambda}^{(m)}}{\boldsymbol{\omega}}}
Z_{\Lambda}^{(m)}
\leq
Z_{\Lambda}^{(m)}.
\end{aligned}$$
This proves \eqref{eq:fock-finite-slice-normalized-loop-bound}.

The total mass of the discrete ideal loop measure can be computed from
\eqref{eq:fock-discrete-ideal-loop-measure}.
The semigroup property and the row estimate \eqref{eq:fock-killed-walk-row-bound} give
$$\begin{aligned}
&
\sum_{\zeta\in\Omega_{\Lambda}^{\mathrm{loop},m}}
\fun{\msrcal{L}_{\Lambda}^{(m)}}{\setone{\zeta}}
=
\sum_{q \in \semigrposint}\frac{1}{q}
\sum_{z \in \Lambda}
\sum_{\zeta\in\Omega_{\Lambda;z,z}^{q,m}}
\prod_{r = 0}^{qm-1}
\fun{b_m}{\zeta_r,\zeta_{r+1}}
=
\sum_{q \in \semigrposint}\frac{1}{q}
\sum_{z \in \Lambda}
\fun{b_m^{qm}}{z,z}
\\
& =
\sum_{q \in \semigrposint}\frac{1}{q}
\sum_{z \in \Lambda}
\fun{p_{\Lambda,q\beta}}{z,z}
\leq
\sum_{q \in \semigrposint}\frac{1}{q}
\sum_{z,w \in \Lambda}
\fun{p_{\Lambda,q\beta}}{z,w}
\leq
\abscard{\Lambda}
\sum_{q \in \semigrposint}
\frac{\napiernum^{-q\beta\varepsilon}}{q}
=
-\abscard{\Lambda}
\log\rbk{1-\napiernum^{-\beta\varepsilon}}.
\end{aligned}$$
The equality $b_m^{qm}=p_{\Lambda,q\beta}$ follows directly from the definitions of $b_m$ and
$p_{\Lambda,t}$.
Dropping the nonnegative interaction energy and summing the exponential series gives
$$\begin{aligned}
&Z_{\Lambda}^{(m)}
\leq
\sum_{\ell \in \monnat}\frac{1}{\ell!}
\rbk{
\sum_{\zeta\in\Omega_{\Lambda}^{\mathrm{loop},m}}
\fun{\msrcal{L}_{\Lambda}^{(m)}}{\setone{\zeta}}}^{\ell}
=
\fnexp{
\sum_{\zeta\in\Omega_{\Lambda}^{\mathrm{loop},m}}
\fun{\msrcal{L}_{\Lambda}^{(m)}}{\setone{\zeta}}}
\\ 
& \leq
\fnexp{
-\abscard{\Lambda}
\fun{\log}{1 - \napiernum^{-\sminvtemperature\omega \varepsilon}}}
=
\rbk{1-\napiernum^{-\beta\varepsilon}}^{-\abscard{\Lambda}}.
\end{aligned}$$

We apply the normalized-loop estimate to the finite-slice reduced kernel.
Positivity of the bridge weights gives
$$\begin{aligned}
&
0
\leq
\fun{\widetilde{\rho}_{\Lambda,k}^{(m)}}{\boldsymbol{x};\boldsymbol{y}}
\leq
\sum_{\varpi \in \grsym{k}}
\sum_{q_1,\ldots,q_k \in \semigrposint}
\sum_{\substack{
\omega_i \in \Omega_{\Lambda;x_i,y_{\fun{\varpi}{i}}}^{q_i,m}\\
1 \leq i \leq k}}
\prod_{i = 1}^{k}
\fun{\msrcal{W}_{\Lambda;x_i,y_{\fun{\varpi}{i}}}^{q_i\beta,(m)}}{\setone{\omega_i}}
\\ 
& =
\sum_{\varpi \in \grsym{k}}
\prod_{i = 1}^{k}
\rbk{
\sum_{q_i \in \semigrposint}
\sum_{\omega_i \in \Omega_{\Lambda;x_i,y_{\fun{\varpi}{i}}}^{q_i,m}}
\fun{\msrcal{W}_{\Lambda;x_i,y_{\fun{\varpi}{i}}}^{q_i\beta,(m)}}{\setone{\omega_i}}}
\\ 
& =
\sum_{\varpi \in \grsym{k}}
\prod_{i = 1}^{k}
\rbk{
\sum_{q_i \in \semigrposint}
\fun{b_m^{q_i m}}{x_i,y_{\fun{\varpi}{i}}}}
=
\sum_{\varpi \in \grsym{k}}
\prod_{i = 1}^{k}
\rbk{
\sum_{q_i \in \semigrposint}
\fun{p_{\Lambda,q_i\beta}}{x_i,y_{\fun{\varpi}{i}}}}
\\
& \leq
\sum_{\varpi \in \grsym{k}}
\prod_{i = 1}^{k}
\rbk{
\sum_{q_i \in \semigrposint}
\napiernum^{-q_i\beta\varepsilon}}
=
k!
\rbk{
\sum_{q \in \semigrposint}
\napiernum^{-q\beta\varepsilon}}^{k}
=
\frac{k!}{\rbk{\napiernum^{\beta\varepsilon}-1}^{k}}.
\end{aligned}$$
The marked-chain estimate is independent of $m$, the volume, and the endpoint sites.
The loop-mass calculation, the exponential-series estimate, and the marked-chain estimate prove
\eqref{eq:fock-finite-slice-uniform-bounds}.
\end{proof}

The passage from finite-slice paths to continuous time is isolated in the next lemma.

\begin{lem}[Continuous-time limit]\label{lem:fock-continuous-time-limit}
Under the assumptions of Lemma \ref{lem:fock-finite-slice-domination}, the discrete loop partition function
converges to the loop partition function in \eqref{eq:fock-open-path-loop-partition-functions}, and
\begin{equation}
\begin{aligned}
\lim_{m\to\infty}
\fun{\widetilde{\rho}_{\Lambda,k}^{(m)}}{\boldsymbol{x};\boldsymbol{y}}
& =
\sum_{\varpi \in \grsym{k}}
\sum_{q_1,\ldots,q_k \in \semigrposint}
\int_{\prod_{i = 1}^{k}\Omega_{\Lambda;x_i,y_{\fun{\varpi}{i}}}^{q_i\beta}}
\frac{\fun{Z_{\Lambda}}{\boldsymbol{\omega}}}{Z_{\Lambda}}
\prod_{i = 1}^{k}
\opdmsr{\msrcal{W}_{\Lambda;x_i,y_{\fun{\varpi}{i}}}^{q_i\beta}}(\omega_i).
\end{aligned}
\label{eq:fock-continuous-time-open-path-limit}
\end{equation}
\end{lem}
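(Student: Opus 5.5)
The plan is to realize the discrete path sums as integrals against the continuous bridge measures. Weak convergence of the discretized path functionals, together with the dominating bounds of Lemma \ref{lem:fock-finite-slice-domination}, then lets us pass to the limit by dominated convergence.

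\textbf{Identification of the discrete weights.} Since $b_m=p_{\Lambda,\beta/m}$, the cylinder formula \eqref{eq:fock-bridge-cylinder-distribution} shows that the discrete bridge weight $\msrcal{W}_{\Lambda;x,y}^{q\beta,(m)}$ is the pushforward of the continuous bridge measure $\msrcal{W}_{\Lambda;x,y}^{q\beta}$ under sampling at the times $r\beta/m$, $0\leq r\leq qm$. Hence every discrete sum in \eqref{eq:fock-finite-slice-open-path-expansion} and \eqref{eq:fock-discrete-open-path-loop-partition-functions} is an integral over continuous paths. The integrand is $\napiernum^{-\Phi_{\Lambda}^{(m)}}$ evaluated on the sampled paths, and the discrete loop measure is the pushforward of $\msrcal{L}_{\Lambda}$.

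\textbf{Pointwise convergence of the energies.} Fix a tuple of continuous paths $(\boldsymbol{\omega},\boldsymbol{\zeta})$. Each path is càdlàg with finitely many jumps, so the occupation field $s\mapsto n_{(\boldsymbol{\omega},\boldsymbol{\zeta})}(s,z)$ is piecewise constant with finitely many discontinuities on $[0,\beta]$. The quantity $\Phi_{\Lambda}^{(m)}$ evaluated on the sampled paths is a left Riemann sum of this bounded piecewise-constant integrand. It therefore converges to $\Phi_{\Lambda}(\boldsymbol{\omega},\boldsymbol{\zeta})$ for every tuple; no null-set exception is needed, since finitely many jumps hold by definition of $\Omega^{t}_{\Lambda;x,y}$.

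\textbf{Dominated convergence for $Z$.} The integrands lie in $[0,1]$ because the discrete energies are nonnegative, as in the proof of Lemma \ref{lem:fock-finite-slice-domination}. Consider the measure on $\bigsqcup_\ell(\Omega^{\mathrm{loop}}_\Lambda)^\ell$ with weights $1/\ell!$. Its total mass is $\exp$ of the loop mass, which is finite by the computation following \eqref{eq:fock-open-path-loop-partition-functions}. Dominated convergence then gives $Z_\Lambda^{(m)}\to Z_\Lambda$ and $Z_\Lambda^{(m)}(\boldsymbol{\omega}^{(m)})\to Z_\Lambda(\boldsymbol{\omega})$, where $\boldsymbol{\omega}^{(m)}$ denotes the sampled marked tuple.

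\textbf{Dominated convergence for the reduced kernel.} Apply dominated convergence once more over the marked variables, that is, over $\varpi$, $q_1,\ldots,q_k$ and the paths. Write the ratio as
$$\begin{aligned}
\frac{Z_\Lambda^{(m)}(\boldsymbol{\omega}^{(m)})}{Z_\Lambda^{(m)}}\in[0,1],
\end{aligned}$$
with the bound supplied by \eqref{eq:fock-finite-slice-normalized-loop-bound}. Since $Z_\Lambda^{(m)}\geq1$, the ratio converges to $Z_\Lambda(\boldsymbol{\omega})/Z_\Lambda$. The marked measure has finite total mass
$$\begin{aligned}
\sum_{\varpi}\prod_i\sum_{q_i}p_{\Lambda,q_i\beta}(x_i,y_{\varpi(i)})\leq\frac{k!}{(\napiernum^{\beta\varepsilon}-1)^k},
\end{aligned}$$
independent of $m$. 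This yields \eqref{eq:fock-continuous-time-open-path-limit}.

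\textbf{Main obstacle.} The delicate step is the first one: making the pushforward identification precise. One must check that the discrete occupation fields are exactly the sampled continuous ones, including the periodic index shift $s+j\beta$ with $r+jm$, and that the loop measures and the factors $1/q$ match under the disjoint unions. Everything after that is dominated convergence with $m$-independent bounds already proved.
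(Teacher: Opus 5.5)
Your proposal is correct and follows the paper's route. Like the paper, you identify the $m$-slice weights with the time-$r\beta/m$ marginals of the bridge measures, use left Riemann sums of the piecewise-constant occupation fields, and apply dominated convergence with integrands bounded by $1$. The paper proves a fixed-tuple limit and then bounds the loop-count, loop-length, and marked-length tails separately, uniformly in $m$; you absorb all three into one dominated-convergence step on the $m$-independent continuous-path measure of finite total mass, which is equivalent and slightly cleaner.
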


\begin{proof}
For a path $\omega\in\Omega_{\Lambda;x,y}^{q\beta}$, let its $m$-skeleton be
$\rbk{\omega_0,\ldots,\omega_{qm}}$ with $\omega_r=\omega_{r\beta/m}$.
The joint distribution of the path values at the times $r\beta/m$, $0\leq r\leq qm$, equals the finite-slice
measure because
$$\begin{aligned}
&
\fun{\msrcal{W}_{\Lambda;x,y}^{q\beta}}{
\set{\omega}{\omega_{r\beta/m}=\omega_r,\quad 0\leq r\leq qm}}
\\
& =
\prod_{r = 0}^{qm-1}
\fun{p_{\Lambda,\beta/m}}{\omega_r,\omega_{r+1}}
=
\prod_{r = 0}^{qm-1}
\fun{b_m}{\omega_r,\omega_{r+1}}
=
\fun{\msrcal{W}_{\Lambda;x,y}^{q\beta,(m)}}{
\setone{\rbk{\omega_0,\ldots,\omega_{qm}}}}.
\end{aligned}$$
For $\msrcal{W}_{\Lambda;x,y}^{q\beta}$-almost every path, the number of jumps in
$\closedinterval{0}{q\beta}$ is finite.
The left Riemann sums of the piecewise-constant occupation functions defined in
\eqref{eq:fock-path-tuple-occupation-field} satisfy
$$\begin{aligned}
&
\lim_{m \to \infty}
\frac{\sminvtemperature}{m}
\sum_{r = 0}^{m-1}
\sum_{x,y\in\Lambda}
\fun{v}{x,y}
\rbk{
\fun{n_{\boldsymbol{\xi}}}{\frac{r\beta}{m},x}
\fun{n_{\boldsymbol{\xi}}}{\frac{r\beta}{m},y}
-
\fun{\fndef{\setone{x}}}{y}
\fun{n_{\boldsymbol{\xi}}}{\frac{r\beta}{m},x}}
\\ 
& =
\int_{\closedinterval{0}{\beta}}
\sum_{x,y\in\Lambda}
\fun{v}{x,y}
\rbk{
\fun{n_{\boldsymbol{\xi}}}{s,x}
\fun{n_{\boldsymbol{\xi}}}{s,y}
-
\fun{\fndef{\setone{x}}}{y}
\fun{n_{\boldsymbol{\xi}}}{s,x}}
\opdmsr{s}
=
\fun{\Phi_{\Lambda}}{\boldsymbol{\xi}}.
\end{aligned}$$
Fix $s\in\semigrposint$, endpoints $x_a,y_a\in\Lambda$, and lengths
$q_a\in\semigrposint$ for $1\leq a\leq s$.
Since $0 \leq \napiernum^{-\Phi_{\Lambda}^{(m)}} \leq 1$,
dominated convergence for this fixed path tuple gives
\begin{equation}
\begin{aligned}
&
\lim_{m \to \infty}
\sum_{\substack{
\xi_a^{(m)}\in\Omega_{\Lambda;x_a,y_a}^{q_a,m}\\
1\leq a\leq s}}
\napiernum^{-\fun{\Phi_{\Lambda}^{(m)}}{\boldsymbol{\xi}^{(m)}}}
\prod_{a=1}^{s}
\fun{\msrcal{W}_{\Lambda;x_a,y_a}^{q_a\beta,(m)}}{
\setone{\xi_a^{(m)}}}
\\ 
& =
\int_{\prod_{a=1}^{s}\Omega_{\Lambda;x_a,y_a}^{q_a\beta}}
\napiernum^{-\fun{\Phi_{\Lambda}}{\boldsymbol{\xi}}}
\prod_{a=1}^{s}
\opdmsr{\msrcal{W}_{\Lambda;x_a,y_a}^{q_a\beta}}(\xi_a).
\end{aligned}
\label{eq:fock-fixed-tuple-skeleton-limit}
\end{equation}
The discrete sum ranges over skeletons with the prescribed endpoints.

The loop-count, loop-length, and marked-length tails are uniform in $m$.
The loop-count tail satisfies
$$\begin{aligned}
0
& \leq
\sum_{\ell=L+1}^{\infty}\frac{1}{\ell!}
\sum_{\boldsymbol{\zeta}\in\rbk{\Omega_{\Lambda}^{\mathrm{loop},m}}^{\ell}}
\napiernum^{-\fun{\Phi_{\Lambda}^{(m)}}{\boldsymbol{\zeta}}}
\prod_{j=1}^{\ell}
\fun{\msrcal{L}_{\Lambda}^{(m)}}{\setone{\zeta_j}}
\leq
\sum_{\ell=L+1}^{\infty}\frac{1}{\ell!}
\rbk{
-\abscard{\Lambda}
\log\rbk{1-\napiernum^{-\beta\varepsilon}}}^{\ell}
\xrightarrow{L \to \infty}
0.
\end{aligned}$$
The loop-length tail satisfies
$$\begin{aligned}
0
& \leq
\sum_{q=Q+1}^{\infty}\frac{1}{q}
\sum_{z\in\Lambda}
\sum_{\zeta\in\Omega_{\Lambda;z,z}^{q,m}}
\fun{\msrcal{W}_{\Lambda;z,z}^{q\beta,(m)}}{\setone{\zeta}}
=
\sum_{q=Q+1}^{\infty}\frac{1}{q}
\sum_{z\in\Lambda}
\fun{p_{\Lambda,q\beta}}{z,z}
\leq
\abscard{\Lambda}
\sum_{q=Q+1}^{\infty}
\frac{\napiernum^{-q\beta\varepsilon}}{q}
\xrightarrow{Q \to \infty}
0.
\end{aligned}$$
For loop tuples, summing over the position of a loop whose length exceeds $Q\beta$ gives
$$\begin{aligned}
&
\sum_{\ell=1}^{\infty}\frac{1}{\ell!}
\sum_{\substack{
\boldsymbol{\zeta}\in\rbk{\Omega_{\Lambda}^{\mathrm{loop},m}}^{\ell}\\
\zeta_j\in\bigsqcup_{\substack{q\in\semigrposint,\ q>Q\\z\in\Lambda}}\Omega_{\Lambda;z,z}^{q,m}
\text{ for some }1\leq j\leq\ell}}
\prod_{j=1}^{\ell}
\fun{\msrcal{L}_{\Lambda}^{(m)}}{\setone{\zeta_j}}
\\
& \leq
\rbk{
\abscard{\Lambda}
\sum_{q=Q+1}^{\infty}
\frac{\napiernum^{-q\beta\varepsilon}}{q}}
\fnexp{
\abscard{\Lambda}
\sum_{q\in\semigrposint}
\frac{\napiernum^{-q\beta\varepsilon}}{q}}
\xrightarrow{Q \to \infty}
0.
\end{aligned}$$
For each marked path, the length tail satisfies
$$\begin{aligned}
0
& \leq
\sum_{q=Q+1}^{\infty}
\sum_{\omega\in\Omega_{\Lambda;x,y}^{q,m}}
\fun{\msrcal{W}_{\Lambda;x,y}^{q\beta,(m)}}{\setone{\omega}}
=
\sum_{q=Q+1}^{\infty}
\fun{p_{\Lambda,q\beta}}{x,y}
\leq
\sum_{q=Q+1}^{\infty}\napiernum^{-q\beta\varepsilon}
\xrightarrow{Q \to \infty}
0.
\end{aligned}$$
The sum over marked lengths with at least one $q_i>Q$ is bounded uniformly in the endpoints by
$$\begin{aligned}
&
\sum_{\varpi\in\grsym{k}}
\sum_{\substack{q_1,\ldots,q_k\in\semigrposint\\q_i>Q\text{ for some }1\leq i\leq k}}
\prod_{i=1}^{k}
\fun{p_{\Lambda,q_i\beta}}{x_i,y_{\fun{\varpi}{i}}}
\leq
k!k
\rbk{
\sum_{q=Q+1}^{\infty}\napiernum^{-q\beta\varepsilon}}
\rbk{
\sum_{q\in\semigrposint}\napiernum^{-q\beta\varepsilon}}^{k-1}
\xrightarrow{Q \to \infty}
0.
\end{aligned}$$
The fixed-tuple convergence, the three tail estimates, and the finiteness of $\grsym{k}$ permit the limit
$m\to\infty$ to pass through the loop-count, loop-length, and marked-length sums in
\eqref{eq:fock-finite-slice-open-path-expansion}.
Applying the same argument with no marked paths gives $Z_{\Lambda}^{(m)}\to Z_{\Lambda}$.
Applying it with the $k$ marked paths gives \eqref{eq:fock-continuous-time-open-path-limit}.
\end{proof}

\begin{lem}[Continuous bosonic cycle decomposition]\label{lem:fock-continuous-bosonic-cycle-decomposition}
Assume \eqref{eq:fock-path-chemical-potential-gap}, suppose that $v$ is pointwise nonnegative, let
$\Lambda\Subset\Gamma$ be finite, and let $\beta>0$.
The operator partition function in \eqref{eq:fock-finite-volume-gibbs-state} equals the loop partition function:
\begin{equation}
\begin{aligned}
Z_{\Lambda,\beta}
& =
Z_{\Lambda}.
\end{aligned}
\label{eq:fock-continuous-cycle-partition-identity}
\end{equation}
For every $k\in\semigrposint$ and $\boldsymbol{x},\boldsymbol{y}\in\Lambda^k$, the reduced kernel satisfies
\begin{equation}
\begin{aligned}
\fun{\widetilde{\rho}_{\Lambda,k}}{\boldsymbol{x};\boldsymbol{y}}
& =
\sum_{\varpi \in \grsym{k}}
\sum_{q_1,\ldots,q_k \in \semigrposint}
\int_{\prod_{i = 1}^{k}\Omega_{\Lambda;x_i,y_{\fun{\varpi}{i}}}^{q_i\beta}}
\frac{\fun{Z_{\Lambda}}{\boldsymbol{\omega}}}{Z_{\Lambda}}
\prod_{i = 1}^{k}
\opdmsr{\msrcal{W}_{\Lambda;x_i,y_{\fun{\varpi}{i}}}^{q_i\beta}}(\omega_i).
\end{aligned}
\label{eq:fock-continuous-cycle-reduced-kernel}
\end{equation}
\end{lem}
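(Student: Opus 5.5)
The plan is to combine the finite-slice identities of Lemma \ref{lem:fock-finite-slice-open-path-expansion} with the continuous-time limit of Lemma \ref{lem:fock-continuous-time-limit}. The only new ingredient is the Trotter convergence on the operator side, made uniform in the particle sector. On the path side, Lemma \ref{lem:fock-continuous-time-limit} already gives $Z_{\Lambda}^{(m)}\to Z_{\Lambda}$ and identifies the limit of $\widetilde{\rho}_{\Lambda,k}^{(m)}$ with the right-hand side of \eqref{eq:fock-continuous-cycle-reduced-kernel}. It therefore suffices to prove two operator-side limits:
\[
\sum_{N}\trace_{P_{\Lambda,N}\sphilb{F}_\Lambda}T_{\Lambda,m}\to Z_{\Lambda,\beta},
\qquad
\widetilde{\rho}^{(m)}_{\Lambda,k}\to\widetilde{\rho}_{\Lambda,k}
\]
as $m\to\infty$. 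The first is then combined with \eqref{eq:fock-finite-slice-partition-trace-identity}.

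Sector by sector the argument is elementary. On $P_{\Lambda,N}\sphilb{F}_{\Lambda}$, which is finite-dimensional, both $\physham_{\txtfr,\Lambda}-\smchemicalpotential_\Lambda\opfocknumber_\Lambda$ and $V_\Lambda$ are bounded (Hermitian matrices), and both commute with $\opfocknumber_\Lambda$. The Lie--Trotter product formula for matrices gives $T_{\Lambda,m}P_{\Lambda,N}\to e^{-\beta K_{\Lambda,\smchemicalpotential_\Lambda}}P_{\Lambda,N}$ in norm. Hence each sector trace converges, and so does each sector contribution to $\trace[\faadj{a_{y_1}}\cdots a_{x_1}T_{\Lambda,m}]$, because the creation--annihilation string is a bounded map between fixed sectors.

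The main obstacle is exchanging $m\to\infty$ with the sum over $N$, which requires a summable bound on the sector contributions that is uniform in $m$. I would take it from the finite-slice formula itself. Lemma \ref{lem:fock-finite-slice-trace-formula} writes the $N$-sector trace as a nonnegative sum of weights $B_{N,m}$. Dropping the interaction factors via \eqref{eq:fock-finite-slice-sector-weight-bound} dominates it by the corresponding free quantity, namely the $N$-sector trace of $e^{-\beta(\physham_{\txtfr,\Lambda}-\smchemicalpotential_\Lambda\opfocknumber_\Lambda)}$, since $b_m^m=p_{\Lambda,\beta}$ and the free factors compose exactly. This dominating sequence is independent of $m$. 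Its sum over $N$ is the free grand-canonical partition function, which equals $\prod_j(1-e^{-\beta\lambda_j})^{-1}$ where the $\lambda_j\geq\varepsilon$ are the eigenvalues of $\physham[h]_\Lambda-\smchemicalpotential_\Lambda$. This sum is finite. Dominated convergence in $N$ then yields $Z^{(m)}_\Lambda\to Z_{\Lambda,\beta}$, and together with Lemma \ref{lem:fock-continuous-time-limit} this proves \eqref{eq:fock-continuous-cycle-partition-identity}.

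For the reduced kernel I would use the same device on the sector expansion of Lemma \ref{lem:fock-creation-annihilation-sector-formula}. The $M$-th term is nonnegative and, after dropping the interaction, is bounded by an $m$-independent free quantity: the $(k+M)$-sector free expression with endpoint constraints. The marked-chain and loop decomposition of Lemma \ref{lem:fock-marked-index-permutation-decomposition} shows that its sum over $M$ is at most
\[
\frac{k!}{(\napiernum^{\beta\varepsilon}-1)^k}\,(1-\napiernum^{-\beta\varepsilon})^{-\abscard{\Lambda}},
\]
exactly as in Lemma \ref{lem:fock-finite-slice-domination}. For each $M$, the matrix Trotter limit gives convergence to the $M$-th sector contribution of the unnormalized operator expression $\trace[\faadj{a_{y_1}}\cdots a_{x_1}e^{-\beta K_{\Lambda,\smchemicalpotential_\Lambda}}]$. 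Summing over $M$ by dominated convergence and dividing by $Z^{(m)}_\Lambda\to Z_{\Lambda,\beta}\geq1$ gives $\widetilde{\rho}^{(m)}_{\Lambda,k}\to\widetilde{\rho}_{\Lambda,k}$. Comparing with \eqref{eq:fock-continuous-time-open-path-limit} proves \eqref{eq:fock-continuous-cycle-reduced-kernel}.
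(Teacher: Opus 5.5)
Your argument is correct, but for this lemma the paper takes a different route. What you describe is essentially the proof the paper gives afterwards for Lemma~\ref{lem:fock-loop-marked-chain-expansion}, extended so that it also yields $Z_{\Lambda,\beta}=Z_\Lambda$.

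The paper works sector by sector before doing any cycle combinatorics. In each finite-dimensional $N$-particle sector it combines the matrix Trotter limit with the fixed-tuple skeleton limit from the proof of Lemma~\ref{lem:fock-continuous-time-limit}. That limit is applied to $N$ labelled paths of length $\beta$, with finitely many $\varpi\in\mathfrak{S}_N$ and $\boldsymbol{x}\in\Lambda^N$, and gives a continuous-time $N$-particle Feynman--Kac formula. The paper sums these nonnegative sector identities, and only then decomposes permutations into cycles in continuous time. It counts permutations of cycle type $(n_q)$ by $N!/\prod_q q^{n_q}n_q!$, concatenates bridges with the convolution law of Lemma~\ref{lem:fock-unnormalized-bridge-measure}, and regroups multinomially into the loop gas. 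The reduced kernel is handled by cutting the cycles at the $k$ marked indices.

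Because each identity is proved inside a sector before summing, that route never interchanges $m\to\infty$ with the sector sum. It also gives an independent continuous-time check of the discrete bookkeeping in Lemmas~\ref{lem:fock-marked-index-permutation-decomposition} and~\ref{lem:fock-finite-slice-open-path-expansion}.

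Your route reuses that bookkeeping and the path-side limit wholesale, so the only new input is on the operator side. In exchange, it genuinely needs a summable majorant that is uniform in $m$, and your free-gas domination supplies exactly that. For the partition function you dominate by the $N$-sector trace of $e^{-\beta(H_{\mathrm{fr},\Lambda}-\mu_\Lambda N_\Lambda)}$, which sums to $\prod_j(1-e^{-\beta\lambda_j})^{-1}$. For the reduced kernel you use its marked-chain analogue, bounded by $k!\,(e^{\beta\varepsilon}-1)^{-k}(1-e^{-\beta\varepsilon})^{-|\Lambda|}$. This is a more careful justification than the paper gives when it follows your route in Lemma~\ref{lem:fock-loop-marked-chain-expansion}. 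There it appeals to the interacting Gibbs trace majorant, which bounds $e^{-\beta K_{\Lambda,\mu_\Lambda}}$ but not $T_{\Lambda,m}$ uniformly in $m$.
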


\begin{proof}
Apply the fixed-tuple limit \eqref{eq:fock-fixed-tuple-skeleton-limit} to the finite-slice trace formula
\eqref{eq:fock-finite-slice-trace-formula}.
The trace majorant in Proposition \ref{prop:fock-finite-volume-gibbs-existence} permits summation over the
particle sectors and gives
\begin{equation}
\begin{aligned}
Z_{\Lambda,\beta}
& =
\sum_{N \in \monnat}\frac{1}{N!}
\sum_{\varpi \in \grsym{N}}
\sum_{\boldsymbol{x} \in \Lambda^N}
\int_{\prod_{i = 1}^{N}\Omega_{\Lambda;x_i,x_{\fun{\varpi}{i}}}^{\beta}}
\napiernum^{-\fun{\Phi_{\Lambda}}{\omega_1,\ldots,\omega_N}}
\prod_{i = 1}^{N}
\opdmsr{\msrcal{W}_{\Lambda;x_i,x_{\fun{\varpi}{i}}}^{\beta}}(\omega_i).
\end{aligned}
\label{eq:fock-precycle-partition-function}
\end{equation}
Let $n_q$ be the number of cycles of length $q$ in $\varpi$.
The cycle lengths exhaust all particle-coordinate indices and satisfy
\begin{equation}
\begin{aligned}
\sum_{q \in \semigrposint}q n_q
& =
N,
\end{aligned}
\label{eq:fock-bosonic-cycle-length-constraint}
\end{equation}
The number of permutations with these cycle counts is obtained by arranging the $N$ particle-coordinate indices
in a row.
There are $N!$ arrangements.
Each of the $n_q$ cycles of length $q$ has $q$ cyclic rotations representing the same cycle, and the cycles
of the same length have $n_q!$ orderings.
Division by these repetitions gives
\begin{equation}
\begin{aligned}
\abscard{\setone{\varpi \in \grsym{N}:\varpi\text{ has $n_q$ cycles of length $q$}}}
& =
\frac{N!}{\prod_{q \in \semigrposint}q^{n_q} n_q!},
\\
\frac{1}{N!}
\frac{N!}{\prod_{q \in \semigrposint}q^{n_q} n_q!}
& =
\prod_{q \in \semigrposint}
\frac{1}{n_q!}\rbk{\frac{1}{q}}^{n_q}.
\end{aligned}
\label{eq:fock-bosonic-cycle-count}
\end{equation}

We calculate how a single cycle becomes one loop.
Fix a cycle $\rbk{i_1\,i_2\,\cdots\,i_q}$ and use the convention $i_{q+1}=i_1$.
For paths $\omega_{i_r}\in\Omega_{\Lambda;x_{i_r},x_{i_{r+1}}}^{\beta}$, define their concatenated loop
$\zeta\in\Omega_{\Lambda;x_{i_1},x_{i_1}}^{q\beta}$ by
$$\begin{aligned}
\zeta_{s+\rbk{r-1}\beta}
& =
\omega_{i_r,s},
\quad
0\leq s<\beta,
\quad
1\leq r\leq q,
\\
\zeta_{q\beta}
& =
x_{i_1}.
\end{aligned}$$
Repeated application of the bridge convolution law
\eqref{eq:fock-bridge-convolution-law} gives, for every nonnegative function $F$ of the
concatenated loop,
$$\begin{aligned}
&
\sum_{x_{i_1},\ldots,x_{i_q}\in\Lambda}
\int_{\prod_{r=1}^{q}
\Omega_{\Lambda;x_{i_r},x_{i_{r+1}}}^{\beta}}
\fun{F}{\zeta}
\prod_{r=1}^{q}
\opdmsr{\msrcal{W}_{\Lambda;x_{i_r},x_{i_{r+1}}}^{\beta}}(\omega_{i_r})
=
\sum_{x_{i_1}\in\Lambda}
\int_{\Omega_{\Lambda;x_{i_1},x_{i_1}}^{q\beta}}
\fun{F}{\zeta}
\opdmsr{\msrcal{W}_{\Lambda;x_{i_1},x_{i_1}}^{q\beta}}(\zeta).
\end{aligned}$$
The loop occupation at a time $s\in\closedinterval{0}{\beta}$ is the sum of the occupations of its
constituent paths at that time:
$$\begin{aligned}
\fun{n_{\zeta}}{s,x}
& =
\sum_{r=1}^{q}
\fun{\fndef{\setone{x}}}{\omega_{i_r,s}}.
\end{aligned}$$
If $\boldsymbol{\zeta}$ is the tuple obtained by concatenating every cycle of $\varpi$, summation of this
identity over all cycles gives
$$\begin{aligned}
\fun{n_{\boldsymbol{\zeta}}}{s,x}
=
\sum_{i=1}^{N}
\fun{\fndef{\setone{x}}}{\omega_{i,s}},
\quad
\fun{\Phi_{\Lambda}}{\boldsymbol{\zeta}}
=
\fun{\Phi_{\Lambda}}{\omega_1,\ldots,\omega_N}.
\end{aligned}$$

For each $q\in\semigrposint$, define the length-$q\beta$ part of the ideal loop measure by
$$\begin{aligned}
\int_{\Omega_{\Lambda}^{\mathrm{loop}}}
\fun{F}{\zeta}
\opdmsr{\msrcal{L}_{\Lambda,q}}(\zeta)
& =
\frac{1}{q}
\sum_{z\in\Lambda}
\int_{\Omega_{\Lambda;z,z}^{q\beta}}
\fun{F}{\zeta}
\opdmsr{\msrcal{W}_{\Lambda;z,z}^{q\beta}}(\zeta).
\end{aligned}$$
The loop-measure definition \eqref{eq:fock-ideal-loop-measure} gives
$\msrcal{L}_{\Lambda}
=
\sum_{q\in\semigrposint}
\msrcal{L}_{\Lambda,q}$.
For fixed cycle counts satisfying \eqref{eq:fock-bosonic-cycle-length-constraint}, the coefficient
\eqref{eq:fock-bosonic-cycle-count}, the bridge convolution law
\eqref{eq:fock-bridge-convolution-law}, and the energy identity compute the
corresponding contribution to \eqref{eq:fock-precycle-partition-function} as
$$\begin{aligned}
&
\prod_{q\in\semigrposint}\frac{1}{n_q!}
\int_{\prod_{q\in\semigrposint}
\rbk{\Omega_{\Lambda}^{\mathrm{loop}}}^{n_q}}
\napiernum^{-\fun{\Phi_{\Lambda}}{\boldsymbol{\zeta}}}
\prod_{q\in\semigrposint}
\opdmsr{\msrcal{L}_{\Lambda,q}^{\otimes n_q}}(\boldsymbol{\zeta}_q).
\end{aligned}$$
Only finitely many $n_q$ are nonzero because their weighted sum is $N$.
Let $\ell=\sum_{q\in\semigrposint}n_q$.
The multinomial expansion of
$\msrcal{L}_{\Lambda}^{\otimes\ell}$ contains
$\ell!/\prod_{q\in\semigrposint}n_q!$ orderings with these fixed counts.
Multiplication by $1/\ell!$ gives
$$\begin{aligned}
\frac{1}{\ell!}
\frac{\ell!}{\prod_{q\in\semigrposint}n_q!}
& =
\prod_{q\in\semigrposint}\frac{1}{n_q!}.
\end{aligned}$$
Summing over all cycle counts and then over $\ell$ yields
\begin{equation}
\begin{aligned}
Z_{\Lambda,\beta}
& =
\sum_{\ell\in\monnat}\frac{1}{\ell!}
\int_{\rbk{\Omega_{\Lambda}^{\mathrm{loop}}}^{\ell}}
\napiernum^{-\fun{\Phi_{\Lambda}}{\boldsymbol{\zeta}}}
\opdmsr{\msrcal{L}_{\Lambda}^{\otimes\ell}}(\boldsymbol{\zeta})
=
Z_{\Lambda}.
\end{aligned}
\label{eq:fock-loop-grand-partition-identity}
\end{equation}
The quantity $Z_{\Lambda,\beta}$ in \eqref{eq:fock-loop-grand-partition-identity} is the operator trace defined
in \eqref{eq:fock-finite-volume-gibbs-state}.
The quantity $Z_{\Lambda}$ is the loop series defined in
\eqref{eq:fock-open-path-loop-partition-functions}.
Equation \eqref{eq:fock-loop-grand-partition-identity} proves
\eqref{eq:fock-continuous-cycle-partition-identity}.

For the reduced kernel, the $k$ creation-annihilation pairs cut the permutation cycles that contain the marked
particle-coordinate indices.
Fix $\varpi \in \grsym{k}$, marked lengths $q_1,\ldots,q_k$, and unmarked cycle counts
$\seq{n_q}{q \in \semigrposint}$ with
$$\begin{aligned}
N
& =
\sum_{i = 1}^{k}q_i
+
\sum_{q \in \semigrposint}q n_q.
\end{aligned}$$
The factor $N!/\rbk{N-k}!$ selects and orders the $k$ distinguished particle-coordinate indices.
After the marked endpoints and $\varpi$ are fixed, the remaining $N-k$ particle-coordinate indices fill the
$q_i-1$ intermediate positions of the $i$-th marked chain, $1\leq i\leq k$, and the positions in the unmarked
cycles.
There is no cyclic-rotation quotient for a marked chain because its initial endpoint is prescribed.
The number of arrangements of the remaining particle-coordinate indices is
$$\begin{aligned}
\frac{\rbk{N-k}!}
{\prod_{q\in\semigrposint}q^{n_q}n_q!}.
\end{aligned}$$
Multiplication by the selection factor and the bosonic symmetrizer gives
$$\begin{aligned}
\frac{N!}{\rbk{N - k}!}
\frac{1}{N!}
\frac{\rbk{N - k}!}{\prod_{q \in \semigrposint}q^{n_q} n_q!}
& =
\prod_{q \in \semigrposint}
\frac{1}{n_q!}\rbk{\frac{1}{q}}^{n_q}.
\end{aligned}$$
For fixed marked paths
$\boldsymbol{\omega}=\rbk{\omega_1,\ldots,\omega_k}$, summation over all unmarked cycle counts is
$$\begin{aligned}
\sum_{\ell\in\monnat}\frac{1}{\ell!}
\int_{\rbk{\Omega_{\Lambda}^{\mathrm{loop}}}^{\ell}}
\napiernum^{-\fun{\Phi_{\Lambda}}{\boldsymbol{\omega},\boldsymbol{\zeta}}}
\opdmsr{\msrcal{L}_{\Lambda}^{\otimes\ell}}(\boldsymbol{\zeta})
=
\fun{Z_{\Lambda}}{\boldsymbol{\omega}}.
\end{aligned}$$
Substitution of this identity into the marked-chain sum gives
$$\begin{aligned}
\fun{\widetilde{\rho}_{\Lambda,k}}{\boldsymbol{x};\boldsymbol{y}}
& =
\sum_{\varpi \in \grsym{k}}
\sum_{q_1,\ldots,q_k\in\semigrposint}
\int_{\prod_{i=1}^{k}
\Omega_{\Lambda;x_i,y_{\fun{\varpi}{i}}}^{q_i\beta}}
\frac{\fun{Z_{\Lambda}}{\boldsymbol{\omega}}}{Z_{\Lambda}}
\prod_{i=1}^{k}
\opdmsr{\msrcal{W}_{\Lambda;x_i,y_{\fun{\varpi}{i}}}^{q_i\beta}}(\omega_i).
\end{aligned}$$
This identity is \eqref{eq:fock-continuous-cycle-reduced-kernel}.
The loop and marked-chain tail bounds prove absolute convergence of every rearrangement in the calculation.
\end{proof}

\begin{lem}[Open-path and loop expansion]\label{lem:fock-loop-marked-chain-expansion}
Assume \eqref{eq:fock-path-chemical-potential-gap}, suppose that $v$ is pointwise nonnegative, let
$\Lambda\Subset\Gamma$ be finite, and let $\beta>0$.
For every $k\in\semigrposint$ and $\boldsymbol{x},\boldsymbol{y}\in\Lambda^k$, the reduced kernel satisfies
\begin{equation}
\begin{aligned}
\fun{\widetilde{\rho}_{\Lambda,k}}{\boldsymbol{x};\boldsymbol{y}}
& =
\sum_{\varpi \in \grsym{k}}
\sum_{q_1,\ldots,q_k \in \semigrposint}
\int_{\prod_{i = 1}^{k}\Omega_{\Lambda;x_i,y_{\fun{\varpi}{i}}}^{q_i\beta}}
\frac{\fun{Z_{\Lambda}}{\boldsymbol{\omega}}}{Z_{\Lambda}}
\prod_{i = 1}^{k}
\opdmsr{\msrcal{W}_{\Lambda;x_i,y_{\fun{\varpi}{i}}}^{q_i\beta}}(\omega_i).
\end{aligned}
\label{eq:fock-reduced-density-open-path-expansion}
\end{equation}
\end{lem}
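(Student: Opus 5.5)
The identity \eqref{eq:fock-reduced-density-open-path-expansion} is the same as \eqref{eq:fock-continuous-cycle-reduced-kernel} in Lemma \ref{lem:fock-continuous-bosonic-cycle-decomposition}. I will therefore prove it by connecting the finite-slice and continuous-time results into one chain of equalities.

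\emph{Reduction to the Trotter limit.} First, I would show that $\widetilde{\rho}_{\Lambda,k}^{(m)}(\boldsymbol{x};\boldsymbol{y})\to\widetilde{\rho}_{\Lambda,k}(\boldsymbol{x};\boldsymbol{y})$ as $m\to\infty$. On each sector $P_{\Lambda,N}\sphilb{F}_{\Lambda}$ all operators are bounded and finite-dimensional. The Lie--Trotter product formula therefore gives $T_{\Lambda,m}\to\fnexp{-\beta K_{\Lambda,\smchemicalpotential_{\Lambda}}}$ in norm on that sector. Both the numerator trace with the creation--annihilation string and the sector trace of $T_{\Lambda,m}$ converge for each fixed $N$.

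\emph{Exchanging the limit with the sector sum.} The main obstacle is passing the limit through the infinite sum over $N$. I would first try to avoid this step, because the path representation already provides uniform control. By Lemma \ref{lem:fock-finite-slice-open-path-expansion}, both the numerator and $Z_{\Lambda}^{(m)}$ equal loop and open-path series. In the proof of Lemma \ref{lem:fock-continuous-time-limit}, the loop-count, loop-length and marked-length tails are bounded uniformly in $m$. A path with total length $\sum_i q_i+\sum_j\abscard{C_j}=N$ contributes only to sector $N$. Hence the sector-$N$ contributions are uniformly summable in $m$, and dominated convergence over $N$ applies. Combined with the sectorwise Trotter convergence, this gives convergence of the full numerator and of $Z_{\Lambda}^{(m)}$ to the operator quantities. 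As an alternative for the denominator, the superstable majorant of Proposition \ref{prop:fock-finite-volume-gibbs-existence} can be used, since $V_\Lambda\geq0$ lets the same scalar bound dominate the Trotter products.

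\emph{Identifying the limits.} Next, Lemma \ref{lem:fock-continuous-time-limit} identifies $\lim_m\widetilde{\rho}_{\Lambda,k}^{(m)}$ with the right-hand side of \eqref{eq:fock-reduced-density-open-path-expansion}. It also gives $Z_{\Lambda}^{(m)}\to Z_{\Lambda}$. Lemma \ref{lem:fock-continuous-bosonic-cycle-decomposition} gives $Z_{\Lambda}=Z_{\Lambda,\beta}$, so the finite-slice normalizations converge to the Gibbs normalization. Since both limits of $\widetilde{\rho}_{\Lambda,k}^{(m)}$ exist and must coincide, the lemma follows.

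Finally, I would note that all integrands are nonnegative and bounded by $k!/(\napiernum^{\beta\varepsilon}-1)^k$ as in Lemma \ref{lem:fock-finite-slice-domination}. The limiting expression is therefore finite and absolutely convergent, which justifies the rearrangements.
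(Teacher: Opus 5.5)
Your proposal follows the paper's proof essentially step for step: sectorwise norm convergence of the Trotter products, exchange of $m\to\infty$ with the sector sum, identification of the limit through Lemma \ref{lem:fock-continuous-time-limit}, and the normalization through Lemma \ref{lem:fock-continuous-bosonic-cycle-decomposition}, which, as you note, already states the same identity. The one difference is how you justify exchanging the limit with the sector sum: you dominate each sector by its $m$-independent free-path contribution, using that sector $N$ corresponds to total path length $N$ and that $b_m^{qm}=p_{\Lambda,q\beta}$. This is more direct than the paper's appeal to the superstable trace majorant, because that majorant controls $\napiernum^{-\beta K_{\Lambda,\smchemicalpotential_{\Lambda}}}$ rather than controlling the Trotter products uniformly in $m$; your fallback to it would therefore need an extra Hölder-type trace inequality.
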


\begin{proof}
Lemma \ref{lem:fock-finite-slice-trace-formula} and
Lemma \ref{lem:fock-creation-annihilation-sector-formula} provide the coordinate formulas used in the
finite-slice expansion.
Lemma \ref{lem:fock-finite-slice-open-path-expansion} gives the exact finite-slice formula.
Its trace normalization is \eqref{eq:fock-finite-slice-partition-trace-identity}.
Lemma \ref{lem:fock-finite-slice-domination} supplies bounds that are uniform in the slice number.
Lemma \ref{lem:fock-continuous-time-limit} identifies the limit of the finite-slice reduced kernels with the
right-hand side of \eqref{eq:fock-reduced-density-open-path-expansion}.
The Trotter product \eqref{eq:fock-finite-slice-trotter-product} and the operator
$K_{\Lambda,\smchemicalpotential_{\Lambda}}$ in \eqref{eq:fock-bose-hubbard-hamiltonian} satisfy
$$\begin{aligned}
\lim_{m\to\infty}
\fnrestr{T_{\Lambda,m}}{P_{\Lambda,N}\sphilb{F}_{\Lambda}}
& =
\fnrestr{
\napiernum^{-\beta K_{\Lambda,\smchemicalpotential_{\Lambda}}}
}{P_{\Lambda,N}\sphilb{F}_{\Lambda}}
\end{aligned}$$
in the operator norm for every $N\in\monnat$.
The trace majorant in Proposition \ref{prop:fock-finite-volume-gibbs-existence} permits the particle-sector sum
and the limit to be interchanged.
The resulting limit is
$$\begin{aligned}
\lim_{m\to\infty}
\fun{\widetilde{\rho}_{\Lambda,k}^{(m)}}{\boldsymbol{x};\boldsymbol{y}}
& =
\fun{\widetilde{\rho}_{\Lambda,k}}{\boldsymbol{x};\boldsymbol{y}}.
\end{aligned}$$
Comparison with \eqref{eq:fock-continuous-time-open-path-limit} proves
\eqref{eq:fock-reduced-density-open-path-expansion}.
Lemma \ref{lem:fock-continuous-bosonic-cycle-decomposition} gives the equality of the operator and loop
normalizations in \eqref{eq:fock-continuous-cycle-partition-identity} and independently verifies the reduced
kernel formula in \eqref{eq:fock-continuous-cycle-reduced-kernel}.
\end{proof}

Let \(\oastate[\psi_{\txtfr,\Lambda}]\) be the Gibbs state at inverse temperature \(\beta\) for the free grand-canonical Hamiltonian \(\physham_{\txtfr,\Lambda}-\smchemicalpotential_{\Lambda}\opfocknumber_{\Lambda}\). It has the same hopping and chemical potential as the interacting Gibbs state \(\oastate[\psi_{\Lambda}]\) in \eqref{eq:fock-finite-volume-gibbs-state}. Let \(\widetilde{\rho}_{\txtfr,\Lambda,k}\) be its reduced density kernel defined by \eqref{eq:fock-reduced-density-kernel}. Equation \eqref{eq:fock-path-interaction-energy} is identically zero when \(v=0\). Let \(Z_{\txtfr,\Lambda}\) and \(\fun{Z_{\txtfr,\Lambda}}{\boldsymbol{\omega}}\) be the two loop partition functions of this free gas, obtained from \eqref{eq:fock-open-path-loop-partition-functions} by setting \(v=0\). In particular, \begin{equation}
\begin{aligned}
\fun{\widetilde{\rho}_{\txtfr,\Lambda,k}}{\boldsymbol{x};\boldsymbol{y}}
& =
\fun{\oastate[\psi_{\txtfr,\Lambda}]}{
\faadj{a_{y_1}}\cdots\faadj{a_{y_k}}a_{x_k}\cdots a_{x_1}}.
\end{aligned}
\label{eq:fock-free-reduced-density-kernel}
\end{equation}

\begin{lem}[Repulsive loop domination]\label{lem:fock-repulsive-loop-domination}
Assume \eqref{eq:fock-path-chemical-potential-gap}, suppose that $v$ is pointwise nonnegative, let
$\Lambda\Subset\Gamma$ be finite, and let $\beta>0$.
For every finite open-path tuple $\boldsymbol{\omega}$ in the sense of
Definition \ref{def:fock-finite-open-path-tuple}, the normalized loop factor satisfies
\begin{equation}
\begin{aligned}
\frac{\fun{Z_{\Lambda}}{\boldsymbol{\omega}}}{Z_{\Lambda}}
& \leq
\napiernum^{-\fun{\Phi_{\Lambda}}{\boldsymbol{\omega}}}
\leq
1.
\end{aligned}
\label{eq:fock-interacting-loop-cancellation}
\end{equation}
For every $k\in\semigrposint$ and $\boldsymbol{x},\boldsymbol{y}\in\Lambda^k$, the interacting reduced density
kernel in \eqref{eq:fock-reduced-density-kernel} is bounded by the free kernel in
\eqref{eq:fock-free-reduced-density-kernel} with the same hopping and chemical potential:
\begin{equation}
\begin{aligned}
0
\leq
\fun{\widetilde{\rho}_{\Lambda,k}}{\boldsymbol{x};\boldsymbol{y}}
& \leq
\fun{\widetilde{\rho}_{\txtfr,\Lambda,k}}{\boldsymbol{x};\boldsymbol{y}}.
\end{aligned}
\label{eq:fock-reduced-density-domination}
\end{equation}
\end{lem}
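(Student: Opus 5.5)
The proof has two steps: a continuum version of the superadditivity argument from Lemma \ref{lem:fock-finite-slice-domination}, followed by the open-path expansion for both the interacting and the free gas.

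\emph{Step 1: the loop bound \eqref{eq:fock-interacting-loop-cancellation}.} For a marked tuple $\boldsymbol{\omega}$ and a loop tuple $\boldsymbol{\zeta}$, the occupation field \eqref{eq:fock-path-tuple-occupation-field} is additive: $n_{\rbk{\boldsymbol{\omega},\boldsymbol{\zeta}}}=n_{\boldsymbol{\omega}}+n_{\boldsymbol{\zeta}}$ pointwise on $\closedinterval{0}{\beta}\times\Lambda$. Insert this into \eqref{eq:fock-path-interaction-energy}. Because $v$ is symmetric, the two cross terms combine into
\[
2\int_{\closedinterval{0}{\beta}}\sum_{x,y\in\Lambda}\fun{v}{x,y}\fun{n_{\boldsymbol{\omega}}}{s,x}\fun{n_{\boldsymbol{\zeta}}}{s,y}\opdmsr{s},
\]
which is nonnegative since $v\geq0$. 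Hence $\fun{\Phi_{\Lambda}}{\boldsymbol{\omega},\boldsymbol{\zeta}}\geq\fun{\Phi_{\Lambda}}{\boldsymbol{\omega}}+\fun{\Phi_{\Lambda}}{\boldsymbol{\zeta}}$. The self-energy of any tuple is also nonnegative. To see this, split the integrand into its diagonal part $\fun{v}{x,x}n(n-1)$, which is nonnegative because $n$ is a nonnegative integer, and its off-diagonal part, which has nonnegative products. This is exactly the pointwise computation used for Lemma \ref{lem:fock-interaction-occupation-eigenvalue}. Integrating $\napiernum^{-\fun{\Phi_{\Lambda}}{\boldsymbol{\omega},\boldsymbol{\zeta}}}\leq\napiernum^{-\fun{\Phi_{\Lambda}}{\boldsymbol{\omega}}}\napiernum^{-\fun{\Phi_{\Lambda}}{\boldsymbol{\zeta}}}$ against $\msrcal{L}_{\Lambda}^{\otimes\ell}/\ell!$ and summing over $\ell$ gives $\fun{Z_{\Lambda}}{\boldsymbol{\omega}}\leq\napiernum^{-\fun{\Phi_{\Lambda}}{\boldsymbol{\omega}}}Z_{\Lambda}$. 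Since $Z_{\Lambda}\geq1$ (shown after \eqref{eq:fock-open-path-loop-partition-functions}), we may divide by it, and $\Phi_{\Lambda}\geq0$ gives the final inequality $\leq1$.

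\emph{Step 2: comparison of kernels.} Write $\widetilde{\rho}_{\Lambda,k}$ by Lemma \ref{lem:fock-loop-marked-chain-expansion}. The integrand $\fun{Z_{\Lambda}}{\boldsymbol{\omega}}/Z_{\Lambda}$ is nonnegative and the bridge measures are positive, so $\widetilde{\rho}_{\Lambda,k}\geq0$. Step 1 bounds the integrand by $1$, so
\[
\fun{\widetilde{\rho}_{\Lambda,k}}{\boldsymbol{x};\boldsymbol{y}}\leq\sum_{\varpi}\sum_{q_1,\ldots,q_k}\prod_{i}\fun{p_{\Lambda,q_i\beta}}{x_i,y_{\fun{\varpi}{i}}}
\]
by \eqref{eq:fock-bridge-total-mass}. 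Next apply the same Lemma \ref{lem:fock-loop-marked-chain-expansion} with $v=0$. This is legitimate because every hypothesis there (the gap \eqref{eq:fock-path-chemical-potential-gap} and $v\geq0$) holds for $v=0$, and the superstability input of Proposition \ref{prop:fock-finite-volume-gibbs-existence} is then replaced by the gap, which makes the free Gibbs operator trace class with $Z_{\txtfr,\Lambda}\leq\rbk{1-\napiernum^{-\beta\varepsilon}}^{-\abscard{\Lambda}}$. With $\Phi_{\Lambda}\equiv0$, one has $\fun{Z_{\txtfr,\Lambda}}{\boldsymbol{\omega}}=Z_{\txtfr,\Lambda}$, so the normalized loop factor equals $1$. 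The free kernel $\widetilde{\rho}_{\txtfr,\Lambda,k}$ is therefore exactly the same sum of products of $p_{\Lambda,q_i\beta}$, which gives \eqref{eq:fock-reduced-density-domination}.

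\emph{Main obstacle.} The delicate point is justifying the $v=0$ case of the expansion. The proof of Proposition \ref{prop:fock-finite-volume-gibbs-existence} used $U>0$, but $v=0$ has $U=0$. I would handle this by checking that the Trotter, sector-summation and continuous-time steps in Lemmas \ref{lem:fock-finite-slice-domination}, \ref{lem:fock-continuous-time-limit} and \ref{lem:fock-loop-marked-chain-expansion} only need summability over particle sectors. In the free case this summability comes from the bounds \eqref{eq:fock-finite-slice-uniform-bounds}, which depend only on $\varepsilon$, together with the fact that the Trotter product is exact when $V_{\Lambda}=0$. As a cross-check, the free bound also agrees with the Wick/permanent formula $\sum_{\varpi}\prod_i\rbk{p(1-p)^{-1}}$ with $p=p_{\Lambda,\beta}$, since $\sum_{q\geq1}p_{\Lambda,q\beta}=p_{\Lambda,\beta}\rbk{1-p_{\Lambda,\beta}}^{-1}$.
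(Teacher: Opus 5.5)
Your proposal is correct and follows essentially the same route as the paper. Superadditivity $\Phi_\Lambda(\boldsymbol{\omega},\boldsymbol{\zeta})\geq\Phi_\Lambda(\boldsymbol{\omega})+\Phi_\Lambda(\boldsymbol{\zeta})$, obtained from the nonnegative cross energy, gives the loop bound; the open-path expansion then yields the kernel comparison, because the normalized loop factor is at most $1$ and equals $1$ when $v=0$. Your explicit checks that $\Phi_\Lambda(\boldsymbol{\omega})\geq 0$ and that the $v=0$ expansion is legitimate supply points the paper's proof uses only implicitly. The latter matters because the model fixes $U>0$, and you handle it via the gap's trace-class bound, the exact Trotter product, or directly via the Wick formula with $C_{\mathrm{fr},\Lambda}=\sum_{q\geq1}p_{\Lambda,q\beta}$.
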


\begin{proof}
Use the open-path representation in Lemma \ref{lem:fock-loop-marked-chain-expansion}.
For two finite open-path tuples in the sense of Definition \ref{def:fock-finite-open-path-tuple}, define their
nonnegative cross energy by
$$\begin{aligned}
\fun{I_{\Lambda}}{\boldsymbol{\omega},\boldsymbol{\zeta}}
& =
2\int_{\closedinterval{0}{\beta}}
\sum_{x,y \in \Lambda}
\fun{v}{x,y}
\fun{n_{\boldsymbol{\omega}}}{s,x}
\fun{n_{\boldsymbol{\zeta}}}{s,y}
\opdmsr{s}.
\end{aligned}$$
Expanding the concatenated-tuple definition
\eqref{eq:fock-concatenated-path-interaction-energy} with the occupation-field definition
\eqref{eq:fock-path-tuple-occupation-field} and the interaction energy
\eqref{eq:fock-path-interaction-energy} starts from
$$\begin{aligned}
\fun{n_{\rbk{\boldsymbol{\omega},\boldsymbol{\zeta}}}}{s,x}
& =
\fun{n_{\boldsymbol{\omega}}}{s,x}
+
\fun{n_{\boldsymbol{\zeta}}}{s,x},
\\
\fun{n_{\rbk{\boldsymbol{\omega},\boldsymbol{\zeta}}}}{s,x}
\fun{n_{\rbk{\boldsymbol{\omega},\boldsymbol{\zeta}}}}{s,y}
& =
\fun{n_{\boldsymbol{\omega}}}{s,x}
\fun{n_{\boldsymbol{\omega}}}{s,y}
+
\fun{n_{\boldsymbol{\zeta}}}{s,x}
\fun{n_{\boldsymbol{\zeta}}}{s,y}
\\
& +
\fun{n_{\boldsymbol{\omega}}}{s,x}
\fun{n_{\boldsymbol{\zeta}}}{s,y}
+
\fun{n_{\boldsymbol{\zeta}}}{s,x}
\fun{n_{\boldsymbol{\omega}}}{s,y}.
\end{aligned}$$
The linear normal-ordering term splits without a mixed contribution.
Symmetry of $v$ makes the two mixed quadratic terms equal.
Integration over $s$ gives
\begin{equation}
\begin{aligned}
\fun{\Phi_{\Lambda}}{\boldsymbol{\omega},\boldsymbol{\zeta}}
=
\fun{\Phi_{\Lambda}}{\boldsymbol{\omega}}
+
\fun{\Phi_{\Lambda}}{\boldsymbol{\zeta}}
+
\fun{I_{\Lambda}}{\boldsymbol{\omega},\boldsymbol{\zeta}}
\geq
\fun{\Phi_{\Lambda}}{\boldsymbol{\omega}}
+
\fun{\Phi_{\Lambda}}{\boldsymbol{\zeta}}.
\end{aligned}
\label{eq:fock-repulsive-path-splitting}
\end{equation}
The inequality uses pointwise nonnegativity of $v$.
Insert \eqref{eq:fock-repulsive-path-splitting} into the first member of
\eqref{eq:fock-open-path-loop-partition-functions}.
The normalized background contribution is bounded by
$$\begin{aligned}
\frac{\fun{Z_{\Lambda}}{\boldsymbol{\omega}}}{Z_{\Lambda}}
\leq
\napiernum^{-\fun{\Phi_{\Lambda}}{\boldsymbol{\omega}}}
\frac{1}{Z_{\Lambda}}
\sum_{\ell \in \monnat}\frac{1}{\ell!}
\int_{\rbk{\Omega_{\Lambda}^{\mathrm{loop}}}^{\ell}}
\napiernum^{-\fun{\Phi_{\Lambda}}{\boldsymbol{\zeta}}}
\opdmsr{\msrcal{L}_{\Lambda}^{\otimes\ell}}(\boldsymbol{\zeta})
=
\napiernum^{-\fun{\Phi_{\Lambda}}{\boldsymbol{\omega}}}.
\end{aligned}$$
The final loop series is exactly $Z_{\Lambda}$.
All bridge measures in \eqref{eq:fock-reduced-density-open-path-expansion} are positive.
The bound \eqref{eq:fock-interacting-loop-cancellation} gives
$$\begin{aligned}
0
\leq
\fun{\widetilde{\rho}_{\Lambda,k}}{\boldsymbol{x};\boldsymbol{y}}
& \leq
\sum_{\varpi \in \grsym{k}}
\sum_{q_1,\ldots,q_k\in\semigrposint}
\int_{\prod_{i=1}^{k}
\Omega_{\Lambda;x_i,y_{\fun{\varpi}{i}}}^{q_i\beta}}
\prod_{i=1}^{k}
\opdmsr{\msrcal{W}_{\Lambda;x_i,y_{\fun{\varpi}{i}}}^{q_i\beta}}(\omega_i).
\end{aligned}$$
For this free gas, $v=0$ implies
$$\begin{aligned}
\fun{Z_{\txtfr,\Lambda}}{\boldsymbol{\omega}}
=
Z_{\txtfr,\Lambda},
\quad
\frac{\fun{Z_{\txtfr,\Lambda}}{\boldsymbol{\omega}}}
{Z_{\txtfr,\Lambda}}
=
1.
\end{aligned}$$
The right-hand side of the reduced-kernel bound is consequently
$\fun{\widetilde{\rho}_{\txtfr,\Lambda,k}}{\boldsymbol{x};\boldsymbol{y}}$, which proves
\eqref{eq:fock-reduced-density-domination}.
\end{proof}

\subsection{Factorial and exponential local moments}\label{factorial-and-exponential-local-moments}

The free Bose gas in the comparison has one-particle covariance \[\begin{aligned}
C_{\txtfr,\Lambda}
& =
\rbk{\napiernum^{\beta\rbk{\physham[h]_{\Lambda} - \smchemicalpotential_{\Lambda}}} - 1}^{-1}.
\end{aligned}\] The gap \eqref{eq:fock-path-chemical-potential-gap} implies \(\physham[h]_{\Lambda} - \smchemicalpotential_{\Lambda} \geq \varepsilon\) and \[\begin{aligned}
\bkt{\delta_x}{C_{\txtfr,\Lambda}\delta_x}
& \leq
\rbk{\napiernum^{\beta\varepsilon} - 1}^{-1}.
\end{aligned}\]

\begin{thm}[Exponential replacement of the DLL moment hypothesis]\label{thm:fock-all-temperature-local-number}
Assume \eqref{eq:fock-path-chemical-potential-gap} and pointwise nonnegativity of $v$.
For every $\beta > 0$, $k \in \monnat$, and $0 < a < \beta\varepsilon$, the finite-volume Gibbs states defined
by \eqref{eq:fock-finite-volume-gibbs-state} satisfy
\begin{equation}
\begin{aligned}
\sup_{\Lambda \Subset \Gamma}
\sup_{x \in \Lambda}
\fun{\oastate[\psi_{\Lambda}]}{
\prod_{j = 0}^{k - 1}\rbk{N_x - j}}
& \leq
\frac{k!}{\rbk{\napiernum^{\beta\varepsilon} - 1}^{k}},
\end{aligned}
\label{eq:fock-factorial-local-number-bound}
\end{equation}
where the product is $1$ for $k = 0$, and
\begin{equation}
\begin{aligned}
\sup_{\Lambda \Subset \Gamma}
\sup_{x \in \Lambda}
\fun{\oastate[\psi_{\Lambda}]}{\napiernum^{aN_x}}
& \leq
\frac{\napiernum^{\beta\varepsilon} - 1}
{\napiernum^{\beta\varepsilon} - \napiernum^{a}}.
\end{aligned}
\label{eq:fock-exponential-local-number-bound}
\end{equation}
\end{thm}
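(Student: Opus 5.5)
The plan is to reduce both bounds to the diagonal reduced density kernel at a single site and then apply the repulsive domination of Lemma \ref{lem:fock-repulsive-loop-domination}. First, for $k\geq1$ and $x\in\Lambda$, the canonical commutation relations \eqref{eq:fock-creation-annihilation-number-commutators} give the normal-ordering identity
\[
\rbk{\faadj{a_x}}^{k}a_x^{k}=\prod_{j=0}^{k-1}\rbk{N_x-j}
\]
on $\sphilb{D}_{\Lambda,\txtfin}$. This can be checked directly on the occupation-number basis \eqref{eq:fock-occupation-number-basis}, where both sides act by $\nu_x(\nu_x-1)\cdots(\nu_x-k+1)$. Hence the left side of \eqref{eq:fock-factorial-local-number-bound} equals $\fun{\widetilde{\rho}_{\Lambda,k}}{\boldsymbol{x};\boldsymbol{x}}$ with $\boldsymbol{x}=(x,\ldots,x)$. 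The trace is finite by the superstable majorant of Proposition \ref{prop:fock-finite-volume-gibbs-existence}.

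Next, Lemma \ref{lem:fock-repulsive-loop-domination} bounds this diagonal kernel by the free diagonal kernel. Rather than evaluating the free kernel by Wick's theorem, I would take the bound directly from the proof of that lemma. The right-hand side there is
\[
\sum_{\varpi\in\grsym{k}}\prod_{i=1}^{k}\sum_{q_i\in\semigrposint}\fun{p_{\Lambda,q_i\beta}}{x,x}
\]
by the bridge-mass identity \eqref{eq:fock-bridge-total-mass}. The pointwise estimate $\fun{p_{\Lambda,q\beta}}{x,x}\leq \napiernum^{-q\beta\varepsilon}$, which follows from \eqref{eq:fock-killed-walk-row-bound}, then gives the bound $k!(\napiernum^{\beta\varepsilon}-1)^{-k}$. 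This bound is uniform in $\Lambda$ and $x$, which proves \eqref{eq:fock-factorial-local-number-bound}. The case $k=0$ is trivial because the state is normalized.

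For the exponential bound, I would use the Newton expansion in falling factorials,
\[
\napiernum^{aN_x}=\sum_{k\in\monnat}\frac{(\napiernum^{a}-1)^{k}}{k!}\prod_{j=0}^{k-1}(N_x-j),
\]
which is the binomial identity $(1+t)^{n}=\sum_k\binom{n}{k}t^k$ with $t=\napiernum^{a}-1$. On each eigenvector the sum is finite and all terms are nonnegative. Since the Gibbs density is diagonal in the occupation basis after summing over that basis (or, equivalently, via the spectral projections of $N_x$), monotone convergence lets the state pass through the series. Inserting \eqref{eq:fock-factorial-local-number-bound} gives the geometric series
\[
\sum_k\rbk{\frac{\napiernum^{a}-1}{\napiernum^{\beta\varepsilon}-1}}^{k}=\frac{\napiernum^{\beta\varepsilon}-1}{\napiernum^{\beta\varepsilon}-\napiernum^{a}},
\]
which converges precisely because $a<\beta\varepsilon$.

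The main obstacle is the interchange of the state with the infinite sum, since $\napiernum^{aN_x}$ is unbounded and the expectation must first be given a meaning. I would define $\fun{\oastate[\psi_{\Lambda}]}{\napiernum^{aN_x}}$ as $\sum_{n}\napiernum^{an}\fun{\oastate[\psi_{\Lambda}]}{\fndef{\setone{n}}(N_x)}$. Both sides are then sums of nonnegative terms over $n$ and $k$, so Tonelli's theorem justifies the exchange. Finiteness of each factorial moment follows from the Gaussian trace majorant.
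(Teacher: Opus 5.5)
Your proposal is correct and follows the paper's proof. The falling-factorial identity turns the left side of \eqref{eq:fock-factorial-local-number-bound} into the diagonal reduced kernel, Lemma~\ref{lem:fock-repulsive-loop-domination} dominates it, and the binomial generating identity with monotone convergence and a geometric series gives \eqref{eq:fock-exponential-local-number-bound}.

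The only deviation is how you evaluate the dominating bridge-mass sum: you bound it directly through $\sum_{q\geq1}p_{\Lambda,q\beta}(x,x)\leq\sum_{q\geq1}e^{-q\beta\varepsilon}$, whereas the paper identifies it with the free kernel and then invokes the free Wick rule and the covariance bound. Since $\langle\delta_x,C_{\mathrm{fr},\Lambda}\delta_x\rangle=\sum_{q\geq1}p_{\Lambda,q\beta}(x,x)$, both routes bound the same number, and yours avoids the Wick rule, which the paper cites without proof.
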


\begin{proof}
For $k \in \semigrposint$, evaluate \eqref{eq:fock-reduced-density-domination} at
$\boldsymbol{x} = \boldsymbol{y} = \rbk{x,\ldots,x}$.
The domination is Lemma \ref{lem:fock-repulsive-loop-domination}.
The occupation identity and the free Bose Wick rule are
$$\begin{aligned}
\rbk{\faadj{a_x}}^{k}a_x^{k}
=
\prod_{j = 0}^{k - 1}\rbk{N_x - j},
\quad
\fun{\oastate[\psi_{\txtfr,\Lambda}]}{
\rbk{\faadj{a_x}}^{k}a_x^{k}}
=
k! \rbk{\bkt{\delta_x}{C_{\txtfr,\Lambda}\delta_x}}^{k}.
\end{aligned}$$
Indeed, fix the basis vector $\Psi_{\Lambda,\boldsymbol{\nu}}$ in
\eqref{eq:fock-occupation-number-basis} with $\nu_x=n$.
For $n\geq k$, let $\boldsymbol{\nu}^{(k)}$ be the multi-index obtained by replacing $\nu_x$ with $n-k$.
Repeated annihilation and creation give
$$\begin{aligned}
a_x^k\Psi_{\Lambda,\boldsymbol{\nu}}
& =
\sqrt{n\rbk{n-1}\cdots\rbk{n-k+1}}
\Psi_{\Lambda,\boldsymbol{\nu}^{(k)}},
\\
\rbk{\faadj{a_x}}^ka_x^k\Psi_{\Lambda,\boldsymbol{\nu}}
& =
n\rbk{n-1}\cdots\rbk{n-k+1}
\Psi_{\Lambda,\boldsymbol{\nu}},
\end{aligned}$$
while $a_x^k\Psi_{\Lambda,\boldsymbol{\nu}}=0$ when $n<k$.
The two diagonal operators consequently agree.
The free Bose Wick rule sums over all bijections between the $k$ creation and $k$ annihilation entries.
Every contraction equals $\bkt{\delta_x}{C_{\txtfr,\Lambda}\delta_x}$, and the number of bijections is
$\abscard{\grsym{k}} = k!$.
The domination and covariance estimates now give
$$\begin{aligned}
&
\fun{\oastate[\psi_{\Lambda}]}{
\prod_{j=0}^{k-1}\rbk{N_x-j}}
=
\fun{\widetilde{\rho}_{\Lambda,k}}{
\rbk{x,\ldots,x};\rbk{x,\ldots,x}}
\\
& \leq
\fun{\widetilde{\rho}_{\txtfr,\Lambda,k}}{
\rbk{x,\ldots,x};\rbk{x,\ldots,x}}
=
k!\rbk{\bkt{\delta_x}{C_{\txtfr,\Lambda}\delta_x}}^k
\leq
\frac{k!}{\rbk{\napiernum^{\beta\varepsilon}-1}^k}.
\end{aligned}$$
Taking the two suprema proves \eqref{eq:fock-factorial-local-number-bound}.
The case $k = 0$ is the normalization of the state.

For every $n \in \monnat$, the binomial theorem gives the factorial generating identity
\begin{equation}
\begin{aligned}
&
\napiernum^{a n}
=
\rbk{1+\napiernum^{a}-1}^{n}
=
\sum_{k=0}^{n}
\binom{n}{k}\rbk{\napiernum^{a}-1}^{k}
\\
& =
\sum_{k=0}^{n}
\frac{\rbk{\napiernum^{a}-1}^{k}}{k!}
\frac{n!}{\rbk{n-k}!}
=
\sum_{k \in \monnat}
\frac{\rbk{\napiernum^{a} - 1}^{k}}{k!}
\prod_{j = 0}^{k - 1}\rbk{n - j},
\end{aligned}
\label{eq:fock-factorial-generating-identity}
\end{equation}
where the final infinite sum equals the finite sum because the falling factorial vanishes for $k>n$.
Apply \eqref{eq:fock-factorial-generating-identity} to the spectral resolution of $N_x$.
Every summand is positive.
Monotone convergence and \eqref{eq:fock-factorial-local-number-bound} give
$$\begin{aligned}
\fun{\oastate[\psi_{\Lambda}]}{\napiernum^{aN_x}}
=
\sum_{k \in \monnat}
\frac{\rbk{\napiernum^{a}-1}^{k}}{k!}
\fun{\oastate[\psi_{\Lambda}]}{
\prod_{j=0}^{k-1}\rbk{N_x-j}}
\leq
\sum_{k \in \monnat}
\rbk{
\frac{\napiernum^{a}-1}
{\napiernum^{\beta\varepsilon}-1}}^{k}
=
\frac{1}
{1-\frac{\napiernum^{a}-1}{\napiernum^{\beta\varepsilon}-1}}
=
\frac{\napiernum^{\beta\varepsilon} - 1}
{\napiernum^{\beta\varepsilon} - \napiernum^{a}}.
\end{aligned}$$
The geometric series converges because $a < \beta\varepsilon$.
This proves \eqref{eq:fock-exponential-local-number-bound}.
\end{proof}

Theorem \ref{thm:fock-all-temperature-local-number} is uniform in the volume and the lattice site. It is an all-temperature statement but not an all-density statement. For \(\Gamma = \ringratint^{d}\) and nearest-neighbor hopping, one may take \(C_h = 2d\), and the hypothesis is a uniform version of \(\smchemicalpotential < -2d\) in \cite[Remark 5.3]{DeuchertLampartLemm001}.

\begin{defn}[Exponential spatial weights]\label{def:fock-exponential-spatial-weights}
Fix $b>0$.
For $x,y\in\Gamma$ and $\Delta\Subset\Gamma$ or $\Delta=\Gamma$, define
\begin{equation}
\begin{aligned}
\fun{q_x}{y}
& =
\napiernum^{-b\fun{\topdist}{x,y}},
&
Q_{x,\Delta}
& =
\sum_{y\in\Delta}\fun{q_x}{y}N_y,
&
S_b
& =
\sup_{x\in\Gamma}\sum_{y\in\Gamma}\fun{q_x}{y}.
\end{aligned}
\label{eq:fock-exponential-spatial-weights}
\end{equation}
\end{defn}

Let \(P_{\Delta,n}\) be the particle-sector projection defined by \eqref{eq:fock-particle-sector-projection} with \(X=\Delta\). The inequalities \(0<\fun{q_x}{y}\leq1\) give the sector bound \begin{equation}
\begin{aligned}
0
\leq
Q_{x,\Delta}P_{\Delta,n}
\leq
nP_{\Delta,n},
\quad
n\in\monnat.
\end{aligned}
\label{eq:fock-weighted-number-sector-bound}
\end{equation}

\begin{lem}[Initial weighted exponential moment]
For every $b>0$, the constant $S_b$ in \eqref{eq:fock-exponential-spatial-weights} is finite.
Assume \eqref{eq:fock-path-chemical-potential-gap} and pointwise nonnegativity of $v$.
Fix $\beta>0$ and $a\in\openinterval{0}{\beta\varepsilon}$, and define
\begin{equation}
\begin{aligned}
M_a
& =
\frac{\napiernum^{\beta\varepsilon}-1}
{\napiernum^{\beta\varepsilon}-\napiernum^a}.
\end{aligned}
\label{eq:fock-weighted-moment-constant}
\end{equation}
For every finite $\Lambda\Subset\Gamma$, $\Delta\in\setone{\Lambda,\Gamma}$, and $x\in\Gamma$,
the state $\oastate[\overline{\psi}_{\Lambda}]$ defined by
\eqref{eq:fock-vacuum-extended-gibbs-state} satisfies
\begin{equation}
\begin{aligned}
\fun{\oastate[\overline{\psi}_{\Lambda}]}{
\napiernum^{\frac{a}{S_b}Q_{x,\Delta}}}
& \leq
M_a.
\end{aligned}
\label{eq:fock-initial-weighted-exponential-moment}
\end{equation}
\end{lem}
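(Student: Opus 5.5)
Finiteness of $S_b$ comes from the polynomial ball-volume bound \eqref{eq:fock-polynomial-ball-volume-growth}. Grouping $y$ by distance $R=\fun{\topdist}{x,y}$, the shell at distance $R\geq1$ has at most $\abscard{x[R]}\leq\sigma R^d$ points. This gives
\[
\sum_{y\in\Gamma}\fun{q_x}{y}\leq 1+\sigma\sum_{R\geq1}R^d\napiernum^{-bR},
\]
which is uniform in $x$ and finite. The same bound also ensures that $Q_{x,\Gamma}$ is well defined on each particle sector, by \eqref{eq:fock-weighted-number-sector-bound}.

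For the moment bound, I first reduce to $\Lambda$. Under $\sphilb{F}_\Gamma\cong\sphilb{F}_\Lambda\otimes\sphilb{F}_{\Lambda^{\mathrm c}}$, the vacuum factor $\Omega_{\Lambda^{\mathrm c}}$ is annihilated by every $N_y$ with $y\notin\Lambda$. All $N_y$ commute and are diagonal in the occupation basis, so $\napiernum^{tQ_{x,\Delta}}$ acts on $\ran\overline{\rho}_\Lambda$ as $\napiernum^{tQ_{x,\Lambda}}\otimes 1$ for both $\Delta=\Lambda$ and $\Delta=\Gamma$. Hence $\fun{\oastate[\overline{\psi}_{\Lambda}]}{\napiernum^{tQ_{x,\Delta}}}=\fun{\oastate[\psi_\Lambda]}{\napiernum^{tQ_{x,\Lambda}}}$ with $t=a/S_b$. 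This expectation is finite by the spectral calculus in the occupation basis, as one sees a posteriori, and it is computed by monotone convergence on $P_{\Lambda,\leq n}$. If $x\notin\Lambda$, I simply work with the weights $\fun{q_x}{y}$, $y\in\Lambda$; nothing uses $x\in\Lambda$.

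The key step is Jensen's inequality for the convex exponential. Set $c=\sum_{y\in\Lambda}\fun{q_x}{y}\leq S_b$. If $c=0$ the claim is trivial since $M_a\geq1$. Otherwise the weights $\lambda_y=\fun{q_x}{y}/c$ form a probability vector on $\Lambda$. On each occupation basis vector $\Psi_{\Lambda,\boldsymbol\nu}$,
\[
\napiernum^{\frac{a}{S_b}\sum_y \fun{q_x}{y}\nu_y}=\napiernum^{\sum_y\lambda_y\,(ac/S_b)\nu_y}\leq\sum_{y\in\Lambda}\lambda_y\napiernum^{(ac/S_b)\nu_y}\leq\sum_{y\in\Lambda}\lambda_y\napiernum^{a\nu_y},
\]
where the last step uses $ac/S_b\leq a$ and $\nu_y\geq0$. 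All operators involved are diagonal in the occupation basis, so this scalar inequality lifts to the operator inequality $\napiernum^{\frac{a}{S_b}Q_{x,\Lambda}}\leq\sum_y\lambda_y\napiernum^{aN_y}$. Both sides are nonnegative, so applying the normal state $\oastate[\psi_\Lambda]$ is justified by monotone convergence in the occupation basis. Theorem~\ref{thm:fock-all-temperature-local-number}, specifically \eqref{eq:fock-exponential-local-number-bound} with $y\in\Lambda$, bounds each $\fun{\oastate[\psi_\Lambda]}{\napiernum^{aN_y}}$ by $M_a$. Since $\sum_y\lambda_y=1$, the result is $M_a$.

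The main obstacle is the bookkeeping rather than any estimate. First, the unbounded operator must be handled rigorously in the vacuum-extended state when $\Delta=\Gamma$. I would define the expectation through the diagonal spectral decomposition and the trace against $\overline{\rho}_\Lambda$ in the product occupation basis, and check that components outside $\Lambda$ contribute $\nu_y=0$. Second, Jensen needs the normalization $c\leq S_b$ rather than $c=S_b$. The slack is absorbed by monotonicity of $s\mapsto\napiernum^{s\nu}$, as in the display above.
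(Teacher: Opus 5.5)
Your proof is correct, but it uses a different key inequality from the paper's.

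\textbf{The paper's route.} The paper works directly on $\Delta$. It bounds $\overline{\psi}_{\Lambda}(e^{aN_y})\le M_a$ for $y\in\Lambda$ by Theorem~\ref{thm:fock-all-temperature-local-number}, and by $1$ for $y\in\Delta\setminus\Lambda$. It then introduces the joint occupation distribution $\nu_{\Lambda,\Delta}$ on $\mathbb{N}^{\Delta}$ and applies generalized H\"older with exponents $S_b/q_x(y)$. When $\sum_y q_x(y)<S_b$, the slack is absorbed by an extra constant factor $1$ with the complementary exponent. The result is the geometric mean $\prod_{y\in\Delta}\overline{\psi}_{\Lambda}(e^{aN_y})^{q_x(y)/S_b}\le M_a$.

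\textbf{Your route.} You first reduce $\Delta=\Gamma$ to $\Lambda$ through the vacuum factor $\Omega_{\Lambda^{\mathrm{c}}}$. You then use convexity of the exponential pointwise on each occupation configuration, which gives the operator inequality $e^{\frac{a}{S_b}Q_{x,\Lambda}}\le\sum_{y\in\Lambda}\lambda_y e^{aN_y}$. Applying the state yields the arithmetic mean $\sum_y\lambda_y\,\psi_{\Lambda}(e^{aN_y})$. You absorb the slack by monotonicity, since $ac/S_b\le a$.

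\textbf{Comparison.} H\"older gives the sharper, geometric-mean bound. That extra strength would only matter if the single-site bounds varied with $y$; here both routes rely only on the uniform bound $M_a$ and close at the same constant. Your argument needs only positivity and linearity of the state on operators diagonal in the occupation basis. It uses no joint spectral measure and no H\"older inequality. Your preliminary reduction to $\Lambda$ also avoids the infinitely many H\"older factors and the sum over the uncountable set $\mathbb{N}^{\Gamma}$ that the paper's argument implicitly handles when $\Delta=\Gamma$.
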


\begin{proof}
For every $x\in\Gamma$, decomposition into the sets of points at fixed graph distance from $x$ and
\eqref{eq:fock-polynomial-ball-volume-growth} give
$$\begin{aligned}
\sum_{y\in\Gamma}\napiernum^{-b\fun{\topdist}{x,y}}
& =
1+
\sum_{n=1}^{\infty}
\napiernum^{-bn}
\abscard{\set{y\in\Gamma}{\fun{\topdist}{x,y}=n}}
\\
& \leq
1+
\sum_{n=1}^{\infty}
\napiernum^{-bn}\abscard{x[n]}
\leq
1+\sigma
\sum_{n=1}^{\infty}n^d\napiernum^{-bn}
<
\infty.
\end{aligned}$$
Taking the supremum over $x$ proves $S_b<\infty$.
The condition $0<a<\beta\varepsilon$ gives
$$\begin{aligned}
M_a-1
& =
\frac{\napiernum^a-1}
{\napiernum^{\beta\varepsilon}-\napiernum^a}
>
0.
\end{aligned}$$
The estimate \eqref{eq:fock-exponential-local-number-bound} states that
$\fun{\oastate[\overline{\psi}_{\Lambda}]}{\napiernum^{aN_y}}\leq M_a$ for $y\in\Lambda$.
For $y\in \Delta\setminus\Lambda$, definition
\eqref{eq:fock-vacuum-extended-gibbs-state} gives
$\fun{\oastate[\overline{\psi}_{\Lambda}]}{\napiernum^{aN_y}}=1\leq M_a$.
Consequently,
$$\begin{aligned}
\fun{\oastate[\overline{\psi}_{\Lambda}]}{\napiernum^{aN_y}}
& \leq
M_a,
\quad
y\in \Delta.
\end{aligned}$$
For $\boldsymbol{n}=\rbk{n_y}_{y\in \Delta}\in\monnat^{\Delta}$, define the joint probability mass by
$$\begin{aligned}
\fun{\nu_{\Lambda,\Delta}}{\boldsymbol{n}}
& =
\fun{\oastate[\overline{\psi}_{\Lambda}]}{
\prod_{y\in \Delta}\fun{\fndef{\setone{n_y}}}{N_y}}.
\end{aligned}$$
Here $\fun{\fndef{\setone{n_y}}}{N_y}$ is the spectral projection of $N_y$ for the eigenvalue $n_y$.
The commutativity of the number operators shows that the product is their joint spectral projection,
and normalization of the state gives
$\sum_{\boldsymbol{n}\in\monnat^{\Delta}}\fun{\nu_{\Lambda,\Delta}}{\boldsymbol{n}}=1$.
Since the definition of $S_b$ gives
$$\begin{aligned}
0
\leq
\sum_{y\in \Delta}\frac{\fun{q_x}{y}}{S_b}
\leq
1,
\end{aligned}$$
generalized Hölder on the probability space
$\rbk{\monnat^{\Delta},\nu_{\Lambda,\Delta}}$ uses the exponents
$S_b/\fun{q_x}{y}$, $y\in \Delta$.
If $\sum_{y\in \Delta}\fun{q_x}{y}/S_b<1$, include the constant function $1$ with exponent
$\rbk{1-\sum_{y\in \Delta}\fun{q_x}{y}/S_b}^{-1}$; if the sum is $1$, omit this factor.
Thus
$$\begin{aligned}
&
\fun{\oastate[\overline{\psi}_{\Lambda}]}{
\napiernum^{\frac{a}{S_b}Q_{x,\Delta}}}
=
\sum_{\boldsymbol{n}\in\monnat^{\Delta}}
\fun{\nu_{\Lambda,\Delta}}{\boldsymbol{n}}
\prod_{y\in \Delta}
\rbk{\napiernum^{a n_y}}^{\frac{\fun{q_x}{y}}{S_b}}
\\ 
& \leq
\rbk{
\sum_{\boldsymbol{n}\in\monnat^{\Delta}}
\fun{\nu_{\Lambda,\Delta}}{\boldsymbol{n}}}^{
1-\sum_{y\in \Delta}\fun{q_x}{y}/S_b}
\prod_{y \in \Delta}
\rbk{
\sum_{\boldsymbol{n}\in\monnat^{\Delta}}
\fun{\nu_{\Lambda,\Delta}}{\boldsymbol{n}}
\napiernum^{a n_y}}^{\frac{\fun{q_x}{y}}{S_b}}
\\ 
& =
\prod_{y \in \Delta}
\rbk{
\fun{\oastate[\overline{\psi}_{\Lambda}]}{
\napiernum^{aN_y}}}^{\frac{\fun{q_x}{y}}{S_b}}
\leq
M_a^{\sum_{y\in \Delta} \frac{\fun{q_x}{y}}{S_b}}
\leq
M_a.
\end{aligned}$$
This proves \eqref{eq:fock-initial-weighted-exponential-moment}.
\end{proof}

\begin{lem}[Exponential conjugation by weighted occupation]
Let $q_x$ and $Q_{x,\Delta}$ be defined by
\eqref{eq:fock-exponential-spatial-weights}.
For $\Delta\Subset\Gamma$ or $\Delta=\Gamma$, let
$\sphilb{D}_{\Delta,\txtfin}$ be defined by
\eqref{eq:fock-finite-particle-subspace} with $X=\Delta$.
For $x\in\Gamma$, $u,v\in\Delta$, and $s\in\fldreal$, the following identities hold on
$\sphilb{D}_{\Delta,\txtfin}$:
\begin{equation}
\begin{aligned}
\napiernum^{sQ_{x,\Delta}}\faadj{a_u}\napiernum^{-sQ_{x,\Delta}}
& =
\napiernum^{s\fun{q_x}{u}}\faadj{a_u},
\\
\napiernum^{sQ_{x,\Delta}}a_v\napiernum^{-sQ_{x,\Delta}}
& =
\napiernum^{-s\fun{q_x}{v}}a_v.
\end{aligned}
\label{eq:fock-exponential-creation-annihilation-conjugation}
\end{equation}
\end{lem}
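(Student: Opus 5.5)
We verify both identities in \eqref{eq:fock-exponential-creation-annihilation-conjugation} directly on the occupation-number basis, since every vector in \(\sphilb{D}_{\Delta,\txtfin}\) is a finite linear combination of such vectors. When \(\Delta\) is finite this is the basis \eqref{eq:fock-occupation-number-basis}. When \(\Delta=\Gamma\), the same construction with finitely supported multi-indices \(\boldsymbol{\nu}\in\monnat^{(\Gamma)}\) gives an orthonormal basis of each \(n\)-particle sector, and hence of \(\sphilb{D}_{\Gamma,\txtfin}\). On such a vector \(\Psi_{\boldsymbol{\nu}}\), the operator \(Q_{x,\Delta}\) acts diagonally:
\[
Q_{x,\Delta}\Psi_{\boldsymbol{\nu}}=\Big(\sum_{y\in\Delta}\fun{q_x}{y}\nu_y\Big)\Psi_{\boldsymbol{\nu}} .
\]
The sum is finite because \(\boldsymbol{\nu}\) has finite support. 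By \eqref{eq:fock-weighted-number-sector-bound}, \(Q_{x,\Delta}\) is bounded on each sector \(P_{\Delta,n}\sphilb{F}_\Delta\), with norm at most \(n\). Therefore \(\napiernum^{\pm sQ_{x,\Delta}}\) are defined sectorwise by the power series (equivalently, by functional calculus), they preserve the sectors, and they map \(\sphilb{D}_{\Delta,\txtfin}\) into itself. So both sides of each identity are well-defined operators on \(\sphilb{D}_{\Delta,\txtfin}\).

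For the creation identity, write \(c(\boldsymbol{\nu})=\sum_y \fun{q_x}{y}\nu_y\). First apply \(\napiernum^{-sQ_{x,\Delta}}\), which multiplies \(\Psi_{\boldsymbol{\nu}}\) by \(\napiernum^{-s c(\boldsymbol{\nu})}\). Next, \(\faadj{a_u}\Psi_{\boldsymbol{\nu}}=\sqrt{\nu_u+1}\,\Psi_{\boldsymbol{\nu}+\delta_u}\), and \(c(\boldsymbol{\nu}+\delta_u)=c(\boldsymbol{\nu})+\fun{q_x}{u}\). Finally, \(\napiernum^{sQ_{x,\Delta}}\) multiplies \(\Psi_{\boldsymbol{\nu}+\delta_u}\) by \(\napiernum^{s c(\boldsymbol{\nu})+s\fun{q_x}{u}}\). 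The net factor is \(\napiernum^{s\fun{q_x}{u}}\sqrt{\nu_u+1}\), which is exactly \(\napiernum^{s\fun{q_x}{u}}\faadj{a_u}\Psi_{\boldsymbol{\nu}}\).

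The annihilation identity is the same computation with \(\boldsymbol{\nu}-\delta_v\). If \(\nu_v=0\), both sides vanish. Otherwise the net factor is \(\napiernum^{-s\fun{q_x}{v}}\). Linearity then extends both identities from the basis to all of \(\sphilb{D}_{\Delta,\txtfin}\).

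An alternative derivation uses the commutators \eqref{eq:fock-creation-annihilation-number-commutators}. They give \(\commutator{Q_{x,\Delta}}{\faadj{a_u}}=\fun{q_x}{u}\faadj{a_u}\), and the conclusion then follows from \(Q\faadj{a_u}=\faadj{a_u}(Q+\fun{q_x}{u})\) together with functional calculus. The basis computation is cleaner, though, because it avoids any convergence question. The only point that needs care is the case \(\Delta=\Gamma\). There \(Q_{x,\Gamma}\) is an infinite sum of unbounded operators, and one must check that it is well defined and diagonal on the finite-particle basis. This holds because each finite-particle basis vector has finite support and \(0<\fun{q_x}{y}\leq 1\), so the diagonal action above is an honest finite sum.
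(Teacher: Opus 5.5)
Your argument is correct for finite $\Delta$ and takes a different route from the paper. As written, though, it has a gap for $\Delta=\Gamma$.

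\textbf{The gap.} You claim that every vector of $\mathcal{D}_{\Delta,\mathrm{fin}}$ is a finite linear combination of occupation-number vectors, and then say that ``linearity then extends both identities.''

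This holds for finite $\Delta$, where each sector $P_{\Delta,n}\mathcal{F}_{\Delta}$ is finite-dimensional. For $\Delta=\Gamma$, every sector with $n\geq1$ is infinite-dimensional. The vectors $\Psi_{\boldsymbol{\nu}}$ with finitely supported $\boldsymbol{\nu}$ form an orthonormal basis only in the Hilbert-space sense, so a general $\Xi\in\mathcal{D}_{\Gamma,\mathrm{fin}}$ is an infinite series in them. Linearity alone does not reach such $\Xi$.

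The repair uses facts you already recorded. On the $n$-particle sector, $e^{\pm sQ_{x,\Gamma}}$ has norm at most $e^{|s|n}$. The operators $a_u^{\ast}$ and $a_v$ map that sector boundedly into the $(n+1)$- and $(n-1)$-particle sectors, with norms at most $\sqrt{n+1}$ and $\sqrt{n}$.
\begin{itemize}
\item Hence both sides of each identity are bounded operators on every sector.
\item They agree on a total orthonormal set, so by continuity they agree on the whole sector.
\item Finite sums over sectors then cover all of $\mathcal{D}_{\Gamma,\mathrm{fin}}$.
\end{itemize}
For the same reason, $Q_{x,\Gamma}$ should be defined on each sector as the bounded diagonal operator with eigenvalues $c(\boldsymbol{\nu})\in[0,n]$, equivalently as the strong limit of the finite partial sums. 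Defining it only by its action on basis vectors is not enough. So the basis computation does not avoid the convergence question in infinite volume; it moves it into this density step.

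\textbf{Comparison of routes.} The paper never diagonalizes $Q_{x,\Delta}$. It reads off $[Q_{x,\Delta},a_u^{\ast}]=q_x(u)a_u^{\ast}$ and $[Q_{x,\Delta},a_v]=-q_x(v)a_v$ from the local-number commutators. Then, for each fixed $\Xi\in\mathcal{D}_{\Delta,\mathrm{fin}}$, it shows that $s\mapsto e^{-s\lambda}e^{sQ_{x,\Delta}}Ae^{-sQ_{x,\Delta}}\Xi$ has zero derivative. This needs only that $Q_{x,\Delta}$ is bounded on each sector and that $A$ shifts the particle number by one. Because it works vector by vector, it needs no basis and no density step, and it applies to any operator $A$ with $[Q,A]=\lambda A$ and these sector properties.

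Your computation is the spectral form of the same fact: $a_u^{\ast}$ and $a_v$ shift the eigenvalue $c(\boldsymbol{\nu})$ by $\pm q_x(\cdot)$. It avoids differentiating operator-valued functions and makes the mechanism explicit. The price is reliance on the explicit eigenbasis and, for $\Delta=\Gamma$, on the continuity extension above. Your brief alternative via $Q a_u^{\ast}=a_u^{\ast}(Q+q_x(u))$ and functional calculus is close in spirit to the paper's commutator route.
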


\begin{proof}
The local-number commutators in
\eqref{eq:fock-creation-annihilation-number-commutators} give
$$\begin{aligned}
\commutator{Q_{x,\Delta}}{\faadj{a_u}}
& =
\sum_{y\in\Delta}
\fun{q_x}{y}\commutator{N_y}{\faadj{a_u}}
=
\fun{q_x}{u}\faadj{a_u},
\\
\commutator{Q_{x,\Delta}}{a_v}
& =
\sum_{y\in\Delta}
\fun{q_x}{y}\commutator{N_y}{a_v}
=
-\fun{q_x}{v}a_v.
\end{aligned}$$
Choose the pair $\rbk{A,\lambda}$ from the following two possibilities:
$$\begin{aligned}
\rbk{A,\lambda}
& \in
\setone{
\rbk{\faadj{a_u},\fun{q_x}{u}},
\rbk{a_v,-\fun{q_x}{v}}}.
\end{aligned}$$
The two commutator identities give $\commutator{Q_{x,\Delta}}{A}=\lambda A$.
The sector bound \eqref{eq:fock-weighted-number-sector-bound} shows that
$Q_{x,\Delta}$ is bounded on each particle-number summand and preserves
$\sphilb{D}_{\Delta,\txtfin}$.
For $n\in\monnat$, the first choice of $A$ maps the $n$-particle summand into the
$\rbk{n+1}$-particle summand.
For $n\in\semigrposint$, the second choice maps the $n$-particle summand into the
$\rbk{n-1}$-particle summand, and it vanishes on the vacuum summand.
Thus all operators in the following derivative act on $\sphilb{D}_{\Delta,\txtfin}$.
For $\Xi\in\sphilb{D}_{\Delta,\txtfin}$, the common calculation for both choices of
$\rbk{A,\lambda}$ is
$$\begin{aligned}
\opod{s}
\rbk{
\napiernum^{-s\lambda}
\napiernum^{sQ_{x,\Delta}}
A
\napiernum^{-sQ_{x,\Delta}}\Xi}
& =
\napiernum^{-s\lambda}
\napiernum^{sQ_{x,\Delta}}
\rbk{
-\lambda A+\commutator{Q_{x,\Delta}}{A}}
\napiernum^{-sQ_{x,\Delta}}\Xi
=
0.
\end{aligned}$$
At $s=0$, the vector inside the derivative equals $A\Xi$.
Hence the vector is independent of $s$.
Multiplication by $\napiernum^{s\lambda}$ gives
$$\begin{aligned}
\napiernum^{sQ_{x,\Delta}}
A
\napiernum^{-sQ_{x,\Delta}}\Xi
& =
\napiernum^{s\lambda}A\Xi.
\end{aligned}$$
The two choices of $\rbk{A,\lambda}$ prove
\eqref{eq:fock-exponential-creation-annihilation-conjugation}.
\end{proof}

\begin{lem}[Weighted hopping commutator]
Let $C_h$ be defined by \eqref{eq:fock-uniform-degree-bound}, and let
$q_x$, $Q_{x,\Delta}$, and $S_b$ be defined by
\eqref{eq:fock-exponential-spatial-weights}.
Fix $a,b>0$, and define
\begin{equation}
\begin{aligned}
c_{a,b}
& =
C_h\rbk{\napiernum^b-1}
\fnexp{\frac{a\rbk{\napiernum^b-1}}{2S_b}}.
\end{aligned}
\label{eq:fock-weighted-commutator-constant}
\end{equation}
For $\Delta\Subset\Gamma$ or $\Delta=\Gamma$, let
$\sphilb{D}_{\Delta,\txtfin}$ be the finite-particle subspace defined by
\eqref{eq:fock-finite-particle-subspace} with $X=\Delta$.
For $x\in\Gamma$, $0<\theta\leq a/S_b$, and
$\Xi\in\sphilb{D}_{\Delta,\txtfin}$, one has
\begin{equation}
\begin{aligned}
\mathord{\pm}\imunit
\bkt{\Xi}{
\commutator{H_{\Delta}}{\napiernum^{\theta Q_{x,\Delta}}}\Xi}
& \leq
c_{a,b}\theta
\bkt{\Xi}{Q_{x,\Delta}\napiernum^{\theta Q_{x,\Delta}}\Xi}.
\end{aligned}
\label{eq:fock-weighted-exponential-commutator}
\end{equation}
\end{lem}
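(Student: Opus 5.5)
The plan is to reduce the commutator to the hopping part, conjugate each hopping monomial by $\napiernum^{\theta Q_{x,\Delta}/2}$ using \eqref{eq:fock-exponential-creation-annihilation-conjugation}, and then estimate the resulting scalar factors pointwise through the geometry of $q_x$.

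\textbf{Step 1: only the hopping part contributes.} Each $N_y$ commutes with $V_{\Delta}$, since both are diagonal in the occupation-number basis \eqref{eq:fock-occupation-number-basis}; for $\Delta=\Gamma$ this holds sector by sector. Hence $\commutator{V_{\Delta}}{\napiernum^{\theta Q_{x,\Delta}}}=0$ on $\sphilb{D}_{\Delta,\txtfin}$. What remains is
$$\commutator{H_{\Delta}}{\napiernum^{\theta Q_{x,\Delta}}}=-\sum_{u\sim v}\commutator{\faadj{a_u}a_v}{\napiernum^{\theta Q_{x,\Delta}}},$$
where the sum runs over ordered neighbouring pairs in $\Delta$. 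For $\Delta=\Gamma$ this sum is infinite. Its action on a vector in $\sphilb{D}_{\Gamma,\txtfin}$ is nevertheless controlled: in the quadratic forms below, $\sum_u N_u$ is bounded on each sector, and bounded degree makes all rearrangements absolutely convergent.

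\textbf{Step 2: symmetric conjugation.} Set $\Phi=\napiernum^{\theta Q_{x,\Delta}/2}\Xi$, which lies in $\sphilb{D}_{\Delta,\txtfin}$ by \eqref{eq:fock-weighted-number-sector-bound}. Apply \eqref{eq:fock-exponential-creation-annihilation-conjugation} with $s=\pm\theta/2$, and write $\delta_{uv}=\fun{q_x}{u}-\fun{q_x}{v}$. This gives
$$\bkt{\Xi}{\commutator{\faadj{a_u}a_v}{\napiernum^{\theta Q_{x,\Delta}}}\Xi}
=-2\sinh\rbk{\tfrac{\theta\delta_{uv}}{2}}\bkt{\Phi}{\faadj{a_u}a_v\Phi}.$$
The left side of \eqref{eq:fock-weighted-exponential-commutator} is real, because the commutator of two symmetric operators is skew-symmetric. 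I will therefore bound it by
$$2\sum_{u\sim v}\abs{\sinh\rbk{\tfrac{\theta\delta_{uv}}{2}}}\,\abs{\bkt{\Phi}{\faadj{a_u}a_v\Phi}}.$$
Cauchy--Schwarz gives $\abs{\bkt{\Phi}{\faadj{a_u}a_v\Phi}}\leq\tfrac12\rbk{\bkt{\Phi}{N_u\Phi}+\bkt{\Phi}{N_v\Phi}}$.

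\textbf{Step 3: scalar estimates.} For $u\sim v$ the triangle inequality gives $\abs{\fun{\topdist}{x,u}-\fun{\topdist}{x,v}}\leq1$. Hence $\fun{q_x}{u}\leq\napiernum^{b}\fun{q_x}{v}$ and conversely, so
$$\abs{\delta_{uv}}\leq(\napiernum^b-1)\min\rbk{\fun{q_x}{u},\fun{q_x}{v}}\leq\napiernum^b-1.$$
I will use $\sinh t\leq t\napiernum^{t}$ for $t\geq0$ together with $\theta\leq a/S_b$. This yields
$$2\abs{\sinh\rbk{\tfrac{\theta\delta_{uv}}{2}}}\leq\theta\abs{\delta_{uv}}\fnexp{\tfrac{a(\napiernum^b-1)}{2S_b}}.$$
Next I apply the minimum bound separately to the $N_u$ and $N_v$ terms, namely $\abs{\delta_{uv}}N_u\leq(\napiernum^b-1)\fun{q_x}{u}N_u$ and similarly for $v$. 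Summing over ordered neighbour pairs, each site appears at most $C_h$ times as $u$ and at most $C_h$ times as $v$. The factor $\tfrac12$ cancels this doubling, leaving at most
$$c_{a,b}\,\theta\sum_{u\in\Delta}\fun{q_x}{u}\bkt{\Phi}{N_u\Phi}=c_{a,b}\,\theta\,\bkt{\Phi}{Q_{x,\Delta}\Phi}.$$
Finally $Q_{x,\Delta}$ commutes with $\napiernum^{\theta Q_{x,\Delta}/2}$, so $\bkt{\Phi}{Q_{x,\Delta}\Phi}=\bkt{\Xi}{Q_{x,\Delta}\napiernum^{\theta Q_{x,\Delta}}\Xi}$, which is \eqref{eq:fock-weighted-exponential-commutator} for both signs.

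The main obstacle is bookkeeping rather than analysis. One must keep the symmetric splitting, because a one-sided conjugation would produce a factor $\napiernum^{\theta\abs{\delta}}$ in place of $\napiernum^{\theta\abs{\delta}/2}$ and miss the stated constant. One must also track the ordered-pair double counting so that exactly $C_h$, and not $2C_h$, appears. For $\Delta=\Gamma$, the termwise manipulations require justification by restricting to one particle sector, where only finitely many sites carry mass for the quadratic forms in absolute value.
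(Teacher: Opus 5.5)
Your proposal is correct and follows the paper's proof essentially step for step. The paper also discards $V_{\Delta}$ and uses the conjugation identity in the symmetric form with $\Phi=\napiernum^{\theta Q_{x,\Delta}/2}\Xi$ and the bound $2\abs{\sinh(s/2)}\leq\abs{s}\napiernum^{\abs{s}/2}$; it then applies Cauchy--Schwarz $2\abs{\bkt{a_u\Phi}{a_v\Phi}}\leq\bkt{\Phi}{N_u\Phi}+\bkt{\Phi}{N_v\Phi}$, uses the edge bound $\abs{q_x(u)-q_x(v)}\leq(\napiernum^b-1)\min\{q_x(u),q_x(v)\}$ separately on the two number terms, and sums with the degree bound $C_h$. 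Your remarks on the reality of the left-hand side and on absolute convergence of the hopping sum for $\Delta=\Gamma$ are details the paper leaves implicit.
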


\begin{proof}
The density interaction commutes with $Q_{x,\Delta}$.
Fix $u,v\in\Delta$ with $u\sim v$.
Equation \eqref{eq:fock-exponential-creation-annihilation-conjugation} and insertion of the identity
$\napiernum^{-\theta Q_{x,\Delta}}\napiernum^{\theta Q_{x,\Delta}}$
between $\faadj{a_u}$ and $a_v$ give the following conjugation formula on
$\sphilb{D}_{\Delta,\txtfin}$:
\begin{equation}
\begin{aligned}
\napiernum^{\theta Q_{x,\Delta}}
\faadj{a_u}a_v
\napiernum^{-\theta Q_{x,\Delta}}
=
\rbk{\napiernum^{\theta Q_{x,\Delta}}
\faadj{a_u}\napiernum^{-\theta Q_{x,\Delta}}}
\rbk{\napiernum^{\theta Q_{x,\Delta}}
a_v\napiernum^{-\theta Q_{x,\Delta}}}
=
\napiernum^{\theta\rbk{\fun{q_x}{u} - \fun{q_x}{v}}}\faadj{a_u}a_v.
\end{aligned}
\label{eq:fock-hopping-weight-conjugation}
\end{equation}
The triangle inequality for the graph distance gives
$\abs{\fun{\topdist}{x,u}-\fun{\topdist}{x,v}}\leq1$.
If $\fun{\topdist}{x,u}\geq\fun{\topdist}{x,v}$, then
$$\begin{aligned}
\abs{\fun{q_x}{u} - \fun{q_x}{v}}
& =
\fun{q_x}{u}
\rbk{
\napiernum^{b\rbk{
\fun{\topdist}{x,u}-\fun{\topdist}{x,v}}}
-1}
\leq
\rbk{\napiernum^b-1}\fun{q_x}{u}.
\end{aligned}$$
In this case $\fun{q_x}{u}\leq\fun{q_x}{v}$.
Interchanging $u$ and $v$ in the other case proves
\begin{equation}
\begin{aligned}
\abs{\fun{q_x}{u} - \fun{q_x}{v}}
& \leq
\rbk{\napiernum^{b} - 1}
\min\setone{\fun{q_x}{u},\fun{q_x}{v}},
\quad
u\sim v.
\end{aligned}
\label{eq:fock-q-edge-difference-bound}
\end{equation}
Multiplying \eqref{eq:fock-hopping-weight-conjugation} on the right by $\napiernum^{\theta Q_{x,\Delta}}$ and subtracting
the resulting equality from $\faadj{a_u}a_v\napiernum^{\theta Q_{x,\Delta}}$ gives
\begin{equation}
\begin{aligned}
\commutator{\faadj{a_u}a_v}{\napiernum^{\theta Q_{x,\Delta}}}
& =
\rbk{
1-
\napiernum^{\theta\rbk{\fun{q_x}{u}-\fun{q_x}{v}}}}
\faadj{a_u}a_v
\napiernum^{\theta Q_{x,\Delta}}.
\end{aligned}
\label{eq:fock-weighted-hopping-commutator}
\end{equation}
Fix $\Xi\in\sphilb{D}_{\Delta,\txtfin}$ and set
$\Phi=\fnexp{\theta Q_{x,\Delta}/2}\Xi$.
The sector bound \eqref{eq:fock-weighted-number-sector-bound} shows that $Q_{x,\Delta}$ preserves every
particle-number summand and is bounded on each such summand.
Hence $\Phi\in\sphilb{D}_{\Delta,\txtfin}$, so every inner product in the following calculation is defined.
For $s\in\fldreal$,
$$\begin{aligned}
\abs{1-\napiernum^s}\napiernum^{-s/2}
& =
2\abs{\sinh\rbk{s/2}}
\leq
\abs{s}\napiernum^{\abs{s}/2}.
\end{aligned}$$
The definition of $\Phi$ gives $\Xi=\fnexp{-\theta Q_{x,\Delta}/2}\Phi$.
Substitute this equality into \eqref{eq:fock-weighted-hopping-commutator} and apply the scalar inequality with
$s
=
\theta\rbk{\fun{q_x}{u}-\fun{q_x}{v}}$.
This gives
$$\begin{aligned}
&
\abs{
\bkt{\Xi}{
\commutator{\faadj{a_u}a_v}{\napiernum^{\theta Q_{x,\Delta}}}
\Xi}}
=
\abs{
1-\napiernum^{\theta\rbk{\fun{q_x}{u}-\fun{q_x}{v}}}}
\napiernum^{-\frac{\theta}{2}
\rbk{\fun{q_x}{u}-\fun{q_x}{v}}}
\abs{\bkt{\Phi}{\faadj{a_u}a_v\Phi}}
\\
& \leq
\frac{\theta}{2}
\abs{\fun{q_x}{u}-\fun{q_x}{v}}
\napiernum^{
\frac{\theta}{2}
\abs{\fun{q_x}{u}-\fun{q_x}{v}}}
\rbk{
\bkt{\Phi}{N_u\Phi}
+
\bkt{\Phi}{N_v\Phi}},
\end{aligned}$$
where the last line uses
$$\begin{aligned}
2\abs{\bkt{\Phi}{\faadj{a_u}a_v\Phi}}
=
2\abs{\bkt{a_u\Phi}{a_v\Phi}}
\leq
\norm{a_u\Phi}^{2}+\norm{a_v\Phi}^{2}
=
\bkt{\Phi}{N_u\Phi}+\bkt{\Phi}{N_v\Phi}.
\end{aligned}$$
Equation \eqref{eq:fock-q-edge-difference-bound} and
$\min\setone{\fun{q_x}{u},\fun{q_x}{v}}\leq\fun{q_x}{u}$ and
$\min\setone{\fun{q_x}{u},\fun{q_x}{v}}\leq\fun{q_x}{v}$ give
$$\begin{aligned}
&
\abs{\fun{q_x}{u}-\fun{q_x}{v}}
\rbk{
\bkt{\Phi}{N_u\Phi}
+
\bkt{\Phi}{N_v\Phi}}
\\ 
&\leq
\rbk{\napiernum^b-1}
\min\setone{\fun{q_x}{u},\fun{q_x}{v}}
\rbk{
\bkt{\Phi}{N_u\Phi}
+
\bkt{\Phi}{N_v\Phi}}
\\
& \leq
\rbk{\napiernum^{b}-1}
\rbk{
\fun{q_x}{u}\bkt{\Phi}{N_u\Phi}
+
\fun{q_x}{v}\bkt{\Phi}{N_v\Phi}}.
\end{aligned}$$
Equation \eqref{eq:fock-q-edge-difference-bound},
$\min\setone{\fun{q_x}{u},\fun{q_x}{v}}\leq1$, and $0<\theta\leq a/S_b$ also give
$$\begin{aligned}
\theta\abs{\fun{q_x}{u}-\fun{q_x}{v}}
& \leq
\frac{a\rbk{\napiernum^b-1}}{S_b}.
\end{aligned}$$
Summing the hopping terms and applying the uniform degree bound
\eqref{eq:fock-uniform-degree-bound} gives
$$\begin{aligned}
\abs{
\bkt{\Xi}{
\imunit\commutator{H_{\Delta}}{\napiernum^{\theta Q_{x,\Delta}}}
\Xi}}
\leq
c_{a,b}\theta
\sum_{y\in \Delta}
\fun{q_x}{y}
\bkt{\Phi}{N_y\Phi}
=
c_{a,b}\theta
\bkt{\Xi}{
Q_{x,\Delta}\napiernum^{\theta Q_{x,\Delta}}\Xi}.
\end{aligned}$$
The two choices of sign prove \eqref{eq:fock-weighted-exponential-commutator}.
\end{proof}

\begin{lem}[Inserted weighted exponential moment]
Assume \eqref{eq:fock-path-chemical-potential-gap} and pointwise nonnegativity of $v$.
Fix $b>0$, $\beta>0$, $a\in\openinterval{0}{\beta\varepsilon}$, a finite $\Lambda\Subset\Gamma$,
$\Delta\in\setone{\Lambda,\Gamma}$, a finite $\Lambda_0\Subset\Delta$, and $x\in\Gamma$.
Let $M_a$ be defined by \eqref{eq:fock-weighted-moment-constant}.
For every $A\in\fun{\oaresolventalgebra}{\Lambda_0}^{\gamma}$ and
$0<c\leq a/\rbk{S_b+\abscard{\Lambda_0}}$, one has
\begin{equation}
\begin{aligned}
\max\setone{
\fun{\oastate[\overline{\psi}_{\Lambda}]}{
A\napiernum^{cQ_{x,\Delta}}\faadj{A}},
\fun{\oastate[\overline{\psi}_{\Lambda}]}{
\faadj{A}\napiernum^{cQ_{x,\Delta}}A}}
& \leq
M_a\norm{A}^{2}.
\end{aligned}
\label{eq:fock-inserted-exponential-moment-bound}
\end{equation}
\end{lem}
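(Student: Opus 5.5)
The plan is to reduce the inserted moment to an uninserted weighted exponential moment of the type in \eqref{eq:fock-initial-weighted-exponential-moment}. The reduction uses the tensor structure $\sphilb{F}_{\Delta}\cong\sphilb{F}_{\Lambda_0}\otimes\sphilb{F}_{\Delta\setminus\Lambda_0}$ together with gauge invariance of $A$. Split $Q_{x,\Delta}=Q_{x,\Lambda_0}+Q_{x,\Delta\setminus\Lambda_0}$. The outer part acts only on the second tensor factor. An element $A\in\fun{\oaresolventalgebra}{\Lambda_0}^{\gamma}$ acts on the first factor, so $A$ commutes with $Q_{x,\Delta\setminus\Lambda_0}$, and by gauge invariance $A$ commutes with $\opfocknumber_{\Lambda_0}$. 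The inner part is not controlled by commutation, since $A$ need not commute with the individual $N_y$, $y\in\Lambda_0$. Instead we dominate it: $0<\fun{q_x}{y}\leq1$ gives $0\leq Q_{x,\Lambda_0}\leq\opfocknumber_{\Lambda_0}$. All these number operators commute, so the functional calculus on their joint spectral resolution yields
$$\napiernum^{cQ_{x,\Delta}}\leq\napiernum^{cQ'_x},\qquad Q'_x=\opfocknumber_{\Lambda_0}+Q_{x,\Delta\setminus\Lambda_0}.$$

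Next we move $A$ through the dominating exponential. Since $A$ commutes with both summands of $Q'_x$, it commutes with $\napiernum^{cQ'_x/2}$ on finite-particle vectors. This gives the form identity
$$A\napiernum^{cQ'_x}\faadj{A}=\napiernum^{cQ'_x/2}A\faadj{A}\napiernum^{cQ'_x/2}\leq\norm{A}^2\napiernum^{cQ'_x}.$$
Combining with $X\leq Y\Rightarrow AXA^{*}\leq AYA^{*}$ gives $A\napiernum^{cQ_{x,\Delta}}\faadj{A}\leq\norm{A}^2\napiernum^{cQ'_x}$. The same argument with $A$ and $\faadj{A}$ interchanged, noting that $\faadj{A}$ is again in the gauge-invariant algebra, gives the bound for $\faadj{A}\napiernum^{cQ_{x,\Delta}}A$.

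The final step is $\fun{\oastate[\overline{\psi}_{\Lambda}]}{\napiernum^{cQ'_x}}\leq M_a$. Here $cQ'_x=\sum_{y\in\Delta}c\,q'(y)N_y$ with $q'(y)=1$ on $\Lambda_0$ and $q'(y)=\fun{q_x}{y}$ off $\Lambda_0$. The weights satisfy $\sum_y c\,q'(y)\leq c\rbk{\abscard{\Lambda_0}+S_b}\leq a$. I will rerun the generalized H\"older argument from the proof of the initial weighted moment lemma on the joint occupation distribution $\nu_{\Lambda,\Delta}$. I use exponents $a/(c\,q'(y))$, supplemented by the constant function $1$ if the weight sum is below $a$. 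Each factor is $\fun{\oastate[\overline{\psi}_{\Lambda}]}{\napiernum^{aN_y}}\leq M_a$, by Theorem \ref{thm:fock-all-temperature-local-number} for $y\in\Lambda$ and by the vacuum factor for $y\notin\Lambda$. The product is therefore at most $M_a^{\sum_y c\,q'(y)/a}\leq M_a$.

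The main obstacle I expect is the rigorous handling of unbounded operators. The operator inequalities must be transferred to the normal state $\oastate[\overline{\psi}_{\Lambda}]$, whose density is not confined to finitely many particle sectors. I would first replace the exponentials by bounded truncations $\napiernum^{cQ}\fndef{\setone{\opfocknumber_{\Delta}\leq n}}$, or equivalently by $\min\{\napiernum^{cQ},n\}$ as functions on the joint spectrum. Because $A$ commutes with $\opfocknumber_{\Lambda_0}$ and with the outer occupations, $A$ commutes with these bounded cutoffs of $Q'_x$; for $Q_{x,\Delta}$ itself only monotone domination is needed. I would then prove the inequalities for the truncations, pass to the limit by monotone convergence in the trace against $\overline{\rho}_{\Lambda}$, and use the finite moment bound above to guarantee the limit is finite. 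A secondary point is that for $\Delta=\Gamma$ the outer factor is infinite; since $\overline{\rho}_{\Lambda}$ is vacuum outside $\Lambda$, those occupations vanish and the same truncation argument applies.
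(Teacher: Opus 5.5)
Your proposal is correct and follows essentially the paper's own route. Both arguments split $Q_{x,\Delta}=Q_{x,\Lambda_0}+Q_{x,\Delta\setminus\Lambda_0}$ and dominate the inner part by $N_{\Lambda_0}$. Both then use that the gauge-invariant $A$ commutes with $N_{\Lambda_0}$ and with the outer occupations to obtain the bound $\|A\|^2\exp\bigl(cN_{\Lambda_0}+cQ_{x,\Delta\setminus\Lambda_0}\bigr)$, and finish with generalized H\"older with total weight at most $c(|\Lambda_0|+S_b)\leq a$. Your truncation and monotone-convergence step only makes explicit the unbounded-operator bookkeeping that the paper leaves implicit.
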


\begin{proof}
Gauge invariance gives $\commutator{A}{N_{\Lambda_0}}=0$.
Write $Q_{x,\Delta}=Q_{x,\Lambda_0}+Q_{x,\Delta\setminus \Lambda_0}$.
For $0<c\leq a/(S_b+\abscard{\Lambda_0})$, the operator inequalities
$Q_{x,\Lambda_0}\leq N_{\Lambda_0}$ and
$$\begin{aligned}
A\napiernum^{cQ_{x,\Lambda_0}}\faadj{A}
\leq
A\napiernum^{cN_{\Lambda_0}}\faadj{A}
\leq
\norm{A}^{2}\napiernum^{cN_{\Lambda_0}}
\end{aligned}$$
hold because $A$ commutes with $N_{\Lambda_0}$.
Since $A$ also commutes with $Q_{x,\Delta\setminus \Lambda_0}$, generalized H\"older and
\eqref{eq:fock-exponential-local-number-bound} can be applied explicitly.
The operator bound is
$$\begin{aligned}
\fun{\oastate[\overline{\psi}_{\Lambda}]}{
A\napiernum^{cQ_{x,\Delta}}\faadj{A}}
& =
\fun{\oastate[\overline{\psi}_{\Lambda}]}{
A\napiernum^{cQ_{x,\Lambda_0}}\faadj{A}
\napiernum^{cQ_{x,\Delta\setminus \Lambda_0}}}
\\
& \leq
\norm{A}^{2}
\fun{\oastate[\overline{\psi}_{\Lambda}]}{
\fnexp{
c\sum_{y\in \Lambda_0}N_y
+
c\sum_{y\in \Delta\setminus \Lambda_0}\fun{q_x}{y}N_y}}.
\end{aligned}$$
The sum of the coefficients of the number operators in the last exponential is bounded by
$$\begin{aligned}
c\abscard{\Lambda_0}
+
c\sum_{y\in \Delta\setminus \Lambda_0}\fun{q_x}{y}
\leq
c\rbk{\abscard{\Lambda_0}+S_b}
\leq
a.
\end{aligned}$$
Generalized H\"older for the joint spectral measure of the commuting number operators gives
$$\begin{aligned}
&
\fun{\oastate[\overline{\psi}_{\Lambda}]}{
\fnexp{
c\sum_{y\in \Lambda_0}N_y
+
c\sum_{y\in \Delta\setminus \Lambda_0}\fun{q_x}{y}N_y}}
\leq
\prod_{y\in \Lambda_0}
\rbk{
\fun{\oastate[\overline{\psi}_{\Lambda}]}{\napiernum^{aN_y}}}^{c/a}
\prod_{y\in \Delta\setminus \Lambda_0}
\rbk{
\fun{\oastate[\overline{\psi}_{\Lambda}]}{\napiernum^{aN_y}}}
^{c\fun{q_x}{y}/a}
\\
& \leq
M_a^{
\frac{1}{a}
\rbk{c\abscard{\Lambda_0}
+
c\sum_{y\in \Delta\setminus \Lambda_0}\fun{q_x}{y}}}
\leq
M_a.
\end{aligned}$$
Combining the operator bound with the generalized Hölder estimate proves the first term in
\eqref{eq:fock-inserted-exponential-moment-bound}.
Replacing $A$ by $\faadj{A}$ proves the second term.
\end{proof}

\begin{lem}[Exponential propagation and cutoff removal]\label{lem:fock-exponential-cutoff-removal}
Assume \eqref{eq:fock-path-chemical-potential-gap}, and suppose that $v$ is pointwise nonnegative and has
finite range.
Fix $\beta>0$, $a \in \openinterval{0}{\beta\varepsilon}$, $T > 0$, and a finite $\Lambda_0 \Subset \Gamma$.
For a finite $\Lambda\Subset\Gamma$, choose $\Delta\in\setone{\Lambda,\Gamma}$ and a finite $\Delta_0$ satisfying
$\Lambda_0\subset \Delta_0\Subset \Delta$.
For $\lambda\geq1$, define the local particle-number cutoff
$$\begin{aligned}
P_{\Delta_0,\lambda}
& =
\prod_{y \in \Delta_0}\fun{\fndef{\closedinterval{0}{\lambda}}}{N_y}.
\end{aligned}$$
For $t\in\fldreal$ and $C\in\opspbddlin{\sphilb{F}_{\Delta}}$, define the corresponding cutoff dynamics by
$$\begin{aligned}
\fun{\widetilde{\alpha}_{\Delta,t}^{\Delta_0,\lambda}}{C}
& =
\napiernum^{\imunit t P_{\Delta_0,\lambda}H_{\Delta}P_{\Delta_0,\lambda}}
C
\napiernum^{-\imunit t P_{\Delta_0,\lambda}H_{\Delta}P_{\Delta_0,\lambda}}.
\end{aligned}$$
There exist constants $a_{\Lambda_0,T}>0$ and $C_{\Lambda_0,T,a}<\infty$, independent of
$\Lambda$, $\Delta$, $\Delta_0$, and $\lambda$, for which the following two conclusions hold.
\begin{enumerate}
\item
The propagated exponential moments satisfy
\begin{equation}
\begin{aligned}
\sup_{x \in \Gamma}
\sup_{\abs{t} \leq T}
\fun{\oastate[\overline{\psi}_{\Lambda}]}{
\fun{\alpha_{\Delta,t}}{\napiernum^{a_{\Lambda_0,T}N_x}}}
& \leq
C_{\Lambda_0,T,a}.
\end{aligned}
\label{eq:fock-exponential-moment-propagation}
\end{equation}
For every $A\in\fun{\oaresolventalgebra}{\Lambda_0}^{\gamma}$, the inserted propagated moments satisfy
\begin{equation}
\begin{aligned}
\sup_{x\in\Gamma}
\sup_{\abs{t}\leq T}
\max\setone{
\fun{\oastate[\overline{\psi}_{\Lambda}]}{
A\fun{\alpha_{\Delta,t}}{\napiernum^{a_{\Lambda_0,T}N_x}}\faadj{A}},
\fun{\oastate[\overline{\psi}_{\Lambda}]}{
\faadj{A}\fun{\alpha_{\Delta,t}}{\napiernum^{a_{\Lambda_0,T}N_x}}A}}
& \leq
C_{\Lambda_0,T,a}\norm{A}^{2}.
\end{aligned}
\label{eq:fock-propagated-inserted-exponential-moment-bound}
\end{equation}

\item
For every
$A \in \fun{\oaresolventalgebra}{\Lambda_0}^{\gamma}$, and every $B \in \oa{B}$, the two cutoff errors satisfy
\begin{equation}
\begin{aligned}
\sup_{\abs{t} \leq T}
\abs{
\fun{\oastate[\overline{\psi}_{\Lambda}]}{
\rbk{
\fun{\alpha_{\Delta,t}}{A}
-
\fun{\widetilde{\alpha}_{\Delta,t}^{\Delta_0,\lambda}}{P_{\Delta_0,\lambda}AP_{\Delta_0,\lambda}}}B}}
& \leq
C_{\Lambda_0,T,a}\abscard{\Delta_0}\lambda
\napiernum^{-a_{\Lambda_0,T}\lambda}\norm{A}\norm{B},
\\
\sup_{\abs{t} \leq T}
\abs{
\fun{\oastate[\overline{\psi}_{\Lambda}]}{
B\rbk{
\fun{\alpha_{\Delta,t}}{A}
-
\fun{\widetilde{\alpha}_{\Delta,t}^{\Delta_0,\lambda}}{P_{\Delta_0,\lambda}AP_{\Delta_0,\lambda}}}}}
& \leq
C_{\Lambda_0,T,a}\abscard{\Delta_0}\lambda
\napiernum^{-a_{\Lambda_0,T}\lambda}\norm{A}\norm{B}.
\end{aligned}
\label{eq:fock-exponential-cutoff-removal}
\end{equation}
\end{enumerate}
\end{lem}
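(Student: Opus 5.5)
The plan is a Grönwall argument in the time variable for the weighted exponential moment $\fun{\oastate[\overline{\psi}_{\Lambda}]}{\fun{\alpha_{\Delta,t}}{\napiernum^{\theta(t)Q_{x,\Delta}}}}$, with an exponent $\theta(t)$ that decreases linearly in $\abs{t}$. The decrease absorbs the commutator produced by the weighted hopping estimate. Set $\theta_0 = a/(S_b+\abscard{\Lambda_0})$, so that both the initial weighted moment bound \eqref{eq:fock-initial-weighted-exponential-moment} and the inserted bound \eqref{eq:fock-inserted-exponential-moment-bound} apply at exponent $\theta_0$. Choose $\theta(t)=\theta_0\napiernum^{-c_{a,b}\abs{t}}$, or a linear lower envelope of it.

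On each particle sector $H_{\Delta}$ and $Q_{x,\Delta}$ are bounded, since the state $\overline{\psi}_{\Lambda}$ has finite exponential particle moments on $\Lambda$ and vacuum outside. Hence $f(t)=\fun{\oastate[\overline{\psi}_{\Lambda}]}{\napiernum^{\imunit tH_\Delta}\napiernum^{\theta(t)Q_{x,\Delta}}\napiernum^{-\imunit tH_\Delta}}$ is differentiable sectorwise. Its derivative is
\[
\imunit\fun{\oastate[\overline{\psi}_{\Lambda}]}{\fun{\alpha_{\Delta,t}}{\commutator{H_\Delta}{\napiernum^{\theta Q}}}}+\theta'(t)\fun{\oastate[\overline{\psi}_{\Lambda}]}{\fun{\alpha_{\Delta,t}}{Q\napiernum^{\theta Q}}}.
\]
The weighted hopping commutator lemma \eqref{eq:fock-weighted-exponential-commutator} bounds the first term by $c_{a,b}\theta\,\fun{\oastate[\overline{\psi}_{\Lambda}]}{\fun{\alpha_{\Delta,t}}{Q\napiernum^{\theta Q}}}$. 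With $\theta'=-c_{a,b}\theta$ for $t\geq0$, and symmetrically for $t\le0$ using the $\pm$ in the lemma, this gives $f'\leq 0$. Therefore $f(t)\leq f(0)\leq M_a$.

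Since $q_x(x)=1$, we have $\napiernum^{\theta N_x}\leq\napiernum^{\theta Q_{x,\Delta}}$, and these operators commute, so the inequality survives conjugation by the unitary. This proves \eqref{eq:fock-exponential-moment-propagation} with $a_{\Lambda_0,T}=\theta_0\napiernum^{-c_{a,b}T}$ and $C=M_a$. The same computation carried out in the positive functional $C\mapsto\fun{\oastate[\overline{\psi}_{\Lambda}]}{AC\faadj{A}}$ gives \eqref{eq:fock-propagated-inserted-exponential-moment-bound}. This works because the commutator lemma is a quadratic-form bound valid for every vector $\napiernum^{-\imunit tH}\faadj{A}\Xi$, and its starting value is \eqref{eq:fock-inserted-exponential-moment-bound}. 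The technical care lies in justifying differentiation through the infinite sector sum. I would do this by monotone truncation $P_{\Delta,\leq n}$, which commutes with everything involved, and then letting $n\to\infty$.

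For part 2, use the Duhamel formula. Write $\widetilde H=P_{\Delta_0,\lambda}H_\Delta P_{\Delta_0,\lambda}$ and $P=P_{\Delta_0,\lambda}$. Split the difference into two pieces: $\fun{\alpha_{\Delta,t}}{A}-\fun{\alpha_{\Delta,t}}{PAP}$, and $\fun{\alpha_{\Delta,t}}{PAP}-\fun{\widetilde\alpha_{t}}{PAP}$.

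The first piece equals $\fun{\alpha_{\Delta,t}}{(1-P)A}+\fun{\alpha_{\Delta,t}}{PA(1-P)}$. Its pairing with $B$ is controlled by Cauchy--Schwarz through $\fun{\oastate[\overline{\psi}_{\Lambda}]}{\fun{\alpha_{\Delta,t}}{1-P}}$, together with the inserted version for the term carrying $A$ on the left. Then apply the union bound $1-P\leq\sum_{y\in\Delta_0}\fndef{(\lambda,\infty)}(N_y)\leq\sum_y\napiernum^{-a_{\Lambda_0,T}\lambda}\napiernum^{a_{\Lambda_0,T}N_y}$. This yields $\abscard{\Delta_0}\napiernum^{-a_{\Lambda_0,T}\lambda}$. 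Taking square roots is harmless after halving the constant $a_{\Lambda_0,T}$.

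The second piece is $\imunit\int_0^t\napiernum^{\imunit sH}\bigl(\ldots\bigr)\,ds$. Its integrand contains $H_\Delta-\widetilde H=(1-P)H_\Delta+PH_\Delta(1-P)$ acting against the evolved $\widetilde\alpha$-observable. I expect this to be the main obstacle. One needs $\norm{(1-P)H_\Delta P\cdots}$-type control, and the hopping between $\Delta_0$ and its complement is unbounded. I would use the fact that $H_\Delta$ changes each $N_y$ by at most one. Consequently $(1-P)H_\Delta P=(1-P)H_\Delta P\fndef{\{\text{some }N_y\in[\lambda-1,\lambda]\}}$, and the hopping operator norm there is at most $C\lambda$ by $\norm{a_y\fndef{[0,\lambda]}(N_y)}\leq\sqrt\lambda$. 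This gives a factor $\lambda$ times a boundary-occupation projection. Its expectation under the propagated state is again bounded by $\abscard{\Delta_0}M\napiernum^{-a\lambda}$ via part 1, applied with $B$ or $A$ inserted. This produces the claimed factor $\abscard{\Delta_0}\lambda\napiernum^{-a_{\Lambda_0,T}\lambda}$ after integrating over $\abs{s}\leq T$ and enlarging the constant. The second inequality in \eqref{eq:fock-exponential-cutoff-removal} follows by taking adjoints, replacing $A,B$ by $\faadj{A},\faadj{B}$.
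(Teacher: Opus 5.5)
Your Part 1 follows the paper's route: a decaying exponent $\theta(t)=\theta_0\napiernum^{-c_{a,b}|t|}$, monotonicity from the weighted commutator bound, and $a_{\Lambda_0,T}=\theta_0\napiernum^{-c_{a,b}T}$. Your initial cutoff split also matches it: union bound plus Cauchy--Schwarz. Deducing the second inequality of \eqref{eq:fock-exponential-cutoff-removal} from the first by taking adjoints is correct, and simpler than what the paper does. Two small repairs are needed there.
\begin{itemize}
\item The bound $\napiernum^{\theta N_x}\leq\napiernum^{\theta Q_{x,\Delta}}$ requires $x\in\Delta$. For $\Delta=\Lambda$ and $x\notin\Lambda$, argue separately that $\napiernum^{aN_x}$ commutes with $H_{\Lambda}$ and has vacuum expectation $1$.
\item The term $PA(1-P)$ produces $\overline{\psi}_{\Lambda}\circ\alpha_{\Delta,t}(A(1-P)A^{*})$, with $A$ inside the dynamics. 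That is not what \eqref{eq:fock-propagated-inserted-exponential-moment-bound} controls. Use gauge invariance instead: $A$ commutes with $N_y$ for $y\notin\Lambda_0$, and with $N_{\Lambda_0}\geq N_y$ for $y\in\Lambda_0$.
\end{itemize}

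The genuine gap is in the Duhamel step, exactly at the obstacle you flagged.

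\emph{Edges leaving $\Delta_0$.} Take $u\sim v$ with $u\in\Delta_0$ and $v\in\Delta\setminus\Delta_0$, and set $E_{u,\lambda}=\mathbf{1}_{(\lambda,\infty)}(N_u)$. Then $(1-P)a_u^{*}a_vP=E_{u,\lambda}a_u^{*}a_vP$. The cutoff bounds $a_u^{*}$, but $P$ places no restriction on $N_v$, so $a_v$ remains unbounded. Localizing to ``some $N_y\in[\lambda-1,\lambda]$'' does not touch this uncut endpoint. So your claim that the hopping has norm at most $C\lambda$ there is false, and the Cauchy--Schwarz step that puts the hopping into a bounded factor fails for every such edge.

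The missing idea is to carry the factor $(N_v+1)^{1/2}$ over to the state side; the paper records $\norm{P_{\{u\},\lambda}a_u^{*}a_v(N_v+1)^{-1/2}}\leq\sqrt{\lambda}$ for this purpose. Write $E_{u,\lambda}a_u^{*}a_vP=a_vE_{u,\lambda}\,a_u^{*}P$. Cauchy--Schwarz then bounds the integrand by $\sqrt{\lambda+1}\,\norm{A}\norm{B}\bigl(\overline{\psi}_{\Lambda}\circ\alpha_{\Delta,t-s}(E_{u,\lambda}(N_v+1))\bigr)^{1/2}$. Separate $E_{u,\lambda}$ from $N_v+1$ by Cauchy--Schwarz in the commuting spectral calculus. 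Then apply \eqref{eq:fock-exponential-moment-propagation} at the site $v\notin\Delta_0$; this is why that bound must be uniform in $x\in\Gamma$.

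\emph{No bound with $B$ inserted.} Part 1 gives nothing with $B$ inserted. Since $B\in\mathcal{B}$ is arbitrary, it may enter only through $\norm{B}$. That works for the commutator term $(1-P)H_{\Delta}P\,\widetilde{\alpha}_{s}(C)$. For the other term, $\widetilde{\alpha}_{s}(C)\,PH_{\Delta}(1-P)$, the tail projection must then be evaluated in the functional
\[
Y\mapsto\overline{\psi}_{\Lambda}\circ\alpha_{\Delta,t-s}\bigl(\widetilde{\alpha}_{s}(C)\,Y\,\widetilde{\alpha}_{s}(C)^{*}\bigr).
\]
This is not covered by \eqref{eq:fock-propagated-inserted-exponential-moment-bound}, where $A$ sits un-evolved outside $\alpha_{\Delta,t}$. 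The paper's appeal to that bound at this point is equally loose. One way to close it: the weighted commutator estimate still holds with $H_{\Delta}$ replaced by $PH_{\Delta}P$, because $P$ commutes with $Q_{x,\Delta}$. So the monotonicity argument can be run along the composite dynamics, with the initial value controlled by gauge invariance of $A$.
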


\begin{proof}
Fix $b>0$ and use the weights in Definition \ref{def:fock-exponential-spatial-weights}.
Let $M_a$ and $c_{a,b}$ be defined by
\eqref{eq:fock-weighted-moment-constant} and \eqref{eq:fock-weighted-commutator-constant}, respectively.
The initial estimate is \eqref{eq:fock-initial-weighted-exponential-moment}.
For $t \geq 0$, put $\fun{\theta}{t}=aS_b^{-1}\napiernum^{-c_{a,b}t}$.
The propagated exponential moment is
$$\begin{aligned}
\fun{G_{x,\Delta}}{t}
& =
\fun{\oastate[\overline{\psi}_{\Lambda}] \circ \alpha_{\Delta,t}}
{\napiernum^{\fun{\theta}{t}Q_{x,\Delta}}}.
\end{aligned}$$
For $t\geq0$, one has $0<\fun{\theta}{t}\leq a/S_b$ and
$\fun{\theta'}{t}=-c_{a,b}\fun{\theta}{t}$.
The positive-sign estimate in \eqref{eq:fock-weighted-exponential-commutator} therefore gives the
quadratic-form inequality
$$\begin{aligned}
\fun{\theta'}{t}Q_{x,\Delta}\napiernum^{\fun{\theta}{t}Q_{x,\Delta}}
+
\imunit\commutator{H_{\Delta}}{\napiernum^{\fun{\theta}{t}Q_{x,\Delta}}}
& \leq
0.
\end{aligned}$$
Its derivative is consequently nonpositive:
$$\begin{aligned}
\fun{G_{x,\Delta}'}{t}
& =
\fun{\theta'}{t}
\fun{\oastate[\overline{\psi}_{\Lambda}] \circ \alpha_{\Delta,t}}
{Q_{x,\Delta}\napiernum^{\fun{\theta}{t}Q_{x,\Delta}}}
+
\fun{\oastate[\overline{\psi}_{\Lambda}] \circ \alpha_{\Delta,t}}
{\imunit\commutator{H_{\Delta}}
{\napiernum^{\fun{\theta}{t}Q_{x,\Delta}}}}
\leq
0.
\end{aligned}$$
Integration from $0$ to $t$ gives
$$\begin{aligned}
\fun{G_{x,\Delta}}{t}
\leq
\fun{G_{x,\Delta}}{0}
=
\fun{\oastate[\overline{\psi}_{\Lambda}]}
{\napiernum^{\frac{a}{S_b}Q_{x,\Delta}}}
\leq
M_a,
\quad
0\leq t\leq T.
\end{aligned}$$
Replacing $H_{\Delta}$ by $-H_{\Delta}$ in the commutator estimate gives the same bound for $-T\leq t\leq0$.
Define
$a_{\Lambda_0,T}=a\rbk{S_b+\abscard{\Lambda_0}}^{-1}\napiernum^{-c_{a,b}T}$.
The constant $c_{a,b}$ in \eqref{eq:fock-weighted-commutator-constant} is positive.
For $\abs{t}\leq T$, the definition of $\theta$ gives the coefficient comparison
$$\begin{aligned}
a_{\Lambda_0,T}
& =
\frac{a}{S_b+\abscard{\Lambda_0}}
\napiernum^{-c_{a,b}T}
\leq
\frac{a}{S_b}
\napiernum^{-c_{a,b}\abs{t}}
=
\fun{\theta}{\abs{t}}.
\end{aligned}$$
For $x\in\Delta$, the definition of $Q_{x,\Delta}$ in
\eqref{eq:fock-exponential-spatial-weights} and
$\fun{q_x}{x}=\napiernum^{-b\fun{\topdist}{x,x}}=1$ give
$$\begin{aligned}
Q_{x,\Delta}
& =
N_x+
\sum_{y\in\Delta\setminus\setone{x}}
\fun{q_x}{y}N_y
\geq
N_x.
\end{aligned}$$
Combining the coefficient and operator comparisons gives
$$\begin{aligned}
a_{\Lambda_0,T}N_x
& \leq
\fun{\theta}{\abs{t}}N_x
\leq
\fun{\theta}{\abs{t}}Q_{x,\Delta}.
\end{aligned}$$
The local number operators commute, so the joint spectral calculus gives
$$\begin{aligned}
\napiernum^{a_{\Lambda_0,T}N_x}
& \leq
\napiernum^{\fun{\theta}{\abs{t}}Q_{x,\Delta}}
\end{aligned}$$
for $x\in\Delta$ and $\abs{t}\leq T$.
For $x\notin\Delta$, one necessarily has $\Delta=\Lambda$ and
$x\in\Lambda^{\mathrm{c}}$.
The Hamiltonian $H_{\Lambda}$ acts only on $\sphilb{F}_{\Lambda}$, whereas $N_x$ acts on
$\sphilb{F}_{\Lambda^{\mathrm{c}}}$, and $N_x\Omega_{\Lambda^{\mathrm{c}}}=0$.
Consequently,
$$\begin{aligned}
\fun{\alpha_{\Lambda,t}}{\napiernum^{a_{\Lambda_0,T}N_x}}
& =
\napiernum^{a_{\Lambda_0,T}N_x},
&
\napiernum^{a_{\Lambda_0,T}N_x}\Omega_{\Lambda^{\mathrm{c}}}
& =
\Omega_{\Lambda^{\mathrm{c}}}.
\end{aligned}$$
Using the density matrix in \eqref{eq:fock-vacuum-extended-gibbs-state} therefore gives
$$\begin{aligned}
\fun{\oastate[\overline{\psi}_{\Lambda}]}{
\fun{\alpha_{\Lambda,t}}{\napiernum^{a_{\Lambda_0,T}N_x}}}
=
\sqfun{\trace_{\sphilb{F}_{\Lambda}}}{
\frac{\napiernum^{-\beta K_{\Lambda,\smchemicalpotential_{\Lambda}}}}
{Z_{\Lambda,\beta}}}
\bkt{\Omega_{\Lambda^{\mathrm{c}}}}{
\napiernum^{a_{\Lambda_0,T}N_x}\Omega_{\Lambda^{\mathrm{c}}}}
_{\sphilb{F}_{\Lambda^{\mathrm{c}}}}
=
1.
\end{aligned}$$
The two cases $x\in\Delta$ and $x\notin\Delta$ prove
\eqref{eq:fock-exponential-moment-propagation}.

Define
$$\begin{aligned}
\fun{\theta_{\Lambda_0}}{t}
& =
\frac{a}{S_b+\abscard{\Lambda_0}}\napiernum^{-c_{a,b}t},
\quad
t\geq0.
\end{aligned}$$
For $A\in\fun{\oaresolventalgebra}{\Lambda_0}^{\gamma}$,
Equation \eqref{eq:fock-inserted-exponential-moment-bound} gives the two initial bounds at
$c=\fun{\theta_{\Lambda_0}}{0}$.
Repeating the derivative calculation with $\theta_{\Lambda_0}$ in place of $\theta$ and using
both signs in \eqref{eq:fock-weighted-exponential-commutator} gives, for positive and negative times,
$$\begin{aligned}
\sup_{x\in\Gamma}
\sup_{\abs{t}\leq T}
\max\setone{
\fun{\oastate[\overline{\psi}_{\Lambda}]}{
A\fun{\alpha_{\Delta,t}}{\napiernum^{a_{\Lambda_0,T}N_x}}\faadj{A}},
\fun{\oastate[\overline{\psi}_{\Lambda}]}{
\faadj{A}\fun{\alpha_{\Delta,t}}{\napiernum^{a_{\Lambda_0,T}N_x}}A}}
& \leq
M_a\norm{A}^{2}.
\end{aligned}$$
After enlarging $C_{\Lambda_0,T,a}$ so that $C_{\Lambda_0,T,a}\geq M_a$,
this proves \eqref{eq:fock-propagated-inserted-exponential-moment-bound}.
The spectral inequality
$$\begin{aligned}
1-P_{\Delta_0,\lambda}
& \leq
\sum_{y \in \Delta_0}\fun{\fndef{\openinterval{\lambda}{\infty}}}{N_y}
\leq
\napiernum^{-a_{\Lambda_0,T}\lambda}
\sum_{y \in \Delta_0}\napiernum^{a_{\Lambda_0,T}N_y}
\end{aligned}$$
and \eqref{eq:fock-exponential-moment-propagation} give the propagated cutoff tail,
for $\abs{t}\leq T$,
$$\begin{aligned}
&
\fun{\oastate[\overline{\psi}_{\Lambda}]}{
\fun{\alpha_{\Delta,t}}{1-P_{\Delta_0,\lambda}}}
\leq
\napiernum^{-a_{\Lambda_0,T}\lambda}
\sum_{y\in \Delta_0}
\fun{\oastate[\overline{\psi}_{\Lambda}]}{
\fun{\alpha_{\Delta,t}}{\napiernum^{a_{\Lambda_0,T}N_y}}}
\leq
C_{\Lambda_0,T,a}\abscard{\Delta_0}
\napiernum^{-a_{\Lambda_0,T}\lambda}.
\end{aligned}$$
Let $C=P_{\Delta_0,\lambda}AP_{\Delta_0,\lambda}$.
The initial cutoff error decomposes as
$$\begin{aligned}
A-C
& =
\rbk{1-P_{\Delta_0,\lambda}}A
+
P_{\Delta_0,\lambda}A\rbk{1-P_{\Delta_0,\lambda}}.
\end{aligned}$$
Cauchy--Schwarz for the two positive functionals obtained by inserting $B$ on the right or on the left gives
\begin{equation}
\begin{aligned}
\abs{
\fun{\oastate[\overline{\psi}_{\Lambda}]}{
\fun{\alpha_{\Delta,t}}{
A-C}B}}
& \leq
C_{\Lambda_0,T,a}
\abscard{\Delta_0}
\napiernum^{-\frac{a_{\Lambda_0,T}\lambda}{2}}
\norm{A}\norm{B},
\\
\abs{
\fun{\oastate[\overline{\psi}_{\Lambda}]}{
B\fun{\alpha_{\Delta,t}}{
A-C}}}
& \leq
C_{\Lambda_0,T,a}
\abscard{\Delta_0}
\napiernum^{-\frac{a_{\Lambda_0,T}\lambda}{2}}
\norm{A}\norm{B}.
\end{aligned}
\label{eq:fock-initial-cutoff-error-bound}
\end{equation}

Let $K_{\Delta_0,\lambda}=P_{\Delta_0,\lambda}H_{\Delta}P_{\Delta_0,\lambda}$.
Differentiation of
$\fun{\alpha_{\Delta,t-s}\circ\widetilde{\alpha}_{\Delta,s}^{\Delta_0,\lambda}}{C}$ gives the Duhamel identity
\begin{equation}
\begin{aligned}
\fun{\alpha_{\Delta,t}}{C}
-
\fun{\widetilde{\alpha}_{\Delta,t}^{\Delta_0,\lambda}}{C}
& =
\imunit
\int_{0}^{t}
\fun{\alpha_{\Delta,t-s}}{
\commutator{
H_{\Delta}-K_{\Delta_0,\lambda}}{
\fun{\widetilde{\alpha}_{\Delta,s}^{\Delta_0,\lambda}}{C}}}
\opdmsr{s}
\end{aligned}
\label{eq:fock-cutoff-duhamel-identity}
\end{equation}
for $t\geq0$, with the orientation of the integral reversed for $t<0$.
Since $C=P_{\Delta_0,\lambda}CP_{\Delta_0,\lambda}$, every nonzero term in the commutator contains a factor
$1-P_{\Delta_0,\lambda}$ next to an oriented hopping operator $\faadj{a_u}a_v$ with
$u\sim v$ and $\setone{u,v}\cap \Delta_0\neq\emptyset$.
For every finite $Z\Subset \Delta$, define
$$\begin{aligned}
P_{Z,\lambda}
& =
\prod_{z\in Z}\fun{\fndef{\closedinterval{0}{\lambda}}}{N_z}.
\end{aligned}$$
For these ordered pairs, the inequalities
$$\begin{aligned}
\norm{P_{\setone{u,v},\lambda}\faadj{a_u}a_v}
& \leq
\sqrt{\lambda\rbk{\lambda+1}}
\leq
\sqrt{2}\lambda,
&
\norm{P_{\setone{u},\lambda}\faadj{a_u}a_v\rbk{N_v+1}^{-1/2}}
& \leq
\sqrt{\lambda}
\end{aligned}$$
follow directly from \eqref{eq:fock-occupation-number-basis}.
Define the spectral projection
\begin{equation}
\begin{aligned}
E_{y,\lambda}
& =
\fun{\fndef{\openinterval{\lambda}{\infty}}}{N_y}.
\end{aligned}
\label{eq:fock-local-cutoff-tail-projection}
\end{equation}
Equation \eqref{eq:fock-local-cutoff-tail-projection} and
\eqref{eq:fock-exponential-moment-propagation} give, for $u\in \Delta_0$ and $\abs{t-s}\leq T$,
$$\begin{aligned}
E_{u,\lambda}
\leq
\napiernum^{-a_{\Lambda_0,T}\lambda}\napiernum^{a_{\Lambda_0,T}N_u},
\quad
\fun{\oastate[\overline{\psi}_{\Lambda}]}{
\fun{\alpha_{\Delta,t-s}}{E_{u,\lambda}}}
\leq
C_{\Lambda_0,T,a}\napiernum^{-a_{\Lambda_0,T}\lambda}.
\end{aligned}$$
Expansion over the ordered pairs $u\sim v$ with $\setone{u,v}\cap \Delta_0\neq\emptyset$ produces a factor
$E_{u,\lambda}$ or $E_{v,\lambda}$ at an endpoint belonging to $\Delta_0$.
Fix $u,v\in \Delta_0$ with $u\sim v$.
The occupation-number basis \eqref{eq:fock-occupation-number-basis} gives
$$\begin{aligned}
\norm{\faadj{a_u}a_vP_{\Delta_0,\lambda}}
& \leq
\sqrt{\lambda\rbk{\lambda+1}}
\leq
\sqrt{2}\lambda,
\quad
\lambda\geq1.
\end{aligned}$$
For any projection $R$ and bounded operator $Z$, Cauchy--Schwarz for the state gives
$$\begin{aligned}
\abs{\fun{\oastate[\overline{\psi}_{\Lambda}]}{RZ}}^2
& \leq
\fun{\oastate[\overline{\psi}_{\Lambda}]}{R}
\fun{\oastate[\overline{\psi}_{\Lambda}]}{\faadj{Z}Z}
\leq
\fun{\oastate[\overline{\psi}_{\Lambda}]}{R}\norm{Z}^2.
\end{aligned}$$
Apply this inequality with
$$\begin{aligned}
R
=
\fun{\alpha_{\Delta,t-s}}{E_{u,\lambda}},
\quad
Z
=
\fun{\alpha_{\Delta,t-s}}{
\faadj{a_u}a_vP_{\Delta_0,\lambda}
\fun{\widetilde{\alpha}_{\Delta,s}^{\Delta_0,\lambda}}{C}}B.
\end{aligned}$$
Since $\norm{C}\leq\norm{A}$ and
$\norm{\fun{\widetilde{\alpha}_{\Delta,s}^{\Delta_0,\lambda}}{C}}=\norm{C}$, one has
$\norm{Z}
\leq
\sqrt{2}\lambda\norm{A}\norm{B}$.
Therefore,
\begin{equation}
\begin{aligned}
&
\abs{\fun{\oastate[\overline{\psi}_{\Lambda}]}{
\fun{\alpha_{\Delta,t-s}}{
E_{u,\lambda}
\faadj{a_u}a_vP_{\Delta_0,\lambda}
\fun{\widetilde{\alpha}_{\Delta,s}^{\Delta_0,\lambda}}{C}}B}}
\leq
\sqrt{2}\lambda\norm{A}\norm{B}
\rbk{\fun{\oastate[\overline{\psi}_{\Lambda}]}{
\fun{\alpha_{\Delta,t-s}}{E_{u,\lambda}}}}^{1/2}
\\
& \leq
\sqrt{2C_{\Lambda_0,T,a}}\lambda
\napiernum^{-\frac{a_{\Lambda_0,T}\lambda}{2}}
\norm{A}\norm{B}.
\end{aligned}
\label{eq:fock-right-cutoff-integrand-bound}
\end{equation}
Applying Cauchy--Schwarz to the adjoint gives
\begin{equation}
\begin{aligned}
\abs{\fun{\oastate[\overline{\psi}_{\Lambda}]}{
B\fun{\alpha_{\Delta,t-s}}{
E_{u,\lambda}
\faadj{a_u}a_vP_{\Delta_0,\lambda}
\fun{\widetilde{\alpha}_{\Delta,s}^{\Delta_0,\lambda}}{C}}}}
\leq
\sqrt{2C_{\Lambda_0,T,a}}\lambda
\napiernum^{-\frac{a_{\Lambda_0,T}\lambda}{2}}
\norm{A}\norm{B}.
\end{aligned}
\label{eq:fock-left-cutoff-integrand-bound}
\end{equation}
If Cauchy--Schwarz produces either of the two positive functionals containing $A$ or $\faadj{A}$,
use \eqref{eq:fock-propagated-inserted-exponential-moment-bound} with $A$ or $\faadj{A}$, respectively.
At most $2C_h\abscard{\Delta_0}$ ordered pairs $u\sim v$ satisfy $\setone{u,v}\cap \Delta_0\neq\emptyset$ by
\eqref{eq:fock-uniform-degree-bound}.
Applying \eqref{eq:fock-right-cutoff-integrand-bound} to each commutator term in
\eqref{eq:fock-cutoff-duhamel-identity}, summing over the ordered pairs, integrating in $s$ from $0$ to $t$,
and adding the first initial cutoff error in \eqref{eq:fock-initial-cutoff-error-bound} gives
$$\begin{aligned}
\sup_{\abs{t}\leq T}
\abs{\fun{\oastate[\overline{\psi}_{\Lambda}]}{
\rbk{
\fun{\alpha_{\Delta,t}}{A}
-
\fun{\widetilde{\alpha}_{\Delta,t}^{\Delta_0,\lambda}}{
P_{\Delta_0,\lambda}AP_{\Delta_0,\lambda}}}B}}
& \leq
C_{\Lambda_0,T,a}\abscard{\Delta_0}\lambda
\napiernum^{-\frac{a_{\Lambda_0,T}\lambda}{2}}
\norm{A}\norm{B}.
\end{aligned}$$
Applying \eqref{eq:fock-left-cutoff-integrand-bound} to each commutator term in
\eqref{eq:fock-cutoff-duhamel-identity}, summing over the ordered pairs, integrating in $s$ from $0$ to $t$,
and adding the second initial cutoff error in \eqref{eq:fock-initial-cutoff-error-bound} gives
$$\begin{aligned}
\sup_{\abs{t}\leq T}
\abs{\fun{\oastate[\overline{\psi}_{\Lambda}]}{
B\rbk{
\fun{\alpha_{\Delta,t}}{A}
-
\fun{\widetilde{\alpha}_{\Delta,t}^{\Delta_0,\lambda}}{
P_{\Delta_0,\lambda}AP_{\Delta_0,\lambda}}}}}
& \leq
C_{\Lambda_0,T,a}\abscard{\Delta_0}\lambda
\napiernum^{-\frac{a_{\Lambda_0,T}\lambda}{2}}
\norm{A}\norm{B}.
\end{aligned}$$
Replacing $a_{\Lambda_0,T}/2$ by $a_{\Lambda_0,T}$ in the notation proves
\eqref{eq:fock-exponential-cutoff-removal}.
\end{proof}

\subsection{Dynamics and equilibrium without a local-number hypothesis}\label{dynamics-and-equilibrium-without-a-local-number-hypothesis}

The model assumptions alone now imply both the finite-volume approximation and the existence of an equilibrium state satisfying the KMS boundary relation for \(\alpha_{\Gamma}\) on the concrete Buchholz algebra.

\begin{thm}[All-temperature finite-volume approximation]\label{thm:fock-all-temperature-local-approximation}
Assume that $v$ is pointwise nonnegative and has finite range, and assume
\eqref{eq:fock-path-chemical-potential-gap}.
For every $\beta > 0$, finite $\Lambda_0 \Subset \Gamma$, and $T > 0$, the finite-volume Gibbs family defined by
\eqref{eq:fock-finite-volume-gibbs-state} satisfies
\begin{equation}
\begin{aligned}
\lim_{\Lambda \nearrow \Gamma}
\sup_{\substack{A \in \fun{\oaresolventalgebra}{\Lambda_0}^{\gamma},\ \norm{A} \leq 1\\
B \in \fun{\oaresolventalgebra}{\Lambda}^{\gamma},\ \norm{B} \leq 1}}
\sup_{\abs{t} \leq T}
\abs{
\fun{\oastate[\overline{\psi}_{\Lambda}]}{
\rbk{\fun{\alpha_{\Lambda,t}}{A} - \fun{\alpha_{\Gamma,t}}{A}} B}}
& =
0.
\end{aligned}
\label{eq:fock-all-temperature-thermodynamic-limit}
\end{equation}
Here $\alpha_{\Gamma,t}$ is the pre-existing automorphism of
Fact \ref{fact:dll-buchholz-automorphism}, and the convergence is uniform on compact time intervals.
No local particle-number estimate is assumed in this statement.
\end{thm}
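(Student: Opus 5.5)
The proof is a three-term comparison through the cutoff dynamics. Fix $\beta$, $\Lambda_0$, $T$, $\varepsilon'>0$, and apply Lemma \ref{lem:fock-exponential-cutoff-removal} twice: once with $\Delta=\Lambda$ and once with $\Delta=\Gamma$. In both applications we use the same auxiliary finite region $\Delta_0$ and the same occupation cutoff $\lambda$. For $\Lambda\supset\Delta_0$ we write
\[
\alpha_{\Lambda,t}(A)-\alpha_{\Gamma,t}(A)
=\bigl(\alpha_{\Lambda,t}(A)-\widetilde{\alpha}^{\Delta_0,\lambda}_{\Lambda,t}(C)\bigr)
+\bigl(\widetilde{\alpha}^{\Delta_0,\lambda}_{\Lambda,t}(C)-\widetilde{\alpha}^{\Delta_0,\lambda}_{\Gamma,t}(C)\bigr)
+\bigl(\widetilde{\alpha}^{\Delta_0,\lambda}_{\Gamma,t}(C)-\alpha_{\Gamma,t}(A)\bigr),
\]
with $C=P_{\Delta_0,\lambda}AP_{\Delta_0,\lambda}$. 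By \eqref{eq:fock-exponential-cutoff-removal}, the first and third terms, paired with $B$ in the state $\oastate[\overline{\psi}_{\Lambda}]$, are bounded by $C_{\Lambda_0,T,a}\abscard{\Delta_0}\lambda\napiernum^{-a_{\Lambda_0,T}\lambda}$, uniformly in $\Lambda$, $\norm{A},\norm{B}\le1$ and $\abs{t}\le T$. The constants $a_{\Lambda_0,T}$ and $C_{\Lambda_0,T,a}$ do not depend on $\Delta_0$ or $\lambda$. We will take $\Delta_0=\Lambda_0[R]$ with $R$ tied to $\lambda$ (for instance $R\sim\lambda$). Polynomial ball growth \eqref{eq:fock-polynomial-ball-volume-growth} gives $\abscard{\Delta_0}\le\sigma(\abscard{\Lambda_0}+R)^d$. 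Hence $\abscard{\Delta_0}\lambda\napiernum^{-a\lambda}\to0$ as $\lambda\to\infty$, and these two terms drop below $\varepsilon'/3$ for large $\lambda$.

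The middle term is a finite-speed propagation estimate for cutoff dynamics and is state independent. Let $K_{\Delta,\lambda}=P_{\Delta_0,\lambda}H_\Delta P_{\Delta_0,\lambda}$, so that $H_\Gamma-H_\Lambda$ consists of hopping and interaction terms touching $\Gamma\setminus\Lambda$. We will first show that $\widetilde{\alpha}^{\Delta_0,\lambda}_{\Delta,t}(C)$ is approximated in norm, up to an error of order $(c\lambda T)^{R}/R!$, by an operator supported in $\Lambda_0[R]$. To do this, expand in iterated commutators (a Dyson/shell expansion in the graph distance from $\Lambda_0$). Each hopping step inside $\Delta_0$ costs at most $\sqrt{2}\lambda$, because the projections $P_{\Delta_0,\lambda}$ bracket it, and the bounds $\norm{P\faadj{a_u}a_v}\le\sqrt2\lambda$ recorded in the proof of Lemma \ref{lem:fock-exponential-cutoff-removal} apply. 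The density interaction restricted to $P_{\Delta_0,\lambda}$ is bounded on $\Delta_0$ and has range $r$, so each step enlarges the support by at most $\max\{1,r\}$. The resulting combinatorics, with bounded degree and polynomial growth, gives an error of order $(c\lambda T)^{R}/R!$. Choosing $R=\lceil \kappa\lambda\rceil$ with $\kappa$ large makes this error small. The same bound then holds for $\Delta=\Lambda$ and $\Delta=\Gamma$ whenever $\Lambda\supset\Lambda_0[R+r+1]$, since the two generators coincide on the relevant region. Once $\lambda$ and $R$ are fixed, the middle term is below $\varepsilon'/3$ for all sufficiently large $\Lambda$, again uniformly in $A$, $B$ and $t$.

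The main obstacle is this middle step. The cutoff only controls sites in $\Delta_0$, while the terms of $H$ straddling $\partial\Delta_0$ are unbounded. I would handle it by making the commutator expansion stop at shell $R-1<\operatorname{dist}(\cdot,\Delta_0^{\mathrm c})$. Concretely, I would use the Duhamel/interaction-picture comparison between $K_{\Delta,\lambda}$ and its restriction to terms inside $\Lambda_0[R-1]$. Boundary terms then only enter after $R$ successive commutators, and every operator up to that point is sandwiched by $P_{\Delta_0,\lambda}$ and hence bounded. If a direct norm estimate fails at the unbounded boundary terms, I would fall back on the Cauchy--Schwarz argument already used in Lemma \ref{lem:fock-exponential-cutoff-removal}, with the propagated exponential moments \eqref{eq:fock-exponential-moment-propagation}. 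This moves the last unbounded factor into the state at cost $\napiernum^{-a\lambda/2}$. Finally, because $B\in\fun{\oaresolventalgebra}{\Lambda}^{\gamma}\subset\oa{B}$, the cutoff-removal estimates apply to $B$, and taking $\varepsilon'\to0$ proves \eqref{eq:fock-all-temperature-thermodynamic-limit}.
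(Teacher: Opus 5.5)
Your proposal is correct and is essentially the paper's proof. The paper also applies the exponential cutoff-removal lemma with a common $\Delta_0=\Delta_m$ and $\lambda$ on $\Delta=\Lambda$ and $\Delta=\Gamma$. It bridges the two cutoff dynamics by a state-independent shell Duhamel bound of size $(\kappa\lambda T)^{m+1}/(m+1)!$, with $\lambda\propto m$ so that both errors decay; the only difference is that the paper routes through the shell evolution $\alpha_{\Lambda_0[2mr],t}(A)$, which it reuses in the KMS corollary, whereas you compare the two cut dynamics directly in norm.

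The one point to tighten is the margin. Keep the comparison shell a full interaction range $r$ inside the cutoff region, as in the paper's choice $\Delta_m=\Lambda_0[(2m+1)r]$ against $\Lambda_0[2mr]$, rather than one step inside. Then every crossing term in the Duhamel generator $D_{\Delta,m,\lambda}$ is bracketed by $P_{\Delta_m,\lambda}$, and your state-dependent Cauchy--Schwarz fallback is never needed.
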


\begin{proof}
Fix $a \in \openinterval{0}{\beta\varepsilon}$.
For $m\in\semigrposint$, put
$$\begin{aligned}
\Delta_m
& =
\Lambda_0[(2m + 1)r],
&
P_{m,\lambda}
& =
P_{\Delta_m,\lambda},
&
\widetilde{\alpha}_{\Delta,t}^{m,\lambda}
& =
\widetilde{\alpha}_{\Delta,t}^{\Delta_m,\lambda}.
\end{aligned}$$
Equation \eqref{eq:fock-polynomial-ball-volume-growth} gives
$$\begin{aligned}
\abscard{\Delta_m}
& \leq
\sum_{x\in \Lambda_0}\abscard{x[(2m+1)r]}
\leq
\sigma\abscard{\Lambda_0}\rbk{(2m+1)r}^d
\leq
\sigma\abscard{\Lambda_0}\rbk{3r}^dm^d.
\end{aligned}$$
Set
$$\begin{aligned}
C_{\Lambda_0}
& =
\max\setone{1,2C_h}\sigma\abscard{\Lambda_0}\rbk{3r}^d.
\end{aligned}$$
Then $C_{\Lambda_0}m^d$ bounds both $\abscard{\Delta_m}$ and the number of ordered pairs $x\sim y$ having
at least one endpoint in $\Delta_m$, by \eqref{eq:fock-uniform-degree-bound}.
Lemma \ref{lem:fock-exponential-cutoff-removal} is the direct exponential substitute for polynomial-moment
propagation.
Apply its cutoff estimate \eqref{eq:fock-exponential-cutoff-removal} with $\Delta_0=\Delta_m$ on both
$\Delta=\Lambda$ and $\Delta=\Gamma$.
Since $\abscard{\Delta_m}\leq C_{\Lambda_0}m^d$, each of the two cutoff-removal errors is bounded by
\begin{equation}
\begin{aligned}
C_{\Lambda_0,T,a}m^d\lambda\napiernum^{-a_{\Lambda_0,T}\lambda}\norm{A}\norm{B}.
\end{aligned}
\label{eq:fock-direct-exponential-cutoff-error}
\end{equation}

For $\Delta\in\setone{\Lambda,\Gamma}$, equations (75)--(76) in the proof of
\cite[Lemma 4.10]{DeuchertLampartLemm001}, with $X=\Lambda_0$ and $P=P_{m,\lambda}$, become
\begin{equation}
\begin{aligned}
&
D_{\Delta,m,\lambda}
=
P_{m,\lambda}H_{\Delta}P_{m,\lambda}
-
P_{m,\lambda}H_{\Delta\setminus\Lambda_0[2 m r]}P_{m,\lambda}
-
P_{m,\lambda}H_{\Lambda_0[2 m r]}P_{m,\lambda},
\\ 
&
\fun{\widetilde{\alpha}_{\Delta,t}^{m,\lambda}}{P_{m,\lambda} A P_{m,\lambda}}
-
\fun{\widetilde{\alpha}_{\Lambda_0[2 m r],t}^{m,\lambda}}{P_{m,\lambda} A P_{m,\lambda}}
\\
& =
\imunit
\int_0^t
\fun{\widetilde{\alpha}_{\Delta,t-s}^{m,\lambda}}{
\commutator{D_{\Delta,m,\lambda}}{
\fun{\widetilde{\alpha}_{\Lambda_0[2 m r],s}^{m,\lambda}}{
P_{m,\lambda} A P_{m,\lambda}}}}
\opdmsr{s}.
\end{aligned}
\label{eq:fock-shell-duhamel-expansion}
\end{equation}
Finite interaction range makes the terms of orders at most $m$ in the iterated form of
\eqref{eq:fock-shell-duhamel-expansion} vanish.
The estimates and the ordered-time integral used for its order-$m+1$ term are
$$\begin{aligned}
\norm{P_{m,\lambda}\faadj{a_x}a_yP_{m,\lambda}}
& \leq
\sqrt{2}\lambda,
\\ 
\abscard{\Delta_m}
& \leq
C_{\Lambda_0}m^d,
\\ 
\int_{\set{(s_1,\ldots,s_{m+1})}{
0\leq s_{m+1}\leq\cdots\leq s_1\leq\abs{t}}}
\opdmsr{s_1}\cdots\opdmsr{s_{m+1}}
& =
\frac{\abs{t}^{m+1}}{\rbk{m+1}!}.
\end{aligned}$$
Equations (84)--(85) of \cite{DeuchertLampartLemm001} supply $\kappa>0$ such that the product of the $m+1$ transported
commutator maps in \eqref{eq:fock-shell-duhamel-expansion} is bounded by
$\rbk{\kappa\lambda}^{m+1}$.
Substitution of these three estimates into its order-$m+1$ Duhamel integral gives, for $\abs{t}\leq T$,
\begin{equation}
\begin{aligned}
&
\norm{
\fun{\widetilde{\alpha}_{\Delta,t}^{m,\lambda}}{P_{m,\lambda} A P_{m,\lambda}}
-
\fun{\widetilde{\alpha}_{\Lambda_0[2 m r],t}^{m,\lambda}}{P_{m,\lambda} A P_{m,\lambda}}}
\\
& \leq
\sqrt{2}C_{\Lambda_0}\lambda m^d\norm{A}
\int_{\set{(s_1,\ldots,s_{m+1})}{
0\leq s_{m+1}\leq\cdots\leq s_1\leq\abs{t}}}
\rbk{\kappa\lambda}^{m+1}
\opdmsr{s_1}\cdots\opdmsr{s_{m+1}}
\\ 
& =
\sqrt{2}C_{\Lambda_0}\lambda m^d\norm{A}
\rbk{\kappa\lambda}^{m+1}
\frac{\abs{t}^{m+1}}{\rbk{m+1}!}
\\
& \leq
\sqrt{2}C_{\Lambda_0}\lambda m^d\norm{A}
\frac{\rbk{\kappa\lambda T}^{m+1}}{\rbk{m+1}!}.
\end{aligned}
\label{eq:fock-cutoff-shell-duhamel-remainder}
\end{equation}
Taking the supremum over $\abs{t}\leq T$ in \eqref{eq:fock-cutoff-shell-duhamel-remainder} gives
\begin{equation}
\begin{aligned}
\sup_{\abs{t} \leq T}
\norm{
\fun{\widetilde{\alpha}_{\Delta,t}^{m,\lambda}}{P_{m,\lambda} A P_{m,\lambda}}
-
\fun{\widetilde{\alpha}_{\Lambda_0[2 m r],t}^{m,\lambda}}{P_{m,\lambda} A P_{m,\lambda}}}
& \leq
\sqrt{2}C_{\Lambda_0} \lambda m^d \norm{A}
\frac{\rbk{\kappa \lambda T}^{m + 1}}{\rbk{m + 1}!}.
\end{aligned}
\label{eq:fock-cutoff-shell-error}
\end{equation}
The estimate \eqref{eq:fock-cutoff-shell-error} is valid with the outer volume equal to either $\Lambda$ or
$\Gamma$.
It contains no state-dependent term.
For all sufficiently large $m$, choose
$$\begin{aligned}
\lambda
& =
\frac{m}{\napiernum^2\kappa T}
\geq
1.
\end{aligned}$$
Substitution into \eqref{eq:fock-direct-exponential-cutoff-error} gives
$$\begin{aligned}
C_{\Lambda_0,T,a}m^d\lambda\napiernum^{-a_{\Lambda_0,T}\lambda}
& =
\frac{C_{\Lambda_0,T,a}}{\napiernum^2\kappa T}
m^{d+1}
\fnexp{-\frac{a_{\Lambda_0,T}m}{\napiernum^2\kappa T}}.
\end{aligned}$$
Use the lower Stirling bound in the explicit form
$$\begin{aligned}
n!
& \geq
\rbk{\frac{n}{\napiernum}}^n,
\quad
n\in\semigrposint.
\end{aligned}$$
Apply this inequality with $n=m+1$ and substitute the selected value of $\lambda$:
$$\begin{aligned}
\rbk{m+1}!
& \geq
\rbk{\frac{m+1}{\napiernum}}^{m+1},
\\
\frac{\rbk{\kappa\lambda T}^{m+1}}{\rbk{m+1}!}
& =
\frac{\rbk{m/\napiernum^2}^{m+1}}{\rbk{m+1}!}
\leq
\rbk{\frac{m}{\napiernum\rbk{m+1}}}^{m+1}
\leq
\napiernum^{-(m+1)}.
\end{aligned}$$
Multiplication by the prefactor in \eqref{eq:fock-cutoff-shell-error} yields
$$\begin{aligned}
\sqrt{2}C_{\Lambda_0}\lambda m^d
\frac{\rbk{\kappa\lambda T}^{m+1}}{\rbk{m+1}!}
& \leq
\frac{\sqrt{2}C_{\Lambda_0}}{\napiernum^3\kappa T}
m^{d+1}\napiernum^{-m}.
\end{aligned}$$
Combining the two computations defines the joint majorant
\begin{equation}
\begin{aligned}
\fun{\delta_m}{T}
& =
C_{\Lambda_0,T,a,\beta}m^{d+1}
\rbk{
\napiernum^{-m}
+
\fnexp{-\frac{a_{\Lambda_0,T}m}{\napiernum^2\kappa T}}}.
\end{aligned}
\label{eq:fock-direct-exponential-shell-majorant}
\end{equation}
Equation \eqref{eq:fock-direct-exponential-shell-majorant} implies
$\fun{\delta_m}{T}\to0$ as $m\to\infty$.
The two cutoff-removal estimates compare the uncut $\Lambda$ and $\Gamma$ dynamics with
their cutoff counterparts, while the shell estimate compares those counterparts through $\Lambda_0[2 m r]$.
Writing $\fun{A_m}{t}=\fun{\alpha_{\Lambda_0[2mr],t}}{A}$, the resulting two-sided estimate is,
for $\Delta \in \setone{\Lambda,\Gamma}$,
\begin{equation}
\begin{aligned}
\sup_{\abs{t}\leq T}
\max\setone{
\abs{\fun{\oastate[\overline{\psi}_{\Lambda}]}{
\rbk{\fun{\alpha_{\Delta,t}}{A}-\fun{A_m}{t}}B}},
\abs{\fun{\oastate[\overline{\psi}_{\Lambda}]}{
B\rbk{\fun{\alpha_{\Delta,t}}{A}-\fun{A_m}{t}}}}}
\leq
\fun{\delta_m}{T}\norm{A}\norm{B}.
\end{aligned}
\label{eq:fock-two-sided-exponential-shell-approximation}
\end{equation}
For $\Delta=\Lambda$ and $\Delta=\Gamma$,
if $\abs{t} \leq T$,
the two estimates in
\eqref{eq:fock-two-sided-exponential-shell-approximation} give
$$\begin{aligned}
&
\abs{
\fun{\oastate[\overline{\psi}_{\Lambda}]}{
\rbk{\fun{\alpha_{\Lambda,t}}{A}-\fun{\alpha_{\Gamma,t}}{A}}B}}
\\
& \leq
\abs{
\fun{\oastate[\overline{\psi}_{\Lambda}]}{
\rbk{\fun{\alpha_{\Lambda,t}}{A}-\fun{A_m}{t}}B}}
+
\abs{
\fun{\oastate[\overline{\psi}_{\Lambda}]}{
\rbk{\fun{A_m}{t}-\fun{\alpha_{\Gamma,t}}{A}}B}}
\\
& \leq
2\fun{\delta_m}{T}\norm{A}\norm{B}.
\end{aligned}$$
The right-hand side tends to zero uniformly for $\norm{A},\norm{B}\leq1$.
Taking the indicated suprema and then $m\to\infty$ proves
\eqref{eq:fock-all-temperature-thermodynamic-limit}.
\end{proof}

\begin{cor}[All-temperature equilibrium states for $\alpha_{\Gamma}$]\label{cor:fock-all-temperature-kms-states}
Assume that $v$ is pointwise nonnegative and has finite range, and assume
\eqref{eq:fock-path-chemical-potential-gap}.
For every $\beta > 0$ and every weak-$\ast$ accumulation point $\oastate[\psi]$ of
$\seq{\oastate[\overline{\psi}_{\Lambda}]}{\Lambda \Subset \Gamma}$ defined in
\eqref{eq:fock-vacuum-extended-gibbs-state}, the state $\oastate[\psi]$ is invariant under
$\alpha_{\Gamma}$ on $\fun{\oaresolventalgebra}{\Gamma}^{\gamma}$ and satisfies the $\beta$-KMS boundary
condition \eqref{eq:fock-infinite-volume-kms-boundary}.
No local particle-number estimate is assumed, and no continuity of the automorphism group on $\oa{B}$ is asserted.
\end{cor}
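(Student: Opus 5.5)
The plan is to transport the finite-volume KMS relations of Proposition \ref{prop:fock-finite-volume-gibbs-existence} through the accumulation point, with Theorem \ref{thm:fock-all-temperature-local-approximation} replacing $\alpha_{\Lambda}$ by $\alpha_{\Gamma}$ uniformly for $\abs{t}\leq T$. By norm density and the bound $\abs{F_{A,B}}\leq\norm{A}\norm{B}$ (obtained below), it suffices to treat local $A\in\fun{\oaresolventalgebra}{\Lambda_0}^{\gamma}$ and $B\in\fun{\oaresolventalgebra}{\Lambda_1}^{\gamma}$ with $\Lambda_0,\Lambda_1$ finite. Fix a subnet $\oastate[\overline{\psi}_{\Lambda_\iota}]\to\oastate[\psi]$ weak-$\ast$ on $\oa{B}$.

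\textbf{Finite-volume relations.} For $\Lambda\supset\Lambda_0\cup\Lambda_1$, the state $\oastate[\overline{\psi}_{\Lambda}]$ restricted to operators of the form $C\otimes 1$ is the Gibbs state $\oastate[\psi_{\Lambda}]$, and $\alpha_{\Lambda,t}$ acts on the $\sphilb{F}_{\Lambda}$ factor. By Proposition \ref{prop:fock-finite-volume-gibbs-existence}, with the chemical-potential term dropped because it commutes with gauge-invariant observables, there is a function $F^{\Lambda}_{A,B}$ that is bounded, continuous on the closed strip $-\beta\leq\opimag z\leq0$, and holomorphic inside. Its boundary values are $\fun{\oastate[\overline{\psi}_{\Lambda}]}{\fun{\alpha_{\Lambda,t}}{A}B}$ and $\fun{\oastate[\overline{\psi}_{\Lambda}]}{B\fun{\alpha_{\Lambda,t}}{A}}$. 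Concretely, $F^{\Lambda}_{A,B}(z)=\trace[\napiernum^{\imunit zK}A\napiernum^{-\imunit zK}B\,\napiernum^{-\beta K}]/Z$ with $K=K_{\Lambda,\smchemicalpotential_{\Lambda}}$. Writing this as $\trace[\napiernum^{-\beta K}\napiernum^{\imunit zK}A\,\napiernum^{-\imunit zK}B]$ and splitting $\napiernum^{-\beta K}$ into powers with Hölder's inequality for trace ideals gives the three-lines bound $\abs{F^{\Lambda}_{A,B}}\leq\norm{A}\norm{B}$, uniformly in $\Lambda$. By Phragmén–Lindelöf, the sup norm of $F^{\Lambda}_{A,B}-F^{\Lambda'}_{A,B}$ on the strip $\abs{\operatorname{Re}z}\leq T$ is controlled by its boundary values, up to a contribution from the vertical edges that is bounded by $2\norm{A}\norm{B}$. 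To handle these edges, use the standard trick of multiplying by a Gaussian $\napiernum^{\delta z^2}$. It is then enough to show that both boundary functions converge along the subnet, locally uniformly in $t$.

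\textbf{Boundary convergence.} Take the upper boundary value. By \eqref{eq:fock-all-temperature-thermodynamic-limit},
$$\sup_{\abs{t}\leq T}\abs{\fun{\oastate[\overline{\psi}_{\Lambda}]}{\fun{\alpha_{\Lambda,t}}{A}B}-\fun{\oastate[\overline{\psi}_{\Lambda}]}{\fun{\alpha_{\Gamma,t}}{A}B}}\to0.$$
Fact \ref{fact:dll-buchholz-automorphism} gives $\fun{\alpha_{\Gamma,t}}{A}B\in\oa{B}$, so weak-$\ast$ convergence yields $\fun{\oastate[\overline{\psi}_{\Lambda_\iota}]}{\fun{\alpha_{\Gamma,t}}{A}B}\to\fun{\oastate[\psi]}{\fun{\alpha_{\Gamma,t}}{A}B}$ pointwise in $t$. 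The lower boundary value is handled the same way, using the two-sided estimate \eqref{eq:fock-two-sided-exponential-shell-approximation}, which also controls $B(\cdot)$.

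Uniformity in $t$ is the point I expect to be the main obstacle, since no continuity of $t\mapsto\alpha_{\Gamma,t}(A)$ is available. I would use the local approximants $A_m(t)=\fun{\alpha_{\Lambda_0[2mr],t}}{A}$. These are norm-continuous in $t$, because $H_{\Lambda_0[2mr]}$ is a finite-volume Hamiltonian and $A$ is gauge-invariant, so on each particle sector $t\mapsto A_m(t)$ is a continuous conjugation; where this is only strongly continuous, I would pass to the cutoff version $P_{m,\lambda}$, which is norm-continuous since the cut Hamiltonian is bounded. Estimate \eqref{eq:fock-two-sided-exponential-shell-approximation} then says that, uniformly in $\Lambda$ and $\abs{t}\leq T$, the functions $t\mapsto\fun{\oastate[\overline{\psi}_{\Lambda}]}{\fun{\alpha_{\Gamma,t}}{A}B}$ are within $\fun{\delta_m}{T}$ of an equicontinuous family. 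The weak-$\ast$ limit of each approximant is continuous in $t$, and an Arzelà–Ascoli/$\varepsilon/3$ argument gives uniform convergence on $[-T,T]$ to a continuous limit. Hence the holomorphic functions $F^{\Lambda_\iota}_{A,B}$ converge uniformly on compact subsets of the closed strip to a function $F_{A,B}$ that is bounded, continuous, holomorphic inside, and has the boundary values required in \eqref{eq:fock-infinite-volume-kms-boundary}.

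\textbf{Invariance.} Invariance follows from KMS in the standard way. Take $B=1$: then $F_{A,1}$ has equal values on both boundary lines, so it extends to a bounded $\imunit\beta$-periodic entire function and is therefore constant. This gives $\fun{\oastate[\psi]}{\fun{\alpha_{\Gamma,t}}{A}}=\fun{\oastate[\psi]}{A}$. Alternatively, invariance follows directly from the finite-volume invariance of $\oastate[\overline{\psi}_{\Lambda}]$ under $\alpha_{\Lambda,t}$ combined with \eqref{eq:fock-all-temperature-thermodynamic-limit} at $B=1$. Finally, extend by density from local to all elements of $\fun{\oaresolventalgebra}{\Gamma}^{\gamma}$, using the uniform bound $\norm{A}\norm{B}$ and uniform limits of holomorphic functions.
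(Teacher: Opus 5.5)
Your proposal is correct and follows the same skeleton as the paper: finite-volume strip functions bounded by $\|A\|\|B\|$, locally uniform convergence of both boundary values through fixed-shell approximants (pointwise weak-$*$ convergence plus equicontinuity), and then density. Three steps are carried out differently.

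\textbf{Equicontinuity.} The paper keeps the uncut approximant $A_m(t)=\alpha_{\Lambda_0[2mr],t}(A)$. It bounds $\frac{d}{dt}\overline{\psi}_{\Lambda_j}(A_m(t)B)$ uniformly in $j$ using the commutator computation of DLL Lemma~5.5, the inequality $(1+n)^4\le C_c e^{cn}$, generalized H\"older, and the propagated exponential moments. You compare instead with the cutoff approximant $\widetilde{\alpha}^{m,\lambda}_{\Lambda_0[2mr],t}(P_{m,\lambda}AP_{m,\lambda})$. Its generator $P_{m,\lambda}H_{\Lambda_0[2mr]}P_{m,\lambda}$ is bounded, so the comparison functions are Lipschitz in $t$ uniformly in $\Lambda$, with no further moment input.

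This fallback is not optional, and you should drop your first assertion that $A_m(t)$ is norm continuous. The on-site interaction multiplies hopping-type gauge-invariant elements by phases depending on $N_u-N_v$. These are not uniformly continuous in $t$ over all occupations, and continuity on each particle sector gives no equicontinuity in $\Lambda$.

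With the cutoff approximant, the estimate you actually need is not \eqref{eq:fock-two-sided-exponential-shell-approximation}, which is stated for the uncut $A_m(t)$. It is \eqref{eq:fock-exponential-cutoff-removal} with $\Delta=\Gamma$, combined with the state-independent bound \eqref{eq:fock-cutoff-shell-error}. A side benefit is that you never remove a cutoff on the inner volume $\Lambda_0[2mr]$. That case is not covered by the cutoff-removal lemma as stated, which requires $\Delta\in\{\Lambda,\Gamma\}$ and $\Delta_0\subset\Delta$.

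\textbf{From the boundary to the strip.} The paper uses Montel on the open strip and three-lines uniqueness. This leaves implicit the continuity of the limit up to the boundary, and it extracts a subsequence from a net. Your Phragm\'en--Lindel\"of argument makes the net of strip functions uniformly Cauchy on compact subsets of the closed strip, so both points come for free. In a horizontal strip the damping factor must be $e^{-\delta z^2}$, since $|e^{-\delta z^2}|=e^{-\delta(x^2-y^2)}$.

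\textbf{Invariance.} Your Liouville argument, using the bounded $\mathrm{i}\beta$-periodic entire extension of $F_{A,1}$, is a valid alternative to the paper's direct use of finite-volume invariance, which you also mention.

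In return, the paper's route works throughout with the genuine local dynamics $\alpha_{\Lambda_0[2mr]}$ rather than an auxiliary truncated one.
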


\begin{proof}
Choose a subnet $\Lambda_j$ along which
$\oastate[\overline{\psi}_{\Lambda_j}] \to \oastate[\psi]$ weak-$\ast$.
Fix a finite $\Lambda_0\Subset\Gamma$ and
$A,B\in\fun{\oaresolventalgebra}{\Lambda_0}^{\gamma}$, and write
$\fun{A_m}{t} = \fun{\alpha_{\Lambda_0[2 m r],t}}{A}$.
The estimate \eqref{eq:fock-two-sided-exponential-shell-approximation} holds for
$\Delta\in\setone{\Lambda_j,\Gamma}$, with $\fun{\delta_m}{T}\to0$ independently of $j$.
For fixed $m$ and $t$, the operators
$\rbk{\fun{\alpha_{\Gamma,t}}{A} - \fun{A_m}{t}}B$ and
$B\rbk{\fun{\alpha_{\Gamma,t}}{A} - \fun{A_m}{t}}$ belong to $\oa{B}$.
Weak-$\ast$ convergence passes the estimates with $\Delta = \Gamma$ to $\oastate[\psi]$.
For every $j$ satisfying $\Lambda_0\subset\Lambda_j$, define
\begin{equation}
\begin{aligned}
\fun{F_j}{z}
& =
\frac{1}{Z_{\Lambda_j,\beta}}
\sqfun{\trace_{\sphilb{F}_{\Lambda_j}}}{
\napiernum^{-\beta K_{\Lambda_j,\smchemicalpotential_{\Lambda_j}}}
\napiernum^{\imunit zH_{\Lambda_j}}
A
\napiernum^{-\imunit zH_{\Lambda_j}}
B},
\quad
-\beta\leq\opimag z\leq0,
\\
\fun{F_j}{t}
& =
\fun{\oastate[\overline{\psi}_{\Lambda_j}]}{
\fun{\alpha_{\Lambda_j,t}}{A}B},
\\
\fun{F_j}{t-\imunit\beta}
& =
\fun{\oastate[\overline{\psi}_{\Lambda_j}]}{
B\fun{\alpha_{\Lambda_j,t}}{A}}.
\end{aligned}
\label{eq:fock-finite-volume-kms-strip-function}
\end{equation}
The trace argument in \cite[Lemma 5.4]{DeuchertLampartLemm001} proves that $F_j$ is holomorphic for
$-\beta<\opimag z<0$, continuous on the closed strip, and bounded by
$\abs{\fun{F_j}{z}}\leq\norm{A}\norm{B}$.
In the standalone derivative calculation of \cite[Lemma 5.5]{DeuchertLampartLemm001}, every number factor on the
fixed shell is bounded directly by the scalar spectral inequality
$\rbk{1+n}^{4}\leq C_c\fnexp{cn}$ and the exponential propagation
\eqref{eq:fock-exponential-moment-propagation}.
More explicitly, differentiation on the fixed finite shell gives
$$\begin{aligned}
\od{\fun{\oastate[\overline{\psi}_{\Lambda_j}]}{
\fun{A_m}{t}B}}{t}
& =
\fun{\oastate[\overline{\psi}_{\Lambda_j}]}{
\imunit\commutator{H_{\Lambda_0[2mr]}}{\fun{A_m}{t}}B},
\\
\od{\fun{\oastate[\overline{\psi}_{\Lambda_j}]}{
B\fun{A_m}{t}}}{t}
& =
\fun{\oastate[\overline{\psi}_{\Lambda_j}]}{
B\imunit\commutator{H_{\Lambda_0[2mr]}}{\fun{A_m}{t}}}.
\end{aligned}$$
The commutator expansion on the fixed shell and Cauchy--Schwarz have the form
$$\begin{aligned}
\abs{
\fun{\oastate[\overline{\psi}_{\Lambda_j}]}{
\imunit\commutator{H_{\Lambda_0[2mr]}}{\fun{A_m}{t}}B}}
\leq
C_{A,B,m}
\rbk{\fun{\oastate[\overline{\psi}_{\Lambda_j}] \circ \alpha_{\Lambda_0[2mr],t}}
{\prod_{y\in \Lambda_0[2mr]}\rbk{1+N_y}^{4}}}^{1/2}.
\end{aligned}$$
For any $c>0$, the scalar bound is obtained from
$$\begin{aligned}
C_c
& =
\sup_{n\in\monnat}\rbk{1+n}^{4}\napiernum^{-cn}
<
\infty,
\quad
\prod_{y\in \Lambda_0[2mr]}\rbk{1+N_y}^{4}
\leq
C_c^{\abscard{\Lambda_0[2mr]}}
\fnexp{c\sum_{y\in \Lambda_0[2mr]}N_y}.
\end{aligned}$$
Choose $c>0$ small enough that generalized H\"older reduces the last exponential to the one-site propagated
moments in \eqref{eq:fock-exponential-moment-propagation}.
The resulting bound is independent of $j$.
The calculation with $B$ on the left is identical.
For fixed $m$ and $T$, we obtain
$$\begin{aligned}
\sup_{\Lambda_j \supset \Lambda_0[2 m r]}\sup_{\abs{t} \leq T}
\rbk{
\abs{\od{\fun{\oastate[\overline{\psi}_{\Lambda_j}]}{\fun{A_m}{t}B}}{t}}
+
\abs{\od{\fun{\oastate[\overline{\psi}_{\Lambda_j}]}{B\fun{A_m}{t}}}{t}}}
& \leq
C_{A,B,m,T}.
\end{aligned}$$
The derivative bound makes the fixed-shell boundary functions equicontinuous.
For fixed $m$, weak-$\ast$ convergence gives pointwise convergence at every $t$.
Given $\eta>0$, choose a finite $\eta/(3C_{A,B,m,T})$-net of $\closedinterval{-T}{T}$.
Pointwise convergence at the finitely many net points and the derivative bound give
$$\begin{aligned}
\sup_{\abs{t}\leq T}
\abs{
\fun{\oastate[\overline{\psi}_{\Lambda_j}]}{\fun{A_m}{t}B}
-
\fun{\oastate[\psi]}{\fun{A_m}{t}B}}
\leq
\eta
\quad
\rbk{j\text{ sufficiently large}}.
\end{aligned}$$
The same argument applies with $B$ on the left.

The upper boundary in \eqref{eq:fock-finite-volume-kms-strip-function} has the exact decomposition
$$\begin{aligned}
\fun{F_j}{t}
-
\fun{\oastate[\psi]}{\fun{\alpha_{\Gamma,t}}{A}B}
& =
\rbk{
\fun{F_j}{t}
-
\fun{\oastate[\overline{\psi}_{\Lambda_j}]}{\fun{A_m}{t}B}}
\\
& \mathrel{\phantom{=}}
+
\rbk{
\fun{\oastate[\overline{\psi}_{\Lambda_j}]}{\fun{A_m}{t}B}
-
\fun{\oastate[\psi]}{\fun{A_m}{t}B}}
+
\rbk{
\fun{\oastate[\psi]}{\fun{A_m}{t}B}
-
\fun{\oastate[\psi]}{\fun{\alpha_{\Gamma,t}}{A}B}}.
\end{aligned}$$
The first and third terms are bounded by
$\fun{\delta_m}{T}\norm{A}\norm{B}$.
The upper boundary satisfies
\begin{equation}
\begin{aligned}
\sup_{\abs{t} \leq T}
\abs{
\fun{F_j}{t}
-
\fun{\oastate[\psi]}{\fun{\alpha_{\Gamma,t}}{A}B}}
& \leq
\sup_{\abs{t} \leq T}
\abs{
\fun{\oastate[\overline{\psi}_{\Lambda_j}]}{\fun{A_m}{t}B}
-
\fun{\oastate[\psi]}{\fun{A_m}{t}B}}
+
2 \fun{\delta_m}{T} \norm{A} \norm{B}.
\end{aligned}
\label{eq:fock-kms-upper-boundary-comparison}
\end{equation}
The lower boundary in \eqref{eq:fock-finite-volume-kms-strip-function} and the analogous decomposition with
$B$ on the left give the lower-boundary estimate.
The lower boundary satisfies
\begin{equation}
\begin{aligned}
\sup_{\abs{t} \leq T}
\abs{
\fun{F_j}{t - \imunit\beta}
-
\fun{\oastate[\psi]}{B\fun{\alpha_{\Gamma,t}}{A}}}
& \leq
\sup_{\abs{t} \leq T}
\abs{
\fun{\oastate[\overline{\psi}_{\Lambda_j}]}{B\fun{A_m}{t}}
-
\fun{\oastate[\psi]}{B\fun{A_m}{t}}}
+
2 \fun{\delta_m}{T} \norm{A} \norm{B}.
\end{aligned}
\label{eq:fock-kms-lower-boundary-comparison}
\end{equation}
For fixed $m$, the first terms on the right-hand sides of
\eqref{eq:fock-kms-upper-boundary-comparison} and
\eqref{eq:fock-kms-lower-boundary-comparison} tend to zero along the subnet.
Letting first $j$ tend along the subnet and then $m \to \infty$ proves compact-uniform convergence of both
boundaries to the two expressions containing $\alpha_{\Gamma,t}$.
The strip functions satisfy
$$\begin{aligned}
\sup_{0\leq s\leq\beta}
\sup_{t\in\fldreal}
\abs{\fun{F_j}{t-\imunit s}}
& \leq
\norm{A}\norm{B}.
\end{aligned}$$
Montel compactness gives a subsequence converging uniformly on compact subsets of the open strip.
The compact-uniform boundary limits identify the two boundary values of every such subsequential limit.
If two bounded holomorphic limits had these boundary values, their difference would have zero values on both
boundaries.
The three-lines theorem gives
$$\begin{aligned}
\abs{\fun{F_{A,B}^{(1)}-F_{A,B}^{(2)}}{t-\imunit s}}
\leq
\rbk{
\sup_{r\in\fldreal}
\abs{\fun{F_{A,B}^{(1)}-F_{A,B}^{(2)}}{r}}}^{1-\frac{s}{\beta}}
\rbk{
\sup_{r\in\fldreal}
\abs{\fun{F_{A,B}^{(1)}-F_{A,B}^{(2)}}{r-\imunit\beta}}}^{\frac{s}{\beta}}
=
0.
\end{aligned}$$
The limit is unique and defines the bounded strip function $F_{A,B}$ with the boundary values in
\eqref{eq:fock-infinite-volume-kms-boundary}.

For invariance, set $B=1$.
Finite-volume invariance and the two shell comparisons give
$$\begin{aligned}
&
\abs{
\fun{\oastate[\psi]}{\fun{\alpha_{\Gamma,t}}{A}}
-
\fun{\oastate[\psi]}{A}}
\leq
\abs{
\fun{\oastate[\psi]}{\fun{\alpha_{\Gamma,t}}{A}}
-
\fun{\oastate[\psi]}{\fun{A_m}{t}}}
+
\abs{
\fun{\oastate[\psi]}{\fun{A_m}{t}}
-
\fun{\oastate[\overline{\psi}_{\Lambda_j}]}{\fun{A_m}{t}}}
\\
&
+
\abs{
\fun{\oastate[\overline{\psi}_{\Lambda_j}]}{
\fun{A_m}{t}-\fun{\alpha_{\Lambda_j,t}}{A}}}
+
\abs{
\fun{\oastate[\overline{\psi}_{\Lambda_j}]}{
\fun{\alpha_{\Lambda_j,t}}{A}-A}}
+
\abs{
\fun{\oastate[\overline{\psi}_{\Lambda_j}]}{A}
-
\fun{\oastate[\psi]}{A}}.
\end{aligned}$$
The fourth term is zero.
For fixed $m$, the second and fifth terms tend to zero along the subnet.
The first and third terms are bounded by $\fun{\delta_m}{T}\norm{A}$.
Taking the subnet limit and then $m\to\infty$ proves invariance on local observables.
The local gauge-invariant resolvent algebras are norm dense in
$\fun{\oaresolventalgebra}{\Gamma}^{\gamma}$.
For local approximating sequences
$\seq{A_n}{n\in\semigrposint}$ and $\seq{B_n}{n\in\semigrposint}$ satisfying
$A_n\to A$ and $B_n\to B$ as $n\to\infty$, the automorphisms are isometric and the state has norm one.
The norm difference is bounded by
$$\begin{aligned}
\abs{
\fun{\oastate[\psi]}{\fun{\alpha_{\Gamma,t}}{A}B}
-
\fun{\oastate[\psi]}{\fun{\alpha_{\Gamma,t}}{A_n}B_n}}
& \leq
\norm{A-A_n}\norm{B}
+
\norm{A_n}\norm{B-B_n}.
\end{aligned}$$
The same bound holds on the lower boundary.
The uniform strip bound extends invariance and
\eqref{eq:fock-infinite-volume-kms-boundary} to the full gauge-invariant resolvent algebra.
This argument does not invoke the conditional equilibrium theorem of \cite{DeuchertLampartLemm001} and does not
transfer a particle-number functional to $\oastate[\psi]$.
\end{proof}

\section{Conclusion and Outlook}\label{conclusion-and-outlook}

Proposition 3.1 of \cite{DeuchertLampartLemm001} already supplies the generally discontinuous automorphism group on the full concrete Buchholz algebra. The obstruction in Theorem 4.1 and Theorem 5.2 of that paper is the uniform local moment condition, not existence of the algebraic automorphism. Theorem \ref{thm:fock-all-temperature-local-number} replaces that condition by a proved exponential estimate under repulsiveness and the low-activity gap. Theorem \ref{thm:fock-all-temperature-local-approximation} gives the finite-volume dynamical approximation, and Corollary \ref{cor:fock-all-temperature-kms-states} gives invariant states satisfying the KMS boundary relation at every temperature.

The discrete functional integral completes the low-activity comparison proposed in \cite[Remark 5.3]{DeuchertLampartLemm001}. The decisive formula is the interacting cancellation \eqref{eq:fock-interacting-loop-cancellation}. It implies reduced-density domination, the factorial estimate \eqref{eq:fock-factorial-local-number-bound}, and the exponential estimate \eqref{eq:fock-exponential-local-number-bound} for every \(\beta > 0\). These bounds remove the local particle-number hypothesis from both conclusions whenever the chemical potential satisfies the uniform gap \eqref{eq:fock-path-chemical-potential-gap}.

The chemical-potential gap keeps the system on the vacuum side of the one-particle threshold. The present comparison does not cover higher-density regimes in which condensation or superfluid behavior may occur. For general superstable interactions outside this gap, an all-temperature local insertion estimate remains open. At high temperature, the volume-uniform estimates of \cite[Theorem 1]{GongKuwaharaTong001} provide a complementary route to uniqueness and boundary-independent convergence of the entire Gibbs family. The equilibrium theorem proved here is an existence result and does not assert that all volume exhaustions have the same limit.

A structural obstruction precedes the equilibrium problem. The free Bose dynamics acts by automorphisms of the resolvent algebra, whereas an interacting dynamics generally sends elements of the original gauge-invariant resolvent algebra into a larger algebra. The observable-algebra extension constructed in \cite{DetlevBuchholz002,DetlevBuchholz003} contains these evolved observables for a large class of pair interactions. In the lattice formulation of \cite[Definition 2.2 and Proposition 3.1]{DeuchertLampartLemm001}, this role is played by the concrete Buchholz algebra on bosonic Fock space.

The resolvent algebra itself has an abstract universal \(\oacstar\)-algebraic presentation and admits faithful regular representations \cite{BuchholzGrundling002}. The present construction of the Buchholz algebra reintroduces a distinguished Fock representation at the level of the enlarged algebra. This dependence is restrictive because a von Neumann algebra obtained as the weak closure of a GNS representation retains the folium normal to that representation. Representations associated with different temperatures in an infinite system are frequently disjoint, and their normal states cannot then be represented simultaneously as normal states of one of the corresponding factors. The purpose of the present program is to free the Buchholz algebra from the Fock representation as a reference representation and to retain, together with the resolvent algebra, as much abstract and universal \(\oacstar\)-algebraic information as possible.

The concrete argument in this paper separates the inputs needed for such an abstraction. The dynamics uses the Buchholz sector consistency, the direct cutoff estimate, and finite-range shell propagation. The present equilibrium proof realizes its local equilibrium states by the Fock trace and derives the required local-number estimates from creation-annihilation insertions and a positive functional integral. The Fock trace itself need not be retained as an axiom of an abstract formulation. Proposition 3.1 of \cite{DeuchertLampartLemm001}, recorded in Fact \ref{fact:dll-buchholz-automorphism}, verifies in the concrete model that every finite \(\oa{B}_X\) is invariant under its local dynamics \(\alpha_X\). An abstract formulation may require this invariance together with the existence of a \(\beta\)-KMS state for \(\alpha_X\) on every finite local algebra. Impose in addition the standard local-normality requirement of algebraic quantum statistical mechanics. Here local normality means that the restriction of a state to every finite local algebra extends to a normal state in a regular representation of that finite subsystem. The local KMS assumption and local normality should make the choice of the Fock representation inessential on each finite local algebra, without building one global Fock representation into the abstract algebra. The remaining task is to formulate the positive functional integral and the resulting local-number estimates in a form compatible with these locally normal KMS states.

Continuous-space bosons form a separate extension problem. Buchholz established stability of enlarged gauge-invariant resolvent algebras under continuous pair-potential dynamics and developed the corresponding sector structures \cite{DetlevBuchholz002,DetlevBuchholz003}. Adapting the present strategy requires spatial localization estimates that control both particle transport and ultraviolet behavior. The canonical-ensemble analysis in \cite{DetlevBuchholz012} provides another test of whether the particle-sector mechanism can be separated from one global Fock folium.

Perturbation theory is a further direction. Standard-form methods control Liouvilleans and equilibrium vectors after a representation has been selected \cite{DerezinskiJaksicPillet001}. The positive-temperature Euclidean construction in \cite[Sections 17.1.5 and 21.4--21.5]{DerezinskiGerard001} identifies the Gaussian periodic path-space representation with an Araki--Woods von Neumann algebra and uses this identification to construct perturbed dynamics and KMS states by functional integration. For bosonic Gaussian path spaces, it gives a particularly sharp von Neumann-algebraic realization of the abstract equivalence between stochastically positive KMS systems, periodic Osterwalder--Schrader positive processes, and positive semigroup structures established in \cite[Sections 1 and 6--8]{KleinLandau001}. The Klein--Landau construction is formally based on a \(\oacstar\)-dynamical system, but its stochastic reconstruction passes to the GNS Hilbert space and the associated von Neumann algebra.

An intrinsic theory should determine which perturbations preserve the abstract resolvent or Buchholz algebra and compare the resulting KMS states with standard-form and Araki--Woods perturbations only after a representation has been selected. This requires a deeper \(\oacstar\)-algebraic theory of probability and stochastic processes in which Euclidean reconstruction and functional-integral perturbation do not begin with a fixed Fock or Araki--Woods reference representation.

\bibliography{myref.bib}

\end{document}